\NewDocumentCommand\DownArrow{O{2.0ex} O{black}}{%
   \mathrel{\tikz[baseline] \draw [<-, line width=0.5pt, #2] (0,0) -- ++(0,#1);}
}
\NewDocumentCommand\UpArrow{O{2.0ex} O{black}}{%
   \mathrel{\tikz[baseline] \draw [->, line width=0.5pt, #2] (0,0) -- ++(0,#1);}
}
\def\FullBox{\hbox{\vrule width 8pt height 8pt depth 0pt}}
\newcommand{\QED}{\;\;\;\FullBox}
\renewenvironment{proof}{\noindent{{\textbf{Proof:}~}}} {\hfill\QED}
\providecommand{\email}[1]{\href{mailto:#1}{\nolinkurl{#1}\xspace}}
\newcommand{\eqdef}{\stackrel{\rm def}{=}}
\def\FullBox{\hbox{\vrule width 8pt height 8pt depth 0pt}}
\newcommand{\Elms}{\textsc{Elms}}
\newcommand{\bElms}{\textbf{\Elms}}
\newcommand{\Rep}{\textsc{Rep}}
\newcommand{\bRep}{\textbf{\Rep}}
\newcommand{\SampleTree}{\textsc{SampleTree}}
\newcommand{\quadtree}{\textsc{QuadTree}}
\newcommand{\Split}{\textsc{Split}}
\newcommand{\DataInd}{\textsc{Data-Ind}}
\newcommand{\Lone}{\textsc{Lone}}
\newcommand{\nbr}{\textsc{Nbr}}
\newcommand{\Raj}[1]{{\color{mygreen}[Raj: #1]}}
\definecolor{goblue}{RGB}{0,128,255}
\def\be{\boldsymbol{e}}
\def\E{\mathbf{E}}
\def\T{\mathbf{T}}
\def\sfD{\mathsf{D}}
\def\sfh{\mathsf{h}}
\definecolor{b2}{RGB}{51,153,255}
\definecolor{mygreen}{RGB}{80,180,0}
\definecolor{darkgreen}{rgb}{0,0.5,0}
\newcommand{\CorePreprocess}{\textsc{Core-Preprocess}}
\newcommand{\CoreQuery}{\textsc{Core-Query}}
\newcommand{\EMD}{\mathsf{EMD}}
\newcommand{\ignore}[1]{}
\newcommand{\bhomg}{\boldsymbol{\hat{\Omega}}}
\newcommand{\Qind}{Q^{\textsc{Ind}}}
\newcommand{\Qdep}{Q^{\textsc{Dep}}}
\newcommand{\Dep}{\textsc{Dep}}
\title{Data-Dependent LSH for the Earth Mover's Distance}
\author {
  Rajesh Jayaram\thanks{Google Research.}
  \and
  Erik Waingarten\thanks{University of Pennsylvania.}
  \and
  Tian Zhang\thanks{University of Pennsylvania.}
}
\begin{document}
\maketitle

\begin{abstract}
We give new data-dependent locality sensitive hashing schemes (LSH) for the Earth Mover's Distance ($\mathsf{EMD}$), and as a result, improve the best approximation for nearest neighbor search under $\mathsf{EMD}$ by a quadratic factor. Here, the metric $\mathsf{EMD}_s(\mathbb{R}^d,\ell_p)$ consists of sets of $s$ vectors in $\R^d$, and for any two sets $x,y$ of $s$ vectors the distance $\mathsf{EMD}(x,y)$ is the minimum cost of a perfect matching between $x,y$, where the cost of matching two vectors is their $\ell_p$ distance. Previously, Andoni, Indyk, and Krauthgamer gave a (data-independent) locality-sensitive hashing scheme for $\mathsf{EMD}_s(\mathbb{R}^d,\ell_p)$ when $p \in [1,2]$ with approximation $O(\log^2 s)$. By being data-dependent, we improve the approximation to $\tilde{O}(\log s)$. 

Our main technical contribution is to show that for any distribution $\mu$ supported on the metric $\mathsf{EMD}_s(\mathbb{R}^d, \ell_p)$, there exists a data-dependent LSH for dense regions of $\mu$ which achieves approximation $\tilde{O}(\log s)$, and that the data-independent LSH actually achieves a $\tilde{O}(\log s)$-approximation outside of those dense regions. Finally, we show how to ``glue'' together these two hashing schemes without any additional loss in the approximation. 

Beyond nearest neighbor search, our data-dependent LSH also gives optimal (distributional) sketches for the Earth Mover's Distance. By known sketching lower bounds, this implies that our LSH is optimal (up to $\mathrm{poly}(\log \log s)$ factors) among those that collide close points with constant probability.
\end{abstract}
\thispagestyle{empty}
\newpage
\begin{spacing}{0.75}
\setcounter{tocdepth}{2}
\tableofcontents
\end{spacing}
\thispagestyle{empty}

\newpage

\pagenumbering{arabic}
\setcounter{page}{1}


\section{Introduction}
\label{sec:intro}

In the approximate nearest neighbor problem (ANN), we are given a set $P$ of $n$ points in a metric space $(X,d_X)$, and the goal is to build a data structure that, upon receiving a query point $q \in X$, can quickly return a point $p \in P$ such that $d(p,q) \leq c \cdot \min_{x \in X} d(q,x)$, for some approximation factor $c \geq 1$. The goal is to minimize $c$ while answering queries as fast as possible---ideally, significantly faster than a linear scan. Nearest neighbor search is a fundamental problem in computer science, with applications in areas such as machine learning, data mining, information retrieval, computer vision, and many others. In this paper, we study approximate nearest neighbor search for the Earth Mover's Distance ($\EMD$), also known as the Optimal Transport or Wasserstein-1 metric.

Let $(X, d_{X})$ be a ``ground metric'' (which, for us, will be $\R^d$ with the $\ell_p$-norm for $p \in [1, 2]$). Given two collections of $s$ elements from the ground metric, i.e., two multi-sets $x = \{ x_1,\dots, x_s \}, y = \{ y_1,\dots, y_s\} \subset X$ of size $s$, the Earth Mover's distance ($\EMD$) between $x$ and $y$ is
\[ \EMD(x, y) = \min_{\substack{\pi \colon [s]\to[s] \\ \text{bijection}}} \hspace{0.2cm} \sum_{i=1}^n d_X(x_i, y_{\pi(i)}). \]
We will write $\EMD_s(X, d_X)$ to denote the metric space of size-$s$ subsets of the $(X,d_X)$ under the Earth Mover's distance. 
Computational aspects of $\EMD$ have long been studied within the theoretical computer science literature~\cite{C02, IT03, I04b, AIK08, AIK09, ABIW09, HIM12,  SA12, mcgregor2013sketching, AS14, BI14, ANOY14,YO14, AKR15, S17, AFPVX17, KNP19, BDIRW20, CJLW22, ACRX22, CCRW23, FL23}. It is a central problem in algorithms, since it is a geometric version of bipartite matching. In addition, the Earth Mover's distance, and in particular nearest neighbor search under $\EMD$, has gained immense popularity in natural language processing and machine learning~\cite{KSKW15,arjovsky2017wasserstein, PC19,backurs2020scalable}, where it is a popular measure of distance between sets of embeddings (such as Word2Vec or GloVe~\cite{PSM14}). 

The canonical approach for approximate nearest neighbor search is to employ \textit{locality sensitive hashing} (LSH). These are randomized hash functions which partition the underlying metric space into hash buckets such that closer points are more likely to collide. An ANN data structure can then restrict its search to the hash buckets which the query maps to. 
By now, the theory of LSH for basic metrics like $\ell_1/\ell_2$ is well understood; 
 the best $c$-approximations have query time $n^{1/c}$ for $\ell_1$, and query time $n^{1/(2c^2-1)+o(1)}$ for $\ell_2$~\cite{IM98, AI06,AINR14, AR15, ALRW17}, leading to highly sublinear $n^{\eps}$-time algorithms which achieve constant-factor (i.e., $1/\sqrt{\eps}$ or $1/\eps$) approximations.

Despite its popularity in theory and practice, LSH functions for $\EMD$ are not nearly as accurate as for $\ell_p$ spaces. This is because computing $\EMD$, unlike $\ell_p$, is significantly more computationally complex (for example, it does not decompose into a sum across coordinates).
Computing $\EMD$ exactly requires solving a min-cost bipartite matching problem, achieved classically by the Hungarian algorithm (in $O(s^3)$ time), and only recently in $O(s^{2+o(1)})$ time~\cite{CKLPPS22}. In addition, a simple heuristic like greedily generating a matching achieves a poor $\Omega(n^{0.58\dots})$ approximation~\cite{RT81}. 
This makes $\EMD$ difficult to reason about, and computations involving $\EMD$ especially challenging for sublinear algorithms which are limited in their computational abilities. The typical approach in sublinear algorithms is to \emph{embed} $\EMD$ into a ``simpler'' metric (usually $\ell_1$) and use LSH in the simpler metric.
Indyk~\cite{I04b} gave such an embedding of $\EMD$ into $\ell_1$ with distortion $O(d \log \Phi)$ (where $\Phi$ is the aspect ratio and should be read as $\poly(s)$, as there is a simple reduction to this case), leading to a $O(d \log s)$-approximation. This was later improved by~\cite{AIK08}, who gave a (randomized) embedding resulting in a LSH with approximation $O(\log s \log(d \Phi))$ (i.e., $O(\log^2 s)$ as it will also suffice to consider $d = \poly(s)$). However, despite significant and recent focus from the sublinear algorithms community \cite{backurs2020scalable,CJLW22,andoni2023sub,charikar2023fast,bakshi2023near,beretta2023approximate}, 
to date no further improvements to the $O(\log^2 s)$-approximation of~\cite{AIK08} have been made. Our main result is a  nearly quadratic improvement in this approximation with the same runtime.


\begin{theorem}[Main Result---Informal version of Theorem \ref{thm:ann-main}]\label{thm:main-intro-informal}
For any constant $\eps > 0$ and $p \in [1,2]$, there is a data structure for nearest neighbor search in $\EMD_s(\R^d,\ell_p)$, with approximation $\tilde{O}(\log s)$,  pre-processing time $n^{1+\eps} \cdot \poly(sd)$,  and query time $n^\eps \cdot \poly(sd)$.
\end{theorem}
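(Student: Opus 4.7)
The plan is to reduce the ANN problem to designing an LSH family on $\EMD_s(\R^d,\ell_p)$ with gap $\tilde{O}(\log s)$ between ``close'' and ``far'' collision probabilities, and then feed this family into the standard LSH-based ANN data structure of Indyk--Motwani / Har-Peled--Indyk--Motwani. Recall that an LSH with collision probabilities $p_1$ at distance $r$ and $p_2$ at distance $cr$ yields an ANN data structure with approximation $c$ and query time $n^{\rho+o(1)}\cdot \poly(sd)$, where $\rho=\log(1/p_1)/\log(1/p_2)$. Provided we can arrange $p_1 \geq \tfrac{1}{\poly(sd)}$ and $p_2 \leq p_1^{1/(1-\eps')}$ for some small $\eps'>0$ (which we can do by concatenating many copies of the base LSH), this delivers the claimed $n^{\eps}\cdot \poly(sd)$ query time and $n^{1+\eps}\cdot \poly(sd)$ preprocessing, with the approximation factor $c$ matching the LSH gap up to constants.

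Given this reduction, the task becomes: for the input dataset $P\subset \EMD_s(\R^d,\ell_p)$ of size $n$ and scale $r$, build a (possibly data-dependent) hashing scheme whose gap is $\tilde{O}(\log s)$. Here I would invoke the two ingredients highlighted in the introduction. First, let $\mu$ be the empirical distribution on $P$. By the main technical contribution, there is a data-dependent LSH achieving $\tilde{O}(\log s)$ gap on any ``dense'' region of $\mu$ (where density is measured relative to the scale $r$). Second, outside of those dense regions the Andoni--Indyk--Krauthgamer data-independent embedding, re-analyzed in this paper, already achieves the same $\tilde{O}(\log s)$ gap, since its $O(\log^2 s)$ bound was lossy precisely because of the dense regions.

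I would then build the ANN data structure recursively, in the spirit of Andoni--Razenshteyn-style data-dependent hashing. At each node of a decision tree, one tests whether the current subset of $P$ contains a dense region of its own empirical measure. If it does, apply the data-dependent scheme to hash the dense portion into refined sub-buckets, and recurse on each; the non-dense remainder is routed through the data-independent AIK hash. If it does not, apply the data-independent AIK hash directly. Each level contributes at most a $\tilde{O}(\log s)$ factor to the effective gap, and --- crucially --- the recursion depth is $O(1)$ (or $\poly\log$) so the overall approximation is still $\tilde{O}(\log s)$. Routine calculations show that the preprocessing and query time multiply by $\poly(sd)\cdot \log n$ over the naive LSH-ANN data structure, which is absorbed into the $n^{\eps}\cdot\poly(sd)$ budget.

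The main obstacle, as the authors flag, is the gluing step: we must combine the two LSH families without incurring extra loss in the approximation, which requires that the data-dependent scheme not only work on dense regions but also route non-dense points into the data-independent hash consistently with the metric. Concretely, one must show that the composite hash $h(x) = (h_{\text{dep}}(x), h_{\text{ind}}(x))$ preserves both the ``close'' collision probability (since close pairs both fall into the same dense or same non-dense case with high probability, so only one of the two sub-hashes is ``active'' for them) and the ``far'' separation (since the stronger of the two sub-hashes is enough to separate them). Proving this requires a careful case analysis on where close pairs and far pairs live relative to the identified dense regions, together with triangle-inequality arguments on $\EMD$ --- this, I expect, is where the bulk of the technical work lies.
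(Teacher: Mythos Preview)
Your overall reduction---build a data-dependent $(r,cr,p_1,p_2)$-sensitive family with $c=\tilde{O}(\log s)$, then invoke the hashing-to-ANN machinery with $p_1=1-\eps$ and $p_2=\Theta(1)$---matches the paper exactly, as does your identification of the two hash-family ingredients.

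The genuine gap is in the gluing, which you correctly flag as the crux but then resolve incorrectly. You propose to route each point according to whether it lies in a ``dense region'' of $\mu$, and assert that close pairs land in the same case with high probability. But once $\mu$ is fixed, local-density is a \emph{deterministic} property of a point; there is no randomness forcing a close pair $(x,y)$ to agree on it, and one can be locally-dense while the other (at distance $\leq r$) is not. Your scheme also does not say how to route a query point $q\notin\mathrm{supp}(\mu)$. The paper's fix is to route not on ``is $x$ dense?'' but on the \emph{bucket mass} that $x$ sees under each data-independent hash $h_\ell$: take the first level $\ell$ at which $\Pr_{u\sim\mu}[h_\ell(x)=h_\ell(u)]\leq p_2/3$, and fall back to the data-dependent hash $h_*$ only if every level fails. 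Because this route depends only on which bucket $x$ lands in, any close pair with $h_\ell(x)=h_\ell(y)$ for all $\ell$ (which happens with probability $\geq p_1$ by a union bound over $L=O(\log d)$ levels) automatically receives the same route; and when every level fails for $x$, that is \emph{by construction} the definition of $x$ being locally-dense, so the $p_1$-guarantee of $h_*$ applies. A smaller misreading: the data-independent family does not admit a sharper-than-$O(\log^2 s)$ analysis outside dense regions; rather, a non-locally-dense $x$ is by definition at distance at least $r\cdot\log^{10}s\gg cr$ from a typical $\mu$-sample, so even the weak $O(\log^2 s)$ gap already delivers the required $p_2$-separation in the data-dependent sense.
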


With regards to the runtime, note that in nearest neighbor search the primary goal is to have query time that significantly sublinear in $n$, which is the number of data points. In the context of EMD, the parameter $s$ (along with $d$) is the description size of a single point in the metric space; in fact, it takes $O(sd)$ time to simply read a query.  Thus, polynomial query time dependencies on $s,d$ are generally acceptable, however exponential dependency on $sd$ would be undesirable (see Remark \ref{remark:lowd}). 

The key component of Theorem~\ref{thm:main-intro-informal} is a new \textit{data-dependent} locality-sensitive hash family for $\EMD$, which, as we expand on next, is a relatively new algorithmic primitive for sublinear algorithms in geometric spaces~\cite{AINR14, AR15,ALRW17,ANNRW18, ANNRW18b}. We believe these data-dependent hash families are of independent interest, as they give rise to new and space optimal sketches for $\EMD$ in a distributional setting (see Section~\ref{sec:dd-lb}). Specifically, our LSH scheme gives a $\tilde{O}(\log s)$ approximation for this problem, nearly matching a $\Omega(\log s)$ lower bound 
of \cite{AIK08}. In particular, this implies a $\Omega(\log s)$-approximation lower bound for \emph{any} LSH family where close points collide with constant probability (Theorem~\ref{thm:dd-lb}), which is a property our LSH family satisfies.


\textbf{Data-Dependent (Locality-Sensitive) Hashing for $\EMD$.} As we further expand on in Section~\ref{sec:contributions}, the traditional guarantees of LSH are ``data-independent,'' or ``data-oblivious.'' In particular, LSH guarantees that, for \emph{any} pair of points $x,y$ from the metric, $x$ and $y$ tend to collide if they are close, and separate if they are far. One could imagine---and first successfully implemented in~\cite{AINR14}---that the hash function be specifically tailored to the dataset $P$, and that doing so would improve the approximation. In data-dependent LSH, the dataset is still arbitrary and worst-case; yet, by exploiting properties of an arbitrary dataset, one may improve on the best approximations. Put succinctly, we show that every dataset of $\EMD_s(\R^d, \ell_p)$ has special structure to exploit algorithmically which we cannot capture with known (data-independent) LSH.


\textbf{Sketching for Sets of Vectors.} By now, there are various techniques for dealing with computationally
``simple'' objectives of high-dimensional vectors in sublinear regimes. For example, for $\ell_p$-norms we now have an essentially complete understanding of sketching (i.e., communication complexity), locality-sensitive hashing, and metric embeddings~\cite{KNW10, BJKS04, DIIM04, OWZ14, AKR15, AR15}. This work, as well as recent developments in geometric streaming~\cite{CJLW22,CJKVY22,CW22,CCJLW23,CJK23} and parallel algorithms~\cite{czumaj2023fully,jayaram2023massively}, aims to develop sketching techniques (which were initially designed for a single high-dimensional vector) to support objectives over entire collections of high-dimensional vectors. In particular, an important technical contribution of this paper is to generalize the probabilistic tree embeddings of~\cite{CJLW22} (which were designed for streaming algorithms) to obtain an improved data-dependent LSH family for nearest neighbor search. 
We believe that the LSH families developed in this paper are an important step towards closing the gap in our understanding between sketching for individual vectors and sketching for sets of vectors. 


\ignore{On the other hand, there are a variety of ``computationally complex'' objectives, as for example, $\EMD$, the edit distance, the Ulam metric, and Schatten-$p$ metrics (among many others), which are very prevalent in theory and practice, yet remain poorly understood. These objectives admit more complicated (polynomial-time) algorithms; however, the algorithmic techniques for sketching these objectives are significantly worse and give weaker approximations for tasks like sketching, streaming, and nearest neighbor search. Specifically for $\EMD$, we overview (in Section~\ref{sec:technical-overview}) prior work obtaining a (probabilistic) embedding to $\ell_1$~\cite{AIK09}, as well as recent progress for streaming $\EMD$~\cite{CJLW22}. This work naturally fits within that line-of-work, where we show how to design (data-dependent) locality-sensitive hashing for $\EMD$ with improved approximation, and prove that these approximations are tight by reducing to sketching lower bounds.

\paragraph{Sketching Computationally ``Hard'' Metrics.} Throughout the past decades, there has been a variety of ``sketching'' techniques for ``computationally
simple'' objectives of high-dimensional vectors. The term ``sketching'' above is broadly construed, and encompasses techniques like linear sketching, locality-sensitive hashing, and embeddings, which are all useful in various models of sublinear algorithms. The term ``computationally simple'' is vague, but refers to objectives given by simple mathematical expressions of their underlying inputs (like sums over coordinate transformations). Most famous are $\ell_p$-norms, where we now understand (both upper- and lower-bounds) for sketching (i.e., communication complexity), locality-sensitive hashing, and embeddings~\cite{KNW10, BJKS04, DIIM04, OWZ14, AKR15}. On the other hand, there are a variety of ``computationally complex'' objectives, as for example, $\EMD$, the edit distance, the Ulam metric, and Schatten-$p$ metrics (among many others), which are very prevalent in theory and practice, yet remain poorly understood. These objectives admit more complicated (polynomial-time) algorithms; however, the algorithmic techniques for sketching these objectives are significantly worse and give weaker approximations for tasks like sketching, streaming, and nearest neighbor search. Specifically for $\EMD$, we overview (in Section~\ref{sec:technical-overview}) prior work obtaining a (probabilistic) embedding to $\ell_1$~\cite{AIK09}, as well as recent progress for streaming $\EMD$~\cite{CJLW22}. This work naturally fits within that line-of-work, where we show how to design (data-dependent) locality-sensitive hashing for $\EMD$ with improved approximation, and prove that these approximations are tight by reducing to sketching lower bounds.

As we overview shortly, Theorem~\ref{thm:main-intro-informal} will follow from designing new \textit{data-dependent} locality-sensitive hashing for $\EMD$. The resulting approximation is not the constant-factor approximations for $\ell_p$-spaces, however, the $\tilde{O}(\log s)$-factor appears naturally in the setting of data-dependent hashing. In Section~\ref{sec:dd-lb}, we show a $\Omega(\log s)$ lower bound on the approximation achievable by a data-dependent hash family (when the probability that close points collide, i.e., the $p_1$ which will appear below, is a constant). This lower bound highlights a natural limit of our techniques, as well as the connection between data-dependent hashing and a distributional version of sketching. 
In what follows, we review the notions of data-dependent and independent hashing, and state our results formally. }

\subsection{Overview of Contributions and Techniques}\label{sec:contributions}
We now overview the techniques involved in proving Theorem~\ref{thm:main-intro-informal}, and additionally state our formal results for data-dependent LSH (Theorem \ref{thm:LSH-main}) and nearest neighbor search (Theorem \ref{thm:ann-main}).  At a high level, this work can be seen within a progression of works, starting with~\cite{AIK08} and continuing with~\cite{backurs2020scalable,CJLW22}, on sketching for $\EMD$ via probabilistic tree embeddings. We aim to explain this progression, as it will highlight our main ideas (and the limitations of prior work).

\textbf{(Data-Independent) LSH for $\EMD$.}  An LSH for a metric space $(X,d_X)$ is a hash family $\calH$ which is so-called $(r, cr, p_1,p_2)$-sensitive. For a threshold $r \geq 0$, an approximation $c \geq 1$,  and $0 < p_2 < p_1 < 1$, the guarantees are:
\begin{enumerate}
    \item\label{en:close} \textbf{Close Points Collide}: $\prb{\bh \sim \calH}{\bh(x) = \bh(y)} \geq p_1$  for every $x, y \in X$ with $d_X(x,y) \leq r$. 
    \item\label{en:far} \textbf{Far Points Separate}: 
    $\prb{\bh \sim \calH}{\bh(x) = \bh(y)} \leq p_2$  for every $x, y \in X$ with $d_X(x,y) \geq cr$. 
\end{enumerate}
The seminal work of~\cite{IM98, HIM12} designed such LSH families for several metric spaces (like $(\R^d,\ell_p)$ for $p \in [1,2]$) and showed how to use them for $c$-approximate nearest neighbor with query time and space complexity governed by the gap between $p_1$ and $p_2$ (see Theorem~\ref{thm:hashing-to-nn}). Using~\cite{AIK08}, one may construct an LSH for $\EMD$ with an arbitrary threshold $r$, approximation $c = O(\log^2 s)$, and constant $0 < p_2 < p_1 < 1$ (resulting in a theorem like Theorem~\ref{thm:main-intro-informal}, although with approximation $O(\log^2 s)$).

\textbf{Probabilistic Tree Embeddings of~\cite{AIK08}.} The (data-independent) LSH for $\EMD$ crucially relies on an embedding from $\EMD_s(\R^d, \ell_1)$ into a randomized tree metric---this is known as a probablistic tree embedding.\footnote{\label{foot:embeddings} For applications in sublinear algorithms such as ours, it is important that the embeddings themselves can be efficiently stored and efficiently evaluated. Thus, the classical works on probabilistic tree embeddings~\cite{B98, FRT04} are not applicable. See Remark~\ref{rem:prob-trees}.}  
Specifically, \cite{AIK08} define a distribution supported over (weighted) trees $\bT$, as well as a mapping $\psi\colon\R^d \to \bT$ to leaves of the tree $\bT$, such that for any subset $\Omega \subset \R^d$ of at most $m$ vectors, \textbf{(i)} the embedding $\psi$ is non-contracting on $\Omega$ with high probability, i.e., $d_{\bT}(\psi(a), \psi(b)) \geq \|a - b\|_p$ for every $a, b \in \Omega$,\footnote{Note that $d_{\bT}(\cdot,\cdot)$ is the length of the path in $\bT$.} and \textbf{(ii)} the expectation of $d_{\bT}(\psi(a), \psi(b))$ is at most $O(\log m \log(d \Phi)) \cdot \| a - b\|_p$. As mentioned, there is a simple reduction to always consider the aspect ratio $\Phi$ and dimensionality $d$ to be $\poly(s)$ (see Lemma~\ref{lem:reduction-to-hypercube}), so this becomes a $O(\log m\log s)$ expected distortion. 

By applying $\psi$ to each vector in a set $x \in \EMD_s(\R^d,\ell_p)$, the mapping $\psi$ naturally induces a metric embedding of $\EMD_s(\R^d, \ell_p)$ into $\EMD_s(\bT, d_{\bT})$.  Applying the guarantees \textbf{(i)} and \textbf{(ii)} above to the set of vectors $\Omega = x \cup y$, where $x,y \in \EMD_s(\R^d,\ell_p)$, one can show that the embedding $\psi$ is non-contracting with high probability and  satisfies that for any $x,y \in \EMD_s(\R^d,\ell_p)$: 
\[\Ex_{\bT}\left[\EMD_{\bT}(\psi(x),\psi(y)) \right] \leq O(\log^2 s) \cdot \EMD(x,y).\] 
The reason for embedding $\EMD_s(\R^d,\ell_p)$ into $ \EMD_s(\bT,d_{\bT})$ is that $\EMD$ over tree-metrics is a much simpler metric. In particular, the greedy algorithm is optimal for $\EMD$ over trees, and as a consequence there is a folklore \emph{isometric} embedding of $\EMD_s(\bT,d_{\bT})$ into $\ell_1$~\cite{C02, I04b} (see Fact~\ref{fact:ell-1-embed}), thereby embedding a \textit{set} of vectors in a tree into a single vector in $\ell_1$. Finally, after applying this embedding into $\ell_1$, one can apply the classic LSH functions for $\ell_1$~\cite{IM98} (denoted as $\bphi$ below) to obtain a LSH function for $\EMD_s(\R^d,\ell_p)$. This process is shown in the diagram below, where the names of the embeddings are shown on top of the arrows, and the distortion of those embeddings is shown below:
   \begin{align}
\R^d \quad  \quad & \xrightarrow[]{a  \mapsto \psi(a)}  \quad \quad  \bT \nonumber \\ 
 & \quad \; \; \;  \DownArrow[0.85 cm][>=stealth,black, thick, dashed] \nonumber \\ 
\EMD(\R^d, \ell_p) \quad & \xrightarrow[O(\log^2 s)]{\; \; x  \mapsto \psi(x) \; \; }  \quad \EMD(\bT, d_{\bT}) \quad \xrightarrow[\text{isometric}]{\quad  \text{folklore} \quad}  \quad \ell_1 \quad \xrightarrow[]{\quad\bphi\quad}  \quad \{\text{hash buckets}\} \label{eq:hash-transform-intro-1}
\end{align}
Since the second mapping is isometric, the distortion of the entire embedding into $\ell_1$ is $O(\log^2 s)$, thus the resulting LSH for $\EMD_s(\R^d,\ell_p)$ is a $O(\log^2 s)$ factor larger than the distortion incurred by the LSH $\bphi$ for $\ell_1$.


\paragraph{Data-Dependent Probabilistic Tree Embeddings~\cite{CJLW22}.}
Recently,~\cite{CJLW22} improved the probabilistic tree embedding of~\cite{AIK08} by being data-dependent. They show that, for an arbitrary subset $\Omega$ of $m$ vectors in $(\R^d, \ell_p)$, there exists a probabilistic tree embedding $\psi_{\Omega}: (\Omega,\ell_p) \to (\bT_{\Omega}, d_{\bT_{\Omega}})$ which depends on $\Omega$, and that embeds $\Omega$ obtaining guarantees \textbf{(i)} and \textbf{(ii)} above as achieved by~\cite{AIK08}, except with an expected distortion of $\tilde{O}(\log(ms))$, where $m=|\Omega|$ (see Lemma~\ref{lem:cjlw} and Appendix~\ref{sec:dynamic-l1}).\footnote{Similarly to Footnote~\ref{foot:embeddings}, it is especially important that the embeddings be efficiently stored and evaluated. See Remark~\ref{rem:prob-trees}.} In order to compute $\EMD(x, y)$ given any $x, y \in \EMD_s(\R^d,\ell_p)$, the analogous diagram to above first considers the subset of vectors $\Omega = x \cup y$, generates $(\bT_{\Omega}, d_{\bT_{\Omega}})$, and proceeds by
\begin{align} 
\EMD(\Omega, \ell_p) \qquad\xrightarrow[\tilde{O}(\log (ms))]{x \mapsto \psi_{\Omega}(x) } \qquad \EMD(\bT_{\Omega}, d_{\bT_{\Omega}}) \qquad \xrightarrow[\text{isometric}]{\quad \text{folklore} \quad} \qquad \ell_1. \label{eq:transform-intro-2} 
\end{align}
An important point here is that the tree embedding into $(\bT_{\Omega}, d_{\bT_{\Omega}})$ depends on the set of vectors in $\Omega$. This means that, if we wanted to use the above embedding for nearest neighbor search, then even if we used an LSH for $\ell_1$ (e.g. the mapping  $\bphi$ above) to map to hash buckets, the resulting hash family would be for points in $\EMD(\Omega, \ell_p)$, and it is not at all clear what the set $\Omega$ should be. In fact, there are two immediate challenges here:
\begin{itemize}
    \item \textbf{Challenge 1}: In nearest neighbor search, the input is an arbitrary dataset $x_1,\dots, x_n \in \EMD_s(\R^d, \ell_p)$, where each $x_i$ is a subset of $(\R^d, \ell_p)$ of $s$ vectors. The natural choice is $\Omega = \bigcup_{i=1}^n x_i$. The resulting (data-dependent) probabilistic tree $(\bT_{\Omega}, d_{\bT_{\Omega}})$, and composition of the maps (with an LSH for $\ell_1$), would give an LSH family for $\EMD(\Omega, \ell_p)$. By construction, each $x_1,\dots, x_n$ is inside $\EMD(\Omega, \ell_p)$, so dataset vectors can be hashed. However, the approximation increases to $\tilde{O}(\log (ns))$, which is far from the claimed $\tilde{O}(\log s)$-bound, and may be strictly worse than the $O(\log^2 s)$ approximation of~\cite{AIK08}.
    \item \textbf{Challenge 2}: Even if we set $\Omega$ to all vectors used by $x_1,\dots, x_n$, a crucial component of LSH involves applying the hash functions to the (unknown) query point. In particular, the data structure will hash the dataset during preprocessing, and in the future, a query comes (which was unknown during preprocessing) and needs to be hashed as well. 
\end{itemize}

\textbf{Warm-Up: Overcoming Challenge 2.} We first show, as a warm-up and independent contribution, that the second challenge can be overcome by making~\cite{CJLW22} dynamic (Theorem~\ref{thm:dynamic-main} below, there is a reduction to $d, \Phi$ being $\poly(s)$). The data structure sets $\Omega = \bigcup_{i=1}^n x_i$, generates a tree embedding $(\bT_{\Omega}, d_{\bT_{\Omega}})$, and constructs a hash function to the dataset $x_1,\dots, x_n$. Then, whenever a query point $y \in \EMD(\R^d,\ell_p)$ comes, we first \emph{update} the tree to $(\bT_{\Omega \cup y}, d_{\bT_{\Omega \cup y}})$ (and corresponding hash functions) and identify the (few) dataset points $x_i$ whose hash value changes. This allows the algorithm to maintain a view consistent with having preprocessed the dataset with the tree $(\bT_{\Omega \cup y}, d_{\bT_{\Omega\cup y}})$.

\begin{theorem}[Dynamic and Data-Dependent Probabilistic Tree Embedding] \label{thm:warm-up}
    For a fixed $d \in \N$ and $p \in [1,2]$, there is a data structure that maintains maintains a set $\Omega \subset [\Delta]^d$ of $m$ vectors and an non-contracting embedding $\varphi: (\Omega,\ell_p) \to \bT$, with expected distortion $\tilde{O}(\log (m d \Delta))$ for any pair $x,y \in \Omega$.  Moreover, it supports the following operations in expected time $O(d \log (d \Delta))$
    \begin{itemize}
        \item \emph{\textbf{Query}}: Given a vector $x \in \Omega$, return the weighted path from the root of $\bT$ to $\varphi(x)$
        \item \emph{\textbf{Insertions/Deletions}}: Add or remove vectors from the set $\Omega$, and also return the updated weighted paths of every vector $v \in \Omega$ whose path weights changed from the insertion/deletion.
        \end{itemize}
\end{theorem}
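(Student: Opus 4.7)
The plan is to realize the static data-dependent tree embedding of \cite{CJLW22} (Lemma~\ref{lem:cjlw}) as an explicit dynamic data structure, exploiting the fact that this tree is a \emph{compressed} randomly-shifted quadtree on $[\Delta]^d$ whose shape depends on $\Omega$ only through which grid cells are occupied at each of the $O(\log \Delta)$ scales. Because the random shifts and the per-level edge weights of the CJLW construction are generated independently of $\Omega$, an update to $\Omega$ can only alter occupancy along one root-to-leaf chain of cells, so the tree $\bT$ changes in a highly local way.

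Concretely, I would store $\bT$ implicitly via a dictionary keyed by (level, cell-coordinate-after-random-shift) holding, for each currently occupied cell, its population count, the (lazily generated) random bits used by \cite{CJLW22} at that cell, and a single witness pointer to one leaf of its subtree. Computing the cell coordinate containing a point $x$ at a given level takes $O(d)$ time with the random shift in hand, so the $O(\log \Delta)$ ancestor cells of $x$ are all locatable in $O(d \log(d\Delta))$ time. A \textbf{Query} for $x \in \Omega$ then walks those cells and emits the CJLW-defined edge weight between consecutive occupied ancestors; an \textbf{Insertion}/\textbf{Deletion} walks the same chain, updating counts and creating/destroying cells whose count crosses the $0/1$ threshold. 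Because for any fixed snapshot of $\Omega$ the dynamically maintained $\bT$ coincides in distribution with the static CJLW tree on $\Omega$, the non-contracting property and the expected $\tilde O(\log(md\Delta))$ distortion are inherited verbatim from Lemma~\ref{lem:cjlw}, and no new metric argument is required.

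The main obstacle, and the reason I would defer the full bookkeeping to Appendix~\ref{sec:dynamic-l1}, is identifying and outputting all previously existing paths whose edge sequence was changed, within the stated per-operation budget. The structural fact that makes this possible is that a single insertion can only introduce at most one new branching cell into any previously compressed edge, so only leaves lying on the ``other side'' of the new branch have their root-to-leaf paths modified; the witness pointer maintained at each occupied cell lets the algorithm locate one such affected leaf in $O(d\log(d\Delta))$ time, re-emit its updated weighted path, and rotate the pointer so the next invocation discovers a different affected leaf. Deletions are handled symmetrically when a cell's count drops to zero and an internal split collapses. The delicate point, which the appendix must make precise, is arguing that this pointer-rotation scheme enumerates each affected leaf exactly once and that the amortized work per update remains $O(d \log(d\Delta))$ in expectation; the remaining components (lazy generation of random shifts under a global seed, consistent hashing of shifted cell coordinates) are standard.
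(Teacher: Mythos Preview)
Your proposal has a genuine gap stemming from a misreading of what makes the \cite{CJLW22} tree data-dependent. You assert that ``the random shifts and the per-level edge weights of the CJLW construction are generated independently of $\Omega$,'' and accordingly your whole update analysis concerns the tree \emph{topology}: which compressed cells become occupied, where a new branch is introduced, and which leaves sit on the other side of that branch. But in the construction the paper uses (see Figure~\ref{fig:DDquadtree-prelims}, Step~\ref{en:quadtree-weight-setting}, and Definition~\ref{def:mod-tree-weights}), the tree shape and the hash functions $\bphi_\ell$ are in fact data-\emph{independent}; what depends on $\Omega$ is precisely the edge \emph{weights}. Each edge $(v,v_u)$ with $\bElms(v_u)\cap\Omega\neq\emptyset$ carries weight $\|\bRep(v)-\bRep(v_u)\|_1$, where $\bRep(v)$ is a uniform sample from $\bElms(v)\cap\Omega$. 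Inserting a single point $x$ changes the set $\bElms(v)\cap\Omega$ for every $v$ on $x$'s root-to-leaf path, and in principle this could force \emph{every} representative along that path to be resampled, altering the weighted paths of all $m$ points---far outside your $O(d\log(d\Delta))$ budget.

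The idea your proposal is missing is the reservoir-style update in Claim~\ref{cl:remain-uni}: on insertion of $x$, set $\bRep(v)\leftarrow x$ with probability $1/|\bElms(v)\cap\Omega'|$ and leave it alone otherwise. This keeps $\bRep(v)$ uniform over the updated cell \emph{and} makes the expected number of paths to re-emit at level $\ell$ equal to $\tfrac{1}{|\bElms(v_\ell(x))\cap\Omega'|}\cdot|\bElms(v_\ell(x))\cap\Omega'|=1$, so the total expected work is $O(dL)$. Deletions are symmetric: if the deleted point was the current representative (probability $1/|\text{cell}|$), resample. None of this is captured by your ``new branching cell'' and witness-pointer mechanism, which tracks changes in the edge \emph{sequence} rather than in the edge \emph{weights}; even when no branch is created or destroyed, weights can change, and conversely the topology-only changes you analyze are not what drives the cost. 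The appendix you defer to would therefore not close the argument as stated.
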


\paragraph{Tree Construction and Proof of Theorem~\ref{thm:warm-up}.}\label{sec:tree-construction} Given~\cite{CJLW22}, the proof of Theorem~\ref{thm:warm-up} is very intuitive.
We first consider the case of embedding $\EMD$ over the hypercube $(\{0,1\}^d, \ell_1)$ (in Section~\ref{sec:hamming-cube-dynamic} and then extend to $(\R^d,\ell_p)$ in Section~\ref{sec:dynamic-l1}). The construction, in Figure~\ref{fig:DDquadtree-prelims}, builds the probabilistic tree $\bT$ of depth $O(\log(d))$ where each level $\ell$ samples $2^{\ell}$ random coordinates; each node $v$ at depth $\ell$ has $2^{2^{\ell}}$ child nodes $v_u$, one for each possible setting $u \in \{0,1\}^{2^{\ell}}$ of the $2^{\ell}$ sampled coordinates, and, this defines a natural mapping of $\{0,1\}^d$ to leaves to $\bT$~(Definition~\ref{def:quadtree-map}).\footnote{The above is a hypercube version of the ``randomly shifted grid,'' called ``quadtree'' in~\cite{backurs2020scalable}.} Moreover, for any vertex $v \in \bT$, we can define the set  $\Elms(v,\Omega) \subset \Omega$ to be the set of vectors $a \in \Omega$ whose root-to-leaf path (after the mapping $\psi_{\Omega}$) goes through $v$ (in Figure~\ref{fig:DDquadtree-prelims}, $\Elms(v,\Omega)$ corresponds to $\Elms(v) \cap \Omega$).

The data-dependent part of \cite{CJLW22} is how the edge weights are set. Specifically, the data-independent embedding of~\cite{AIK08} sets the weights at depth $\ell$ of $\bT$ to be proportional to $d / 2^{\ell}$, since vectors $a, b \in \{0,1\}^d$ at distance $d/2^{\ell}$ are first separated at depth $\ell$ with constant probability. In contrast, for a subset $S$, in~\cite{CJLW22}, an edge $(v, v_u)$ is defined by the distance from a random sampled vector $\bc \sim \Elms(v,\Omega)$ to a randomly sampled vector $\bc' \sim \Elms(v_u,\Omega)$ (Lemma~\ref{lem:cjlw-mod}). This data-dependent setting of the weights improves the expected distortion to $\tilde{O}(\log(md))$. Theorem \ref{thm:warm-up}  shows that this embedding can maintained dynamically. For example, whenever there is an insertion of $a \in \{0,1\}^d$ to $S$, we can find the root-to-leaf path of $a$ in $\bT$, and for each vertex $v$ on the path, we must update the draw $\bc$ for $v$ such that it remains uniform (now over $\Elms(v,\Omega) \cup \{a \}$). We do this by setting $\bc$ to $a$ with probability is $1/|\Elms(v,\Omega) \cup \{a\}|$, and leave otherwise (Claim~\ref{cl:remain-uni}). If we do change the sample, then we must update the weight of each edge incident to $v$, and therefore must update the embeddings of every $b \in \Elms(v)$. Thus, the expected number of embeddings that must be updated is constant, allowing for small expected update time (see Section~\ref{en:ds-dynamic}). 

\begin{remark}[Using Classical Probabilistic Tree Embeddings]\label{rem:prob-trees}
In sublinear algorithms, an embedding $f$ mapping $(X, d_X)$ to $(Y, d_Y)$ must have a succinct description and admit efficient evaluations of $f(x)$ (where time should be polynomial, or near-linear, in the description of $x$). General theorems for probabilistic tree embeddings, like~\cite{B98,FRT04}, obtain expected distortion $O(\log m)$ for any size-$m$ metric, but do not have efficient evaluations so cannot be used in sublinear settings. In this work, all embeddings can be evaluated $f(x)$ in time polynomial in the description of $x$.
\end{remark}

\paragraph{An Improved Data-Dependent LSH for $\EMD$.} The above dynamic embedding still suffers a $\tilde{O}(\log(ns))$ distortion, and does not address \textbf{Challenge 1}. We now proceed with the main technical component, of designing a data-dependent LSH for $\EMD_s(\R^d,\ell_p)$. It turns out that for nearest neighbor search, it suffices to tailor (and relax) the second condition of LSH to an arbitrary fixed distribution (see Definition~\ref{def:data-dep} and Theorem~\ref{thm:hashing-to-nn} for how data-dependent hashing implies nearest neighbor search). Specifically, a hash family $\calH$ is $(r, cr,p_1,p_2)$-sensitive for a distribution $\mu$ supported on a metric $(X, d_X)$ whenever:
\begin{enumerate}
    \item\label{en:close-dd} \textbf{Close Points Collide}: $\Prx_{\bh \sim \calH}[\bh(x) = \bh(y)] \geq p_1$ for every $x, y \in X$ with $d_X(x,y) \leq r$.
    \item\label{en:far-dd} \textbf{Far Points Separate on Average}: For any $x \in X$, the probability over $\bh \sim \calH$ and $\by \sim \mu$ that $\bh(x) = \bh(\by)$ and $d_X(x, \by) \geq cr$ is at most $p_2$.
\end{enumerate}
The only difference is the second condition (\ref{en:far-dd}) above, where one considers any $x \in X$ and ensures that a sampled point $\by \sim \mu$ far from $x$ collides with probability at most $p_2$ (see Section~\ref{def:ann-from-dd}, for comparison with~\cite{AR15}). Roughly speaking, even though $\mu$ is arbitrary, $\calH$ ``knows'' $\mu$, and can cater to particular properties of $\mu$. Our main technical result is designing a data-dependent LSH for $\EMD_s(\R^d, \ell_p)$ which satisfies the conditions above for approximation $\tilde{O}(\log s)$ with $0 < p_2 < p_1 < 1$. In particular, we prove the following theorem, which by a reduction from approximate near neighbors to data-dependent LSH (Theorem~\ref{thm:hashing-to-nn}) implies Theorem~\ref{thm:main-intro-informal} by setting $p_2$ to $1/10$ and $p_1 = 1-\eps$.

\begin{theorem}[Data-Dependent Hashing for $\EMD$ (Theorem \ref{thm:data-dep-hashing} + Lemma \ref{lem:reduction-to-hypercube})]\label{thm:LSH-main}
For any $s, d \in \N$, $p \in [1,2]$, a threshold $r > 0$, and any $0 < p_2 < p_1 < 1$, there exists a data structure with the following guarantees:
\begin{itemize}
    \item \emph{\textbf{Preprocessing}}: The data structure receives sample access to a distribution $\mu$ supported on $\EMD_s(\R^d,\ell_p)$, and in time $\poly(sd/((1-p_1)p_2))$, initializes a draw $\bh$ from a hash family $\calD$ (which depends on $\mu$) and is $(r, cr, p_1, p_2)$-sensitive for $\mu$ (see Definition~\ref{def:data-dep}), with
    \[ c = \tilde{O}\left(\log s \cdot \dfrac{\log^2(1/p_2)}{1-p_1} \right).\]
    \item \emph{\textbf{Query}}: Given any $q \in \EMD_s(\R^d, \ell_p)$, the data structure computes $\bh(q)$ in time $\poly(sd)$.
\end{itemize}
\end{theorem}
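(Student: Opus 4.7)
The plan is to prove Theorem~\ref{thm:LSH-main} in three phases: reduce to the hypercube case, construct distinct hash families tailored to dense versus sparse regions of $\mu$, and then glue them together with a routing scheme based on sampled centers.

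First, I apply Lemma~\ref{lem:reduction-to-hypercube} to reduce from $\EMD_s(\R^d, \ell_p)$ to $\EMD_s(\{0,1\}^{d'}, \ell_1)$ with $d' = \poly(s)$, so $\log d' = O(\log s)$ throughout. The data structure then draws $N = \poly(sd/((1-p_1)p_2))$ independent samples $\bz_1,\ldots,\bz_N \sim \mu$, and calls $\bz_i$ a \emph{center} whenever at least a $\tau$-fraction of the samples lie within $\EMD$-distance $O(r)$ of $\bz_i$, for a threshold $\tau$ depending on $p_2$. For each center $\bz_i$, collect the union $\Omega_i$ of all coordinate vectors appearing in the samples near $\bz_i$; note $|\Omega_i| \leq N \cdot s = \poly(s)$. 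On $\Omega_i$, instantiate the dynamic data-dependent tree embedding of Theorem~\ref{thm:warm-up} (built on~\cite{CJLW22}), which has expected distortion $\tilde{O}(\log(|\Omega_i| \cdot s)) = \tilde{O}(\log s)$, and compose it with the folklore isometric embedding of tree-$\EMD$ into $\ell_1$ followed by an $\ell_1$-LSH $\bphi$ as in diagram~(\ref{eq:transform-intro-2}), to obtain a hash family $\calH^{(i)}_{\mathrm{dense}}$. In parallel, let $\calH_{\mathrm{sparse}}$ denote the data-independent LSH of~\cite{AIK08}, which has worst-case distortion $O(\log^2 s)$.

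To glue, the final hash $\bh(x)$ first tests, in $\poly(sd)$ time via any constant-factor $\EMD$ approximation, whether $\EMD(x,\bz_i) \leq O(r)$ for some center $\bz_i$; if so, it outputs $\calH^{(i)}_{\mathrm{dense}}(x)$ tagged with $i$, and otherwise it outputs $\calH_{\mathrm{sparse}}(x)$ tagged with $\bot$. Verifying property~(\ref{en:close-dd}) is direct: for $\EMD(x,y) \leq r$, both receive the same tag (by choosing the routing radius to be $\gg r$), and then the inner hash collides with probability at least $p_1$ since the embedding is non-contracting and $\bphi$ is an $\ell_1$-LSH. For property~(\ref{en:far-dd}), fix $x$ and sample $\by \sim \mu$ with $\EMD(x,\by) \geq cr$. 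If the tags match on some dense ball $\bz_i$, then inside that ball the data-dependent embedding incurs only $\tilde{O}(\log s)$ expected distortion, so the $\ell_1$-LSH separates $(x,\by)$ with probability $\geq 1-p_2$ once we take $c = \tilde{O}(\log s \cdot \log^2(1/p_2)/(1-p_1))$.

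The genuinely difficult case, which I expect to be the main obstacle, is when $x$ carries the tag $\bot$ (i.e.\ $x$ lies outside every dense ball). Here one must show that $\calH_{\mathrm{sparse}}$ achieves $\tilde{O}(\log s)$-approximation when averaged against $\by \sim \mu$ with $\EMD(x,\by) \geq cr$, even though its worst-case distortion is $O(\log^2 s)$. The intuition is that the extra $\log s$ factor in the AIK worst-case bound stems from heavy-tailed events where the probabilistic tree places some vector pair on a long root-to-root path; for $x$ to sit in a sparse region of $\mu$, the $\mu$-mass of points capable of producing such bad configurations against $x$ must itself be small, for otherwise $x$ would have landed inside a dense ball around some typical $\bz_i$. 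Making this rigorous requires conditioning on the structure of the AIK tree, isolating the contribution of ``bad'' vector pairs, and controlling their aggregate $\mu$-mass via concentration of the empirical distribution on $\bz_1,\ldots,\bz_N$. A final union bound over the at most $N$ centers, together with the standard amplification using $\log(1/p_2)/(1-p_1)$ parallel repetitions to widen the gap between $p_1$ and $p_2$, then delivers the approximation stated in the theorem.
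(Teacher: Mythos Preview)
Your proposal has two substantial gaps that the paper's actual argument handles differently.

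First, your routing criterion (closeness in $\EMD$ to a sampled center $\bz_i$) is too coarse to carve out a region on which a data-dependent tree achieves $\tilde{O}(\log s)$ distortion. The paper explicitly notes that one can have distributions $\mu$ where all pairs $\by,\by'\sim\mu$ are at $\EMD$ distance $\approx r$, so no centers exist under your scheme and every point falls to the sparse branch. The paper replaces $\EMD$-balls by \emph{Chamfer} neighborhoods of the union $\bOmega=\bigcup_i \by_i$ of sampled vectors, and then proves the key bound
\[
\Ex_{\bT}[\EMD_{\bT}(x,y)]\ \le\ \tilde{O}(\log s)\cdot \EMD(x,y)\cdot\Bigl(1+\log\bigl(\tfrac{\Chamfer(x,\bOmega)}{\EMD(x,y)}+1\bigr)\Bigr),
\]
which tolerates $\Chamfer(x,\bOmega)$ up to $r\cdot\poly(\log s)$ while retaining $\tilde{O}(\log s)$ distortion. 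This requires the hybrid $\SampleTree$ construction (data-dependent weights on edges hit by $\hat\bOmega$, data-independent $\xi\cdot d/2^\ell$ weights elsewhere, with $\hat\bOmega=\nbr(\bOmega)$) rather than simply invoking Theorem~\ref{thm:warm-up} on $\Omega_i$; note that Theorem~\ref{thm:warm-up} only embeds points \emph{in} $\Omega$, and you give no mechanism to evaluate a fixed hash on a query $x\notin\Omega_i$.

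Second, your sparse-case plan attacks the problem from the wrong direction. You propose to argue that the AIK hash is better-than-worst-case (i.e., effectively $\tilde O(\log s)$ rather than $O(\log^2 s)$) against $\by\sim\mu$ when $x$ lies in a sparse region; the sketch about ``heavy-tailed tree events having small $\mu$-mass'' does not go through, and there is no such improvement of AIK in the paper. Instead, the paper defines $x$ to be ``locally-dense'' precisely when the data-independent hashes $\calH(\tau,\ell)$ \emph{fail} the $p_2$-property against $\mu$, and shows (Lemma~\ref{lem:dense-conse}) that failure forces $\Ex[\Chamfer(x,\bOmega)]\le r\cdot\polylog(s)$, feeding back into the bound above. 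Conversely, if $x$ is \emph{not} locally-dense, then some $\calH(\tau,\ell)$ already splits $x$ from all but a $p_2$-fraction of $\mu$; one does not need AIK to be accurate here, only to make progress in measure. The gluing then checks, per level $\ell$, the $\mu$-mass of $x$'s bucket under the already-drawn $\bh_\ell$ and routes to the first level where this mass is $\le p_2/3$, falling through to $\SampleTree$ only when all levels fail---which is exactly the locally-dense case. Your proposal is missing both the Chamfer-based extension lemma and this self-check gluing, and without them the $\bot$ branch cannot be closed.
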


The above is our main technical theorem, and most of the work is devoted to that proof. Similarly to before, it will suffice via a simple reduction, to consider $\EMD$ over the hypercube $\{0,1\}^d$ with $\ell_1$ distance, where $d \leq \poly(s)$ and the threshold $r =\omega(s)$ (see Lemma~\ref{lem:reduction-to-hypercube}). Then, the construction of the data-dependent hashing scheme from Theorem~\ref{thm:LSH-main} can be split into three parts, which we now describe. 
 
\textbf{Step 1: The $\SampleTree$ Embedding.} 
Since we aim for a $\tilde{O}(\log s)$-approximation, we will use the data-dependent probabilistic trees of~\cite{CJLW22} on the union $\bOmega$ of a small number of $m = \poly(s)$ samples $\by_1,\dots, \by_m \sim \mu$ (i.e., $\bOmega = \bigcup_{i=1}^m \by_i$, which is a subset of $s\cdot m$ vectors in $\{0,1\}^d$, in boldface $\bOmega$ since it is random). Composing the data-dependent probabilistic tree $\bT_{\bOmega}$ (which we will refer to as $\bT$) with the isometric embedding defines an embedding of $\EMD_s(\bOmega)$ into $\ell_1$, we aim to \emph{extend} the embedding to the entire space $\EMD_s(\{0,1\}^d)$:\footnote{In both cases, $\EMD_s(\bOmega)$ and $\EMD_{s}(\{0,1\}^d)$ refers to $\EMD_s(\bOmega, \ell_1)$ and $\EMD_s(\{0,1\}^d,\ell_1)$, respectively. Furthermore, the map $\psi \colon \R^d \to \bT$ is implicit in the notation, so we write $d_{\bT}(a,b)$ for $d_{\bT}(\psi(a), \psi(b))$ and $\EMD_{\bT}(x,y)$ for $\EMD_{\bT}(\psi(x), \psi(y))$.}
\begin{align}
&\EMD_s(\bOmega) \quad  \xrightarrow[\tilde{O}(\log s)]{\tiny \cite{CJLW22}}  \quad \EMD_s(\bT, d_{\bT}) \quad \xrightarrow[\text{isometric}]{\quad  \text{folklore} \quad} & \hspace{-2cm}\ell_1 \label{eq:transform-before}\\ 
&\qquad  \UpArrow[0.5 cm][>=stealth,black, thick] & \hspace{-2cm}\UpArrow[0.5 cm][>=stealth,black, thick]\hspace{.2cm}  \nonumber  \\ 
&\EMD_s(\{0,1\}^d) \quad \xrightarrow[\qquad\qquad\qquad\qquad\qquad\qquad\qquad\qquad]{\text{desired new map in Section~\ref{sec:sample-tree-def}}} &  \hspace{-2cm} \ell_1 \label{eq:extension-intro}
\end{align}
In the above diagram, (\ref{eq:transform-before}) has expected $\tilde{O}(\log s)$-distortion from $\EMD_s(\bOmega)$ to $\ell_1$ from~\cite{CJLW22} on the samples $\by_1,\dots ,\by_m \sim \mu$. We then define the extension (\ref{eq:extension-intro}) of (\ref{eq:transform-before}), which is a natural ``hybrid'' of~\cite{CJLW22} and~\cite{AIK08}, that we call $\SampleTree(\mu,m)$ in Section~\ref{sec:sample-tree-def}. In particular, $\SampleTree(\mu, m)$ is defined similarly to the tree construction in Theorem~\ref{thm:warm-up} but with the following combination of edge weights:
\begin{itemize}
    \item \textbf{Data-Dependent Weights}: We let $(\bT, d_{\bT})$ be the data-dependent probabilistic tree embedding of~\cite{CJLW22} on $\bOmega$ which defines the edge weights $(v, v_u)$ for a node $v$ at level $\ell$ in the ``data-dependent'' fashion when $v_u$ contains vectors from $\bOmega$ (recall, the average distance of vectors sampled from $\Elms(\cdot)$ in $v$ and $v_u$). Note that we will modify the set $\bOmega$ very slightly later on (see definition of $\hat{\bOmega}$ in the subsequent discussions). 
    \item\textbf{Data-Independent Weights}: Suppose, on the other hand, that $(v, v_u)$ is an edge with $v$ at depth $\ell$, such that $v_u$ does not contain any vectors from $\bOmega$, we set the weight of $(v, v_u)$ according to~\cite{AIK08}, to $\xi \cdot d / 2^{\ell}$ (for a parameter $\xi = \tilde{O}(\log s)$). 
\end{itemize}
With both data-dependent and data-independent weights, we obtain a tree metric $(\bT, d_{\bT})$, and an embedding of the entire space $\EMD_s(\{0,1\}^d)$ to $\EMD_s(\bT)$. It is not too difficult to show that the extension (\ref{eq:extension-intro}) given by $\SampleTree(\mu,m)$ is non-contracting with high-probability (Lemma~\ref{lem:sample-tree-contr}), roughly speaking, because both the data-independent and data-dependent probabilistic trees are non-contracting with high probability. 

The more subtle argument, however, is upper bounding the expansion. On the one hand, suppose $x, y \in \EMD_s(\{0,1\}^d)$ are two arbitrary points, and all vectors in $x \cup y$ happened to be in $\bOmega$, then (\ref{eq:extension-intro}) inherits the $\tilde{O}(\log s)$ expected distortion from $(\bT, d_{\bT})$. On the other hand, if all vectors of $x \cup y$ are very far from $\bOmega$, then the root-to-leaf paths of vectors in $x$ and $y$ in $\bT$ are mostly disjoint from those of $\bOmega$. This means $\EMD_{\bT}(x, y)$ is effectively always using the data-independent weights, and similarly to the analysis of~\cite{AIK08}, incurs distortion $O(\log^2 s)$. In summary, the distortion of $\SampleTree(\mu,m)$ on a pair of points $x, y \in \EMD_s(\{0,1\}^d)$ depends on how well the sample $\bOmega$ ``represents'' the two points $x,y$ geometrically. One needs a suitable notion of how the set $\bOmega$ ``represents'' a region of $\EMD_s(\{0,1\}^d)$; then, we can partition $\mu$ into the region represented by $\bOmega$ that which the data-dependent probabilistic tree obtains approximation $\tilde{O}(\log s)$. 

\newcommand{\Chamfer}{\textsf{Chamfer}}
\paragraph{Step 2(a): Extensions on Chamfer Neighborhoods.} Our notion of representation in $\EMD_s(\{0,1\}^d)$ will consider the \textit{Chamfer Distance}, which is an (assymmetric) measure capturing dissimilarity of subsets in $\R^d$. Formally, given two subsets of vectors $x, z$ in $\{0,1\}^d$, we use the Chamfer distance from $x$ to $z$ in $\{0,1\}^d$ with $\ell_1$ distance,
\[ \Chamfer(x,z) = \sum_{a \in x} \min_{b \in z}\|a - b\|_1. \]
Chamfer lower bounds $\EMD(\cdot,\cdot)$, since it relaxes the bijection condition $\pi\colon x \to z$, and is much simpler to reason about. In the context of the extension (\ref{eq:extension-intro}), it captures, for any point $x \in \EMD_s(\{0,1\}^d)$, how far $x$ is from $\bOmega$ (and from the data-dependent edge weights in $\SampleTree(\mu,m)$). A naive argument proceeds as follows: consider $x, y \in \EMD_s(\{0,1\}^d)$, let $\sigma \colon x\cup y \to \bOmega$ be the nearest-neighbor map realizing $\Chamfer(x\cup y, \bOmega)$, and let $\sigma(x), \sigma(y)$ be the subsets of $\bOmega$ obtained by applying $\sigma$ to each vector in $x,y$. First, \textbf{(i)} the expected $\EMD_{\bT}(\sigma(x), \sigma(y))$ is at most $\tilde{O}(\log s) \cdot \EMD(\sigma(x), \sigma(y))$ by the data-dependent edge weight analysis; second, \textbf{(ii)} $(\bT, d_{\bT})$ achieves $O(\log^2 s)$ expected distortion on $x \cup y \cup \sigma(x) \cup \sigma(y)$ by the data-independent edge weight analysis. Thus, for a fixed sample $\bOmega$, the triangle inequality would result in the upper bound:
\begin{align}
    \Ex_{\bT}\left[ \EMD_{\bT}(x, y)\right] 
        &\leq O(\log^2 s) \cdot \Chamfer(x \cup y, \bOmega) + \tilde{O}(\log s) \cdot \EMD(x, y).\label{eq:naive-chamfer}
\end{align}
By (\ref{eq:naive-chamfer}), for any pair of points $x, y \in \EMD_s(\{0,1\}^d)$ in a $\Theta(\EMD(x,y)/\log s)$ Chamfer neighborhood of $\bOmega$, the $\SampleTree$ embedding will give a $\tilde{O}(\log s)$ distortion to their distance. In other words, given the threshold $r$ (from the definition of data-dependent hashing), we can consider a Chamfer neighborhood of size $\Theta(r/\log s)$ around $\bOmega$.
Unfortunately, this Chamfer neighborhood will not be sufficiently large, as it is easy to consider natural datasets where all pairs $\by,\by' \sim \mu$ have Chamfer distance $r$ from each other, in which case this neighborhood would be empty, and (\ref{eq:naive-chamfer}) would only give a $O(\log^2 s)$ approximation. Thus, we will need to give a significantly improved bound 
than (\ref{eq:naive-chamfer}), to obtain a $\tilde{O}(\log s)$ approximation.

%

\begin{quote}
\textbf{Key Idea 1}: We demonstrate that, with a last modification to $\SampleTree(\mu,m)$, all points $x,y$ in a Chamfer neighborhoods of radius $\EMD(x,y) \cdot \poly(\log s)$ (for arbitrary constant power) around $\bOmega$ still maintain a $\tilde{O}(\log s)$ expected distortion, and this will suffice for the remainder of the argument. 
\end{quote}
Specifically, in Lemma~\ref{lem:sample-tree-expan} (using Lemma~\ref{lem:dense-conse}), we argue that in $\SampleTree(\mu, m)$, if in addition to taking $m$ samples $\by_1,\dots, \by_m \sim \mu$ and letting $\bOmega = \bigcup_{i=1}^m \by_i$, we let 
\[ \hat{\bOmega} = \nbr(\bOmega) = \left\{ b' \in \{0,1\}^d : \exists b \in \bOmega, \|b - b'\|_1 \leq 1 \right\}, \]
where $|\hat{\bOmega}| \leq \poly(s)$ (recall $m$ is $\poly(\log s)$ and $d$ is $\poly(s)$), and define data-dependent weights with respect to $\hat{\bOmega}$, then we have the improved version of (\ref{eq:naive-chamfer}): 
\begin{align} 
\Ex_{\bT}\left[ \EMD_{\bT}(x,y)\right] \leq \tilde{O}(\log s) \cdot \EMD(x, y) \left( 1 + \log\left(\frac{\Chamfer(x, \bOmega)}{\EMD(x,y)} + 1 \right) \right). \label{eq:better-chamfer}
\end{align}
Before overviewing the proof of (\ref{eq:better-chamfer}), we note how it leads to the extension we desire:
\begin{itemize}
    \item We call a point $x \in \EMD_s(\{0,1\}^d)$ ``locally-dense'' with respect to $\mu$ if on a random sample $\by_1,\dots, \by_m \sim \mu$, letting $\bOmega = \bigcup_{i=1}^m \by_i$ satisfies $\Chamfer(x, \bOmega) \leq r \cdot \poly(\log(s))$ in expectation. (The above is the important consequence of the locally-dense in Definition~\ref{def:ingred-2}, see Lemma~\ref{lem:dense-conse}).
    \item Then, if $x, y \in \EMD_s(\{0,1\}^d)$ is an arbitrary pair with $\EMD(x,y) \leq r$, and $x$ is locally-dense with respect to $\mu$, then when we sample $\bOmega$, we obtain the expected bound on $\Chamfer(x,\bOmega)$, and then (\ref{eq:better-chamfer}) implies the expected $\EMD_{\bT}(x,y)$ is at most $\tilde{O}(\log s) \cdot r$.
\end{itemize}

\paragraph{Step 2(b): Proof of Equation (\ref{eq:better-chamfer}) (in Section~\ref{sec:proofofexpand}).}

Consider a pair of points $x, y \in \EMD_s(\{0,1\}^d)$ where $x$ is locally-dense for $\mu$, and a sample $\bOmega$. Our goal is now to upper bound the expectation of $\EMD_{\bT}(x, y)$. 
We proceed by bounding the expected distortion of the probabilistic tree $\bT$ as an embedding of $\{0,1\}^d$, where $\bT$ is generated from $\SampleTree(\mu, m)$, using the data-dependent edge weights on $\hat{\bOmega}$. Consider any $a \in x$ and let $b \in y$ be the vector assigned to $a$ in an optimal matching which realizes $\EMD(x,y)$, and furthermore, let $c \in \bOmega$ be the closest vector to $a$. 

The distance $d_{\bT}(a, b)$ is given by the sum of edge weights along the path in $\bT$ from the leaf containing $a$ to the leaf containing $b$. We break up the path into four segments, which naturally divides into two segments (one for each $a$ or $b$) that meet at the lowest common ancestor (LCA) of $a,b$:
\begin{itemize}
    \item The first segment comes up from the leaf containing $a$, and proceeds up via edges $(v, v_u)$ satisfying \textbf{(i)} the vector $b$ is not in $v_u$'s subtree, i.e., $a$ and $b$ have been ``split'' above node $v_u$, and \textbf{(ii)} the weight on $(v, v_u)$ is data-independent, so $\Elms(v) \cap \hat{\bOmega}$ is empty, and thus $a$ has also been split from every vector $c'$ with $\|c - c'\|_1 \leq 1$ (from $\hat{\bOmega})$ above node $v$.
    \item The second segment continues up after the first segment, on edges $(v, v_u)$ satisfying \textbf{(iii)} the vector $b$ is not in $v_u$'s subtree, so $a$ and $b$ remain ``split'' before $v_u$; however, \textbf{(iv)} the weight on $(v, v_u)$ is data-dependent, so $\Elms(v) \cap \hat{\bOmega}$ is non-empty.
\end{itemize}
The third and fourth segment proceed up from $b$, and are defined analogously. Note that, the second and fourth segment meet at the LCA $v$ of $a$ and $b$, and that any of the four segments may be empty. We overview the expected contribution of the first and second segments (and the third and fourth follow analogously).\footnote{Even though $a$ and $b$ are non-symmetric (as $x$ is the locally-dense point), the triangle inequality implies $\Chamfer(y, \bOmega) \leq \Chamfer(x, \bOmega) + \EMD(x, y)$, which will lead to only a constant factor loss in the symmetric argument.} The second segment (that which contains only data-dependent edge weights) is easiest to upper bound (Lemma~\ref{lem:sample-tree-expan-dep} in Section~\ref{sec:proof-sample-tree-exp-dep}). Roughly speaking, imagine a data-dependent probabilistic tree $\tilde{\bT}$ on $\hat{\bOmega} \cup \{ a, b \}$, which by~\cite{CJLW22} satisfies $\Ex_{\tilde{\bT}}[d_{\tilde{\bT}}(a, b)] \leq \tilde{O}(\log s) \cdot \|a-b\|_1$. The edge weights on the second segment from $\hat{\bOmega}$ differ from those of $\hat{\bOmega} \cup \{ a, b \}$ only in that the vector $a$ contributes to the average distance from $(v, v_u)$ in $\hat{\bOmega} \cup \{ a, b\}$ but may not in $\hat{\bOmega}$; since $\Elms(v_u) \cap \hat{\bOmega}$ is always non-empty in the second segment, one can account for this by losing a constant factor. 

We turn to the first segment, where we incorporate $\|a - c\|_1$, which later contributes to $\Chamfer(x, \bOmega)$ in (\ref{eq:better-chamfer}) (Lemma~\ref{lem:sample-tree-expan-ind}). Here, the important point is that an edge at depth $\ell$ contributes to the first segment whenever, among the sampled coordinates up to depth $\ell$, there is a coordinate $\bi_j \sim [d]$ where $a_{\bi_j} \neq b_{\bi_j}$ (so $a, b$ split), and in addition, there is another coordinate $\bi_{j'} \sim [d]$ where $a_{\bi_{j'}} \neq c_{\bi_{j'}}$. This is because at least two coordinate samples $\bi_{j'},\bi_{j''} \sim [d]$ must disagree on settings of $a$ and $c$ (otherwise, some $c'$ and $a$ are not split), but one of them may be $\bi_j$. This crucial observation will imply our desired bound. Consider the two levels $\ell_b$ and $\ell_c$ of the tree where:
\[ \frac{d}{2^{\ell_{b}}} \leq \|a - b\|_1 \leq 2 \cdot \frac{d}{2^{\ell_{b}}} \qquad \text{and} \qquad  \frac{d}{2^{\ell_{c}}} \leq \|a - c\|_1 \leq 2 \cdot \frac{d}{2^{\ell_c}}. \]
We defined $\ell_b, \ell_c$ such that $\bT$ splits $a,b$ before depth $\ell_b$ with constant probability, and $a,c$ before depth $\ell_c$ with constant probability. One can show for $z \in \{ b, c\}$ and $k > 0$, the probability that $\bT$ splits $a, z$ before depth $\ell_z - k$ is $\Theta(2^{-k})$. Recall that the data-independent edge weights at level $\ell$ are $\xi \cdot d / 2^{\ell}$, for $\xi = \tilde{O}(\log s)$. Thus, we upper bound the expected contribution of the first segment by considering levels which are below $\ell_b$, potentially between $\ell_b$ and $\ell_c$, and above both $\ell_c$ and $\ell_b$:
\begin{itemize}
    \item \textbf{Levels Below $\ell_b$}: Here, weights on levels $\ell_b + k$ contribute edge weight $\xi \cdot d / 2^{\ell_b + k}$, which is equal to $O(\xi) \cdot \|a-b\|_1 \cdot 2^{-k}$, and summing over $k > 0$ gives a geometric sum $O(\xi) \cdot \|a-b\|_1$.
    \item \textbf{Levels Between $\ell_b$ and $\ell_c$}: There are potentially $(\ell_b - \ell_c)^+$ levels between $\ell_b$ and $\ell_c$, and each level $\ell = \ell_b - k$ contributes edge weight $O(\xi) \cdot \|a-b\|_1 \cdot 2^k$, but since $a,b$ must split before $\ell_b - k$, the edge appears with probability $\Theta(2^{-k})$. This gives a total contribution of $O(\xi) \cdot \|a-b\|_1 \cdot (\ell_b - \ell_c)^+$.
    \item \textbf{Levels Above $\ell_c$ and $\ell_b$}: The weight of a level $\ell = \ell_b - (\ell_b - \ell_c)^+ - k$ is $O(\xi) \cdot \|a-b\|_1 \cdot 2^{(\ell_b - \ell_c)^+ + k}$. Here, such edges appear with probability $\Theta(2^{-((\ell_b-\ell_c)^+ + k)})$ since $a,b$ must split, \emph{times} $\Theta(2^{-k})$ since $a,c$ must split (the events are negatively correlated since sampled coordinates must be distinct). For each $k > 0$, this gives a contribution of $O(\xi) \cdot \|a-b\|_1 \cdot 2^{-k}$, which forms a geometric sum. 
\end{itemize}
This gives the argument for the first segment; the analogous argument for the third and fourth segments gives an expected bound for $d_{\bT}(a,b)$ where the bottleneck are the first and third segments. Using the setting of $\xi = \tilde{O}(\log s)$, our bound becomes
\begin{align}
    \Ex_{\bT}\left[ d_{\bT}(a,b) \right] &= \tilde{O}(\log s) \cdot \|a-b\|_1 \cdot \left( 1 + (\ell_b - \ell_c)^+ \right) \nonumber \\
        &= \tilde{O}(\log s) \cdot \|a-b\|_1 \left(1 + \log\left(\frac{\|a-c\|_1}{\|a-b\|_1} + 1 \right) \right) 
    \label{eq:segments}
\end{align}

The last part, which combines the individual bounds for $\Ex_{\bT}[d_{\bT}(a,b)]$ uses Jensen's inequality: let $\pi \colon x\to y$ denote the bijection realizing $\EMD(x,y)$ and $\sigma \colon x \to \bOmega$ the mapping realizing $\Chamfer(x, \bOmega)$. Then, consider the distribution $\calD$ over $x$ which samples $\ba$ with probability proportional to the contribution of $\ba$ in $\EMD(x,y)$, i.e., $\|a - \pi(a)\|_1 / \EMD(x,y)$:
\begin{align*}
    \Ex_{\bT}\left[ \dfrac{\EMD_{\bT}(x,y)}{\EMD(x,y)}\right] \leq \Ex_{\bT}\left[ \Ex_{\ba\sim\calD}\left[\dfrac{d_{\bT}(\ba,\pi(\ba))}{\|\ba-\pi(\ba)\|_1} \right] \right] &\leq \Ex_{\ba\sim\calD}\left[ \tilde{O}(\log s) \cdot \left(1 + \log\left(\frac{\|\ba - \sigma(\ba)\|_1}{\|\ba - \pi(\ba)\|_1} +1\right) \right) \right] \\
    &\leq \tilde{O}(\log s) \left(1 + \log\left( \dfrac{\Chamfer(x, \bOmega)}{\EMD(x,y)} + 1\right) \right),
\end{align*}
which completes (\ref{eq:better-chamfer}).

\textbf{Step 2(c): Locally Dense and non-Locally Dense Points. } Given the analysis of $\SampleTree(\mu, m)$, we may compose (\ref{eq:extension-intro}) with a LSH for $\ell_1$ to obtain a hash family with the following properties (see Lemma~\ref{lem:ingred-3}). For an arbitrary choice of threshold $r > 0$, and $0 < p_2 < p_1 < 1$, and any distribution $\mu$, the hash family $\calH$ (which depends on $\mu$) has approximation $c = \tilde{O}(\log s)$. It always satisfies the ``$p_2$-property'' (i.e., that far points separate) because the $\SampleTree(\mu,m)$ is non-contracting, but only satisfies the ``$p_1$-property'' on close pairs points $x,y$ where $x$ is locally-dense with respect to $\mu$. As mentioned, the important property of ``locally-dense'' is that, if we consider $\by_1,\dots, \by_m \sim \mu$ (where $m$ is only $\poly(\log s)$), then setting $\bOmega = \bigcup_{i=1}^m \by_i$ satisfies $\Chamfer(x, \bOmega) \leq r \cdot \log^{10} s$ in expectation (we used $10$ as an arbitrary setting of the $\poly(\log s)$ to illustrate the point-to-come).

Now divide $\mu$ into two regions: the locally-dense points, and the remainder. The $\SampleTree(\mu,m)$ embedding composed with an LSH for $\ell_1$ handles the locally-dense region. The remaining region is handled by the following observation. 
We consider a point $x$ and sample from a (weak) data-independent LSH of~\cite{AIK08}, $\calH$, which is $(r, \tilde{c}r, p_1, p_2)$-sensitive with $\tilde{c} = O(\log^2 s)$. Then, the ``$p_1$-property'' still holds for any pair of points $x, y$, since $\EMD(x,y) \leq r$ implies that $\bh(x) = \bh(y)$ with probability at least $p_1$. Moreover, the ``$p_2$-property'' on points which are \emph{not locally-dense for} $\mu$ follow from the following 
\begin{quote}
\textbf{Key Idea 2:} Suppose $x$ is \emph{not locally-dense} for $\mu$. Then if we sample $\by \sim \mu$, the point $\by$ is likely to satisfy $\EMD(x, \by) \geq r \cdot \log^{10} s$; otherwise,  taking $m - 1$ additional samples to define $\bOmega$ (which includes $\by$) would satisfy $\Chamfer(x, \bOmega) \leq \EMD(x, \by) \leq r \cdot \log^{10} s$. Thus if $\by \sim \mu$ satisfies $\EMD(x, \by) \geq r \cdot \log^{10} s$, then we can use the (weaker) data-independent LSH $\calH$. Note that, $\EMD(x,\by) \geq \log^{10} s \cdot r$ is much larger than $\tilde{c} r$, so $x$ and $\by$ collide in $\calH$ with probability at most $p_2$, since $\log^{10} s \gg \tilde{c} = \log^2 s$. 
\end{quote}
In summary, the $\SampleTree(\mu, m)$ embedding captures locally-dense regions of $\mu$, and, in the remainder, it suffices to handle randomly sampled points $\by \sim \mu$ (which suffice for the ``$p_2$-property'' in data-dependent LSH). By definition, the uniform samples are expected to be \emph{very far} from locally-dense regions, so it suffices to utilize data-independent LSH which achieve weaker approximations.

In Section~\ref{sec:crucial-ingred}, we execute the above plan. We define a collection of (data-independent) LSH families which appear to be weak (and lead to the $O(\log^2 s)$-approximation). These LSH families always satisfy the ``$p_1$-property'' (Lemma~\ref{def:data-ind-hash}), but not a good ``$p_2$-property.'' Then, we connect failure of the $p_2$-property on these LSH families to the expected Chamfer distance to a randomly sampled collection $\bOmega$. Namely, we consider a point $x \in \EMD_s(\{0,1\}^d)$, and we assume that the hash families from Lemma~\ref{def:data-ind-hash} fail to separate randomly sampled points from $\mu$. In Section~\ref{sec:crucial-ingred}, we call these points ``locally-dense'' (Definition~\ref{def:ingred-2}), and show in Lemma~\ref{lem:dense-conse} that these are points whose expected Chamfer distance to $\bOmega$ is at most $r \cdot \poly(\log s)$. 

\textbf{Step 3: Gluing LSH for Locally-Dense and Non-Locally Dense Regions.}
The final step involves a ``gluing'' operation, which uses various hash families (for different regions of $\mu$) to define a single data-dependent LSH family for all $\mu$. Up to now, we have constructed:
\begin{itemize}
    \item A hash family coming from $\SampleTree(\mu, m)$, which always has a good ``$p_2$-property,'' but only has a good ``$p_1$-property'' on points $x$ which are locally-dense for $\mu$.
    \item A collection  of data-independent LSH families, $\calH(\tau, \ell)$ for (fixed) threshold $\tau>0$ and each $\ell \in \{0, \dots, L \}$ for $L = O(\log d)$ in Lemma~\ref{lem:ingred-1}. Here, the level $\ell$ corresponds to a level of the (data-independent) tree embedding, which is then embedded into $\ell_1$, and thereafter hashed via a $\ell_1$ LSH  (see Definition \ref{def:data-ind-hash} for full details). 
    
\end{itemize}

In Section~\ref{sec:glue}, we glue these hash families together, and obtain a data-dependent LSH which is $(r, cr, p_1,p_2)$-sensitive for $\mu$ (proving Theorem~\ref{thm:LSH-main}). The gluing proceeds as follows: for a fixed threshold $\tau > 0$ (which depends on the parameters $r, p_1$ and $p_2$ which we wish to obtain), we sample hash functions $\bh_1,\dots, \bh_{L}$ where $\bh_{\ell} \sim \calH(\tau, \ell)$ for each $\ell \in \{0,\dots, L\}$ and $L=O(\log d)$, as well as a hash function $\bh_{*}$ resulting from $\SampleTree(\mu,m)$. Importantly, the hash families $\calH(\tau, \ell)$ are initialized to be $(r, \tilde{c} r, p_1 / L, \tilde{p}_2)$-sensitive for an approximation $\tilde{c}$ (which will be a large $\poly(\log s)$), and an appropriate value of $\tilde{p}_2$ for Step 2 to go through (i.e., failure of the ``$\tilde{p}_2$-property'' for $\bh_1,\dots,\bh_L$ implies a bounded Chamfer distance to $\bOmega$).
Our final key observation is as follows:
\begin{quote}
    \textbf{Key Idea 3}: For a hash family $\calH$, distribution $\mu$, point $x$, and a draw $\bh \sim \calH$, the point $x$ can check whether (a stronger version of) its own ``$p_2$-property'' holds given  $\bh$.   In particular, one hashes the point $\bh(x) = u$, and for the (now fixed) $\bh$, one can computes the probability that $\by \sim \mu$ satisfies $\bh(\by) = u$ by simply looking at the probability mass of points which hash to the bucket $u$ (if $\mu$ is the uniform distribution, this is just proportional to the size of the hash bucket). If this probability mass is at most  $p_2$, then the ``$p_2$-property'' necessarily holds for $x$ conditioned on $\bh$. 
\end{quote}
The above check is for a stronger ``$p_2$-property'', since 
we are not also checking whether  $\by \sim \mu$ is far from $x$. Note that if this `$p_2$-property'' holds for some $\ell \in \{0,\dots,L\}$, then we can hash $x$ to this bucket and make significant progress by reducing the size of the dataset. 
Given the above observation, the gluing proceeds by letting
\[ \bh(x) = (\bell(x), \bh_{\bell(x)}(x)),\]
where $\bell(x)$ is the smallest $\ell \in \{0,\dots, L\}$ where the above ``$p_2$-property'' check succeeds for $x$ with the hash function $\bh_{\ell}$. If it always fails, then $\bell(x) = *$, thereby signifying that the hash output will be determinined by the output of the  $\SampleTree$ data-dependent LSH. Since each $\calH(\tau,\ell)$ collides close points with probability $p_1/L$, we can union bound over the $L$ levels to ensure that a close pair of points collide in all $L$ draws with probability at least $p_1$, thus $\bell(x) = \bell(y)$ for a close pair $(x,y)$ with probability at least $p_1$. Using this, the ``$p_1$-property'' follows immediately whenever $\ell \neq *$. On the other hand, if $\ell = *$, this indicates a failure of the $p_2$ property for each of the data-independent families, which as we have shown implies a bounded Chamfer distance from $x$ to a random sample $\Omega$, which in turn implies that $x$ is locally dense and therefore the $p_1$ property holds for $x$ under the $\SampleTree$ LSH $\bh_{*}$ (and thus holds for the full ``glued'' hash function). Finally, for the ``$p_2$ property'', if $\ell(x) \neq *$ then by definition of $\ell(x)$ we have split $x$ from all but a $p_2$ fraction of $\mu$, and otherwise the hash of $x$ is determined by $\SampleTree$, which always satisfies the desired ``$p_2$ property''. Putting together the above arguments will complete the proof of the Theorem \ref{thm:data-dep-hashing}.


\ignore{
\paragraph{Old Section}
EMD is a canonical ``hard`` metric to compute, in the sense that it does not decompose nicely into a sum of coordinate-wise distances like $\ell_p$ distances or the Hamming distance. The complexity of such ``simple`` metrics for sublinear tasks such as sketching, streaming, and nearest neighbor search is by now fairly well understood. In particular, for $\ell_p$ metrics sublinear sketches and ANN data structures exist which achieve constant and even $(1+\eps)$ approximations. On the other hand, the landscape for harder metrics, such as EMD and the edit distance, is not nearly as well completely understood, and the approximations are worse.

Due to the complexity, sublinear algorithms for EMD often proceed by embedding EMD into a simpler metric, usually $\ell_1$, and then using sketches for $\ell_1$. This approach often suffers large approximations from the embedding step. Specifically, to date the best known nearest neighbor search algorithm for EMD uses such an embedding into $\ell_1$ to obtain a $O(\log^2 s)$ approximation \cite{AIK09} for high-dimensional spaces, or a $O(d\log s)$ approximation for low-dimensional spaces \cite{I04}. The embeddings of \cite{I04,AIK09} are \emph{data-independent}, meaning that they do not depend on the points which are being embedding. Recently, \cite{CJLW22} constructed an improved embeddings by allowing the embedding to depend on the input. 
However, the approximation of these data-dependent embeddings depends on the size of the data-set ... 


Our main technical contribution is the development of new \textit{data-dependent} Locality Sensitive Hash functions for $\EMD$. At a high level, our main conceptual contribution is to demonstrate that the data-dependent emmbeddings of \cite{CJLW22} can be modified to depend only on a small sample of the data, thereby

\Raj{Main takeaway here}

Furthermore, a straightforward reduction to known sketching lower bounds  implies that our $\tilde{O}(\log s)$-approximation for data-dependent Locality-Sensitive Hashing (LSH) cannot be significantly improved (beyond $\poly(\log \log s)$ factors). Thus, any improvement for approximate nearest neighbor search cannot proceed by (data-dependent) locality-sensitive hashing. Before formally stating our results, we first review the concepts of data-independent and dependent LSH. 
}



\ignore{In \cite{AIK09}, it was first shown how to design an LSH family for the metric space $\EMD_s(\R^d, \ell_1)$ in high dimensions  (i.e., when $d$ is not fixed). The approach of \cite{AIK09} was to design a \emph{probabilistic tree embedding} which embeds points from $(\R^d,\ell_p)$ into $(\bT, d_{\bT})$, where $\bT$ is a random tree metric (see Section \ref{sec:techniques} for further details). By applying this to each of the points in a set $x \in \EMD_s(\R^d, \ell_p)$, this naturally extends to an embedding from $\EMD_s(\R^d, \ell_p)$ into $\EMD_s(\bT, d_{\bT})$, which they show has expected distortion at most $O(\log^2 s)$.\footnote{Even though classical results on probabilistic tree embeddings (like~\cite{B98, FRT04}) show how to embed \emph{any} $s$-point metric space with expected distortion $O(\log s)$, it is important in sublinear settings (like sketching, streaming, and nearest neighbor search) that the embeddings be efficiently stored and updated, and these properties are not guaranteed (without significant computational overhead) in \cite{B98, FRT04}.} Since the Earth Mover's Distance becomes computationally much simpler when the ground metric is a tree (because the optimal matching is just the greedy bottom-up matching), it turns out the EMD over a tree metric embeds isometrically into $(\R^k,\ell_1)$ for some $k$. Thus, one may compose the~\cite{AIK09} embedding of $\EMD_s(\R^d, \ell_p)$ into $\ell_1$ with the LSH for $\ell_1$ of~\cite{IM98}---the result is a factor of $O(\log^2 s)$-increase in the approximation.\footnote{The approximation guarantee in~\cite{AIK09} is stated as $O(\log s \log(d\Delta))$, where $\Delta$ is the aspect ratio of the metric. For $\EMD_s(\R^d, \ell_p)$, it will suffice to consider both $\Delta$ and $d$ as $\poly(s)$, giving the $O(\log^2 s)$-approximation.}

\paragraph{Data-Dependent Geometric Primitives.} An important observation of LSH is that its guarantees are ``data-independent,'' or ``data-oblivious.'' The hash families $\calH$, and the two guarantees (\ref{en:close}) and (\ref{en:far}), give bounds on $p_1$ and $p_2$ which hold for any pair of points $x, y$ from the metric. One could imagine---and first successfully implemented in~\cite{AINR14}---that the hash families be specifically tailored to the dataset $P$, and that doing so would enable an improved gap between $p_1$ and $p_2$, and thus improved approximations. In data-dependent LSH, the dataset is still arbitrary and worst-case; yet, by letting the hash family depend on the arbitrary dataset (while still being efficiently computable and stored), a series of works showed that improved approximations are possible in many settings~\cite{AINR14, AR15, ALRW17, ANNRW18, ANNRW18b}. Put succinctly, every dataset has some structure which one may algorithmically exploit.

The natural question that emerges from the above line-of-work is whether, beyond LSH and approximate nearest neighbors, which geometric settings may benefit from being data-dependent? A recent example (especially relevant to this work) is that of~\cite{CJLW22}, who showed how to construct a ``data-dependent'' probabilistic tree embedding for size-$s$ subsets of $\ell_p$, and used it to obtain new streaming algorithms for $\EMD$. Recall that the (data-independent) probabilistic tree embeddings for size-$s$ subsets (used to embed $\EMD$ in~\cite{AIK09}) gave a $O(\log^2 s)$-approximation; by being data-dependent,~\cite{CJLW22} showed how to improve the distortion to $\tilde{O}(\log s)$. 

\paragraph{Data-Dependent and Data-Independent Hashing.} The seminal work of Indyk and Motwani~\cite{IM98, HIM12} introduced \emph{locality-sensitive hashing} (LSH), and has since been one of the main techniques for approximate nearest neighbor search. One design randomized functions $\bh$ which hash points to buckets such that ``close points'' tend to collide, and ``far points'' tend to separate, where close and far are with respect to the underlying metric (in our case, points correspond to size-$s$ subsets of $\R^d$ and the metric is $\EMD$).~\cite{HIM12} gave locality-sensitive hash functions for $\ell_p$-norms with $p \in [1,2]$, so a locality-sensitive hash function for $\EMD$ may proceed by first embedding into $\ell_p$ and using an LSH for $\ell_p$. This approach is taken in~\cite{AIK08} gives the $O(\log^2 s)$-approximation which we will improve upon.
 
As we will further expand on in Section~\ref{sec:technical-overview}, the guarantees of LSH are ``data-independent,'' or ``data-oblivious.'' The hash functions $\bh$ satisfy the LSH guarantees when, for \emph{any} pair of points $x,y$ from the metric, $\bh$ tends to collide $x$ and $y$ if they are close and separates them if they are far. One could imagine---and first successfully implemented in~\cite{AINR14}---the (randomized) function $\bh$ being specifically tailored to the dataset being preprocessed, and that doing so improve the approximation. In data-dependent LSH, the dataset is still arbitrary and worst-case; yet, by letting $\bh$ depend on the arbitrary dataset (while still being efficiently computable and stored), a series of works showed that improved approximations are possible in many settings~\cite{AINR14, AR15, ALRW17, ANNRW18, ANNRW18b}. Put succinctly, every dataset has some structure which one may algorithmically exploit. In this work, we show that this  phenomenon also arises: even though the best data-independent LSH for $\EMD$ gives approximation $O(\log^2 s)$ (from embedding to $\ell_1$~\cite{AIK08}), for every collection of size-$s$ subsets of $\R^d$, one may design a data-dependent LSH for that dataset with $\EMD$ obtaining approximation $\tilde{O}(\log s)$.}


\ignore{
\paragraph{Warm-up: Dynamic Data-Dependent Tree Embedding.} Unfortunately, the data-dependent probabilistic tree embeddings of~\cite{CJLW22} do not directly give new approximate nearest neighbor data structures. Specifically, for a set $S$ of $s$ points, the data-dependent embeddings of~\cite{CJLW22} gives an embedding from $S$ into a probabilistic tree $\bT$ with expected distortion $\tilde{O}(\log s)$ between any two points $x,y \in S$. However, in nearest neighbor search for $\EMD$, we receive an entire dataset of $n$ many size-$s$ subsets, thus the embedding of \cite{CJLW22} would need to be defined on all $ns$ points from $(\R^d,\ell_p)$, implying a $\tilde{O}(\log ns)$ distortion. Furthermore, in order to be used for nearest neighbor search, the tree embedding must be modified to support the $s$ new points contained in an unknown query $q \in \EMD_s(\R^d,\ell_p)$.  

As a warm-up, we show how to make the data-dependent embedding of~\cite{CJLW22} dynamic, so as to fix the last issue and obtain a nearest neighbor search data structure with $\tilde{O}(\log ns)$ distortion. First, we will instantiate the tree embedding of~\cite{CJLW22} for all $ns$ elements of $\R^d$ lying in the dataset; when a query comes in, we show how to \emph{quickly update} the embedding so that it embeds all $(n+1) s$ elements (now including the query). Similarly to~\cite{AIK09}, given the tree embedding one can easily obtain an embedding from $\EMD$ into $\ell_1$ with distortion $\tilde{O}(\log(ns))$, where we may use the LSH data structure for $\ell_1$ \cite{IM98}. 
\begin{theorem}[Dynamic Probabilistic Tree Embedding---Informal] 
    For a fixed $d \in \N$ and $p \in \{1,2\}$, there is a data structure for supporting the following operations:
    \begin{itemize}
        \item \emph{\textbf{Maintenance}}: The data structure maintains a set $S \subset \R^d$ of $m$ vectors, as well as a draw $\bT$ of a (rooted) probabilistic tree metric whose expected distortion on $(S, \ell_p)$ will be $\tilde{O}(\log m)$.
        \item \emph{\textbf{Query}}: In time $\tilde{O}(d)$, we may query a vector $s \in S$ and obtain the weighted root-to-leaf path of $s$ in $\bT$.
        \item \emph{\textbf{Insertions/Deletions}}: In expected time $\tilde{O}(d)$, we may add or remove vectors from the set $S$. Since $\bT$ will depend on (the updated set) $S$, the algorithm will return the weighted root-to-leaf paths of vectors in $S$ whose embedding changes from the insertion/deletion.
        \end{itemize}
\end{theorem}

In Section~\ref{sec:Dynamic Embedding}, we prove the above theorem, and show how to design a data structure for approximate nearest neighbor search for $\EMD$. While the prior approximation guarantee (from the data-independent embedding) was $O(\log^2 s)$, this adaptation gives a new $O(\log(ns))$-approximation.

}

\ignore{
\begin{theorem}[Data-Dependent Hashing for $\EMD$]\label{thm:LSH-main}
For any $s, d \in \N$, $p \in [1,2]$, a threshold $r > 0$, and any $0 < p_2 < p_1 < 1$, there exists a data structure with the following guarantees:
\begin{itemize}
    \item \emph{\textbf{Preprocessing}}: The data structure receives as input an arbitrary distribution $\mu$ supported on points in $\EMD_s(\R^d,\ell_p)$, and maintains a draw of a hash function $\bh$ from a hash family $\calD$ (which depends on $\mu$) and is $(r, cr, p_1, p_2)$-sensitive for $\mu$ (see Definition~\ref{def:data-dep}), with
    \[ c = \tilde{O}\left(\log s \cdot \dfrac{\log^2(1/p_2)}{1-p_1} \right).\]
    \item \emph{\textbf{Query}}: For any point $q \in \EMD_s(\R^d, \ell_p)$, the data structure can compute $\bh(q)$ in time $\tilde{O}(sd)$.
\end{itemize}
\end{theorem}

The logarithmic dependence on $s$ in Theorem~\ref{thm:data-dep-intro} is tight for the following reason. \cite{AIK09} shows a sketching lower bound which implies that $O(1)$-bit sketches for $\EMD_s(\R^d, \ell_p)$ must incur an approximation of $\Omega(\log s)$. In general, data-dependent hashing does not immediately give (worst-case) sketching, so one cannot give a black-box reduction from data-dependent hashing to sketching. However, one can use data-dependent hashing with constant $0 < p_2 < p_1 < 1$ to construct a distributional version of sketching, and this is exactly what~\cite{AIK09} rule out in their lower bound. We expand on this argument in Section~\ref{sec:data-dep-lb}.

Turning back towards approximate nearest neighbor search, our main application of Theorem~\ref{thm:data-dep-intro} is to give a new data structure for approximate nearest neighbor search over $\EMD_s(\R^d, \ell_p)$, by repeatedly using the data-dependent hash from Theorem~\ref{thm:data-dep-intro} with $p_1 = 1-\eps$ and $p_2$ being set to a fixed constant. 
\begin{theorem}[ANN for $\EMD$-]\label{thm:ann-main}
    For any $s, d \in \N$, $p \in \{1, 2\}$ and $\eps \in (0,1)$, there exists a data structure for approximate nearest neighbor search over $\EMD_s(\R^d, \ell_p)$ with approximation $c = \tilde{O}(\log s)/\eps$ satisfying the following guarantees:
    \begin{itemize}
        \item \emph{\textbf{Preprocessing Time}}: The data structure preprocesses a dataset $P$ of $n$ points in $\EMD_s(\R^d, \ell_p)$ in time $n^{1+\eps} \cdot \poly(sd)$.
        \item \emph{\textbf{Query Time}}: For a vector $q \in \EMD_s(\R^d, \ell_p)$, we output an approximate nearest neighbor in time $n^{\eps} \cdot \poly(sd)$.
    \end{itemize}
\end{theorem}

}

\ignore{\textbf{Technical Overview Notes:} we can make edges that have a sample data-dependent, and other edges data-independent.

\paragraph{Second Step: Hybrid not enough alone. } This will work well for points in EMD that are "captured" by data-dependent part. What does this mean? 

Define the notion of captured, locally dense, Chamfer to set of sampled points. Great. The extension to the chamfer distence is very good, and only $\log$-Chamfer blowing. This is a key result. 

But what if you are not locally dense? Then you must be very far in Chamfer from a subsampled point set. In this setting, applying AIK would have been enough to split the close point away from a constant fraction of far points.  

\paragraph{Gluing the maps.} Now we glue the two maps together: either we split the close point from a constant fraction of far points with AIK approach, or the Hyrbrid tree embedding gives a good distortion, can compose that with $\ell_1$ hash to make progress. Progress means that map is data-dependent. Algorithmically, don't need to worry about the reduction from data-dependent LSH to ANN. }

\subsection{Other Related Work}
The computational aspects of EMD date back over 70 years to the Hungarian algorithm of~\cite{K55}. Since then, significant work has gone into investigating the computational complexity of EMD in many settings. In what follows, we address two other settings of relevance, and refer the reader to \cite{peyre2019computational} for a more in depth survey on EMD and its modern applications.

\textbf{Approximation Algorithms for EMD.} The problem of approximating the EMD between two sets of size $s$ in a metric space has recieved significant attention. One of the most popular methods is the Sinkhorn algorithm~\cite{C13} (also see \cite{altschuler2017near,le2021robust,pham2020unbalanced}), which gives additive error approximations in quadratic $O(s^2)$ time. 
For computing the Euclidean EMD between two point sets, even though the input is size $O(s \cdot d)$, it is known that no $o(s^2)$ time exact algorithm can exist unless well-known fine grained complexity conjectures are false~\cite{rohatgi2019conditional}. Nevertheless, techniques from the sublinear algorithms community, such as locality sensitive hashing, have also been used for faster offline algorithms to approximately compute EMD.  For instance, 
Andoni and Zhang~\cite{andoni2023sub} recently gave the first $n^{2-\poly(\eps)}$ time algorithm for computing $(1+\eps)$ approximations to high-dimensional Euclidean EMD, based on constructing sub-quadratic spanners via LSH. Furthermore, \cite{beretta2023approximate} gave a subquadratic additive approximation for \emph{any} metric space. 


\textbf{Low-Dimensional Space.} While the focus of this work is on EMD over \textit{high-dimensional} spaces $(\R^d,\ell_p)$, which is the common setting in many modern ML applications where the inputs are embeddings in a high-dimensional latent space (see e.g. \cite{KSKW15}), EMD over lower dimensional spaces has also received attention from the sublinear algorithms community, such as in sketching \cite{I04b,ABIW09} and parallel algorithms \cite{andoni2014parallel}.  In this setting, one can often obtain much better approximations, such as $(1+\eps)$ approximations, if one is okay with the runtime depending exponentially on the dimension---this allows for a new set of techniques such as $\eps$-nets to be employed, which would be too costly in high dimensional settings. An importance case is that of the plane (i.e., $d=2$), where sketches that achieve constant factor approximations are known \cite{ABIW09}. More generally, when the dimension is a constant, there are also offline approximation algorithms with near-linear runtime \cite{sharathkumar2012near,fox2022deterministic,choudhary2020extremal}

\begin{remark}[ANN for EMD with small $s,d$]\label{remark:lowd}
For the problem of nearest neighbor search, we remark that if we are allowed runtime that is exponential in both $s$ and $d$, then $(1+\eps)$ approximations are possible in sublinear in $n$ time. Specifically, it is straightforward to prove that the doubling dimension of the space $\EMD_s(\R^d,\ell_p)$ is at most $O(sd\log s)$. Thus it is possible to obtain $\exp(s,d)$ query time nearest neighbor search algorithms by using techniques such as navigating nets \cite{krauthgamer2004navigating}. Note that such general techniques employ linear scans over $\eps$-net like objects, and do not use structure specific to the Earth Mover Distance metric beyond its doubling dimension. 
\end{remark}

\section{Preliminaries}

\newcommand{\EMDsd}{\mathsf{EMD}_s(\R^d)}
\newcommand{\EMDsh}{\mathsf{EMD}_s(\{0,1\}^d)}
\textbf{Notation.} For any integer $n \geq 1$, we write $[n] = \{1,2,\dots,n\}$, and for two integers $a,b \in \Z$, write $[a:b] = \{a,a+1,\dots,b\}$.  For $a,b \in \R$ and $\eps \in (0,1)$, we use the notation $a = (1 \pm \eps) b$ to denote the containment of $a \in [(1-\eps)b , (1+\eps)b]$. 
We will use boldface symbols to represent random variables and functions, and non-boldface symbols for fixed values
(potentially realizations of these random variables) for instance $\boldf$ vs, $f$. 

We denote the metric space consisting of multi-sets of $s$ points in a metric space $(X,d)$, where the distance between sets is the Earth Mover's Distance metric, by $\EMD_s(X,d)$.  For instance, $\EMD_s(\R^d,\ell_1)$ denotes the Earth Mover's Distance metric over sets of $s$ points living in $\R^d$ with the $\ell_1$ metric. We refer to the metric space $(X,d)$ as the ground metric of $\EMD_s(X,d)$. When the distance over $X$ is understood by context (e.g. over the Hamming cube $\{0,1\}^d$), we can drop the distance and simply write $\EMD_s(X)$.  Moreover, we use $\EMD(x,y)$ to denote the real-valued metric function of $\EMD_s(\R^d,\ell_1)$ or $\EMD_s{\{0,1\}}$, where the choice of the aforementioned two ground metric is understood from context via the type of the input parameters $x,y$. 

Given a rooted tree $T = (V(T),E(T))$, all edges of $T$ will be directed from parent to child, so an edge $(u,v) \in E(T)$ denotes an edge from the parent $u$ to the child $v$. We will often abuse notation and write $u \in T$ to denote that $u \in V(T)$. Given a rooted tree with weighted edges $T = (V(T), E(T), W(T))$, we abuse the notation $T$ to denote the tree metric $(V(T), d_T)$ where, for $u,v \in V(T)$, $d_T(a,b)$ is defined as the length of the shortest weighted path between $u,v$. We use $\EMD_T$ to denote the metric function of $\EMD_s(V(T), d_T)$.

\begin{remark}[On Embedding $\ell_p$ into $\ell_1$]\label{remark:ell1}
  For the remainder of the paper, we will prove all our upper bounds for the case that the ground metric is $(\R^d,\ell_1)$. To extend this to general $p \in [1,2]$, we can use well-known (data-independent)  embeddings from $(\R^d,\ell_p)$ into $(\R^{d'},\ell_1)$ with $d' = O(d)$ \cite{johnson1982embedding}, which preserve all distances up to $(1 \pm \eps)$ for any arbitrary constant $\eps>0$. This embedding is a randomized linear function, thus applying it to each vector $x \in (\R^d,\ell_p)$ will only increase the runtime by a multiplicative factor of $O(d)$, and increase the space by an additive $O(d^2)$, which will therefore not effect the stated complexity in our theorems.
\end{remark}

\ignore{
\Raj{This parameterization below may be useful to ahve somewhere}
\paragraph{Nearest Neighbor Search Parameterization for $\EMD$.} While the above is an improvement for sufficiently large $s$, having an approximation depending on the number of points $n$ is still undesirable. Specifically, in nearest neighbor search, the number of data-points $n$ is the key parameter which algorithms seek to be sublinear in, whereas $s$ in this context (along with $d$) is simply the description size of a single point in the metric space. Thus, polynomial query time dependencies on $s,d$ are generally acceptable.  In particular, in the context of nearest neighbor search, we consider the following parametrization:

\begin{itemize}
\item The primary parameter is $n$, the size of the dataset. To align with the trade-offs obtained by all algorithms using locality-sensitive hashing, our goal will be query time and space complexity that grows as $n^{\eps}$ and $n^{1+\eps}$, respectively, for arbitrary small constant $\eps > 0$ (where $\eps$ dependence will affect the approximation).
\item The parameter $s$ is the number of elements in each point of $\EMD_s(\R^d, \ell_p)$, and useful to view as between a constant and $n^{o(1)}$. The data structure may depend polynomially (ideally, near-linear) in $s$, since it takes $O(sd)$ time to read the query; however, exponentially-in-$s$ dependencies may become super-linear in $n$ and one might-as-well perform a linear scan.
\item The dimensionality $d$ behaves similarly to $s$, and should also be considered as $n^{o(1)}$. Polynomial (ideally, near-linear) dependencies on $d$ do not become a computational bottleneck, but exponential-in-$d$ dependencies must be avoided.
\end{itemize}
}


\section{Nearest Neighbors, Embeddings, and Data-Dependent Hashing}

In this section, we define the approximate \emph{near} neighbor search problem and data-dependent hashing, and also demonstrate how, given a data-dependent hashing family for a metric space, we can obtain a data structure for approximate near neighbor search with overhead analogous to that of (data-independent) locality-sensitive hashing.

\begin{definition}[Approximate Near Neighbor]
Let $(X, d_X)$ be a metric space, $r > 0$ be a threshold, and $c > 1$ be an approximation. The $(c,r)$-approximate near neighbor problem is the following data structure problem:
\begin{itemize}
\item \textbf{\emph{Preprocessing}}: We receive a dataset $P \subset X$ of $n$ points to preprocess into a data structure.
\item \textbf{\emph{Query}}: A query is specified by any point $q \in X$, and a query is \emph{correct} whenever the following occurs. If there exists a point $p \in P$ with $d_X(p, q) \leq r$, the data structure outputs a point $\hat{p} \in P$ with $d_X(\hat{p}, q) \leq cr$. 
\end{itemize}
A data structure solves the $(c, r)$-approximate near neighbor problem if, for every (fixed) dataset $P \subset X$ and query $q \in X$, following preprocessing of $P$, the data structure answers correctly on $q$ with probability at least $9/10$ over the construction of the data structure.
\end{definition}

We remark that, by a standard reduction (see \cite{HIM12}), it will suffice to solve the $(c,r)$-approximate near neighbor problem above. 

\subsection{Approximate Nearest Neighbor via Data-Dependent Hashing}\label{def:ann-from-dd}

We remark that he definition of data-dependent hashing (Definition~\ref{def:data-dep}) that we obtain in this paper is slightly more stringent than the one presented in~\cite{AR16} (requiring that for any point $x \in X$, a randomly drawn ``far'' point is separated)---\cite{AR16} focused on lower bounds, so a less-stringent definition gives a stronger lower bound result; since we will show upper bounds, a more stringent definition gives a stronger result. We state the definition and show how a data-dependent hashing family implies a data structure for approximate near neighbor search. The proof itself is similar in spirit to that of~\cite{IM98} and deferred to Section~\ref{sec:dd-hash-to-ann}. The one subtlety is that, because our hashing family \emph{depends} on the dataset, one must instantiate it to the desired dataset before using it.

\begin{definition}[Data-Dependent Hashing]\label{def:data-dep}
For a metric $(X, d_X)$, a distribution $\mu$ over $X$, and a threshold $r > 0$, we say that a distribution $\calD$ over maps $h \colon X \to U$ is $(r, c r, p_1,p_2)$-sensitive for distribution $\mu$ if
\begin{itemize}
\item \textbf{\emph{Close Points Collide}}: For any two points $x, y \in X$ with $d_X(x, y) \leq r$, we have 
\[ \Prx_{\bh \sim \calD}\left[ \bh(x) = \bh(y)\right] \geq p_1. \]
\item \textbf{\emph{Far Points Separate on Average}}: For any point $x \in X$, we have
\[ \Prx_{\substack{\bh \sim \calD \\ \by \sim \mu}}\left[ \begin{array}{c} d_X(x, \by) > c\cdot r,\\
 \bh(x) = \bh(\by) \end{array}\right] \leq p_2. \]
\end{itemize}
\end{definition}

\begin{definition}[Data Structure for Data-Dependent Hashing]\label{def:ds-data-dep}
For a metric $(X, d_X)$, a data structure for data-dependent hashing with a $(r, cr,p_1,p_2)$-sensitive family satisfies the following:
\begin{itemize}
    \item \emph{\textbf{Preprocessing}}: The data structure preprocesses the description of a distribution $\mu$ supported on $X$, and maintains a draw of $\bh$ from a $(r, cr, p_1, p_2)$-sensitive family for $\mu$.
    \item \emph{\textbf{Query}}: Given any point $q \in X$, the data structure outputs the value of $\bh(q)$.
\end{itemize}
We let $I_{\sfh}(n)$ denote the time of instantiating the data structure with a distribution $\mu$ supported on $n$ points, and let $Q_{\sfh}(n)$ denote the worst-case query time.
\end{definition}

\begin{theorem}[Data-Dependent Hashing to Approximate Near Neighbors]\label{thm:hashing-to-nn}
Let $(X, d_X)$ be a metric, $r > 0$ be a threshold, $c > 1$ be an approximation, and $p_1, p_2 \in (0, 1)$ be two parameters, where $\rho \in \R$ is the parameter
\[ \rho = \dfrac{\log(1/p_1)}{\log(1/p_2)}. \]
Suppose there is a data structure for data-dependent hashing with a $(r, cr, p_1,p_2)$-sensitive family with preprocessing time $I_{\sfh}(n)$ and query time $Q_{\sfh}(n)$. 
Then, there exists a data structure for the $(c,r)$-approximate near neighbor problem which satisfies:
\begin{itemize}
\item \textbf{\emph{Preprocessing Time}}: The data structure preprocesses a size-$n$ dataset in time at most 
\[ O\left( n^{\rho} / p_1 \cdot \log_{1/p_2} n \cdot \left( I_{\sfh}(n) + n \cdot Q_{\sfh}(n) \right) \right),\]
and therefore its space complexity is at most that amount.
\item \textbf{\emph{Query Time}}: A query to the data structure is answered in time at most
\[ O\left( n^{\rho} / p_1 \cdot \log_{1/p_2} n \cdot Q_{\sfh}(n) \right).\]
\end{itemize}
\end{theorem}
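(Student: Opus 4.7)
The plan is to follow the classical Indyk--Motwani LSH-to-ANN blueprint, but in a \emph{recursive} form that properly handles the ``far-on-average'' data-dependent condition. Concretely, I would build $L = \Theta(n^\rho / p_1)$ independent hash trees, each of depth $k = \lceil \log_{1/p_2} n \rceil$. At the root, initialize the data-dependent hashing data structure of Definition~\ref{def:ds-data-dep} with $\mu$ set to the uniform distribution over the input dataset $P$, draw a hash $\bh \sim \calD$, and partition $P$ into buckets indexed by the values of $\bh$. Then recurse on each bucket: at every internal node with current sub-dataset $P'$, re-initialize a fresh copy of the data-dependent data structure with $\mu'$ equal to the uniform distribution on $P'$, draw a new $\bh$, and partition $P'$ accordingly. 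Each leaf at depth $k$ stores its (small) sub-dataset. At query time, for each of the $L$ trees I would use $\bh(q)$ at every node to descend to a single leaf, scan all points in that leaf, and return any point within distance $cr$ of $q$ (stopping early after examining at most $3L$ candidates in total).

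Correctness for ``close points'' is immediate: if $p^* \in P$ satisfies $d_X(p^*, q) \leq r$, then at every node on $q$'s root-to-leaf path we have $\Pr[\bh(q)=\bh(p^*)] \geq p_1$ by Definition~\ref{def:data-dep}, and independence across the $k$ levels gives $\Pr[p^*\text{ reaches the same leaf as }q] \geq p_1^k$. A direct computation from $\rho = \log(1/p_1)/\log(1/p_2)$ yields $p_1^k = n^{-\rho}$, so with $L$ independent trees the probability that $p^*$ collides with $q$ in at least one tree is at least $1 - (1-p_1^k)^L \geq 1 - e^{-Lp_1^k}$, which by taking $L = \Theta(n^\rho/p_1)$ with a large enough hidden constant can be made at least $19/20$.

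For efficiency, the main engine is the data-dependent far-on-average property applied inside each node. At a node with sub-dataset $P'$, since $\bh$ was drawn from the family initialized with $\mu'$ uniform on $P'$, the expected number of ``far'' points (those $p \in P'$ with $d_X(q,p) > cr$) landing in $q$'s bucket equals $|P'| \cdot \Pr_{\bh,\,\by \sim \mu'}[\bh(q) = \bh(\by),\ d_X(q,\by) > cr] \leq |P'| p_2$. Iterating this contraction down the $k$ levels, the expected number of far points at $q$'s leaf is at most $np_2^k = O(1)$ per tree, hence $O(L)$ across all $L$ trees; by Markov's inequality this is bounded by $3L$ with probability at least $2/3$, and a union bound with the close-points event yields success probability at least $9/10$. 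The complexity bounds follow by observing that sub-datasets at any fixed level partition $P$, so the aggregate initialization and hashing cost at that level is at most $I_{\sfh}(n) + nQ_{\sfh}(n)$ (using monotonicity and super-additivity of the cost functions in the input size), summed over $k$ levels and $L$ trees, with query time dominated by the $LkQ_{\sfh}(n)$ descent cost.

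The main obstacle, and the reason the recursive structure is needed, is that a naive concatenation of $k$ independent hashes all drawn with the \emph{same} root distribution $\mu$ does not amplify the far-probability from $p_2$ to $p_2^k$: for a fixed far point $p$ with single-hash collision probability $q_p$, the $k$-concatenated collision probability is $q_p^k$, and since $q_p \in [0,1]$ the average property only yields $\sum_{p: d(q,p)>cr} q_p^k \leq \sum_p q_p \leq np_2$, which is far too weak for the query-time bound. Re-initializing the hash family at each recursive node with the conditional uniform distribution on the current sub-dataset is precisely what makes the shrinkage factor $p_2$ apply iteratively, driving the $np_2^k = O(1)$ bound on the leaf-load and closing the proof.
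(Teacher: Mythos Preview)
Your approach is essentially the paper's: build $L = O(n^\rho/p_1)$ independent depth-$k$ hash trees with $k = \lceil \log_{1/p_2} n \rceil$, re-initializing the data-dependent family at each node with the uniform distribution on the current sub-dataset. Your correctness argument and your explanation of why recursion (rather than flat $k$-wise concatenation against the root distribution) is needed are both correct and match the paper.

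There is, however, a genuine gap in your efficiency argument. Your per-level bound is
\[
\Ex\bigl[\,|P'_{\text{next}} \cap P_f(q)| \,\bigm|\, P'\,\bigr] \;\leq\; |P'|\cdot p_2,
\]
so the contraction factor $p_2$ applies to the \emph{current sub-dataset size} $|P'|$, not to the current number of far points. Iterating this does \emph{not} yield $\Ex[|P_k \cap P_f(q)|] \leq n p_2^k$, because nothing forces $|P_i| \leq n p_2^i$: points at distance in $(r, cr]$ are neither ``close'' nor ``far'', the data-dependent guarantees say nothing about them, and they may all land in $q$'s bucket at every level. In that case $q$'s leaf can have $\Theta(n)$ points, a constant fraction of them far, and neither your $np_2^k = O(1)$ claim nor the subsequent Markov step is valid.

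The paper patches this with one extra ingredient: at every node with sub-dataset $P'$ it also stores a uniformly random sample $\bp \sim P'$, and the query checks $d_X(\bp, q)$ \emph{before} recursing, returning $\bp$ immediately if $d_X(\bp,q)\leq cr$. Since this happens with probability $1 - |P' \cap P_f(q)|/|P'|$, the contribution of the recursive term gets multiplied by $|P' \cap P_f(q)|/|P'|$, which converts the bound $|P'|\, p_2$ into $|P' \cap P_f(q)|\, p_2$ in expectation; this is what actually telescopes to $|P_f(q)|\, p_2^k \leq n p_2^k$. Equivalently, your leaf-scan idea would work if you scanned in a uniformly random order (and also sampled a random point at each internal node), but as written a worst-case scan order breaks the $3L$-cutoff argument.
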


\section{Dynamic and Data-Dependent Probabilistic Tree Embeddings}
\label{sec:Dynamic Embedding}


In this Section, we describe the dynamic, data-dependent probabilistic tree embedding from Theorem~\ref{thm:warm-up}. Even though Theorem~\ref{thm:warm-up} is not directly necessary for the proofs of the main results of this work (Theorems~\ref{thm:data-dep-hashing} and~\ref{thm:ann-main}), they elucidate the benefits and challenges of using the~\cite{CJLW22} probabilistic tree embedding for nearest neighbor search. For simplicity in this Section, we consider vectors which have integer coordinates $x \in [\Delta]^d = \{1,2,\dots,\Delta\}^d$ and our dependence will be logarithmic in $\Delta$. Note that given an upper bound on the aspect ratio $\Phi$ of the dataset (the ratio of the maximum distance to the minimum distance), one can always enforce this assumption by a re-scaling and discretization which introduces a minor constant-factor loss in the distortion. 

\begin{theorem}[Dynamic and Data-Dependent Probabilistic Tree Embedding] \label{thm:dynamic-main}
    For a fixed $d \in \N$ and $p \in [1,2]$, there is a data structure supporting the following: 
    \begin{itemize}
        \item \emph{\textbf{Maintenance}}: The data structure maintains a set $\Omega \subset [\Delta]^d$ of $m$ vectors, as well as a rooted probabilistic tree metric $\bT$ (whose distribution depends on $\Omega$), along with a non-contracting embedding $\varphi: (\Omega,\ell_p) \to \bT$, such that for any $x,y \in \Omega$:
        \[              \Ex_{\bT}\left[ d_\bT(\varphi(x),\varphi(y))\right] \leq \tilde{O}(\log (m d \Delta)) \cdot \|x-y\|_p.    \] 
        \item \emph{\textbf{Query}}: In time $O(d \log (d \Delta))$, we may query a vector $x \in \Omega$ and obtain the weighted path from the root to $\varphi(x)$ in $\bT$.
        \item \emph{\textbf{Insertions/Deletions}}: In expected time $O(d \log (d \Delta) + \log^2(d\Delta))$, we may add or remove vectors from the set $\Omega$. Since the updated $\bT$ depends on (the updated set) $\Omega$, the algorithm also returns (without additional computational overhead) the updated weighted paths of every vector in $\Omega$ whose path changed from the insertion/deletion.
        \end{itemize}
\end{theorem}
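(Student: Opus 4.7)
The plan is to augment the static data-dependent tree embedding of~\cite{CJLW22} (Lemma~\ref{lem:cjlw}) with a dynamic maintenance layer based on reservoir sampling. I would first handle the hypercube case $(\{0,1\}^d, \ell_1)$ in detail, and then reduce the general $([\Delta]^d, \ell_p)$ case via the standard data-independent $\ell_p \to \ell_1$ embedding of Remark~\ref{remark:ell1} composed with the Hamming encoding $[\Delta]^d \hookrightarrow \{0,1\}^{O(d \log \Delta)}$, both of which preserve distances up to constants and can be applied vector-by-vector. In the hypercube setting, the randomized tree $\bT$ has depth $L = O(\log d)$; at level $\ell$ one samples $2^\ell$ coordinates from $[d]$ (giving a child for each possible setting), and $\varphi$ sends $x \in \{0,1\}^d$ to the leaf whose root-to-leaf label agrees with $x$'s coordinates along the sampled sets. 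At every node $v$ with $\Elms(v,\Omega) \neq \emptyset$ I maintain a representative $c_v$ drawn uniformly from $\Elms(v,\Omega)$, and for each edge $(v, v_u)$ with $\Elms(v_u, \Omega)$ non-empty I set the data-dependent weight $w(v,v_u) = \|c_v - c_{v_u}\|_1$ as in~\cite{CJLW22}. Non-contraction and expected distortion $\tilde{O}(\log(md))$ are inherited directly from Lemma~\ref{lem:cjlw}.

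For dynamic maintenance, represent $\bT$ as a trie containing only the $O(m \log d)$ nodes actually traversed by elements of $\Omega$; each node stores $c_v$, the count $|\Elms(v,\Omega)|$, and a doubly-linked list of the elements of $\Elms(v,\Omega)$. Upon inserting a vector $a$, descend $a$'s root-to-leaf path, creating new nodes as needed, and at each node $v$ encountered replace $c_v$ by $a$ with probability $1/(|\Elms(v)|+1)$ (otherwise leaving $c_v$ unchanged). A standard reservoir-sampling argument (foreshadowed by Claim~\ref{cl:remain-uni}) then shows that $c_v$ is uniform over the updated set $\Elms(v,\Omega) \cup \{a\}$. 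Whenever $c_v$ actually changes, recompute the weights of the materialized edges incident to $v$ (one $\ell_1$ evaluation in $O(d)$ time per edge) and emit the updated paths for every $b \in \Elms(v,\Omega)$. The key calculation is
\[ \Pr[c_v \text{ changes}] \cdot (\text{cost if changed}) \;=\; \tfrac{1}{|\Elms(v)|+1} \cdot O\bigl(d\,|\Elms(v)|\bigr) \;=\; O(d), \]
so the expected work at each of the $O(\log d)$ levels is $O(d)$, giving the claimed $O(d \log(d\Delta) + \log^2(d\Delta))$ total expected update time once the $O(\log^2 d)$ cost of path descent and linked-list bookkeeping is added. Deletion is symmetric: at each $v$ on $a$'s path, if $c_v = a$ (an event of probability $1/|\Elms(v)|$) draw a fresh uniform $c_v$ from $\Elms(v) \setminus \{a\}$ and propagate weight changes, with the same product bound.

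Queries are immediate: descend the trie along $x$'s path (spawning on-the-fly nodes for any extensions $x$ introduces) and report the sequence of edges and weights, all in $O(d \log(d\Delta))$ time. The main obstacle I foresee is making the update routine truly \emph{output-sensitive}: the claimed bound only holds if a resampling event at $v$ costs $O(d \cdot |\Elms(v)|)$ rather than $O(d \cdot |\mathrm{subtree}(v)|)$, so the algorithm must touch only the edges whose weights actually change and return updated path-segments only for elements in $\Elms(v)$ rather than the entire subtree. Cleanly composing the per-level expected-$O(d)$ bounds across the $O(\log d)$ levels also requires verifying that the resampling events are independent across levels once one conditions on $a$'s path, so that the per-level expectations simply add up and no cross-level dependence inflates the total.
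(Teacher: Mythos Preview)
Your treatment of the hypercube case is essentially the paper's proof: the same $\quadtree$ construction, the same per-node uniform representative maintained by reservoir sampling (Claim~\ref{cl:remain-uni}), and the same telescoping expected-cost calculation $\tfrac{1}{|\Elms(v)|}\cdot O(d\,|\Elms(v)|) = O(d)$ per level. Your worry about cross-level independence is unnecessary: the bound is obtained by linearity of expectation over the $L$ levels, which requires no independence assumption.

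The genuine gap is in your reduction from $([\Delta]^d,\ell_p)$ to the hypercube. You propose a ``Hamming encoding $[\Delta]^d \hookrightarrow \{0,1\}^{O(d\log\Delta)}$'' that preserves $\ell_1$ distances up to constants, but no such embedding exists: the $\ell_1$ diameter of $[\Delta]^d$ is $\Theta(d\Delta)$ while that of $\{0,1\}^{O(d\log\Delta)}$ is only $O(d\log\Delta)$, so any map into the latter must contract some distance by a factor $\Omega(\Delta/\log\Delta)$. The isometric encoding is the \emph{unary} one into $\{0,1\}^{d\Delta}$, which the paper uses; but applying the hypercube data structure na\"ively in dimension $d\Delta$ would give $O(d\Delta\log(d\Delta))$ update time rather than the claimed $O(d\log(d\Delta))$. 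The paper closes this gap in Appendix~\ref{sec:dynamic-l1} (Lemma~\ref{lem:dynamic-l1-to-ham}) by never materializing $u(x)\in\{0,1\}^{d\Delta}$: for each level $\ell$, instead of drawing $2^\ell$ explicit coordinates from $[d\Delta]$, one lazily maintains, via fast binomial sampling (Lemma~\ref{lem:fastbinom}), only the \emph{counts} of sampled coordinates falling into the intervals delimited by the current elements' coordinate values. This deferred-decision trick is what brings the per-update cost down to $O(d\log(d\Delta))$, and it is the missing ingredient in your proposal.
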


\ignore{
\begin{theorem}\label{thm:dynamic-main}
    Fix any set $S \subset \EMD_s(\{0,1\}^d)$ of $|S| = n$ points. Then there is an algorithm that computes in time $\tilde{O}(n s d)$ an embedding $\varphi: S \to (\R^m, \ell_1)$, for $m = \poly(n,s,\log d)$, such that $\varphi(x)$ is $\tilde{O}(s d^2)$-sparse for all $x \in S$, and such that for all $i \in [n]$ and all $x,y \in S$, with probability $1-1/\poly(n)$ we have
    \begin{equation}\label{eqn:embeddingbound}
    \EMD(x,y)   \leq  \|\varphi(x) - \varphi(y)\|_1  \leq  \tilde{O}(\log(m) + \log d) \cdot \EMD(x,y)    
    \end{equation}
    Furthermore, given an additional point $Q \in \EMDsh$ and a pre-computed embedding $\varphi: S \to (\R^m, \ell_1)$, one can compute an embedding $\varphi': S \cup \{Q\} \to (\R^m,\ell_1)$ that satisfies Equation \ref{eqn:embeddingbound} for all $x,y \in S \cup \{Q\}$ with probability $1-1/\poly(n)$. The expected time to compute $\varphi'$ given $(\varphi,q)$ is  $\tilde{O}(s^2 d^2 \log(m))$, and the expected number of $x \in S$ such that $\varphi(x) \neq \varphi'(x)$ is at most $\tilde{O}(s d^2 \log(m))$. 
\end{theorem}
}


\subsection{Embedding for Subsets of the Hamming Cube}\label{sec:hamming-cube-dynamic}

We begin by proving Theorem~\ref{thm:dynamic-main} for the Hamming cube $\{0,1\}^d$ with $\ell_1$ metric (where we note that this sets $\Delta = 2$). This proof will already contain the major ideas, and subsequent sections will utilize the main definition of the $\quadtree$ sub-routine specified below. 
We later extend these ideas to $([\Delta]^d,\ell_p)$ for $p \in [1,2]$ in Appendix~\ref{sec:dynamic-l1}.

We consider a subset $\Omega \subset \{0,1\}^d$ of $n$ vectors in the Hamming cube (we later show how to make this subset dynamic). For any (multi-)set of indices $\vec{i} = (i_1,i_2,\dots,i_t)  \in [d]^t$, define the projection $p_{\vec{i}}\colon \{0,1\}^d \to \{0,1\}^t$ which maps a vector $x \in \{0,1\}^d$ to $p_{\vec{i}}(x) = (x_{i_1},x_{i_2},\dots,x_{i_t})$. For any $t \in \N$, we consider the hash family $\calH_{t, d}$ given by
\begin{equation}\label{eqn:proj-hash-def}
\calH_{t,d}= \{p_{\vec{i}} \; : \; {\vec{i}} \in [d]^t \}.    
\end{equation}
Equivalently, a draw $\bphi$ from the hash family $\calH_{t,d}$ is given by sampling $t$ indices $\bi_1,\dots, \bi_t \sim [d]$ uniformly at random and letting $\bphi$ be the projection $p_{\vec{\bi}}$. Whenever $d$ is known from context, we drop the subscript and simply write $\calH_t$. 
The construction of the (static) data-dependent probabilistic tree metric $\bT$ is described by the algorithm $\quadtree$ (in Figure~\ref{fig:DDquadtree-prelims}), which receives as input a set of vectors $\Omega \subset \{0,1\}^d$ and generates a random tree $\bT$ and a natural mapping from $\{0,1\}^d$ to leaves of $\bT$ (in Definition~\ref{def:quadtree-map}). We also allow $\quadtree$ to take an additional scaling parameter $\xi$. This scaling will not be needed in this section, and we can set it as $\xi=1$ (in fact, it will not effect the behavior of the algortihm in this section). However, we will need to set it carefully later on in Section \ref{sec:7}.
%

    \begin{figure}[H]
	\begin{framed}
		\noindent Subroutine $\quadtree(\Omega, \xi)$
		\begin{flushleft}
			\noindent {\bf Input:} A subset of vectors $\Omega \subset \{0,1\}^d$, and a scaling parameter $\xi$ (if unspecified, set $\xi=1$).\\
			\noindent {\bf Output:} A probabilistic weighted tree $\bT$, as defined below. 
			\begin{enumerate}
				\item Initialize a root node $v_0$ at depth $0$. We will let $L = O(\log d)$ (for a large enough constant, say $2$) denote the depth of the tree, and we define the notation which will indicate, for a node $v$,
                \[ \bElms(v) = \text{ subset of $\{0,1\}^d$ which will embed into the subtree at $v$. }\]
                Initially, we let $\bElms(v_0) = \{0,1\}^d$.
                \item For each $\ell=0,1,\dots,L-1$, sample a random hash function $\bphi_\ell \sim  \calH_{2^\ell}$, and let $\bphi_{L}:\{0,1\}^d \to \{0,1\}^d$ be the identity mapping $\bphi(x) = x$.  
			\item We initialize nodes $v$ at depths $1, \dots, L$, by the following inductive procedure which begins with $\ell = 0, \dots, L$: 
				\begin{itemize}
					\item For every node $v$ at depth $\ell$, and every $u \in \{0,1\}^{2^{\ell}}$, we initialize a child node $v_{u}$ to $v$ (at depth $\ell+1$). We create the edge $(v,v_u)$, and set 
     \[\bElms(v_{u}) = \bElms(v) \cap \{x \in \{0,1\}^d \; | \; \bphi_{\ell}(x) = u \}. \]
                    The nodes at depths $L+1$ are leaves of $\bT$.
				\end{itemize}
                    \item\label{en:quadtree-weight-setting} For every edge $(v,v_{u}) \in \bT$ where $v$ is at depth $\ell$, we assign the weight  \[ \bw(v, v_u) = \begin{cases} \vspace{0.25cm}
                        \Ex\limits_{\substack{\bc \sim \bElms(v) \cap \Omega \\ \bc' \sim \bElms(v_{u}) \cap \Omega}}\left[\|\bc - \bc'\|_1 \right] & \; \;  \bElms(v_{u}) \cap \Omega \neq \emptyset  \\
                         d/ 2^{\ell} \cdot \xi& \; \; \text{otherwise.}\footnotemark
                    \end{cases}\]
                
			\end{enumerate}
		\end{flushleft}
	\end{framed}
	\caption{The Data-Dependent $\quadtree$ Embedding.}\label{fig:DDquadtree-prelims}
\end{figure}

\footnotetext{As noted earlier, the symbol $\xi$ in the description of $\bw(v, v_{u})$ can be arbitrary in this section. Specifically, in this section, an edge of whose weight depends on $\xi$ will never be evaluated. Looking ahead, we have placed the parameter $\xi$ as it will become important in Section~\ref{sec:7}, where $\xi$ will be set to $O(\log s)$.}

\begin{definition}
\label{def:quadtree-map}
    For any subset $\Omega \subset \{0,1\}^d$ and any draw of $\bT$ generated from an execution of $\quadtree(\Omega)$ (in Figure~\ref{fig:DDquadtree-prelims}), we have the following:
    \begin{itemize}
        \item For every vector $x \in \{0,1\}^d$, there is a unique root-to-leaf path in $\bT$, given by the sequence of nodes $v_0(x), \dots, v_{L+1}(x)$, inductively defined by $v_{0}(x) = v_0$ and
        \[ v_{\ell}(x) \text{ is unique child $v_u$ of $v_{\ell-1}(x)$ with $x \in \bElms(v_u)$.}\]
        \item The mapping $\varphi \colon \{0,1\}^d \to \bT$ sends $x \in \{0,1\}^d$ to $v_{L+1}(x)$, and since the path for each $x \in \{0,1\}^d$ is unique, we abuse notation and associate $x \in \{0,1\}^d$ with its leaf $x = v_{L+1}(x) \in \bT$.
        \item The tree metric $(\bT, d_{\bT})$ is specified by the edge weights in $\bw(\cdot, \cdot)$, and for any $x, y \in \{0,1\}^d$, the distance $d_{\bT}(\varphi(x), \varphi(y))$ is the sum of edge-weights $\bw$ on the path from $\varphi(x)$ to $\varphi(y)$ in $\bT$.
    \end{itemize}
%
%
\end{definition}

Whenever we generate $\bT$ from $\quadtree(\Omega)$ and we consider $x, y \in \Omega$, the edge weights along the path between $\varphi(x)$ and $\varphi(y)$ in $\bT$ do not depend on the parameter $\xi$. In particular, every node $v_{\ell}(x)$ on the root-to-leaf path in $\bT$ has $x \in \bElms(v_{\ell}(x)) \cap \Omega$, so that $\bw(v_{\ell-1}(x), v_{\ell}(x))$ falls into the first case in Step~\ref{en:quadtree-weight-setting}, where $\bw(\cdot,\cdot)$ is the expected distance of vectors sampled from $\bElms(\cdot) \cap \Omega$---we will call these edges ``data-dependent,'' since these weight depends on the vectors in $\Omega$ and may change when $\Omega$ changes. When one of $x$ or $y$ is not in $\Omega$, then at least one edge along the path $\varphi(x)$ to $\varphi(y)$ in $\bT$ falls in the second case of Step~\ref{en:quadtree-weight-setting} and has $\bw(\cdot,\cdot)$ set to $d / 2^{\ell} \cdot \xi$---we will call these edges ``data-independent.''


\begin{fact}[Distances in $\bT$ from $\quadtree(\Omega)$]\label{fact:tree-dist}
Let $\Omega \subset \{0,1\}^d$ be any subset and let $\bT$ be drawn from $\quadtree(\Omega)$. For any $x, y \in \{0,1\}^d$ and $\ell \in \{0,\dots, L+1\}$, we let $\Split_{\ell}(x,y)$ denote the indicator variable
\[ \Split_{\ell}(x,y) = \ind\{ v_{\ell}(x) \neq v_{\ell}(y) \}, \] 
and note that we may write
\[ d_{\bT}(x, y) = \sum_{\ell=1}^{L+1} \Split_{\ell}(x, y) \cdot \Big( \bw(v_{\ell-1}(x), v_{\ell}(x)) + \bw(v_{\ell-1}(y), v_{\ell}(y)) \Big).\]
\end{fact}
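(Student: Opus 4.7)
This is a direct bookkeeping statement about path lengths in a rooted tree, so the plan is to explicitly identify the portion of the two root-to-leaf paths that contributes to $d_{\bT}(x,y)$ and then rewrite it using the split indicators.

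First, I would identify the lowest common ancestor (LCA) of $\varphi(x)$ and $\varphi(y)$. Since both root-to-leaf paths start at $v_0(x) = v_0(y) = v_0$, there is a well-defined smallest index
\[ \ell^{*} = \min\{ \ell \in \{1,\dots,L+1\} : v_{\ell}(x) \neq v_{\ell}(y)\}, \]
with the convention $\ell^{*} = L+2$ if $\varphi(x) = \varphi(y)$. For $\ell < \ell^{*}$ the two paths share the node $v_{\ell}(x) = v_{\ell}(y)$, and consequently share the edge from level $\ell-1$ to level $\ell$. The LCA is therefore $v_{\ell^{*}-1}(x) = v_{\ell^{*}-1}(y)$.

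Second, I would write the (unique) path between $\varphi(x)$ and $\varphi(y)$ in $\bT$ as the concatenation of the path from $\varphi(x)$ up to the LCA followed by the path from the LCA down to $\varphi(y)$; no shared edges are traversed, so
\[ d_{\bT}(x,y) = \sum_{\ell=\ell^{*}}^{L+1} \bw(v_{\ell-1}(x), v_{\ell}(x)) + \sum_{\ell=\ell^{*}}^{L+1} \bw(v_{\ell-1}(y), v_{\ell}(y)). \]

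Finally, I would observe that once the two paths diverge at level $\ell^{*}$ they cannot re-merge (the tree is rooted, so each non-root node has a unique parent), which means $v_{\ell}(x) \neq v_{\ell}(y)$ for all $\ell \geq \ell^{*}$ and $v_{\ell}(x) = v_{\ell}(y)$ for all $\ell < \ell^{*}$. In terms of the split indicator, $\Split_{\ell}(x,y) = 1$ exactly for $\ell \in \{\ell^{*}, \ldots, L+1\}$ and $0$ otherwise, so inserting $\Split_{\ell}(x,y)$ inside the sums lets us extend them from $\ell = 1$ with no change in value, yielding the stated identity. The only mild care needed is the degenerate case $\varphi(x) = \varphi(y)$ (i.e., $\ell^{*} = L+2$), where every $\Split_{\ell}(x,y) = 0$ and both sides equal $0$, and the endpoint case $\ell^{*} = 1$, where the LCA is the root $v_0$ and both full root-to-leaf branches are traversed.

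There is no genuine obstacle here: the statement is essentially the definition of tree-metric distance as a sum of edge weights along the unique path, expressed through the split indicators. The only thing to keep straight is that the shared prefix of the two paths contributes zero and hence can be absorbed into the sum via the indicator.
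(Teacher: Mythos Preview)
Your proposal is correct and is exactly the intended justification. The paper states this as a Fact with no accompanying proof, and your LCA-based argument---identifying $\ell^{*}$, writing $d_{\bT}(x,y)$ as the sum of edge weights on the two branches below the LCA, and using the monotonicity of $\Split_{\ell}(x,y)$ in $\ell$ to extend the sum---is the natural unpacking of the definitions.
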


\begin{figure}[t]
\centering
\begin{picture}(370, 160)
\put(0,0){\includegraphics[width=.8\linewidth]{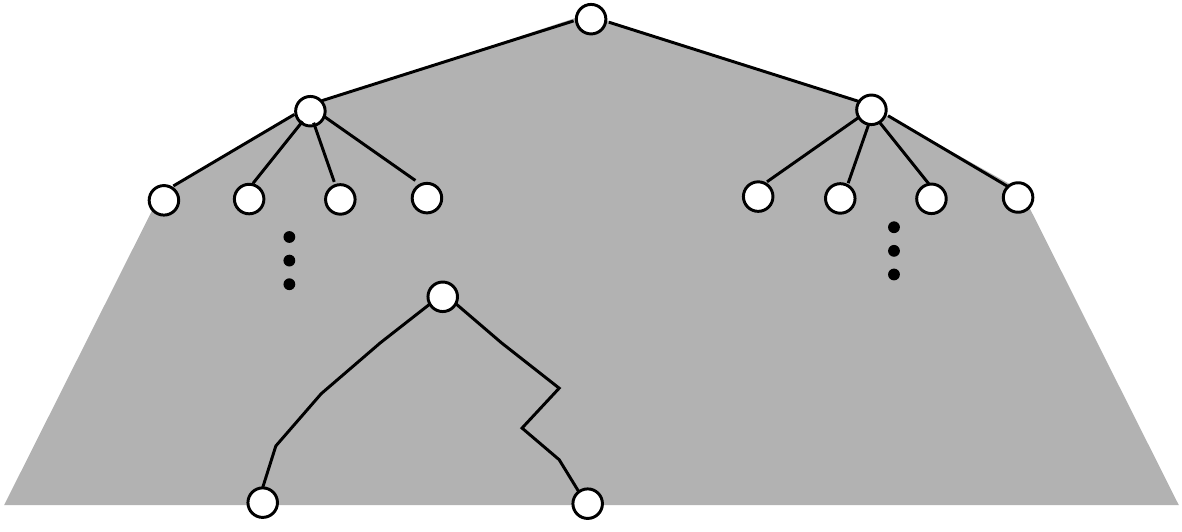}}
\put(195, 170){$v_0$}
\put(-20, 160){$\bphi_0$}
\put(-20, 135){$\bphi_1$}
\put(-15, 100){$\vdots$}
\put(-20, 30){$\bphi_{L}$}
\put(72, -2){$x$}
\put(195, -2){$y$} 
\put(150, 80){$v_{\ell}(x) = v_{\ell}(y)$}
\end{picture}
\caption{Tree Embedding $\bT$ Sampled from $\quadtree(\Omega)$. The root node is $v_0$ and the tree is generated by the maps $\bphi_0,\dots, \bphi_L$. Displayed are two vectors $x, y$ which map to the leaves of the tree, and their path (whose lowest common ancestor is $v_{\ell}(x) = v_{\ell}(y)$ is displayed. The distance $d_{\bT}(x,y)$ is given by the sum of weights along the path from $x$ to $v_{\ell}(x) = v_{\ell}(y)$, and then back to $y$.}\label{fig:tree}
\end{figure}


Recall our goal in~Theorem~\ref{thm:dynamic-main}, the tree metric $\bT$ should be non-contracting for vectors $x, y \in \Omega$ while, at the same time, minimizing the expectation of $d_{\bT}(x, y)$. We use the following lemma from \cite{CJLW22}, which upper bounds the expected distance $d_{\bT}(x, y)$ whenever all weights along the path between $x$ and $y$ are data-dependent. Importantly, the lemma applies only to $x, y \in \Omega$, and extending it to vectors $x,y$ which are not necessarily in $\Omega$ will be the main technical challenge of the next sections. 

\begin{lemma}[Follows from Lemma 3.6 (with $i_0 = 0$) and Lemma 3.4 from \cite{CJLW22}]
\label{lem:cjlw}
    For any  set $\Omega \subseteq \{0,1\}^d$ of $m$ vectors, and any two $a, b \in \Omega$, we have that, whenever $\bT$ is generated from $\quadtree(\Omega)$, we have
    \[ \Ex_{\bT}\left[ d_{\bT}(a,b)\right] \leq \tilde{O}(\log(m) + \log(d) ) \cdot \|a - b\|_1 \]
 Moreover, we have $d_{\bT}(a,b) \geq \|a-b\|_1$ deterministically. 
\end{lemma}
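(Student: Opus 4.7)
My plan is a short convexity-plus-telescoping argument. Since $a,b \in \Omega$, every node $v_\ell(a), v_\ell(b)$ on the root-to-leaf paths of $a$ and $b$ satisfies $\bElms(v_\ell(\cdot)) \cap \Omega \neq \emptyset$ (it contains $a$ or $b$), so every edge $(v, v_u)$ on the path between $\varphi(a)$ and $\varphi(b)$ in $\bT$ falls in the first (data-dependent) case of step~\ref{en:quadtree-weight-setting}. For each such node $v$, define the centroid $\mu_v = \Ex_{\bc \sim \bElms(v) \cap \Omega}[\bc] \in \R^d$. Applying Jensen's inequality coordinate-wise to $t \mapsto |t|$ and using independence of $\bc,\bc'$, $\bw(v,v_u) = \sum_i \Ex[|\bc_i - \bc'_i|] \geq \sum_i |\mu_{v,i} - \mu_{v_u, i}| = \|\mu_v - \mu_{v_u}\|_1$. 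Summing these inequalities along the $\varphi(a)$-to-$\varphi(b)$ path and applying the triangle inequality telescopically gives $d_\bT(a,b) \geq \|\mu_{\varphi(a)} - \mu_{\varphi(b)}\|_1$. Since $\bphi_L$ is the identity, leaves at depth $L+1$ are in bijection with $\{0,1\}^d$ and $\bElms(\varphi(x)) = \{x\}$, so $\mu_{\varphi(a)} = a$, $\mu_{\varphi(b)} = b$, and $d_\bT(a,b) \geq \|a-b\|_1$.

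\textbf{Upper bound setup.} For the expectation bound, I would decompose via Fact~\ref{fact:tree-dist}, so that $\Ex[d_\bT(a,b)]$ is a sum over levels $\ell$ of $\Ex\bigl[\Split_\ell(a,b) \cdot (\bw(v_{\ell-1}(a), v_\ell(a)) + \bw(v_{\ell-1}(b), v_\ell(b)))\bigr]$, and bound each level separately. Let $\delta = \|a-b\|_1$. A union bound over the $\Theta(2^\ell)$ coordinates sampled in $\bphi_0, \dots, \bphi_{\ell-1}$ gives $\Pr[\Split_\ell(a,b) = 1] = \Theta(\min\{1, 2^\ell \delta / d\})$, with the transition happening at the critical level $\ell^* \approx \log_2(d/\delta)$. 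For levels $\ell \geq \ell^*$ the split probability is $\Theta(1)$ but the cell $\bElms(v_\ell(x))$ fixes the values on $\Theta(2^\ell)$ coordinates, so the expected pairwise $\ell_1$ distance within the cell shrinks geometrically as $\ell$ increases, and summing over these levels contributes $O(\delta)$. For levels $\ell < \ell^*$ the data-dependent weight could be as large as $\Theta(d/2^\ell)$, but the split probability $O(2^\ell \delta / d)$ absorbs this to give $O(\delta)$ per level, modulo the data-dependent overhead discussed next.

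\textbf{The data-dependent factor and main obstacle.} The crux is to upper bound the conditional expectation of $\bw(v_{\ell-1}(x), v_\ell(x))$ given that a split occurs. The plan is to import the potential-function argument of~\cite{CJLW22}, Lemmas~3.4 and~3.6: for a fixed $x \in \Omega$, summing the data-dependent weights along the root-to-$x$ path is, up to $\tilde{O}(\log m)$ factors, charged by the average $\ell_1$ distance from $x$ to other points in $\Omega \cap \bElms(v_{\ell-1}(x))$ at each level, and the few levels where the random coordinate sample produces a highly unbalanced partition contribute a total $O(\log m)$ charge (there are at most $O(\log m)$ such levels since the number of $\Omega$-points in the cell must halve at each one). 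Combined with the geometric split probabilities and the overall depth $L = O(\log d)$, this yields the $\tilde{O}(\log m + \log d) \cdot \|a-b\|_1$ bound. The hard part is precisely this intermediate regime near $\ell^*$, where both split probability and edge weight are nontrivial: the naive level-by-level union bound loses an extra $\log d$ factor, recovering only the $O(\log^2 s)$ bound of~\cite{AIK08}, and the potential argument of~\cite{CJLW22} is what avoids this loss.
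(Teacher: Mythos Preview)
Your proposal is correct and aligned with how the paper treats this lemma: the paper does not give a self-contained proof but simply imports it from~\cite{CJLW22} (Lemmas~3.4 and~3.6), and your upper-bound sketch correctly identifies that the level-by-level decomposition via Fact~\ref{fact:tree-dist}, combined with the split probability $\Theta(\min\{1, 2^{\ell}\|a-b\|_1/d\})$, must be tied together by the CJLW22 potential argument to avoid the extra $\log d$ factor.

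Two remarks on where your write-up differs from or goes beyond the paper. First, your non-contraction argument via centroids and Jensen is a clean direct proof for $\bT$. The paper takes an indirect route: it introduces the modified tree $\bT'$ (Definition~\ref{def:mod-tree-weights}) with sampled representatives, proves $d_{\bT'}(a,b) \geq \|a-b\|_1$ by a raw telescoping triangle inequality on the representative chain, and then observes that $d_{\bT}(a,b)$ equals the expectation of $d_{\bT'}(a,b)$ over representative draws, hence inherits the lower bound. Your approach avoids introducing $\bT'$ altogether and is arguably more natural for the expectation-weighted tree. Second, your heuristic that for $\ell \geq \ell^*$ ``the expected pairwise $\ell_1$ distance within the cell shrinks geometrically'' is not true in general for arbitrary $\Omega$ --- the data-dependent weight is a full-dimensional $\ell_1$ distance among whatever $\Omega$-points survive in the cell, and nothing forces geometric decay. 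This is precisely the obstacle you correctly flag at the end, and the CJLW22 potential (which charges against the $O(\log m)$ halvings of $|\bElms(v_\ell(x)) \cap \Omega|$) is what replaces that faulty intuition.
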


We note that Lemma~\ref{lem:cjlw} immediately implies that a single draw of a tree metric $\bT$ from $\quadtree(\Omega)$ satisfies the distortion guarantees we desired: it is non-contracting for vectors in $\Omega$, and has a bounded expected expansion. In what follows, we will show how to maintain a data structure for $\bT$ dynamically, and for this purpose, it is useful to modify the way in which the ``data-dependent'' weights $\bw(v, v_u)$ are defined in Step~\ref{en:quadtree-weight-setting}. 
\begin{definition}\label{def:mod-tree-weights}
    For any set $\Omega \subset \{0,1\}^d$, let $\bT$ be generated from an execution of $\quadtree(\Omega)$. We let $\bT'$ denote the tree metric whose vertex set, edge set, and mapping $\varphi \colon \{0,1\}^d \to \bT$ is the same as in $\bT$; however, we modify the weights as follows:
    \begin{itemize}
        \item For each node $v \in \bT'$, if $\bElms(v) \cap \Omega \neq \emptyset$, we sample what we call a representative $\bRep(v) \sim \bElms(v_u) \cap \Omega$.
        \item For each edge $(v, v_u) \in \bT'$ where $v$ is at depth $\ell$, we let
        \[ \bw'(v, v_u) = \left\{ \begin{array}{cc} \| \bRep(v) - \bRep(v_u) \|_1 & \bElms(v_u) \cap \Omega \neq \emptyset \\
                                                        d / 2^{\ell} \cdot \xi & \text{otherwise} \end{array} \right. .\]
    \end{itemize}
    We similarly consider the tree metric $(\bT', d_{\bT'})$, and we have
    \begin{align} 
    d_{\bT'}(x, y) = \sum_{\ell=1}^{L+1} \Split_{\ell}(x,y) \cdot \Big( \bw'(v_{\ell-1}(x), v_{\ell}(x)) + \bw'(v_{\ell-1}(y), v_{\ell}(y)) \Big). \label{eq:mod-tree-dist}
    \end{align}
\end{definition}

\paragraph{Data Structure for Dynamic, Data-Dependent Probabilistic Trees.}\label{en:ds-dynamic} We can now describe the data structure which maintains the tree $\bT'$, which samples $\bT$ from $\quadtree(\Omega)$ and uses the modified edge weights in~Definition~\ref{def:mod-tree-weights}. The data structure will maintain the following information:
\begin{itemize}
    \item We store the sampled functions $\bphi_0,\dots, \bphi_{L}$ (by storing the set of indices sampled for each $\ell \in \{0, \dots, L\}$), and note that it suffices to store the \emph{set} of indices samples, which has size at most $d$ always. This takes time $O(Ld)$ during initialization. 
    \item We also maintain the set $\Omega$, as well as the subtree of $\bT'$ of nodes $v$ for which $\bElms(v) \cap \Omega$ is non-empty. For each such node $v$, we maintain the set $\bElms(v) \cap \Omega$, as well as the sample $\bRep(v) \sim \bElms(v) \cap \Omega$. 
\end{itemize}
This completes the description of the data structure. Note that, each vector $x \in \Omega$ is naturally mapped to a leaf $\varphi(x)$ which may easily be found in $O(dL)$ time by walking down the (stored) subtree of $\bT'$. Given two leaves $\varphi(x)$ and $\varphi(y)$ for $x,y \in \Omega$, the required information is available to compute (\ref{eq:mod-tree-dist}) in time $O(dL)$. Then, when updating the set $\Omega$ by inserting or deleting a vector $x \in \{0,1\}^d$, we proceed by:
\begin{itemize}
    \item\label{en:insertion-dynamic} \textbf{Insertion:} We insert $x$ to $\Omega$ and find the leaf $\varphi(x)$, considering the root-to-leaf path given by nodes $v_{0}(x), \dots, v_{L+1}(x)$, where one may need to initialize new nodes if $v_{\ell}(x)$ was not stored in the stored subtree. For each node $v = v_{\ell}(x)$, with probability $1 / |\bElms(v) \cap \Omega|$ (note that $\Omega$ now includes one more vector), we update $\bRep(v)$ to $x$; otherwise, do not update $\bRep(v)$. If the data structure updates $\bRep(v)$, every vector in $\bElms(v) \cap \Omega$ has its weighted path modified and its change is reported.
    \item\label{en:deletion-dynamic} \textbf{Deletion:} We delete $x$ to $\Omega$ and find the leaf $\varphi(x)$, considering the root-to-leaf path given by nodes $v_{0}(x), \dots, v_{L+1}(x)$, where one may need to initialize new nodes if $v_{\ell}(x)$ was not stored in the stored subtree. For each node $v = v_{\ell}(x)$, if $\bRep(v)=x$, we update $\bRep(v)$ by re-sampling from $\bElms(v) \cap \Omega$ or removing $v$ if empty. If the data structure updates $\bRep(v)$, every vector in $\bElms(v) \cap \Omega$ has its weighted path modified and its change is reported.
\end{itemize}

\paragraph{Analysis.} From the metric perspective, the change from $\bT$ to $\bT'$ does not affect the distortion analysis. The fact that the embedding to $(\bT', d_{\bT'})$ is non-contracting follows from the triangle inequality since the path from $\phi(x)$ to $\phi(y)$ define some path of the form 
\[ v_{L+1}(x), v_{L}(x), \dots, v_{\ell}(x) = v_{\ell}(y), v_{\ell+1}(y), \dots, v_{L+1}(y), \]
and we have that, for any setting of the randomness, 
\begin{align*}
    d_{\bT'}(x, y) &= \sum_{j=\ell+1}^{L+1} \Big( \bw'(v_{j-1}(x), v_{j}(x)) + \bw'(v_{j-1}(y), v_{j}(y)) \Big) \\
                    &= \Big\| \bRep(v_{L+1}(x)) - \bRep(v_{L}(x)) \Big\|_1 + \Big\| \bRep(v_{L}(x)) - \bRep(v_{L-1}(x)) \Big\|_1 + \dots +\\
                    &\qquad + \Big\| \bRep(v_{\ell+1}(x)) - \bRep(v_{\ell}(y)) \Big\|_1 + \dots + \Big\| \bRep(v_{L}(y)) - \bRep(v_{L+1}(y))\Big\|_1 \\
                    &\geq \|x - y\|_1.
\end{align*}
In addition, the sampling procedure to modify the weights $\bw'$ is defined such that
\begin{align*}
    \Ex_{\bT'}\left[ d_{\bT'}(x, y) \right] = \Ex_{\bT}\left[ d_{\bT}(x, y)\right].
\end{align*}

\ignore{
Fix any such Quadtree now $\bT\sim \calT_{\Omega}$, which in particular means fixing the hash functions $\phi_1,\dots,\phi_L$. We now consider a modified set of edge weights obtained by, for each $v \in \bT$ with $\Elms(v,\Omega) \neq \emptyset$, a unique representative $\Rep(v,\Omega) \sim \Elms(v,\Omega)$, sampled uniformly at random. Let $\btheta = \{\Rep(v)\}_{v \in \bT, \Elms(v,\Omega) \neq \emptyset}$ be this set of conditioned representatives, and define:
\[     w_{\btheta}(u,v) = \begin{cases}
    d/2^{\texttt{depth}(v)} & \text{ if } \Elms(v,\Omega) = \emptyset \\
\|\Rep(v,\Omega)-\Rep(u,\Omega)\|& \text{ otherwise}
\end{cases} \]
Define the tree metric $\bT_{\btheta} = (T,d_{\bT,\btheta})$ as the shortest path metric under the edge weights $w_{\btheta}(u,v) $. Since the vertices of $\bT_{\btheta}$ are the same as those of $\bT$, so we can also think of $\varphi$ as mapping the hamming cube into $\bT_{\btheta}$, which will be the final embedding used for Theorem \ref{thm:dynamic-main}. We have the following:

\begin{lemma}\label{lem:cjlw-mod}
For any $x,y \in \Omega$, we have
  $$\exx{\bT, \btheta}{d_{\bT,\btheta}(x,y)} \leq \Tilde{O}(\log(m) + \log(d) ) \cdot \|x - y\|\ell$$
 Moreover, we have $d_{\bT,\btheta}(x,y) \geq \|x-y\|$ deterministically. 
\end{lemma}
\begin{proof}
    First the upper bound on the expected distance under $d_{\bT,\btheta}$, note that taking the expectation over $\btheta$ yields by definition $\exx{\bT, \btheta }{d_{\bT,\btheta}(x,y)} = \exx{\bT}{d_{\bT}(x,y)}$, thus one can apply Lemma \ref{lem:cjlw}. For the lower bound, let $\varphi(x) = u_1, u_2, \dots, u_k = \varphi(y)$ be the path from $\varphi(x)$ to $\varphi(y)$ in $\bT$. We have
\begin{equation*}
    \begin{split}
    d_{\bT,\btheta} &= \sum_{i=1}^{k-1} w(u_i,u_{i+1})  \\    
    &= \sum_{i=1}^{k-1} \| \Rep(u_i) - \Rep(u_{i+1}) \| \\
    & \geq \| \Rep(u_1) - \Rep(u_{k}) \| \\
    & = \|x-y\| 
    \end{split}
\end{equation*}
Where the third line follows from the triangle inequality, and the last line follows from the fact that $\Elms(u_1) =\{x\}$ and $\Elms(u_k) = \{y\}$. 
\end{proof}}

\ignore{
\begin{proposition}\label{prop:tree-to-l1}
    Let $T = (V(T), E(T), w)$ be a finite rooted tree metric, and fix any $n \geq 1$. Then there is a deterministic metric embedding $\pi: \EMD_n(T) \to (\R^{|E(T)|} , \ell_1)$ from EMD over $T$ into $\ell_1$ such that for every two multi-sets $A,B \subset V(T)$ of $|A| = |B| = n$ vertices, we have
    \[  \EMD_{T}(A,B) = \|\pi(A) - \pi(B)\|_1\]
    Moreover, if $T$ has height $H$, then $\|\pi(A)\|_0 \leq Hn$ for all $A \in \EMD_n(T)$. Moreover, the embedding for a set $A$ can be computed in time $O(|A| H)$.  
\end{proposition}
}

\begin{claim}\label{cl:remain-uni}
    Consider any sequence of updates $u_1,\dots, u_k$ (specifying insertions and deletions of vectors) for the set $\Omega \subset \{0,1\}^d$. If $\bT'$ denotes the (randomized) tree maintained by the data structure, then, for every node $v \in \bT'$ where $\bElms(v) \cap \Omega$ is non-empty, the random variable $\bRep(v)$ is distributed as a uniform draw from $\bElms(v) \cap \Omega$.
\end{claim}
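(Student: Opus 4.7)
The plan is to proceed by induction on the length $k$ of the update sequence $u_1,\dots,u_k$. The base case $k=0$ is trivial since $\Omega=\emptyset$ and there are no non-empty sets $\bElms(v)\cap\Omega$. For the inductive step, assume that after processing $u_1,\dots,u_{k-1}$ the representative $\bRep(v)$ at each node with $\bElms(v)\cap\Omega\neq\emptyset$ is marginally uniform over $\bElms(v)\cap\Omega$. We then analyze $u_k$, treating the insertion and deletion cases separately. In either case, a node $v$ is only affected when $v$ lies on the root-to-leaf path $v_0(x),\dots,v_{L+1}(x)$ of the updated vector $x$; for every other node the multiset $\bElms(v)\cap\Omega$ is unchanged and the data structure leaves $\bRep(v)$ alone, so the inductive hypothesis carries through unchanged.

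For an \textbf{insertion} of $x$, fix a node $v$ on the path of $x$, let $S=\bElms(v)\cap\Omega$ before the update (so $|S|=m$) and $S'=S\cup\{x\}$ after. If $m=0$, the update rule sets $\bRep(v)=x$ with probability $1/|S'|=1$, which is uniform on the singleton $S'$. If $m\geq 1$, by induction the old $\bRep(v)$ is uniform on $S$; the rule replaces it by $x$ with probability $1/(m+1)$ and keeps it otherwise. Hence for any $y\in S$,
\[
\Pr[\bRep(v)=y \text{ after}] = \tfrac{m}{m+1}\cdot\tfrac{1}{m} = \tfrac{1}{m+1},
\]
and $\Pr[\bRep(v)=x \text{ after}]=\tfrac{1}{m+1}$, so $\bRep(v)$ is uniform over $S'$.

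For a \textbf{deletion} of $x$, fix $v$ on the path of $x$ with $S=\bElms(v)\cap\Omega$ of size $m$ and $S'=S\setminus\{x\}$ of size $m-1$. If $m=1$ then $S'=\emptyset$ and $v$ is removed from the maintained subtree, so there is nothing to check. Otherwise by induction $\bRep(v)$ is uniform over $S$; with probability $(m-1)/m$ we have $\bRep(v)\neq x$ and keep it (conditionally uniform on $S'$), and with probability $1/m$ we re-sample uniformly from $S'$. Then for any $y\in S'$,
\[
\Pr[\bRep(v)=y \text{ after}] \;=\; \tfrac{1}{m} \;+\; \tfrac{1}{m}\cdot\tfrac{1}{m-1} \;=\; \tfrac{1}{m-1},
\]
which is uniform on $S'$. (The first term comes from the event that $\bRep(v)$ was already equal to $y$; the second from resampling after removing $x$.)

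The main technical subtlety, and what one should be careful about, is that the claim is only about \emph{marginal} uniformity at each $v$; the representatives across different nodes are in general correlated because they share the randomness of the hash functions $\bphi_0,\dots,\bphi_L$ and of previous resampling steps. Fortunately the data structure's update at node $v$ is a function only of $\bRep(v)$, $x$, and fresh randomness independent of everything outside $v$, so the marginal argument above is valid conditionally on the realization at every other node, and the induction goes through. Combining the two cases completes the inductive step and hence the proof.
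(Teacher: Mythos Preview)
Your proof is correct and follows essentially the same approach as the paper: induction on the length of the update sequence, with the same case split into insertions and deletions and the same probability computations (the paper phrases the deletion case as ``conditioned on $\bRep(v)\neq x$ it stays uniform, and conditioned on $\bRep(v)=x$ it is resampled uniformly,'' which is equivalent to your explicit $\tfrac{1}{m}+\tfrac{1}{m(m-1)}=\tfrac{1}{m-1}$). Your explicit handling of the edge cases $m=0$ and $m=1$ and your remark on marginal versus joint uniformity are welcome additions that the paper omits.
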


\begin{proof}
    We prove the claim by induction on the length of the sequence $u_1,\dots, u_k$. For $k=0$, $\Omega$ is empty therefore the claim is vacuously true. Assume for inductive hypothesis that for updates $u_1,\dots, u_k$, every non-empty $\bElms(v) \cap \Omega$ satisfies $\bRep(v)$ is uniformly distributed among $\bElms(v) \cap \Omega$. We consider the update $u_{k+1}$.
    \begin{itemize}
        \item \textbf{Insertion}: If $u_{k+1}$ is the insertion of a vector $x \in \{0,1\}^d$ to $\Omega$, we let $\Omega' = \Omega \cup \{x\}$. The data structure modifies the distribution of $\bRep(v_{\ell}(x))$ for $\ell=0,\dots, L+1$, and any other node $v'$ rest remain uniform over $\bElms(v') \cap \Omega' = \bElms(v') \cap \Omega$ by induction. For a node $v = v_{\ell}(x)$, the random variable $\bRep(v)$ is now updated to be \textbf{(i)} equal to $x$ with probability $1 / |\bElms(v) \cap \Omega'|$ and \text{(ii)} uniform over $\bElms(v) \cap \Omega$ by induction with the remaining probability. Thus, for any $y \in \bElms(v) \cap \Omega$, 
        \begin{align*}
            \Prx\left[ \bRep(v) = y\right] &= \left(1 - \frac{1}{|\bElms(v) \cap \Omega'|} \right) \cdot \frac{1}{|\bElms(v) \cap \Omega|} \\
            &= \left( \frac{|\bElms(v) \cap \Omega|}{|\bElms(v) \cap \Omega'|}\right) \frac{1}{|\bElms(v) \cap \Omega|} = \frac{1}{|\bElms(v) \cap \Omega'|}.
        \end{align*}
        \item \textbf{Deletion}: If $u_{k+1}$ is the deletion of a vector $x \in \{0,1\}^d$ from $\Omega$, we let $\Omega' = \Omega \setminus \{x\}$. The node of any $v$ whose value $\bRep(v)$ is not $x$ remains the same. By induction, the node was uniformly distributed over $\bElms(v) \cap \Omega$, and it is now uniformly distributed over $\bElms(v) \cap \Omega'$. Any node $v$ with $\bRep(v) = x$ is re-randomized, so uniform over $\bElms(v) \cap \Omega'$.
    \end{itemize}
    This completes the proof, as the draws of $\bphi_1,\dots, \bphi_{L+1}$, and hence the graph structure remains unchanged. The weights $\bw'$ depend on the draws of $\bRep(v)$, but these are uniform as needed.
\end{proof}

\begin{claim}
    Consider any fixed sequence of updates and consider a final update, the expected time of the update is $O(d L)$. 
\end{claim}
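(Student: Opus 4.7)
The plan is to decompose the cost of the final update (insertion or deletion of a vector $x$) into three phases: locating the path $v_0(x), v_1(x), \ldots, v_{L+1}(x)$ in the stored subtree, re-randomizing $\bRep(v_\ell(x))$ at each level, and iterating over the affected sets to report the resulting path changes. Phases one and two are deterministic. Walking down the tree requires evaluating the hash functions $\bphi_\ell(x)$ for $\ell = 0, \ldots, L-1$, each of cost $O(2^\ell)$ since $\bphi_\ell \sim \calH_{2^\ell}$ only reads $2^\ell$ sampled coordinates, summing to $O(2^L) = O(d)$; combined with $O(L)$ pointer lookups in the stored subtree, this gives $O(dL)$ deterministically. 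The coin tosses and constant-time pointer updates at each of the $L+1$ levels contribute an additional $O(L)$.

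The interesting step is bounding the third phase in expectation. When $\bRep(v)$ is updated at a node $v = v_\ell(x)$, both edges of $\bT'$ incident to $v$ carry weights that depend on $\bRep(v)$ and hence change, which in turn alters the weighted path of every $y \in \bElms(v) \cap \Omega$ that traverses $v$; we must therefore iterate over the stored set $\bElms(v) \cap \Omega$ at cost $O(|\bElms(v) \cap \Omega|)$. In the insertion case, the re-randomization is triggered with probability $1/|\bElms(v_\ell(x)) \cap \Omega'|$ where $\Omega' = \Omega \cup \{x\}$; since $|\bElms(v_\ell(x)) \cap \Omega| \leq |\bElms(v_\ell(x)) \cap \Omega'|$, this yields expected cost $O(1)$ per level. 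In the deletion case, Claim~\ref{cl:remain-uni} applied to the state immediately before the final update (obtained by induction over the fixed prior sequence) guarantees that $\bRep(v_\ell(x))$ is uniform over $\bElms(v_\ell(x)) \cap \Omega$, so $\Pr[\bRep(v_\ell(x)) = x] = 1/|\bElms(v_\ell(x)) \cap \Omega|$ and the expected cost is again $O(1)$ per level. Summing over the $L+1$ levels, phase three contributes $O(L)$ in expectation.

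Combining the three phases, the total expected update time is $O(dL) + O(L) + O(L) = O(dL)$, as claimed. The only nontrivial step is phase three, whose per-level $O(1)$ bound hinges entirely on the uniformity guarantee of Claim~\ref{cl:remain-uni}; everything else reduces to deterministic bookkeeping with pointers and hash-table lookups.
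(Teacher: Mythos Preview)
Your decomposition and the core expected-cost argument match the paper's proof exactly: the paper also bounds the path-walk by $O(dL)$ and then, for each level $\ell$, multiplies the trigger probability $1/|\bElms(v_\ell(x))\cap\Omega|$ by the number of affected vectors $|\bElms(v_\ell(x))\cap\Omega|$ to get $O(1)$ expected affected vectors per level, invoking the uniformity of $\bRep$ (Claim~\ref{cl:remain-uni}) for the deletion case. One small discrepancy: the paper charges $O(d)$ \emph{per affected vector} in phase three (since each changed edge weight is an $\ell_1$ distance in $\{0,1\}^d$ and must be recomputed), giving $O(d)$ per level and $O(dL)$ total for that phase, whereas you charge only $O(1)$ per vector; also note that $v$ has many child edges, not just two, all of which depend on $\bRep(v)$. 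This undercount does not affect your final $O(dL)$ bound, but the paper's accounting is the more accurate one.
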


\begin{proof}
    First, we note that it takes $O(d L)$ time to find the root-to-leaf path of a vector $x \in \{0,1\}^d$ which is being added or removed from $\Omega$. Then, we note that for each depth $\ell$, on an insertion, the expected running time resulting from updates to the embeddings as a result from a weight-change is
    \[ \sum_{\ell=0}^{L} \frac{1}{|\bElms(v_{\ell}(x)) \cap \Omega|} \cdot |\bElms(v_{\ell}(x)) \cap \Omega| \cdot O(d) = O(dL). \]
    Similarly for deletions, the distribution of each $\bRep(v_{\ell}(x))$ is uniform among $\bElms(v_{\ell}(x)) \cap \Omega$, so that the probability that the deletion of $x$ means that any $\bRep(v_{\ell}(x))$ is updated is $1/|\bElms(v_{\ell}(x)) \cap \Omega|$---since this results in a change to $|\bElms(v_{\ell}(x)) \cap \Omega|$ weighted paths, the similar bound of $O(dL)$ follows.
\end{proof}

\ignore{

\begin{proof}[Proof of Theorem \ref{thm:dynamic-main}]
By the Johnson-Lindenstrauss Lemma, we can reduce the case of $p=2$ to $p=1$ and $d = O(\log(m))$, at the cost of a $1/\poly(m)$ failure probability of the non-contraction event. Thus, in what follows we can assume that $p=1$. 
The data structure is as follows. We will run the embedding Figure \ref{fig:DDquadtree-prelims} on the unary encoding $u(x) \in \{0,1\}^{\delta d}$ of each $x \in [\Delta]^d$ (as described in Section \ref{sec:dynamic-l1}). Thus, the dimension of the hypercube for which our bounds hold will be $d \Delta$, and this increases the time complexity of initialization  

In initalization, we  draw the set of hash functions $\phi_1,\dots,\phi_L$ from Figure \ref{fig:DDquadtree-prelims}, where $L = O(\log d \Delta)$, which finalize the set of all vertices (but not edge weights) which will be in the tree metric. Note that this fixes the mapping $\varphi: \{0,1\}^d \to \bT$ from Definition \ref{def:quadtree-map}. Moreover, given any such point $x \in \{0,1\}^d$, computing the mapping and path from the root of $\bT$ to the leaf $\varphi(x)$ can be done in time $O(d \log \Delta d)$, since it simply requires evaluating each of the $L = O(\log d)$ hash functions, which can be done in this time be Lemma \ref{lem:dynamic-l1-to-ham}. 

Fix any current multi-set $\Omega \subset \{0,1\}^d$ of size $m \geq 0$ that is currently maintained by the data structure. The data structure will maintain the representatives $\Rep(v)$ for all $v \in \bT$ with $\Elms(v,\Omega)\neq \emptyset$ at all times, with the property that $\Rep(v)$ is drawn uniformly from $\Elms(v,\Omega)$ for each such $v$. This gives us the representative set $\btheta$ required to define the tree metric $\bT_{\btheta}$.  Given that we correctly maintain the weights of the tree $\bT_{\btheta}$, the guarantees on the metric embedding will follow directly from Lemma \ref{lem:cjlw-mod}, noting that the dimensional of the hypercube is now $O(d \Delta)$.Thus, the main challenge will be to dynamically maintain the weights of the edges on this path.

We now describe how to update the weights after an insertion or deletion.  It will suffice to demonstrate how to modify the edge weights after adding a point $x \in \{0,1\}^d$ to the active set $\Omega$, as deletions can be performed by reversing the process. Define the multi-set $\Omega' = \Omega \cup \{x\}$. We will now need to possibly change some of the representatives $\Rep(v)$ to insure that they are drawn from the correct distribution, namely, they should be uniformly drawn from $\Elms(v,\Omega')$.  

Note that for every edge $(u,v)$ with $\Elms(u,\Omega) = \Elms(u,\Omega')$, we need not change the edge weight $w_{\btheta}(u,v)$. So fix any edge $(u,v)$ with $\Elms(u,\Omega) \neq \Elms(u,\Omega')$. In order for the distribution to be correct, we must swap $\Rep(u)$ to $x$ with probability exactly $\frac{1}{|\Elms(u,\Omega')|}$. If this occurs, the weight of all edges $w(u,v')$ for all children $v'$ of $u$ must be modified, therefore changing the embedding of at most $|\Elms(u,\Omega')|$ points in $S$. Thus, the expected number of points in $\Omega$ whose embedding must change due to a change in representative of a vertex $u$ is at most $\frac{1}{|\Elms(u,\Omega')|} \cdot |\Elms(u,\Omega')| = 1$. 
Summing all vertices $u$ such that $x \in \Elms(u,\Omega')$, it follows that the total expected number of points whose root-to-leaf embedding weights change is at most $O(\log d\Delta)$, and each of the $O(\log d\Delta)$ edges of the root-to-leaf paths for those points can have their weight recomputed in constant time, which completes the proof.

\ignore{
Note that the representatives can be updated in time $O(sH)$. Foe each of the expected $O(sH)$ embeddings which must be changed, we can recompute the embedding into $\ell_1$ in time $O(sH)$, thus the total work required is $O(s^2 H^2 ) = O(s^2 \log^2(d))$ in expectation. Finally, to go from the expectation upper bound in Equation \ref{eqn:thm-dynamic} to a bound holdng with probability $1-1/\poly(n)$, we note that $\exx{\bT, \btheta}{ \EMD_{\bT_{\btheta}}}(\varphi(A),\varphi(B) \leq 2dH = \tilde{O}(d)$ deterministically by construction of the data-dependent weights. Thus, the variance of this random variable is at most $s' \cdot 2dH$, where $s'$ is the multi-set symmetric difference $(A \setminus B) \cup (B \setminus A)$, noting that identical points will have distance $0$ deterministically. It follows by Chernoff bounds that, if we repeat the above embedding $R = O((dH)^2 \log(m))$ times, yielding instnaces $(\bT^1,\btheta_1,\varphi_1),(\bT^2,\btheta_2,\varphi_2),\dots,(\bT^R,\btheta_r,\varphi_r)$ we will have
\begin{equation*}
    \begin{split}
        \frac{1}{R}\sum_{i=1}^R \EMD_{\bT^{i}_{\btheta_i}}(\varphi_i(A),\varphi_i(B)) &= \exx{\bT, \btheta}{ \EMD_{\bT_{\btheta}}(\varphi(A),\varphi(B))}\pm \frac{1}{2}s' \\
        &=(1 \pm \frac{1}{2})\exx{\bT, \btheta}{ \EMD_{\bT_{\btheta}} (\varphi(A),\varphi(B)}
    \end{split}
\end{equation*}
with probability $1-1/\poly(n)$, where we used the fact that 
\[\exx{\bT, \btheta}{ \EMD_{\bT_{\btheta}}(\varphi(A),\varphi(B)} \geq \EMD_{\{0,1\}^d}(A,B) \geq |(A \setminus B) \cup (B \setminus A)| = s'\]
This yields the desired upper bound on the embedding distortion with probability $1-1/\poly(n)$.
Finally, note that the dimension of the above embedding was $R \cdot |E(\bT)| = O(R 2^d)$ However, by Proposition \ref{prop:tree-to-l1}, each such embedding has at most $O(n \log d)$ non-zero entries. Thus, we can randomly hash the  $O(R 2^d)$  dimensions of the $\ell_1$ embedding into $m = O(n^3 s^2 \log^2 d)$ dimensional space, and by a standard birthday-paradox argument we will not have a collision between any coordinates which were non-zero in any embedding with probability $1-1/n$, which completes the proof.
}
\end{proof}}


\section{Locality Sensitive Hash Family for $\EMD$}

We will now show how to construct data-dependent hash families for $\EMD$, assuming some technical lemmas which we will prove in the later sections. Formally, we consider the metric space whose objects are size-$s$ tuples, where each entry is a vector in $\R^d$; for any $p \in [1,2]$, the distance will be the Earth Mover's distance with ground metric $\ell_p$. We will argue by \textbf{(i)} reducing data-dependent hashing over $(\R^d, \ell_p)$ to that of the hypercube $\{0,1\}^d$, and then \textbf{(ii)} giving a data-dependent hashing scheme for $\EMD$ over the hypercube.


\subsection{Reduction to Data-Dependent LSH over the Hypercube}

By Remark \ref{remark:ell1}, it suffices to consider data-dependent hashing for $\EMD$ over $(\R^d,\ell_1)$. In the following Lemma, we reduce the problem further to data-depedent hashing for $\EMD$ over the Hamming Cube $\{0,1\}^t$. 


\begin{lemma}[Reduction to the Data-Dependent Hashing on $\EMD_{s}(\{0,1\}^t)$]
    \label{lem:reduction-to-hypercube}
For any parameters $s,\tau  \geq 0$ and $c > 3$, $\delta \in (0,1)$:
\begin{itemize}    
    \item Suppose that there exists a data structure for data-dependent hashing over $\EMD_s(\{0,1\}^t)$ which is $(r, cr/3, p_1, p_2)$-sensitive for the parameter settings
    \[ t = \Theta(s^2c^2\log(1/\delta)) \qquad\text{and}\qquad r = \frac{t}{1.99c} \geq \omega(s),\]
    which has initialization time $I_{\sfh}(n)$ and query time $Q_{\sfh}(n)$.
    \item Then, there exists a data structure for data-dependent hashing over $\EMD_s(\R^d,\ell_1)$ which is $(\tau, c\tau , p_1-\delta , p_2 + \delta)$-sensitive with initialization time $I_{\sfh}(n) + n \cdot \poly(sd)$ and query time $Q_{\sfh}(n) + \poly(sd)$.
\end{itemize}
\end{lemma}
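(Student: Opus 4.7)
The plan is to construct a randomized embedding $\boldsymbol{f}:\R^d \to \{0,1\}^t$ that approximately preserves $\ell_1$ distances as (scaled) Hamming distances --- with the scale tuned so that $\tau$ maps to $r$ --- apply it pointwise to every vector in every set in $\EMD_s(\R^d,\ell_1)$, and then invoke the hypercube hash family from the hypothesis on the pushforward distribution $\boldsymbol{f}_{*}\mu$.

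For the embedding itself, I would use a standard two-step procedure. First, via a preliminary bucketing/discretization (as in the reductions of~\cite{AIK08}), one may assume without loss of generality that the support of $\mu$, together with any query $q$, has all coordinates in an integer grid $[\Delta]^d$ with $\Delta=\poly(s,c)$ and $\tau,c\tau$ scaled to integer multiples of the grid resolution, losing only a $(1\pm o(1))$ factor on distances in the relevant range. Second, for each bit $j\in[t]$ sample independently a pair $(\boldsymbol{i}_j,\boldsymbol{\theta}_j)$ with $\boldsymbol{i}_j\sim[d]$ uniform and $\boldsymbol{\theta}_j$ uniform on an appropriate discretized interval, and set $\boldsymbol{f}(x)_j=\indi\{x_{\boldsymbol{i}_j}<\boldsymbol{\theta}_j\}$. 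A short calculation gives $\Ex[d_H(\boldsymbol{f}(a),\boldsymbol{f}(b))]\propto\|a-b\|_1$ with independent Bernoulli bits, and tuning the range of $\boldsymbol{\theta}_j$ calibrates the proportionality constant so that $\Ex[d_H(\boldsymbol{f}(a),\boldsymbol{f}(b))]=(r/\tau)\|a-b\|_1$.

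The core technical step is EMD preservation. Fix any pair $x,y\in\EMD_s(\R^d,\ell_1)$, apply a Hoeffding bound to each of the at most $(2s)^2$ vector pairs in $x\cup y$, and union-bound. The choice $t=\Theta(s^2c^2\log(1/\delta))$ is calibrated so that with probability $\geq 1-\delta/2$ over $\boldsymbol{f}$, every such pair has $d_H(\boldsymbol{f}(a),\boldsymbol{f}(b))$ within a small enough additive slack of $(r/\tau)\|a-b\|_1$ that distances $\leq\tau$ always map to at most $r$ and distances $\geq c\tau$ always map to at least $cr/3$. The upper bound on image EMD is realized by pushing the optimal original matching through $\boldsymbol{f}$; the lower bound follows because any matching realizing $\EMD(\boldsymbol{f}(x),\boldsymbol{f}(y))$ is also a valid matching on $x,y$, so if its image cost were below $cr/3$, the simultaneous Hoeffding bound would force its original cost below $c\tau$, contradicting $\EMD(x,y)\geq c\tau$.

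Given the embedding, the reduction is immediate. At initialization, sample $\boldsymbol{f}$ once, simulate sample access to $\boldsymbol{f}_{*}\mu$ by drawing from $\mu$ and applying $\boldsymbol{f}$, and pass this distribution to the hypothesized hypercube data structure to obtain a draw $\boldsymbol{h}'$ from a family that is $(r,cr/3,p_1,p_2)$-sensitive for $\boldsymbol{f}_{*}\mu$; then define $\boldsymbol{h}(q)\eqdef\boldsymbol{h}'(\boldsymbol{f}(q))$. For the close-collision property, $\EMD(x,y)\leq\tau$ implies the embedding succeeds on $(x,y)$ with probability $\geq 1-\delta/2$, and conditioned on this $\boldsymbol{h}'$ collides $\boldsymbol{f}(x),\boldsymbol{f}(y)$ with probability $\geq p_1$ (since $\boldsymbol{f}$ and $\boldsymbol{h}'$ are independent), for total $\geq p_1-\delta$. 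For the far-separation-on-average property, for any $x$ and any $\boldsymbol{y}\sim\mu$ the embedding fails on this pair with probability at most $\delta/2$, and on the good event the hypercube LSH collides the (now far) image pair with probability at most $p_2$, giving total $\leq p_2+\delta$. The overhead is $\poly(sd)$ per sample from $\mu$ (to apply $\boldsymbol{f}$) and $\poly(sd)$ per query (to compute $\boldsymbol{f}(q)$), matching the claimed bounds $I_{\sfh}(n)+n\cdot\poly(sd)$ and $Q_{\sfh}(n)+\poly(sd)$. The main obstacle I anticipate is the lower bound on image EMD: the upper bound is a direct matching-push, but the lower bound requires the full simultaneous Hoeffding concentration over all $(2s)^2$ pointwise pairs --- which is precisely where the $s^2$ factor in $t$ enters --- and the slack between the original approximation $c$ and the hypercube approximation $c/3$ is exactly what absorbs the Hoeffding tails without further loss.
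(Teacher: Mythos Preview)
Your proposal is correct and follows essentially the same approach as the paper: build a randomized bit-embedding $\boldsymbol{f}\colon\R^d\to\{0,1\}^t$ whose per-bit disagreement probability is proportional to $\ell_1$ distance, union-bound a Hoeffding tail over all $O(s^2)$ cross pairs so that $\EMD(x,y)\le\tau$ implies $\EMD(\boldsymbol{f}(x),\boldsymbol{f}(y))\le r$ and $\EMD(x,y)\ge c\tau$ implies $\EMD(\boldsymbol{f}(x),\boldsymbol{f}(y))\ge cr/3$, then compose with the hypercube hash on the push-forward of $\mu$. The only notable difference is in how the bits are built: the paper uses a randomly shifted one-dimensional grid of width $\sfR=c\tau$ composed with a random $\{0,1\}$-valued function (so it works directly on $\R^d$ without any preliminary discretization), whereas you first discretize to $[\Delta]^d$ and then use a single random threshold per bit; both yield the same Bernoulli structure and the same $t=\Theta(s^2c^2\log(1/\delta))$, so this is a cosmetic difference rather than a substantive one.
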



We prove the above lemma by showing how to use a locality-sensitive hash function to give a threshold embedding from $\EMD_s(\R^d, \ell_1)$ to $\EMD_s(\{0,1\}^t)$. Lemma~\ref{lem:reduction-to-hypercube} will then simply follow, and the remainder of the section proves the next lemma. 

\begin{lemma}\label{lem:reduction-to-hypercube-l1}
For any parameters $s,\tau,\geq 0$, $c > 3$, as well as $\delta \in (0,1)$, there exists a distribution $\Gamma$ over functions $f:\EMD_s(\R^d,\ell_1) \to \EMD_s(\{0,1\}^t)$, where $t = O(s^2 c^2\log(\delta^{-1} ))$ and $r = t/(1.99 c)$, such that the following holds: 
\begin{itemize}
    \item For every $x,y \in \EMD_s(\R^d,\ell_1) $ with $\EMD(x,y) \leq \tau$, $\EMD(\boldf(x),\boldf(y)) \leq r$ with probability at least $1-\delta$ over $\boldf \sim \Gamma$.
    \item For every $x,y \in \EMD_s(\R^d,\ell_1) $ with $\EMD(x,y) \geq  c \tau$, $\EMD(\boldf(x),\boldf(y)) \geq  cr/3$ with probability at least $1-\delta$ over $\boldf \sim \Gamma$.
\end{itemize}
Furthermore, there exists a data structure which maintains a draw of $\boldf \sim \Gamma$ and supports queries of $\boldf(x)$ for $x \in \EMD_s(\R^d, \ell_1)$ which has an initialization and query time of $\poly(sd)$.
\end{lemma}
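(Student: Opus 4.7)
My plan is to define $\boldf \sim \Gamma$ as the concatenation of $t$ independent one-bit $\ell_1$-sensitive hash functions applied pointwise to each vector in the input set. Specifically, I would sample $\mathbf{g}_1,\dots,\mathbf{g}_t : \R^d \to \{0,1\}$ i.i.d.\ from a standard $\ell_1$-LSH family (for example, a $1$-stable Cauchy projection followed by a random shift and rounding modulo $2$), with scale parameter $R = \Theta(c\tau)$ chosen so that the per-hash collision probability satisfies
\[ \Pr[\mathbf{g}_i(a) \neq \mathbf{g}_i(b)] = (1 \pm o(1)) \cdot \|a-b\|_1 / R \]
for any $a,b$ with $\|a-b\|_1 \lesssim R$, and saturates to a positive constant for larger distances. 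Setting $\mathbf{g}(a) = (\mathbf{g}_1(a),\dots,\mathbf{g}_t(a)) \in \{0,1\}^t$, the embedding will be $\boldf(x) = \{\mathbf{g}(a) : a \in x\}$. By this calibration, the expected Hamming distance between $\mathbf{g}(a)$ and $\mathbf{g}(b)$ is $\mu_{ab} = (1 \pm o(1)) \cdot (r/\tau) \cdot \|a-b\|_1$ in the linear regime, and $\Omega(t) = \Omega(cr)$ for pairs lying in the saturated regime.

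For the close case, given $x,y$ with $\EMD(x,y) \leq \tau$, I would fix an optimal matching $\pi$ and use it as an upper bound $\EMD(\boldf(x),\boldf(y)) \leq \sum_{a \in x}\|\mathbf{g}(a) - \mathbf{g}(\pi(a))\|_0$. This is a sum of at most $st$ independent Bernoulli indicators whose total expectation is at most $(1 - \eps)r$ by choice of $R$. A standard Chernoff bound then gives that this sum is at most $r$ with probability at least $1 - \exp(-\Omega(r))$, which is $\geq 1 - \delta$ since $r = \Theta(s^2 c \log(1/\delta)) \gg \log(1/\delta)$.

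For the far case, I need to lower bound $\EMD(\boldf(x),\boldf(y)) = \min_{\pi'}\sum_a \|\mathbf{g}(a) - \mathbf{g}(\pi'(a))\|_0$ over all $s!$ matchings $\pi'$ of $\boldf(x)$ to $\boldf(y)$. For any fixed $\pi'$, the constraint $\sum_a \|a - \pi'(a)\|_1 \geq c\tau$ forces the expected total Hamming cost under $\pi'$ to be at least $cr/2$: either all pairs lie in the linear regime (sum of expectations $\geq (1-\eps)cr$), or at least one pair is in the saturated regime and contributes $\Omega(cr)$ on its own. By Chernoff, each such sum is at least $cr/3$ with probability $1 - \exp(-\Omega(cr))$; union-bounding over the $s! \leq 2^{s\log s}$ matchings requires $cr \gtrsim s\log s + \log(1/\delta)$, which is satisfied since $cr = \Theta(s^2 c^2 \log(1/\delta))$ dominates both terms.

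The main obstacle will be ensuring the chosen one-bit LSH family has collision probability approximately linear in $\ell_1$ distance over the relevant range, which in particular requires handling potentially unbounded coordinates; I would address this via a preprocessing reduction to a $\poly(s)$-sized grid, which introduces only negligible additive error relative to $\tau$ (analogous to the aspect-ratio reductions used elsewhere in the paper). The efficient data structure is then straightforward: it stores the $t = \poly(s, c, \log(1/\delta))$ hash seeds, each evaluated in $O(d)$ time per query vector, giving the claimed $\poly(sd)$ preprocessing and query time.
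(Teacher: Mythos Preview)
Your construction --- apply $t$ independent one-bit $\ell_1$-sensitive hashes pointwise and concatenate --- is essentially the paper's, and the close-case argument reaches the right conclusion (though the ``$st$ independent Bernoulli indicators'' claim is already false there: indicators sharing the same hash index $i$ share the randomness of $\mathbf{g}_i$; the sum is really $t$ independent $[0,s]$-valued terms, but Hoeffding on those still gives failure probability $\delta^{\Omega(1)}$, so no harm is done).

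The far case, however, has a real gap stemming from the same independence error. For a fixed matching $\pi'$, the cost $\sum_{a}\|\mathbf{g}(a)-\mathbf{g}(\pi'(a))\|_0$ is a sum of only $t$ independent terms bounded by $s$, not $st$ independent bits. The best lower-tail rate you can extract (multiplicative Chernoff after rescaling by $s$, or Bernstein using $\mathrm{Var}(X_i)\leq s\,\Ex[X_i]$) is $\exp(-\Omega(cr/s))$, not your claimed $\exp(-\Omega(cr))$. With $cr/s = \Theta(sc^2\log(1/\delta))$, the union bound over $s! = 2^{\Theta(s\log s)}$ matchings would require $c^2\log(1/\delta)\gtrsim \log s$, which fails for fixed $c>3$, fixed $\delta$, and large $s$. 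The paper avoids the factorial union bound entirely: it union-bounds over the $s^2$ \emph{pairs} $(a,b)\in x\times y$ instead. Setting $\eps_0=\Theta(1/(sc))$ and $t=\Theta(\eps_0^{-2}\log(s/\delta))$, each per-pair Hamming distance concentrates with additive error $\eps_0 t$; once all $s^2$ pair bounds hold (event $\calbE(x,y)$), the lower bound is \emph{deterministic} for every matching $\sigma$, via
\[
\sum_i \min\Bigl\{\tfrac14,\ \tfrac{\|x_i-y_{\sigma(i)}\|_1}{4\sfR}\Bigr\}\ \geq\ \min\Bigl\{\tfrac14,\ \tfrac{\EMD(x,y)}{4\sfR}\Bigr\}=\tfrac14,
\]
with accumulated slack $s\eps_0 t = O(t/c)\ll t/4$. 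This per-pair concentration, not a union over matchings, is the missing idea.
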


 \begin{proof}[Proof of Lemma~\ref{lem:reduction-to-hypercube} assuming Lemma~\ref{lem:reduction-to-hypercube-l1}]
We define the data structure which maintains a draw from a hash family $\Phi_2$ over $\EMD_s(\R^d, \ell_1)$ which will be $(r, cr, p_1, p_2)$-sensitive. In order to initialize the data structure upon receiving a distribution $\mu$, we perform the following:
\begin{enumerate}
    \item First, we initialize a data structure of Lemma~\ref{lem:reduction-to-hypercube-l1} in order to maintain a sample $\boldf \sim \Gamma$. We query the function $n$ times for each point in the support of $\mu$, and consider the push-forward distribution $\mu'$ over $\EMD_s(\{0,1\}^t)$ given by sampling $\bx \sim \mu$ and outputting $\boldf(\bx)$ (which is also supported on at most $n$ points).
    \item Then, we initialize the data structure to consider the hash family $\Phi_1$, which is assumed to be $(r, cr/3, p_1, p_2)$-sensitive for $\mu'$, and will maintain a sample $\bphi_1 \sim \Phi_1$.
    \item The data structure then maintains a draw $\bphi_2 \sim \Phi_2$ which is defined by letting $\bphi_2(x) = \bphi_1(\boldf(x))$.
\end{enumerate}
Note that the running time of the initialization is $n \cdot \poly(sd) + I_{\sfh}(n)$. Then, upon receiving a query $x \in \EMD_s(\R^d, \ell_1)$, we may query $\bphi_2(x)$ by first querying $\boldf(x)$ and then querying $\bphi_1(\boldf(x))$ in time $\poly(sd) + Q_{\sfh}(n)$. This completes the description of the data structure and we now check the two properties of Definition~\ref{def:data-dep} for $\Phi_2$.

Consider two points $x,y \in \EMD_s(\R^d, \ell_1)$ with $\EMD(x,y) \leq \tau$. By Lemma~\ref{lem:reduction-to-hypercube-l1}, with probability $1-\delta$ over $\boldf \sim \Gamma$,  $\EMD(\boldf(x),\boldf(y)) \leq r$, for $r = t/(1.99c)$. Therefore, the guarantee that $\Phi_1$ is $(r, cr/3, p_1, p_2)$-sensitive means that $\bphi_2(x) = \bphi_2(y)$ with probability at least $p_1 - \delta$ as needed. 

For the second property of Definition~\ref{def:data-dep}, consider any points $x,y \in \EMD_s(\R^d, \ell_1)$ with $\EMD(x,y) \geq c \tau$. By Lemma \ref{lem:reduction-to-hypercube-l1}, with probability at least $1-\delta$ over $\boldf \sim \Gamma$, $\EMD(\boldf(x),\boldf(y)) \geq cr/3$. Therefore,

\begin{equation*}
    \begin{split}
      \Prx_{\substack{\bphi_2 \sim \Phi_2 \\ \by \sim \mu}}\left[ \begin{array}{c} \EMD(x,\by)\geq  c\tau,\\
 \bphi_2(x) = \bphi_2(\by) \end{array}\right] & \leq    \Ex_{\boldf \sim \Gamma}\left[ \Prx_{\substack{\bphi_1 \sim \Phi_1 \\ \by \sim \mu}}\left[ \begin{array}{c} \EMD(\boldf(x),\boldf(\by))\geq  cr/3,\\
 \bphi_1(\boldf(x)) = \bphi_1(\boldf(\by)) \end{array}\right]\right] + \delta \\
 & \leq p_2 + \delta.
    \end{split}
\end{equation*}
 \end{proof}

\subsubsection{Proof of Lemma~\ref{lem:reduction-to-hypercube-l1}} 
We now give the proof of Lemma~\ref{lem:reduction-to-hypercube-l1}, where we begin by introducing a hash family over $d$-dimensional $\ell_1$ space $(\R^d,\ell_1)$. Even though the LSH properties require that there be a gap between close and far pairs of points, we will use the stronger property that, for close enough points, the probability they are divided by this hash family is proportional to their distance. 
\begin{proposition}\label{prop:l1hash}
For any threshold $\sfR>0$, there is a hash family $\calG$ mapping $(\R^d,\ell_1)$ to a universe $U$ with the property that, for any $a,b \in \R^d$ with $\|a-b\|_1 \leq \sfR$,
\[       \Prx_{\bg \sim \calG}\left[ \bg(a) \neq \bg(b)\right] = \frac{\|a-b\|_1}{d \sfR},   \]
and for any $a, b \in \R^d$ with $\|a-b\|_1 > \sfR$,
\[       \Prx_{\bg \sim \calG}\left[ \bg(a) \neq \bg(b)\right] > \frac{1}{d}.   \]
In addition, a hash function from the family may be stored in $O(1)$ space and evaluated on an element in $O(1)$ time.
\end{proposition}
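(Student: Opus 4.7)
The plan is to instantiate the standard ``randomly shifted grid'' hash, restricted to a single uniformly random coordinate. Concretely, a draw $\bg \sim \calG$ is specified by an index $\bi \sim [d]$ (uniform) and a shift $\bt \sim [0,\sfR]$ (uniform), and we define
\[
\bg(a) \;=\; \left\lfloor \frac{a_{\bi} - \bt}{\sfR} \right\rfloor.
\]
This description is $O(1)$ words of storage and $O(1)$ arithmetic operations per evaluation, so the ``efficiency'' claim is immediate. Everything else reduces to computing a single collision probability.

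The first step is a one-coordinate calculation. Conditioning on $\bi = i$ and varying $\bt$ uniformly in $[0,\sfR]$, a standard computation shows
\[
\Prx_{\bt}\!\left[\,\lfloor (a_i-\bt)/\sfR\rfloor \neq \lfloor (b_i-\bt)/\sfR\rfloor\,\right] \;=\; \min\!\left(1,\frac{|a_i-b_i|}{\sfR}\right),
\]
since $\bt$ lands in a ``boundary'' interval of length exactly $|a_i-b_i|$ modulo $\sfR$ (and capped at $\sfR$). Averaging over $\bi \sim [d]$ gives the master formula
\[
\Prx_{\bg \sim \calG}[\bg(a)\neq \bg(b)] \;=\; \frac{1}{d}\sum_{i=1}^{d}\min\!\left(1,\frac{|a_i-b_i|}{\sfR}\right).
\]

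The second step is to plug in the two regimes. If $\|a-b\|_1 \leq \sfR$, then in particular $|a_i-b_i|\leq \sfR$ for every $i$, so each $\min$ evaluates to $|a_i-b_i|/\sfR$ and the sum collapses to $\|a-b\|_1/(d\sfR)$, giving the first claimed identity. If $\|a-b\|_1 > \sfR$, we split into subcases: either some coordinate has $|a_i-b_i|\geq \sfR$, in which case that term contributes $1$ and (since $\|a-b\|_1>\sfR$ strictly) at least one other coordinate contributes a positive amount, pushing the sum strictly above $1$; or every $|a_i-b_i|<\sfR$, in which case the sum equals $\|a-b\|_1/\sfR > 1$. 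Either way the collision-failure probability exceeds $1/d$, as required.

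I do not expect a real obstacle here; the only thing to be careful about is making the ``strictly greater than $1/d$'' conclusion work cleanly in the boundary case where one coordinate achieves $|a_i-b_i| = \sfR$ exactly, which is handled by the strict inequality $\|a-b\|_1 > \sfR$ forcing extra mass on another coordinate. The storage/evaluation bounds are stated in the real-RAM model used throughout the paper, consistent with how $\ell_1$ hashes are invoked elsewhere.
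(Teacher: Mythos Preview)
Your construction and master formula are essentially identical to the paper's: sample a uniform coordinate, impose a randomly shifted grid of width $\sfR$, and obtain
\[
\Prx_{\bg \sim \calG}[\bg(a)\neq \bg(b)] \;=\; \frac{1}{d}\sum_{i=1}^{d}\min\!\left(1,\frac{|a_i-b_i|}{\sfR}\right).
\]
The first regime ($\|a-b\|_1 \leq \sfR$) is handled exactly as in the paper.

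There is, however, a genuine gap in your far-case argument. You claim that if some coordinate has $|a_i-b_i|\geq \sfR$, then the strict inequality $\|a-b\|_1>\sfR$ forces another coordinate to contribute a positive amount. This is false: take $a=(2\sfR,0,\dots,0)$ and $b=0$. Then $\|a-b\|_1=2\sfR>\sfR$, yet the sum is exactly $\min(1,2)+0+\cdots+0=1$, so the separation probability is exactly $1/d$, not strictly greater. Your reasoning conflates ``$|a_i-b_i|\geq \sfR$'' with ``$|a_i-b_i|$ accounts for at most $\sfR$ of the $\ell_1$ mass,'' which is not the same thing once the min caps the contribution.

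In fact the strict inequality as stated does not hold in general; the paper's own proof only asserts ``at least $1/d$'' in this subcase (and, incidentally, omits your other subcase where all $|a_i-b_i|<\sfR$, which you do handle correctly). For the downstream application (the concatenated family in the subsequent corollary), the weak bound $\geq 1/d$ is all that is needed, so this is a minor imprecision in the proposition statement rather than a flaw in the overall argument---but your attempted justification of the strict form is incorrect and should be replaced by the non-strict one.
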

\begin{proof}
We describe how to generate a draw $\bg \sim \calG$. We sample a random coordinate $\bi^* \sim [d]$, and we impose a randomly shifted by offset $\bb\sim [\sfR]$ grid of size length $\sfR$ for the coordinate $\bi^*$. The hash function $\bg$ is then given by:
    \[ \bg(x) =  \left\lceil \frac{x_{\bi^*} + \bb}{\sfR}\right\rceil.  \]
    We have 
    \begin{equation*}
        \begin{split}
             \Prx_{\bg \sim \calG}\left[ \bg(a) \neq \bg(b) \right] &= \frac{1}{d}\sum_{i=1}^d \Prx_{\bg \sim \calG}\left[ \bg(a) \neq \bg(b) \mid \bi^* = i\right] = \frac{1}{d}\sum_{i=1}^d  \min\left\{1, \frac{|a_i - b_i|}{\sfR} \right\}. 
        \end{split}
    \end{equation*}
    If $\|a-b\|_1 \leq R$, then $|a_i - b_i| \leq \sfR$ for all $i \in [d]$, and the above sum is $\|a-b\|_1 / (d\sfR)$ as desired. Otherwise, if any single $i$ satisfies $|a_i - b_i| \geq R$, then the above at least $1/d$ as desired.
    \end{proof}

    \begin{corollary}\label{cor:l1hashmodified}
For any threshold $\sfR>0$, there is a hash family $\calG^d$ mapping $(\R^d,\ell_1)$ to a universe $U$ with the property that for any $a,b \in \R^d$ with $\|a-b\|_1 \leq \sfR$ we have
\[ \frac{\|a-b\|_1}{2\sfR} \leq    \Prx_{\bg \sim \calG^d}\left[ \bg(a) \neq \bg(b)\right] \leq \frac{\|a-b\|_1}{\sfR}   \]
And if $\|a-b\|_1 > \sfR$, $\prb{\bg \sim \calG^d}{ \bg(a) \neq \bg(b)} > 1/2$. A hash function from the family may be stored in $O(d)$ space and may be evaluaated on an element in $O(d)$ time.
\end{corollary}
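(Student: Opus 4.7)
\textbf{Proposal for Proof of Corollary \ref{cor:l1hashmodified}.} The plan is to take $d$ independent draws from the family $\calG$ of Proposition~\ref{prop:l1hash} and \emph{concatenate} their outputs. Formally, let $\bg_1, \dots, \bg_d \sim \calG$ be independent and set $\bg(x) = (\bg_1(x), \dots, \bg_d(x))$; the family $\calG^d$ is the distribution of such $\bg$. The storage and runtime claims are immediate since each $\bg_i$ is stored and evaluated in $O(1)$. Note that $\bg(a) = \bg(b)$ if and only if $\bg_i(a) = \bg_i(b)$ for all $i \in [d]$, and by independence,
\[
    \Pr_{\bg \sim \calG^d}[\bg(a) \neq \bg(b)] \;=\; 1 - \bigl(1 - \Pr_{\bg_1 \sim \calG}[\bg_1(a) \neq \bg_1(b)]\bigr)^d.
\]

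For the \emph{close case} $\|a-b\|_1 \leq \sfR$, set $p = \|a-b\|_1/(d \sfR)$, so that Proposition~\ref{prop:l1hash} gives $\Pr[\bg_1(a) \neq \bg_1(b)] = p$ and the collision probability is $1 - (1-p)^d$. The upper bound $1-(1-p)^d \leq dp = \|a-b\|_1/\sfR$ follows from a union bound (equivalently, Bernoulli's inequality). For the lower bound, by the Bonferroni inequality,
\[
    1 - (1-p)^d \;\geq\; dp - \binom{d}{2} p^2 \;=\; dp\!\left(1 - \tfrac{(d-1)p}{2}\right) \;\geq\; \tfrac{dp}{2},
\]
where the last inequality uses $(d-1) p \leq (d-1)/d \leq 1$, which holds since $p \leq 1/d$ because $\|a-b\|_1 \leq \sfR$. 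This gives $1-(1-p)^d \geq \|a-b\|_1/(2\sfR)$.

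For the \emph{far case} $\|a-b\|_1 > \sfR$, Proposition~\ref{prop:l1hash} gives $\Pr[\bg_1(a) \neq \bg_1(b)] > 1/d$ for each coordinate, so
\[
    \Pr_{\bg \sim \calG^d}[\bg(a) \neq \bg(b)] \;>\; 1 - \left(1 - \tfrac{1}{d}\right)^{d} \;\geq\; 1 - \tfrac{1}{e} \;>\; \tfrac{1}{2},
\]
using the standard bound $(1-1/d)^d \leq 1/e$ valid for all $d \geq 1$. (The $d=1$ case is trivial: the single-coordinate inequality already gives probability $>1/2$.)

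The main ``obstacle,'' such as it is, is merely verifying the Bonferroni lower bound remains tight up to the claimed factor of $2$: this requires $(d-1)p \leq 1$, which is exactly where the hypothesis $\|a-b\|_1 \leq \sfR$ is needed (outside this regime the naive Bonferroni bound can become negative). Beyond that, everything reduces to standard manipulations of $1-(1-p)^d$, so no further machinery is required.
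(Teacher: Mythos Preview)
The proposal is correct and takes essentially the same approach as the paper: concatenate $d$ independent draws from $\calG$ and bound $1-(1-p)^d$ above and below. The only cosmetic difference is that for the lower bound in the close case you use the Bonferroni truncation $1-(1-p)^d \geq dp - \binom{d}{2}p^2$, whereas the paper uses $(1-x)^d \leq 1/(1+xd)$ followed by $1-1/(1+x) \geq x/2$; both routes yield the same factor of $2$, and your far-case bound via $(1-1/d)^d \leq 1/e$ is arguably cleaner than the paper's terse justification.
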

\begin{proof}
We take the hash family $\mathcal{G}$ used above in Proposition \ref{prop:l1hash}, and output its concatenation $d$ times. In particular, we use the hash family 
\[\mathcal{G}^d = \Big\{\bg \; \Big| \; h(x) = (\bg_1(x),\bg_2(x),\dots,\bg_d(x)), \; \bg_1,\dots,\bg_d \sim \mathcal{G}\Big\}\] 
Note then that if $\|a-b\|_1 \leq \sfR$,
\[   \Prx_{\bg \sim \calG^d}\left[ \bg(a) \neq \bg(b)\right]  = 1 - \left(1 - \frac{\|a-b\|_1}{d\sfR}\right)^d\]
Using the inequalities that \textbf{(1)} $(1-x)^d \geq 1-xd$ for $d \geq 1,x \geq -1$, \textbf{(2)} $(1-x)^d \leq 1/(1+xd)$ for $x \in [-1,1/d)$ and $d \geq 0$, and \textbf{(3)} that $x/2 \leq 1-1/(1+x)$ for $x \in [0,1]$, we have
\[ \frac{\|a-b\|_1}{2\sfR} \leq 1-  \frac{1}{1+\frac{\|a-b\|_1}{\sfR}}\leq 1 - \left(1 - \frac{\|a-b\|_1}{d\sfR}\right)^d\ \leq \frac{\|a-b\|_1}{\sfR}\]
For the second case, using \textbf{(1)} and Proposition \ref{prop:l1hash} yields the desired result. The running time bounds then simply follow.
\end{proof}

 We now define the distribution $\calH_{t}^d(\sfR)$ over functions $\R^d \to \{0,1\}^t$. Then, we will show how to specify $\sfR$, and set the distribution $\Gamma$ in Lemma~\ref{lem:reduction-to-hypercube-l1} to be the function which applies $\boldf \sim \calH_t^d(\sfR)$ to each of the $s$ elements in $\EMD_s(\R^d,\ell_1)$. We let $\boldf \sim \calH_{t}^d(\sfR)$ be given by:
 \begin{enumerate}
     \item For each $i \in [t]$, we sample a function $\bg_i \sim \calG^d$ (from Corollary~\ref{cor:l1hashmodified}), as well as a random function $\bchi_i \colon \Z \to \{0,1\}$.
     \item For each $i \in [t]$, we let $\boldf_i(x) = \bchi_i(\bg_i(x))$, and then we set
     \[ \boldf(x) = (\boldf_1(x), \boldf_2(x), \dots, \boldf_t(x)) \in \{0,1\}^t.\]
 \end{enumerate}

\paragraph{Data Structure Guarantees for $\boldf \sim \Gamma$.} Even though we described the above procedure which samples $\boldf$, the data structure does not explicitly sample the random functions $\bchi_i$. Rather, the data structure instantiates $t$ data structures which sample $\bg_i \sim \calG^d$ from Corollary~\ref{cor:l1hashmodified}; but does not explicitly generate the random functions $\bchi_i \colon \Z \to \{0,1\}$---rather, it generates them in a lazy fashion. Whenever there is a query point $x \in \EMD_s(\R^d, \ell_1)$, it queries the $t$ data structures to compute $\bg_i(x)$ and checks whether or not it had already generated (the random draw of) $\bchi_i(\bg_i(x))$; it uses it if it did, and generates it and stores it if it did not. This way, the total running time of the initialization procedure is $O(dt)$ and the running time of the querying $x \in \EMD_s(\R^d, \ell_1)$ is $O(dts)$, which becomes $\poly(sd)$ for the setting of $t$.
 
\paragraph{Expansion and Contraction Guarantees for $\boldf \sim \Gamma$.} First, note that, by Proposition~\ref{prop:l1hash}, if $\|a-b\|_1 \leq \sfR$, any $i \in [t]$ satisfies 
\begin{align}
\frac{\|a-b\|_1}{4 \sfR} \leq \Prx_{\boldf_i}\left[ \boldf_i(a) \neq \boldf_i(b)\right] \leq 
    \frac{\|a-b\|_1}{2 \sfR}. \label{eq:prob-1-a}
    \end{align}
Moreover, if $ \|a-b\|_1 > \sfR$ then 
\begin{align}\Prx_{\boldf_i}\left[ \boldf_i(a) \neq \boldf_i(b)\right] \geq  
    \frac{1}{4}. \label{eq:prob-2-a}
\end{align}
The following two claims are simple applications of Chernoff Bounds, using the expectations obtained from expressions (\ref{eq:prob-1-a}) and (\ref{eq:prob-2-a}). 
\begin{claim}\label{lem:boundNear}
    Fix any $a,b \in \R^d$ with $\|a-b\|_1 \leq \sfR$, and fix any $\eps,\delta \in (0,1)$. Then, as long as $t = O(\eps^{-2}\log \delta^{-1})$ (for sufficiently high constant),
    \[   \pr{t \left(\frac{\|a-b\|_1}{4 \sfR} - \eps\right) \leq \|\boldf(a) - \boldf(b)\|_1  \leq t \left(\frac{\|a-b\|_1}{2\sfR} + \eps\right)  } > 1 - \delta \]
\end{claim}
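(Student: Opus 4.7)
The plan is to view $\|\boldf(a)-\boldf(b)\|_1$ as a sum of $t$ independent indicator random variables and then apply a standard concentration inequality. Specifically, writing $\bX_i = \indi\{\boldf_i(a) \neq \boldf_i(b)\}$ for each $i \in [t]$, we have the identity
\[ \|\boldf(a) - \boldf(b)\|_1 = \sum_{i=1}^t \bX_i, \]
since each coordinate of $\boldf$ takes values in $\{0,1\}$. The $\bX_i$ are mutually independent because the draws $(\bg_i, \bchi_i)$ across $i \in [t]$ are independent, and they are identically distributed; let $p = \Prx[\bX_i = 1]$. By (\ref{eq:prob-1-a}), since $\|a-b\|_1 \leq \sfR$, the value $p$ lies in the interval $[\|a-b\|_1/(4\sfR),\, \|a-b\|_1/(2\sfR)]$, so in particular $\Ex[\sum_i \bX_i] = t p$ lies in the interval $[t\|a-b\|_1/(4\sfR),\, t\|a-b\|_1/(2\sfR)]$.

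Next, I would apply Hoeffding's inequality (or equivalently a multiplicative/additive Chernoff bound), which states that for i.i.d.\ $\{0,1\}$-valued random variables with mean $p$,
\[ \Prx\left[\left| \sum_{i=1}^t \bX_i - tp \right| > t\eps \right] \leq 2\exp(-2t\eps^2). \]
Setting $t = C \cdot \eps^{-2}\log(2/\delta)$ for a sufficiently large constant $C$ (say $C \geq 1$) ensures this probability is at most $\delta$. Hence, with probability at least $1-\delta$, we simultaneously have both
\[ \sum_{i=1}^t \bX_i \leq tp + t\eps \leq t\left(\frac{\|a-b\|_1}{2\sfR} + \eps\right) \qquad\text{and}\qquad \sum_{i=1}^t \bX_i \geq tp - t\eps \geq t\left(\frac{\|a-b\|_1}{4\sfR} - \eps\right), \]
using the two-sided bounds on $p$ stated above. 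Substituting back the identity $\|\boldf(a)-\boldf(b)\|_1 = \sum_i \bX_i$ yields the claim.

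There is no real obstacle here: once one identifies the right independence structure (which is immediate from the construction of $\boldf$ via independent copies of $\bg_i \sim \calG^d$ and $\bchi_i$) and the correct expectation range (which is exactly (\ref{eq:prob-1-a})), the result is a textbook Chernoff/Hoeffding application. The only point to be slightly careful about is that the bounds on $p$ in (\ref{eq:prob-1-a}) are two-sided inequalities rather than an equality, which is why the upper tail uses the upper endpoint $\|a-b\|_1/(2\sfR)$ and the lower tail uses the lower endpoint $\|a-b\|_1/(4\sfR)$ when translating the concentration inequality into the stated bound.
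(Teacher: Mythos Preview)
Your proof is correct and matches the paper's approach exactly: the paper simply states that this claim is ``a simple application of Chernoff Bounds, using the expectations obtained from expression~(\ref{eq:prob-1-a}),'' and your write-up is precisely that application spelled out in detail.
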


\begin{claim}\label{lem:boundFar}
    Fix any $a,b \in \R^d$ with $\|a-b\|_1 > \sfR$, and fix any $\eps,\delta \in (0,1)$. Then, as long as $t = O(\eps^{-2}\log \delta^{-1})$ (for sufficiently high constant),
    \[ \pr{ \|\boldf (a) - \boldf(b)\|_1  < t\left(\frac{1}{4}  - \eps\right)  } <\delta \]
\end{claim}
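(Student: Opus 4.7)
The plan is a direct application of a Chernoff/Hoeffding-type concentration bound, closely mirroring the proof strategy already used in Claim~\ref{lem:boundNear} (stated immediately before). The key observation is that $\|\boldf(a) - \boldf(b)\|_1$ is exactly the Hamming distance between two vectors in $\{0,1\}^t$, and therefore equals $\sum_{i=1}^t \mathbf{1}\{\boldf_i(a) \neq \boldf_i(b)\}$, a sum of $t$ independent Bernoulli random variables (independent across $i$ because the data structure draws $\bg_i \sim \calG^d$ and $\bchi_i$ independently for each coordinate).

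First I would establish the per-coordinate lower bound: by construction $\boldf_i(a) \neq \boldf_i(b)$ requires $\bchi_i(\bg_i(a)) \neq \bchi_i(\bg_i(b))$, which, since $\bchi_i$ is a uniformly random function $\Z \to \{0,1\}$, occurs with probability exactly $\tfrac{1}{2}$ conditional on $\bg_i(a) \neq \bg_i(b)$ and probability $0$ otherwise. Combining this with the guarantee $\Pr[\bg_i(a) \neq \bg_i(b)] \geq \tfrac{1}{2}$ from Corollary~\ref{cor:l1hashmodified} (which applies since $\|a-b\|_1 > \sfR$) yields $\Pr[\boldf_i(a) \neq \boldf_i(b)] \geq \tfrac{1}{4}$, recovering bound (\ref{eq:prob-2-a}). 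In particular $\Ex[\|\boldf(a) - \boldf(b)\|_1] \geq t/4$.

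Next I would apply the standard Hoeffding inequality for a sum of $t$ independent $\{0,1\}$-valued random variables, which gives
\[ \Pr\!\left[\|\boldf(a) - \boldf(b)\|_1 < t\left(\tfrac{1}{4} - \eps\right) \right] \leq \Pr\!\left[\|\boldf(a) - \boldf(b)\|_1 < \Ex[\|\boldf(a) - \boldf(b)\|_1] - \eps t\right] \leq \exp(-2\eps^2 t). \]
Setting $t \geq \tfrac{1}{2}\eps^{-2}\log(\delta^{-1})$ (with a sufficiently large constant) makes the right-hand side at most $\delta$, finishing the proof.

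There is no real obstacle here: the only things to be careful about are (i) verifying that the Bernoulli trials are genuinely independent across $i$, which follows because each coordinate uses independently sampled $\bg_i$ and $\bchi_i$, and (ii) correctly chaining the two randomness sources ($\bg_i$ and $\bchi_i$) to get the $\tfrac{1}{4}$ lower bound on the per-coordinate disagreement probability. Both are routine, and everything else is a textbook Chernoff/Hoeffding computation.
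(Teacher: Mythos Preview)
Your proposal is correct and matches the paper's approach exactly: the paper simply states that Claims~\ref{lem:boundNear} and~\ref{lem:boundFar} are ``simple applications of Chernoff Bounds, using the expectations obtained from expressions (\ref{eq:prob-1-a}) and (\ref{eq:prob-2-a}),'' and your write-up fills in precisely those details.
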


Given a function $\boldf \sim \calH_{t}^d(\sfR)$, there is a natural application of that function $s$ many times, in order to map $(\R^d)^s \to (\{0,1\}^t)^s$. Namely, given a tuple $x \in \EMD_s(\R^d,\ell_1)$ of $s$ elements in $\R^d$, we may apply $\boldf$ to each of the $s$ elements individually and obtain a tuple of $s$ elements in $\{0,1\}^t$. Thus, we will abuse notation and denote $\boldf \colon \EMD_s(\R^d,\ell_1) \to \EMD_s(\{0,1\}^t)$ given by mapping
\[ x = (x_1,\dots, x_s) \in \EMD_s(\R^d, \ell_1)\quad \mathop{\longmapsto}^{\boldf}\quad (\boldf(x_1), \dots, \boldf(x_s)) \in \EMD_s(\{0,1\}^t).\]
The (randomized) function $\boldf \colon \EMD_s(\R^d, \ell_1) \to \EMD_s(\{0,1\}^t)$ defines our desired mapping of Lemma~\ref{lem:reduction-to-hypercube-l1}. We now fix $\eps_0,r,t,\sfR$ as follows:
\begin{equation}\label{eqn:rtR}
    \eps_0 = \Theta(\eps/(s c)), \qquad    t = O\left(\eps_0^{-2}\log \left(\frac{s}{\delta}\right)\right), \qquad r =  \frac{t}{1.99c}, \qquad \ \sfR = c \cdot \tau.
\end{equation}


To verify the two properties of Lemma~\ref{lem:reduction-to-hypercube-l1}, consider any $x, y \in \EMD_s(\R^d,\ell_1)$ and let $\pi \colon [s] \to [s]$ denote the bijection realizing $\EMD(x,y)$ (i.e., that which satisfies $\EMD(x,y) = \sum_{i=1}^s \|x_i - y_{\pi(i)}\|_1$).
Consider the event $\calbE(x,y)$ (defined with respect to the randomness in the draw of $\boldf$) that, for all vectors $a\in x, b \in y$, the events in Claims~\ref{lem:boundNear} and \ref{lem:boundFar} hold with failure probability $\delta / s^2$, i.e., whenever $\|a-b\|_1 \leq \sfR$,
\begin{align*}
    t\left( \frac{\|a-b\|_1}{4\sfR} - \eps_0 \right) \leq \|\boldf(a) - \boldf(b)\|_1 \leq t \left( \frac{\|a-b\|_1}{2\sfR} + \eps_0 \right),
\end{align*}
and whenever $\|a-b\|_1 > \sfR$, $\|\boldf(a) - \boldf(b)\|_1$ is at least $t(1/4-\eps_0)$. Note that, by a union bound over $s^2$ pairs of elements (and the settings of $t$ and $\eps_0$), event $\calbE(x,y)$ holds with probability at least $1-\delta$. 

\begin{lemma}\label{lem:reduction-to-hypercube-l1-upper} Let $t,\sfR,r$ be fixed as in Equation \ref{eqn:rtR}. For any $x,y \in \EMD_s(\R^d,\ell_1)$ with $\EMD(x,y) \leq \tau$, with probability at least $1-\delta$ over the draw of $\boldf \sim \calH_t^d(\sfR)$, we have $\EMD(\boldf(x),\boldf(y)) \leq r$.
\end{lemma}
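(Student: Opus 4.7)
The plan is to use the natural bijection approach for EMD combined with the concentration bound from Claim~\ref{lem:boundNear}, applied coordinate-by-coordinate along the optimal matching.

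First I would let $\pi \colon [s] \to [s]$ be the bijection realizing $\EMD(x,y)$, so that $\sum_{i=1}^s \|x_i - y_{\pi(i)}\|_1 = \EMD(x,y) \leq \tau$. Since any single summand is at most $\tau$ and $\sfR = c\tau > \tau$, every pair $(x_i, y_{\pi(i)})$ falls into the ``close'' regime of Claim~\ref{lem:boundNear}, i.e., $\|x_i - y_{\pi(i)}\|_1 \leq \sfR$. Then I would condition on the event $\calbE(x,y)$ defined just before the lemma statement; since $\calbE(x,y)$ is defined via Claims~\ref{lem:boundNear} and~\ref{lem:boundFar} with per-pair failure probability $\delta/s^2$ and the parameter $t$ is chosen as in~(\ref{eqn:rtR}), a union bound over the (at most) $s^2$ pairs of elements gives $\Pr[\calbE(x,y)] \geq 1 - \delta$.

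Next, I would use the obvious upper bound on EMD via the bijection $\pi$: by definition of EMD in the hypercube,
\[
\EMD(\boldf(x), \boldf(y)) \;\leq\; \sum_{i=1}^s \|\boldf(x_i) - \boldf(y_{\pi(i)})\|_1.
\]
On the event $\calbE(x,y)$, the upper bound in Claim~\ref{lem:boundNear} gives, for each $i \in [s]$,
\[
\|\boldf(x_i) - \boldf(y_{\pi(i)})\|_1 \;\leq\; t\left(\frac{\|x_i - y_{\pi(i)}\|_1}{2\sfR} + \eps_0\right).
\]
Summing over $i$ and using $\sfR = c\tau$ together with $\EMD(x,y) \leq \tau$ yields
\[
\EMD(\boldf(x), \boldf(y)) \;\leq\; \frac{t\cdot\EMD(x,y)}{2\sfR} + s t \eps_0 \;\leq\; \frac{t}{2c} + s t \eps_0.
\]

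To finish, I would check that the choice of $\eps_0 = \Theta(\eps/(sc))$ (or simply a sufficiently small $\Theta(1/(sc))$) in~(\ref{eqn:rtR}) makes $s t \eps_0$ small enough that $\tfrac{t}{2c} + s t \eps_0 \leq \tfrac{t}{1.99c} = r$. Explicitly, $\tfrac{t}{1.99c} - \tfrac{t}{2c} = \tfrac{0.01\, t}{1.99 \cdot 2 c} = \Theta(t/c)$, and with $\eps_0 \leq c'/(sc)$ for a small enough absolute constant $c'$ we get $s t \eps_0 \leq t/c \cdot c' \leq \tfrac{t}{1.99c} - \tfrac{t}{2c}$, as required. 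This yields $\EMD(\boldf(x), \boldf(y)) \leq r$ on the event $\calbE(x,y)$, completing the proof. The main (and essentially only) subtlety is to ensure that the additive ``slack'' $\eps_0$ in Claim~\ref{lem:boundNear}, when summed over the $s$ matched pairs, does not exceed the gap between $t/(2c)$ and $r = t/(1.99c)$; this is precisely what fixes the scaling $\eps_0 = \Theta(1/(sc))$ and $t = \Theta(\eps_0^{-2}\log(s/\delta))$ in~(\ref{eqn:rtR}).
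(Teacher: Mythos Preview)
Your proof is correct and follows essentially the same approach as the paper: condition on the event $\calbE(x,y)$, use the optimal matching $\pi$ together with the per-pair upper bound from Claim~\ref{lem:boundNear}, sum to obtain $\tfrac{t}{2c} + st\eps_0$, and then verify that the choice $\eps_0 = \Theta(1/(sc))$ makes this at most $r = t/(1.99c)$. Your write-up is slightly more detailed in justifying the final numerical inequality, but otherwise the arguments coincide.
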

\begin{proof} 
First, note that if $\EMD(x,y) \leq \tau$, clearly it holds that for all  $i \in [s]$ we have $\|x_i-y_{\pi(i)}\|_1 \leq \tau \leq \sfR$.  Thus, whenever $\calbE(x,y)$ holds,
\begin{equation*}
    \begin{split}
          \sum_{i=1}^s \|\boldf(x_i) - \boldf(y_{\pi(i)})\|_1 &   \leq t \left(\sum_{i=1}^s \frac{\|x_i - y_{\pi(i)}\|_1}{2\sfR} + \eps_0\right)  \leq t \left(\frac{\tau}{2\sfR} + \eps_0 s\right)  \leq \frac{t}{ 1.99 c}  = r
    \end{split}
\end{equation*}
Thus, $\EMD(\boldf(x),\boldf(y)) \leq r$ with probability at least $1-\delta$ as desired. 
\end{proof}

\begin{lemma} \label{lem:reduction-to-hypercube-l1-lower} Let $t,\sfR,r$ be fixed as in Equation \ref{eqn:rtR}. For any $x,y \in X$ with $\EMD(x,y) \geq c \tau$, with probability at least $1-\delta$ over the draw of $\boldf \sim \calH_t^d(\sfR)$, we have $\EMD(\boldf(x),\boldf(y)) \geq  c r/3$.
\end{lemma}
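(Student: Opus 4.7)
The plan is to mirror the proof of Lemma~\ref{lem:reduction-to-hypercube-l1-upper}, but now working with the optimal bijection for the image pair rather than the optimal bijection for the pre-image. I first condition on the event $\calbE(x,y)$, which by a union bound over the $s^2$ pairs of elements, together with the settings of $t$ and $\eps_0$ in Equation~\ref{eqn:rtR}, holds with probability at least $1-\delta$. Under $\calbE(x,y)$, every pair $(a,b) \in x \times y$ simultaneously enjoys either the lower bound of Claim~\ref{lem:boundNear} (if $\|a-b\|_1 \leq \sfR$) or that of Claim~\ref{lem:boundFar} (if $\|a-b\|_1 > \sfR$).

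Now let $\sigma \colon [s] \to [s]$ be the bijection realizing $\EMD(\boldf(x), \boldf(y))$, and split into two cases depending on whether there exists an index $i^* \in [s]$ with $\|x_{i^*} - y_{\sigma(i^*)}\|_1 > \sfR$. In the \emph{far} case, by discarding all other terms and applying Claim~\ref{lem:boundFar} to the single pair $(x_{i^*}, y_{\sigma(i^*)})$ we get
\[
\EMD(\boldf(x),\boldf(y)) \;\geq\; \|\boldf(x_{i^*}) - \boldf(y_{\sigma(i^*)})\|_1 \;\geq\; t\left(\tfrac14 - \eps_0\right).
\]
In the \emph{near} case all pairs satisfy $\|x_i - y_{\sigma(i)}\|_1 \leq \sfR$, so I may apply Claim~\ref{lem:boundNear} to every summand, use the crucial observation that $\sigma$ is some bijection (so $\sum_i \|x_i - y_{\sigma(i)}\|_1 \geq \EMD(x,y) \geq c\tau = \sfR$), and compute
\[
\EMD(\boldf(x),\boldf(y)) \;\geq\; \sum_{i=1}^s t\!\left(\frac{\|x_i - y_{\sigma(i)}\|_1}{4\sfR} - \eps_0\right) \;\geq\; \frac{t \cdot \EMD(x,y)}{4\sfR} - t s \eps_0 \;\geq\; \frac{t}{4} - t s \eps_0.
\]

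Finally, I check that both lower bounds exceed $cr/3 = t/(3 \cdot 1.99)$. Since $\eps_0 = \Theta(\eps/(sc))$, the quantity $s\eps_0$ is at most an arbitrarily small constant (and likewise $\eps_0 \leq s\eps_0$), so $t/4 - ts\eps_0 \geq t/5.97$ and $t(1/4 - \eps_0) \geq t/5.97$ both hold for a sufficiently small choice of the constant hidden in $\eps_0$. There is no genuine obstacle here; the only mildly delicate point is remembering that the bijection $\sigma$ is chosen to \emph{minimize} the image-side sum, so a priori one cannot directly relate $\sum_i \|x_i - y_{\sigma(i)}\|_1$ to $\EMD(x,y)$ as an equality—but the one-sided inequality $\sum_i \|x_i - y_{\sigma(i)}\|_1 \geq \EMD(x,y)$ is exactly what the proof needs, and it follows for free from the definition of $\EMD$ as a minimum over bijections.
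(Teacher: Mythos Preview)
Your proof is correct and takes essentially the same approach as the paper. The only stylistic difference is that the paper folds your two cases into a single chain of inequalities via the expression $\min\{1/4,\ \|x_i - y_{\sigma(i)}\|_1/(4\sfR)\}$ (and then uses $\sum_i \min\{1/4, a_i\} \geq \min\{1/4, \sum_i a_i\}$), whereas you handle the ``some pair is far'' and ``all pairs are near'' cases separately; both arrive at $t/4 - ts\eps_0 \geq cr/3$.
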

\begin{proof}
Consider the case $\calbE(x,y)$ holds. Then, for any matching $\sigma:[s] \to [s]$,
    \begin{equation*}
    \begin{split}
          \sum_{i=1}^s \|\boldf(x_i) - \boldf(y_{\pi(i)})\|_1 &   \geq t \sum_{i=1}^s \left(\min\left \{ \frac{1}{4}, \;  \frac{\|x_i - y_{\pi(i)}\|_1}{4\sfR} \right\} - \eps_0\right)   \geq t \left( \min \left \{ \frac{1}{4}, \frac{\EMD(x,y) }{4\sfR}    \right \} - s  \eps_0\right) \\
     &   \geq t/4 - st \eps_0 \geq t / 4.001 \geq cr/3.
    \end{split}
\end{equation*}
\end{proof}

The proof of Lemma \ref{lem:reduction-to-hypercube-l1} the follows immediately from Lemmas \ref{lem:reduction-to-hypercube-l1-upper} and \ref{lem:reduction-to-hypercube-l1-lower}.

\ignore{
\subsection{Proof of Lemma~\ref{lem:reduction-to-hypercube}}
Fix $\sfR = \Theta(s^2 \cdot \tau)$. 
Let $\calH_p(\sfR)$ denote the distribution over functions $\R^d \to \{0,1\}$, where $\boldf \sim \calH_p(\sfR)$ is specified by (i) sampling $\bg_1, \dots, \bg_d \sim \calD_p$ from a $p$-stable distribution $\calD_p$, (ii) sampling $\bb \sim [0, \sfR]$, and then, considering a random function $\bchi \colon \Z \to \{0,1\}$, letting
\[ \boldf(x) = \bchi\left(\left\lceil \frac{\sum_{i=1}^d \bg_i x_i - \bb}{\sfR} \right\rceil\right). \] 
Importantly, the $p$-stability of $\calD_p$ makes this hash function family satisfy
\begin{align}
\Prx_{\boldf \sim \calH_p(\sfR)}\left[ \boldf(a) \neq \boldf(b)\right] &= \frac{1}{2} \cdot \Ex_{\bg_1,\dots, \bg_d \sim \calD_p}\left[ \min\left\{1, \frac{\left|\sum_{i=1}^d \bg_i (a_i - b_i)\right|}{\sfR} \right\}\right] \nonumber \\
 &= \frac{\| a - b\|_p}{2 \cdot \sfR} \cdot \Ex_{\bg \sim \calD_p}\left[ \min\left\{ \frac{\sfR}{\|a-b\|_p} , |\bg| \right\} \right] = \frac{\|a - b\|_p}{2\cdot \sfR} \cdot \alpha_p(\sfR/\|a-b\|_p) \label{eq:prob-bound}
\end{align}

\paragraph{Construction.} The construction proceeds by the following steps, where we will specify the distribution $\calD$ over maps $\bh \colon (\R^d)^s \to U$ which is $(r, cr, p_1-o(1), p_2+o(1))$-sensitive for $\mu$ over $\calP(s,d,p)$ by showing how to sample a hash map $\bh \colon (\R^d)^s \to U$:
\begin{itemize}
\item First, we sample $t$ independent functions $\boldf_1,\dots, \boldf_t \sim \calH_p(\sfR)$, and we consider the function $\boldf \colon \R^d \to \{0,1\}^t$ given by $\boldf(a) = (\boldf_1(a), \dots, \boldf_t(a))$.
\item We consider the distribution $\tilde{\mu}$ over $(\{0,1\}^t)^s$ where a sample $\tilde{\bx} \sim \tilde{\mu}$ is taken sampling $\bx \sim \mu$, where $\bx = (\bx_1,\dots, \bx_s) \in (\R^d)^s$ and outputting $\tilde{\bx} = (\boldf(\bx_1), \dots, \boldf(\bx_s)) \in (\{0,1\}^t)^s$. For simplicity in the notation, we abuse notation and denote $\boldf \colon (\R^d)^s \to (\{0,1\}^t)^{s}$ as
\[ \boldf(x) = (\boldf(x_1), \dots, \boldf(x_s)). \]
\item We now have a distribution $\tilde{\mu}$ supported on $(\{0,1\}^t)^s$, so we let $\tilde{\calD}$ be an $((1+\eps) \beta t, c \cdot (1 - \eps) \beta t, p_1, p_2)$-sensitive hash family for $\tilde{\mu}$ over the metric space $\calH(s, t)$. We sample $\tilde{\bh} \sim \tilde{\calD}$, and we output
\[ \bh(x) \eqdef \tilde{\bh}(\boldf(x)), \]
which is a hash function mapping $(\R^d)^s \to U$. 
\end{itemize}
\paragraph{Analysis of Collision Probabilities.} Suppose that $x = (x_1,\dots, x_s)$ and $y = (y_1,\dots, y_s)$ are arbitrary points in $\calP(s,d,p)$ which satisfy $\EMD_{\ell_p}(x, y) \leq r$. Then, we have
\begin{align*}
\Prx_{\bh\sim \calD}\left[ \bh(x) \neq \bh(y) \right] &\leq \Prx_{\substack{\tilde{\bh} \sim \tilde{\calD} \\ \boldf }}\left[ \tilde{\bh}(\boldf(x) \neq \tilde{\bh}(\boldf(y)) \mid \EMD_{\ell_1}(\boldf(x), \boldf(y)) \leq \beta t \right] \\
	&\qquad\qquad+ \Prx_{\boldf}\left[\EMD_{\ell_1}(\boldf(x), \boldf(y)) > \beta t \right] \\
	&\leq 1 - p_1 + \Prx_{\boldf}\left[\EMD_{\ell_1}(\boldf(x), \boldf(y)) > \beta t \right],
\end{align*}
where in the final inequality, we used the assumption that $\tilde{\calD}$ is $(\beta t, c (1-\eps) \beta t, p_1, p_2)$-sensitive for the distribution $\tilde{\mu}$ over $\calH(s,t)$. On the other hand, if we fix any point $x = (x_1,\dots, x_s) \in (\R^d)^s$, then we have that
\begin{align*}
\Prx_{\substack{\bh \sim \calD \\ \by \sim \mu}}\left[ \begin{array}{c} \EMD_{\ell_p}(x, \by) > c r, \\
					\bh(x) = \bh(\by) \end{array} \right] &\leq \Prx_{\substack{\tilde{\bh} \sim \tilde{\calD} \\ \tilde{\by} \sim \tilde{\mu}}}\left[ \begin{array}{c} \EMD_{\ell_1}(\boldf(x), \tilde{\by}) > c(1-\eps) \beta t, \\
					\tilde{\bh}(\boldf(x)) = \tilde{\bh}(\tilde{\by}) \end{array}  \right]  \\
					&\qquad \qquad + \Prx_{\substack{\boldf \\ \by \sim \mu}}\left[ \begin{array}{c} \EMD_{\ell_p}(x, \by) > cr \\ \EMD_{\ell_1}(\boldf(x), \boldf(\by)) \leq c (1-\eps) \beta t \end{array} \right] \\
					&\leq p_2 + \sup_{y \in (\R^d)^s} \left\{ \Prx_{\boldf}\left[ \EMD_{\ell_1}(\boldf(x), \boldf(y)) \leq c (1-\eps) \beta t \right] : \EMD(x, y) > cr \right\},
\end{align*}
where similarly, the final inequality uses the fact $\tilde{\calD}$ is $(\beta t, c(1-\eps) \beta t, p_1, p_2)$-sensitive for $\tilde{\mu}$. Therefore, it suffices to show two things. First, that for any $x, y \in (\R^d)^s$ with $\EMD_{\ell_p}(x,y) \leq r$, the probability over $\boldf$ that with $\EMD_{\ell_1}(\boldf(x), \boldf(y)) > \beta t$ is small. Then, that for any $x ,y \in (\R^d)^{s}$ with $\EMD_{\ell_p}(x,y) > cr$, the probability that $\EMD_{\ell_1}(\boldf(x), \boldf(y)) \leq c (1-\eps) \beta t$. 

\begin{lemma}
Suppose $x, y \in (\R^d)^{s}$ are two arbitrary points with $\EMD_{\ell_p}(x,y) \leq r$. Then
\begin{align*}
\Prx_{\boldf \sim \calH_p(\sfR)}\left[ \EMD_{\ell_1}(\boldf(x), \boldf(y)) > \beta t \right] \leq ?
\end{align*}
\end{lemma}
\begin{proof}
Let $\pi \colon [s] \to [s]$ denote the optimal matching which realizes the cost of $\EMD_{\ell_p}(x, y) \leq r$, and we note that we may upper bound the probability for any $\gamma \in (0,1)$
\begin{align*}
\Prx_{\boldf \sim \calH_p(\sfR)}\left[ \EMD_{\ell_1}(\boldf(x), \boldf(y)) > \beta t \right] &\leq \Prx_{\boldf \sim \calH_p(\sfR)}\left[ \frac{1}{t} \sum_{i=1}^s \| \boldf(x_i) - \boldf(y_{\pi(i)}) \|_1 > \beta \right] \\
		&\leq \Prx_{\boldf \sim \calH_p(\sfR)}\left[ \exists i \in [s] : \frac{\| \boldf(x_i) - \boldf(y_{\pi(i)}) \|_1}{t} > \frac{(1-\gamma) \beta}{r} \cdot \|x_i - y_{\pi(i)}\|_p + \frac{\gamma \beta}{s} \right] \\
		&\leq \sum_{i=1}^s \Prx_{\boldf \sim \calH_p(\sfR)}\left[ \frac{\| \boldf(x_i) - \boldf(y_{\pi(i)}) \|_1}{t} > \frac{(1-\gamma) \beta}{r} \cdot \|x_i - y_{\pi(i)}\|_p + \frac{\gamma \beta}{s} \right].
\end{align*}
So we will now focus on upper bounding the latter bound for a fixed $i \in [s]$, and let $\kappa_i = \| x_i - y_{\pi(i)}\|_p$. We notice that $\| \boldf(x_i) - \boldf(y_{\pi(i)}) \|_1$ is a sum of $t$ independent Bernoulli variables which are set to $1$ with probability according to (\ref{eq:prob-bound}). In particular, if we let
\[ \xi_i = \left( \frac{(1-\gamma) \beta}{r} - \frac{\alpha_p(\kappa_i/\sfR)}{2 \cdot \sfR} \right) \cdot \kappa_i + \frac{\gamma \beta}{s}. \]
 we may use Hoeffding's inequality to upper bound the probability of any summand above by
\begin{align*}
\exp\left( - 2 \xi_i^2 \cdot t \right)
\end{align*}

We divide the indices of $[s]$ into two sets, $S$ and $L$, where
\[ S = \left\{ i \in [s] : \|x_i - y_{\pi(i)}\|_p\leq \frac{r}{s} \right\} \qquad\text{and}\qquad L = [s] \setminus S. \]
Then, we note that 
\begin{align*}
\frac{r}{(1-\eps) \tau} \sum_{i=1}^s \| \boldf(x_i) - \boldf(y_{\pi(i)}) \|_1 > r \geq \sum_{i=1}^s \| x_i - y_{\pi(i)}\|_p 
\end{align*} 
\end{proof}

\begin{lemma}
Suppose $x,y \in (\R^d)^{s}$ are two arbitrary points with $\EMD_{\ell_p}(x,y) \geq cr$. Then
\begin{align*}
\Prx_{\boldf\sim\calH_p(\sfR)}\left[ \EMD(\boldf(x), \boldf(y)) < c(1-\eps) t \right] \leq ?
\end{align*}
\end{lemma}
For the first claim, we consider $x, y \in (\R^d)^{s}$ with $\EMD_{\ell_p}(x, y) \leq r$, and let $\pi \colon [s] \to [s]$ denote the bijection which realizes this cost. We have
\begin{align*}
\Prx_{\boldf}\left[ \EMD_{\ell_1}(\boldf(x), \boldf(y)) > \tau \right] &\leq \sum_{i=1}^s \Prx_{\boldf}\left[ \| \boldf(x_i) - \boldf(y_{\pi(i)}) \|_1 > \max\left\{ (1+\eps) \tau \cdot \frac{\|x_i-y_{\pi(i)}\|_p}{r}, \frac{\eps \tau}{s}\right\} \right] \\
	&= \sum_{i \in S}  \Prx_{\boldf}\left[ \| \boldf(x_i) - \boldf(y_{\pi(i)}) \|_1 > \frac{\eps \tau}{s}\right] \\
	&\qquad \qquad + \sum_{i \in L}   \Prx_{\boldf}\left[ \| \boldf(x_i) - \boldf(y_{\pi(i)}) \|_1 > (1+\eps) \tau \cdot \frac{\|x_i-y_{\pi(i)}\|_p}{r}\right] ,
\end{align*}
where the sets $S, L \subset [s]$ of indices $i \in [s]$ are defined so $i \in S$ whenever $\|x_i - y_{\pi(i)}\|_p / r \leq \eps / s$, and $L = [s] \setminus S$ otherwise. Then, 
Suppose we let, for every $i \in [s]$, $\kappa_i$ denote 
\begin{align*} 
\kappa_i &= \left( (1+\eps/2) \cdot \frac{\|x_i-y_{\pi(i)}\|_p}{r} + \frac{\eps}{2s}\right) \tau - t \cdot \frac{\| x_i - y_{\pi(i)}\|_p}{2 \cdot \sfR} \cdot \alpha_p(\sfR / \|x_i-y_{\pi(i)}\|_p) \\
		&= \|x_i - y_{\pi(i)}\|_p \left( \frac{\tau}{r} - \frac{t}{2\sfR} \cdot \alpha_p\left(\frac{\sfR}{\|x_i - y_{\pi(i)}\|_p}\right)\right)
\end{align*}
which is set to as to align with the expectation of $\| \boldf(x) - \boldf(y)\|_1$, so that we may apply Hoeffding's inequality to the right-hand side of (\ref{eq:probs}). Namely, we may upper bound (\ref{eq:probs}) by
\begin{align*}
\sum_{i=1}^s \exp\left( - \frac{2\kappa^2}{t} \right)
\end{align*}
and we have $\| \boldf(x_i) - \boldf(y_{\pi(i)})\|_1$ is a sum of $t$ independent random variables, which are all identically distributed, with
\[ \Ex_{\boldf}\left[ \| \boldf(x_i) - \boldf(y_{\pi(i)})\|_1 \right] = t \cdot \frac{\| x_i - y_{\pi(i)}\|_p}{2 \cdot \sfR} \cdot \alpha_p(\sfR / \|x_i-y_{\pi(i)}\|_p), \]
which implies that 
}

\subsection{Three Crucial Ingredients for LSH for $\EMD$ over the Hypercube}\label{sec:crucial-ingred}

From now on, we will build a data-dependent hash family for $\EMD$ on size-$s$ tuples in the hypercube with respect to the Hamming distance, where the dimension $d = \poly(s)$ and our required threshold $r = \omega(s)$. We will refer to ``points'' as the size-$s$ tuples of vectors in $\{0,1\}^d$, and ``elements'' to the points in $\{0,1\}^d$ which will be in the tuples. Hence, a ``point'' is in reference to a point in the metric space $\EMD_s(\{0,1\}^d)$, and each point is a size-$s$ tuple of ``elements'' in $\{0,1\}^d$. We will need three ingredients, where it is useful to keep in mind the ``$p_1$''- and ``$p_2$''-properties for hash families in Definition~\ref{def:data-dep}; the ``$p_1$''-property guarantees that a query and its near neighbor oftentimes collide, and the ``$p_2$''-property guarantees far-apart points sampled from $\mu$ oftentimes do not collide.
\begin{enumerate}
\item The first ingredient is Lemma~\ref{lem:ingred-1}, which specifies a sequence of hash families whose hash functions maps points (i.e., size-$s$ tuples of $\{0,1\}^d$) to buckets. These hash families are parametrized by a so-called ``level'' $\ell$, and we will let $\ell$ vary among $L$ possible levels,\footnote{These levels will correspond to the depth of the quadtree embeddings.} for $L = \Theta(\log d)$.
As we will see, these data-independent hash families have a good ``$p_1$''-property---for each of these hash families, the probability that we divide any two points (i.e., size-$s$ tuples of $\{0,1\}^d$) is at most proportional to $\EMD(\cdot,\cdot)$, and this allows us to say that close points collide often (since their $\EMD(\cdot,\cdot)$ is small).
\item The second ingredient is Definition~\ref{def:ingred-2}, a point being ``locally-dense'' with respect to a distribution $\mu$ over points. We define ``local-density'' as all hash families defined in Lemma~\ref{lem:ingred-1} failing to have the ``$p_2$''-property; however, the important consequence (and the reason for the name ``local-density'') will be the following (see Lemma~\ref{lem:dense-conse}). A point $x$ (which recall is a tuple of vectors $a_1,\dots, a_s$ in $\{0,1\}^d$) will be locally-dense if ``many'' of its elements $a_i$ have a non-trivial fraction of ``nearby'' elements from points in $\mu$ (where ``many'' and ``nearby'' vary such that the sum-of-nearest-neighbors---known as the Chamfer distance---to a random subset of $\mu$ is bounded). By Ingredient 1, the hash families $\calH(\tau, \ell)$ always satisfy the ``$p_1$''-property, but not necessarily the ``$p_2$''-property; locally-dense points with respect to $\mu$ are exactly those whose desired ``$p_2$''-property with all $\calH(\tau, \ell)$'s does not hold. 
\item The final ingredient will be the data-dependent hash family which fills in the gap. For all points, the data-dependent hash family always has the desired ``$p_2$''-property, but it may not have the ``$p_1$''-property. However, we will prove that the data-dependent hash family has the property ``$p_1$''-property whenever a point is locally-dense.
\end{enumerate}
In the remainder of the section, we formally state the lemmas which capture the three ingredients and show how these imply a data-dependent hash family. 

\begin{restatable}{lemma}{firstingred}\label{lem:ingred-1}
For any parameter $\ell \in \{0, \dots, L \}$ and any $\tau > 0$, we define a hash family $\calH(\tau, \ell)$ (in Definition~\ref{def:data-ind-hash}). The hash family $\calH(\tau, \ell)$ satisfies that, for any two $x, y \in \EMD_s(\{0,1\}^d)$,
\begin{align*}
\Prx_{\bh \sim \calH(\tau,\ell)}\left[ \bh(a) \neq \bh(b)\right] &\leq \frac{\EMD(x,y)}{\tau}.
\end{align*}
In addition, there is a data structure which maintains a draw of $\bh \sim \calH(\tau, \ell)$ while supporting queries of $\bh(x)$ in initialization and query time $O(sd)$.
\end{restatable}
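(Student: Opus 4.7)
The plan is to construct $\calH(\tau,\ell)$ by composing a single level of the data-independent quadtree partition of~\cite{AIK08} with a randomly-shifted grid $\ell_1$-LSH. Specifically, sample $\bphi_\ell \sim \calH_{2^\ell,d}$ from~(\ref{eqn:proj-hash-def}), which picks $2^\ell$ random coordinates of $\{0,1\}^d$ and induces a partition of $\{0,1\}^d$ into cells indexed by $u \in \{0,1\}^{2^\ell}$. For a point $x = (a_1,\dots,a_s) \in \EMD_s(\{0,1\}^d)$, form the scaled histogram $\psi_\ell(x) \in \R^{2^{2^\ell}}$ whose coordinate at $u$ equals $(d/2^\ell)\cdot|\{i\in[s] : \bphi_\ell(a_i) = u\}|$. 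Finally, sample a randomly-shifted axis-aligned grid hash $\bg$ on $\R^{2^{2^\ell}}$ with side length $\sfR = 2\tau$, where the per-coordinate offset is drawn lazily from a random function indexed by $u$, and set $\bh(x) = \bg(\psi_\ell(x))$.

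For the collision bound, I first observe that the grid hash has the universal property $\Pr_{\bg}[\bg(u)\neq\bg(v)] \leq \|u-v\|_1/\sfR$ for all $u,v$, by a coordinate-wise union bound (each coordinate splits with probability at most $\min\{1,|u_i-v_i|/\sfR\}$, as in Proposition~\ref{prop:l1hash}). It therefore suffices to bound $\Ex_{\bphi_\ell}[\|\psi_\ell(x)-\psi_\ell(y)\|_1]$. Let $\pi\colon[s]\to[s]$ be an optimal matching realizing $\EMD(x,y)$, and let $S_\pi$ denote the number of indices $i$ with $\bphi_\ell(a_i)\neq \bphi_\ell(y_{\pi(i)})$. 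Each separated pair contributes $(d/2^\ell)$ to two different coordinates of $\psi_\ell(x)-\psi_\ell(y)$ (with opposing signs in the unmatched bins), so by the triangle inequality $\|\psi_\ell(x)-\psi_\ell(y)\|_1 \leq (2d/2^\ell)\cdot S_\pi$ deterministically. The standard splitting estimate $\Pr[\bphi_\ell(a)\neq\bphi_\ell(b)] \leq (2^\ell/d)\cdot\|a-b\|_1$ (a union bound over the $2^\ell$ coordinate samples, using that each sampled coordinate disagrees with probability $\|a-b\|_1/d$) yields $\Ex[S_\pi] \leq (2^\ell/d)\cdot\EMD(x,y)$, and hence $\Ex[\|\psi_\ell(x)-\psi_\ell(y)\|_1]\leq 2\EMD(x,y)$. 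Composing with the grid-hash bound at $\sfR=2\tau$ gives $\Pr[\bh(x)\neq\bh(y)] \leq \EMD(x,y)/\tau$, as required.

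For the data structure, initialization samples the $2^\ell$ coordinate indices for $\bphi_\ell$ and $O(1)$ bits seeding a random function that produces grid offsets on demand, all in $O(sd)$ time. A query on $x=(a_1,\dots,a_s)$ projects each $a_i$ under $\bphi_\ell$ (after de-duplicating the sampled index set, each projection costs $O(d)$), bins the $s$ resulting cell identifiers into an $s$-sparse histogram, and evaluates the grid hash on only the nonzero coordinates using the lazy random function, giving $O(sd)$ total time as claimed. The main technical step, and the only potential obstacle, is the expectation bound on $\|\psi_\ell(x)-\psi_\ell(y)\|_1$; however, this becomes routine once the scaling factor $d/2^\ell$ is chosen to match the AIK-style data-independent edge weight at level $\ell$ used in Figure~\ref{fig:DDquadtree-prelims}, so I do not anticipate any essential difficulty.
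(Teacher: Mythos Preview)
Your argument is correct and follows the same high-level template as the paper: project via $\bphi_\ell$, pass to the histogram of cell-counts, bound the expected $\ell_1$ distance of the histograms by $2\,\EMD(x,y)$ using the optimal matching and the per-pair split estimate $\Pr[\bphi_\ell(a)\neq\bphi_\ell(b)]\le (2^\ell/d)\|a-b\|_1$, and finish with an $\ell_1$ LSH. The difference is only in which $\ell_1$ hash is applied to the histogram. The paper unary-encodes the count $n_u(x)$ via the indicators $\bchi(x,u,k)=\ind\{n_u(x)\ge k\}$ and then subsamples each $(u,k)$-coordinate by an independent $\Ber(d/(\tau\,2^{\ell+1}))$; you instead keep the scaled real-valued histogram and apply a randomly-shifted grid of side $2\tau$. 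Both reduce to the identical bound $\sum_{u}|n_u(x)-n_u(y)|\le 2\sum_i \ind\{\bphi_\ell(a_i)\neq\bphi_\ell(b_{\pi(i)})\}$, which is exactly the paper's inequality $\sum_{u,k}\ind\{\bchi(x,u,k)\neq\bchi(y,u,k)\}\le 2\sum_i\ind\{\bphi(a_i)\neq\bphi(b_i)\}$ rewritten.

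One caveat worth flagging: the lemma also fixes a \emph{definition} of $\calH(\tau,\ell)$, and the paper's specific Bernoulli-subsampling form is not incidental. In the proof of Lemma~\ref{lem:dense-conse} (Claim~\ref{cl:few-empty}) the collision probability is written as $\bigl(1-\tfrac{d}{\tau 2^{\ell+1}}\bigr)^{|\bZ|}$, and the argument proceeds by controlling $|\bZ|$. Your grid-hash variant would give a product $\prod_u \bigl(1-\tfrac{d}{2^{\ell}\cdot 2\tau}|n_u(x)-n_u(y)|\bigr)_+$ instead; this can be pushed through (it is still $\exp(-\Omega(|\bZ|\cdot d/(\tau 2^{\ell+1})))$ provided no single coordinate saturates), but it is a slightly different calculation. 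So your construction proves the stated lemma, but if you intend to plug it into the rest of the pipeline you should check that the ``locally-dense $\Rightarrow$ small expected Chamfer'' step still goes through.
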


\begin{restatable}{definition}{secondingred}\label{def:ingred-2}
Let $\mu$ denote a distribution supported on $\EMD_s(\{0,1\}^d)$. For parameters $\alpha, \tau > 0$, we say that a point $x \in \EMD_s(\{0,1\}^d)$ is $(\alpha, \tau)$-locally-dense with respect to $\mu$ if for all $\ell \in \{0,\dots, L\}$, 
\begin{align*}
\Prx_{\substack{\bu \sim \mu \\ \bh \sim \calH(\tau, \ell)}}\left[ \bh(x) = \bh(\bu)\right] \geq \alpha. 
\end{align*}
\end{restatable}

\begin{restatable}{lemma}{thirdingred}\label{lem:ingred-3}
Let $\mu$ denote a distribution supported on $\EMD_s(\{0,1\}^d)$, and fix any $\alpha, \tau > 0$. Then for
any $\gamma > 0$ and $\delta \in (0, 1)$, there exists a hash family $\calH(\mu, \tau, \gamma, \delta)$ with the following properties: 
\begin{itemize}
\item \textbf{\emph{Close Points Collide}}: For any pair of points $x, y \in \EMD_{s}(\{0,1\}^d)$. If $x$ is $(\alpha, \tau)$-locally-dense, then
\[ \Prx_{\bh \sim \calH(\mu,\tau, \gamma,\delta)}\left[ \bh(x) \neq \bh(y) \right] \leq  \frac{ \EMD(x,y)  }{\gamma} \cdot   \lambda   \cdot\left(1+ \log \left( \frac{\tau + s}{\EMD(x,y)} + 1 \right)\right) \]
Where $\lambda = C_1 \log \left(\frac{sd}{\delta \alpha}\right) \left( \log \log \left(\frac{sd}{\delta \alpha}\right)\right)^{C_2}$ for absolute constants $C_1,C_2$. 

\item \textbf{\emph{Far Points Separate}}: For any pair of points $x, y \in \EMD_s(\{0,1\}^d)$,
\[ \Prx_{\bh \sim \calH(\mu,\tau,\gamma,\delta)}\left[ \bh(x) = \bh(y) \right] \leq \exp\left( - \frac{\EMD(x, y)}{\gamma}\right) + \delta.\]
\end{itemize}
In addition, there is a data structure which maintains a draw $\bh \sim \calH(\mu, \tau,\gamma, \delta)$ while supporting queries of $\bh(x)$ which has initialization time $n \cdot \poly(sd/\alpha)$ (where $\mu$ is supported on $n$ points) and query time $\poly(sd)$.
\end{restatable}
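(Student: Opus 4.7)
The plan is to realize $\calH(\mu, \tau, \gamma, \delta)$ as the composition of the $\SampleTree$ tree embedding with the folklore isometric tree-to-$\ell_1$ embedding, followed by an $\ell_1$-LSH of appropriate resolution. During preprocessing the data structure draws $\by_1, \ldots, \by_m \sim \mu$ for $m = \poly(sd/(\alpha\delta))$, forms $\bOmega = \bigcup_{i=1}^m \by_i$ and $\hat{\bOmega} = \nbr(\bOmega)$, and builds $\bT = \SampleTree(\mu, m)$ with the hybrid weights described in the overview (data-dependent on edges incident to $\hat{\bOmega}$-nodes, data-independent $\xi \cdot d/2^\ell$ elsewhere, for $\xi = \tilde{O}(\log s)$). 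The folklore isometric embedding $\EMD_s(\bT) \hookrightarrow \ell_1$ yields a map $\Psi : \EMD_s(\{0,1\}^d) \to \ell_1$ with $\|\Psi(x) - \Psi(y)\|_1 = \EMD_{\bT}(x,y)$; we then post-compose with a concatenation of $\Theta(1)$ independent draws of the $\ell_1$-hash of Proposition \ref{prop:l1hash}, calibrated to scale $\gamma$, to obtain the hash function $\bh$.

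For the close-points property, assume $x$ is $(\alpha,\tau)$-locally-dense with respect to $\mu$. By Lemma \ref{lem:dense-conse}, this implies $\Ex_{\bOmega}[\Chamfer(x,\bOmega)] \leq (\tau + s) \cdot \poly(\log(sd/\alpha))$. Invoking the sharpened expansion bound Equation (\ref{eq:better-chamfer}), established via the four-segment decomposition of Step 2(b), then gives $\Ex_{\bT}[\EMD_{\bT}(x,y)] \leq \tilde{O}(\log s) \cdot \EMD(x,y) \cdot \bigl(1 + \log(\Chamfer(x,\bOmega)/\EMD(x,y) + 1)\bigr)$. Since the tree-to-$\ell_1$ map is isometric and the $\ell_1$-LSH has non-collision probability linear in $\|\Psi(x)-\Psi(y)\|_1/\gamma$, taking expectations and using Jensen's inequality on the concave function $\log(\cdot+1)$ converts the expected $\EMD_{\bT}(x,y)$ bound into the stated non-collision probability $\frac{\EMD(x,y)}{\gamma} \cdot \lambda \cdot \bigl(1 + \log((\tau+s)/\EMD(x,y)+1)\bigr)$, where $\lambda = \tilde{O}(\log(sd/(\delta\alpha)))$ collects the $\SampleTree$ distortion, the $\poly(\log s)$ slack in $\Chamfer$ (which becomes additive inside the logarithm), the Chamfer concentration, and the $\ell_1$-LSH constants.

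For the far-points property, we discard the local-density assumption. We first invoke Lemma \ref{lem:sample-tree-contr}, which guarantees that $\SampleTree$ is non-contracting on the specific pair $\{x,y\}$ except on an event of probability at most $\delta$. On the non-contraction event $\|\Psi(x) - \Psi(y)\|_1 = \EMD_{\bT}(x,y) \geq \EMD(x,y)$, and the concatenated $\ell_1$-LSH at scale $\gamma$ then yields a single-hash collision probability of at most $\exp(-\EMD(x,y)/\gamma)$ by the standard Bernoulli-concatenation calculation (each per-coordinate non-collision probability is bounded below by the proportional-to-$\ell_1$-distance expression from Proposition \ref{prop:l1hash}, and these are added in the exponent). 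Adding the additive $\delta$ from the off-event yields the claimed $\exp(-\EMD(x,y)/\gamma) + \delta$ bound.

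The hardest step is the proof of the sharpened $\SampleTree$ expansion inequality (\ref{eq:better-chamfer}), which requires splitting the root-to-leaf path in $\bT$ between $a \in x$ and its partner $b \in y$ into four segments and bounding the expected contribution of the data-independent segment in terms of the relative scales $\ell_b$ and $\ell_c$ (the tree levels at which $\{a,b\}$ and $\{a,c\}$ respectively split, with $c$ being the nearest neighbor of $a$ in $\bOmega$). The delicate part is that coordinate samples are shared across levels, so the splitting events for $\{a,b\}$ and $\{a,c\}$ are negatively correlated; exploiting this negative correlation is exactly what saves a factor beyond the naive Chamfer bound (\ref{eq:naive-chamfer}) and produces the tight $\log(\Chamfer/\EMD+1)$ dependence rather than $\Chamfer/\EMD$. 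Everything else reduces to calibrating the sample size $m$, the parameter $\xi$, and the $\ell_1$-LSH modulus so that the Chamfer concentration of Lemma \ref{lem:dense-conse} and the pointwise non-contraction event of Lemma \ref{lem:sample-tree-contr} each hold within the prescribed $\delta$-probability, while preserving the claimed preprocessing cost $n \cdot \poly(sd/\alpha)$ and query cost $\poly(sd)$.
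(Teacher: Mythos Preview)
Your proposal is correct and follows essentially the same approach as the paper: the hash family $\calH(\mu,\tau,\gamma,\delta)$ is exactly the composition $\SampleTree(\mu,m) \to$ tree-to-$\ell_1$ isometry $\to$ $\ell_1$-LSH at scale $\gamma$, and the two guarantees are derived from the expansion bound (Lemma~\ref{lem:sample-tree-expan}) and the non-contraction bound (Lemma~\ref{lem:sample-tree-contr}) respectively. The paper packages the close-points argument slightly more cleanly by stating Lemma~\ref{lem:sample-tree-expan} with the local-density hypothesis already built in (so that Lemma~\ref{lem:dense-conse} and the Jensen step you describe are absorbed into its proof rather than appearing at this level), and it cites the $\ell_1$-LSH properties $(\ref{eq:ell-1-lsh-close})$--$(\ref{eq:ell-1-lsh-far})$ directly rather than building them by concatenating Proposition~\ref{prop:l1hash}; but these are organizational differences, not substantive ones.
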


\subsubsection{Main Theorems for Data Dependent Hashing and Nearest Neighbor Search}
With the above ingredients set in place, we are ready to state the data-dependent hash family for $\EMD$. The remainder of the section is devoted to proving the main theorem below.

\begin{theorem}[Data-Dependent Hashing for $\EMD$]\label{thm:data-dep-hashing} Fix any $0 < p_2 < p_1 < 1$. There exists a data structure for data-dependent hashing with a $(r, cr, p_1,p_2)$-sensitive family for $\EMD_s(\{0,1\}^d)$ where $r > s$ for an approximation $c > 1$ which is
\[  c = \tilde{O}\left( \frac{1}{1-p_1} \cdot \log\left(\frac{1}{p_2}\right) \cdot \log\left( \frac{sd}{p_2}\right)\right)\]

The data structure has initialization time $I_{\sfh}(n) \leq n \cdot \poly(sd/((1-p_1)p_2))$ and query time $\poly(sd)$.
\end{theorem}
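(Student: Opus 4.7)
The plan is to construct $\bh$ by ``gluing'' the data-independent families $\calH(\tau, \ell)$ from Lemma~\ref{lem:ingred-1} with the data-dependent family $\calH(\mu, \tau, \gamma, \delta)$ from Lemma~\ref{lem:ingred-3}, exactly as sketched in the introduction. I would set $\tau = \Theta(rL/(1-p_1))$ so that each $\calH(\tau,\ell)$ separates a fixed close pair with probability at most $(1-p_1)/(2(L+1))$ by Lemma~\ref{lem:ingred-1}, set $\alpha = \Theta(p_2(1-p_1))$ and $\delta = \Theta(p_2)$, and pick $\gamma$ large enough that Lemma~\ref{lem:ingred-3}'s close-point bound at $\EMD(x,y)=r$ (its maximum over $\EMD \le r$) is at most $(1-p_1)/2$; this forces $\gamma = \tilde{O}(r\lambda/(1-p_1))$ with $\lambda = \tilde{O}(\log(sd/p_2))$. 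The hash function draws $\bh_\ell \sim \calH(\tau,\ell)$ independently for each $\ell \in \{0,\dots,L\}$, draws $\bh_* \sim \calH(\mu,\tau,\gamma,\delta)$, and outputs $\bh(x) = (\bell(x), \bh_{\bell(x)}(x))$, where $\bell(x)$ is the smallest $\ell$ for which the collision-mass check $\Pr_{\bu\sim\mu}[\bh_\ell(\bu)=\bh_\ell(x)] \leq \alpha'$ succeeds with the inflated threshold $\alpha' := 2\alpha/(1-p_1)$, and $\bell(x) = *$ if no such $\ell$ exists. The check is implementable at query time with overhead $n\cdot\poly(sd)$ by hashing the support of $\mu$ under each $\bh_\ell$ once and storing per-bucket masses.

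For the $p_2$-property I case on $\bell(x)$. If $\bell(x)=\ell\neq *$, then $\Pr_{\by\sim\mu}[\bh(x)=\bh(\by)] \leq \Pr_{\by\sim\mu}[\bh_\ell(\by)=\bh_\ell(x)] \leq \alpha'$ by the passed check. If $\bell(x) = *$, then $\bh(x)=\bh(\by)$ requires $\bh_*(\by)=\bh_*(x)$, and Lemma~\ref{lem:ingred-3} gives $\Pr_{\bh_*,\by}[\bh_*(x)=\bh_*(\by),\, \EMD(x,\by)>cr] \leq \exp(-cr/\gamma) + \delta$. Choosing $c = \Theta(\gamma\log(1/p_2)/r) = \tilde{O}(\log(1/p_2)\log(sd/p_2)/(1-p_1))$ makes this at most $p_2/2$, and combined with $\alpha' \leq p_2/2$ yields the required bound.

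For the $p_1$-property, fix $x,y$ with $\EMD(x,y)\leq r$ and consider the events $E_1 = \{\bh_\ell(x)=\bh_\ell(y)\text{ for all } \ell\}$ and $E_2 = \{\bh_*(x)=\bh_*(y)\}$. Under $E_1$ the buckets of $x$ and $y$ coincide at every level, so their collision masses are equal, hence $\bell(x)=\bell(y)$ and $\bh_{\bell(x)}(x)=\bh_{\bell(y)}(y)$; thus $\bh(x)=\bh(y)$ holds on $E_1 \cap (\{\bell(x)\neq *\} \cup E_2)$. Consequently $\Pr[\bh(x)\neq\bh(y)] \leq \Pr[E_1^c] + \Pr[\{\bell(x)=*\}\cap E_2^c]$, and Lemma~\ref{lem:ingred-1} with a union bound over the $L+1$ levels ensures $\Pr[E_1^c] \leq (1-p_1)/2$. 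For the second term I split on whether $x$ is $(\alpha,\tau)$-locally-dense: if yes, Lemma~\ref{lem:ingred-3} gives $\Pr[E_2^c]\leq (1-p_1)/2$ by the choice of $\gamma$; if not, some level $\ell_0$ has $\mathbb{E}_{\bh_{\ell_0}}[\Pr_{\by\sim\mu}[\bh_{\ell_0}(\by)=\bh_{\ell_0}(x)]] < \alpha$, and Markov's inequality with the threshold $\alpha' = 2\alpha/(1-p_1)$ yields $\Pr[\bell(x)=*] \leq \alpha/\alpha' = (1-p_1)/2$. Either way the second term is at most $(1-p_1)/2$, completing the $p_1$-property proof.

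The main obstacle is bridging the quantitative gap between the averaged definition of locally-dense (Definition~\ref{def:ingred-2}) and the pointwise ``check'' applied to a single draw of $\bh_\ell$. Resolving it by separating the two thresholds, $\alpha$ for the definition and the inflated $\alpha' = 2\alpha/(1-p_1)$ for the check, and absorbing the inflation into a single constant $(1-p_1)^{-1}$ factor via Markov's inequality, is what guarantees the final approximation $c$ carries only a single $(1-p_1)^{-1}$ factor, matching the theorem statement. The runtime bounds follow directly from Lemmas~\ref{lem:ingred-1} and~\ref{lem:ingred-3} together with the $O(nL\cdot\poly(sd))$ cost of precomputing and maintaining the bucket-mass counters used in the check.
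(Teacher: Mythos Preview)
Your proposal is correct and takes essentially the same approach as the paper. The paper's proof (Section~\ref{sec:glue}) uses the identical gluing construction with the same parameter regime: $\tau = 4(L+1)r/(1-p_1)$, $\alpha = (1-p_1)p_2/6$, $\delta = p_2/3$, and the same check threshold $p_2/3$ (which equals your $\alpha' = 2\alpha/(1-p_1)$ up to constants). Your decomposition of the $p_1$-analysis via $\Pr[E_1^c] + \Pr[\{\bell(x)=*\}\cap E_2^c]$ is slightly cleaner than the paper's case split, but the Markov-inequality step for the non-locally-dense case and the invocation of Lemma~\ref{lem:ingred-3} for the locally-dense case are identical.
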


Our main result, for Data-Dependent LSH for $\EMD_s(\R^d,\ell_p)$ for any $p \in[1,2]$ (Theorem \ref{thm:LSH-main}) follows from combining Theorem \ref{thm:data-dep-hashing} with Lemma \ref{lem:reduction-to-hypercube}.
Setting $p_1 = 1-\eps$ and $p_2 = \Theta(1)$ in Theorem \ref{thm:data-dep-hashing}, and then applying Theorem \ref{thm:hashing-to-nn}, we obtain our main result on nearest neighbor search under the Earth Mover's Distance. 
\begin{theorem}[Approximate Nearest Neighbor Search for $\EMD$]\label{thm:ann-main}
    For any $s, d \in \N$, $p \in [1,2]$ and $\eps \in (0,1)$, there exists a data structure for approximate nearest neighbor search over $\EMD_s(\R^d, \ell_p)$ with approximation $c = \tilde{O}( \frac{\log s}{\eps}) $ satisfying the following guarantees:
    \begin{itemize}
        \item \emph{\textbf{Preprocessing Time}}: The data structure preprocesses a dataset $P$ of $n$ points in $\EMD_s(\R^d, \ell_p)$ in time $n^{1+\eps} \cdot \poly(sd \eps^{-1})$.
        \item \emph{\textbf{Query Time}}: For a vector $q \in \EMD_s(\R^d, \ell_p)$, we output a $c$-approximate nearest neighbor of $q$ in $P$ in time $n^{\eps} \cdot \poly(sd)$. 
    \end{itemize}
\end{theorem}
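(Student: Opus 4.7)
The plan is to obtain Theorem \ref{thm:ann-main} by composing three results stated earlier in the paper: Remark \ref{remark:ell1} (which lets us restrict to the case $p=1$), Theorem \ref{thm:data-dep-hashing} (data-dependent LSH on the hypercube), Lemma \ref{lem:reduction-to-hypercube} (which lifts hypercube LSH back to $\EMD_s(\R^d,\ell_p)$), and finally Theorem \ref{thm:hashing-to-nn} (which converts a data-dependent LSH family into an approximate near neighbor data structure). So the whole argument is essentially a parameter-tracking exercise, with the key choice being $p_1 = 1-\eps$ and $p_2$ a suitably small constant.

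First, by Remark \ref{remark:ell1}, it suffices to construct a data structure for $\EMD_s(\R^d,\ell_1)$ at threshold $\tau > 0$, since the $\ell_p \to \ell_1$ embedding for $p \in [1,2]$ preserves distances up to arbitrary constant and only blows up $d$ by $O(d)$. Next, I invoke Theorem \ref{thm:data-dep-hashing} with $p_1 = 1-\eps$ and $p_2$ equal to a sufficiently small absolute constant (say $p_2 = 1/100$), applied on the hypercube $\{0,1\}^t$ with $t = \Theta(s^2 c^2)$ as prescribed by Lemma \ref{lem:reduction-to-hypercube}. This yields a $(r,c_0 r, p_1, p_2)$-sensitive data-dependent hash family on $\EMD_s(\{0,1\}^t)$ with approximation
\[ c_0 = \tilde{O}\!\left( \tfrac{1}{1-p_1} \cdot \log(1/p_2) \cdot \log(st/p_2) \right) = \tilde{O}(\log(s)/\eps), \]
using that $t = \poly(s,c_0)$, so $\log(st/p_2) = O(\log s)$ up to polylog-log factors. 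The initialization time is $n \cdot \poly(sd/\eps)$ and each query takes $\poly(sd)$ time.

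Then I apply Lemma \ref{lem:reduction-to-hypercube} with $\delta$ set to a small constant fraction of $\min\{1-p_1, p_2\}$, e.g.\ $\delta = \eps/10$, which lifts this hash family back to a data-dependent $(\tau, 3 c_0 \tau, p_1', p_2')$-sensitive family on $\EMD_s(\R^d,\ell_1)$ with $p_1' = p_1 - \delta = 1 - O(\eps)$ and $p_2' = p_2 + \delta$ still a small constant. The initialization and query times inherit at most additive $n\cdot\poly(sd)$ and $\poly(sd)$ overheads. The resulting approximation is $3 c_0 = \tilde{O}(\log s / \eps)$. Plugging into Theorem \ref{thm:hashing-to-nn}, the key parameter is
\[ \rho = \frac{\log(1/p_1')}{\log(1/p_2')} = \frac{-\log(1-O(\eps))}{\Theta(1)} = O(\eps), \]
so $n^\rho = n^{O(\eps)}$. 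Rescaling $\eps$ by an absolute constant at the start, this yields query time $n^\eps \cdot \poly(sd)$ and preprocessing time $n^{1+\eps}\cdot\poly(sd\eps^{-1})$, as required. Finally, the standard reduction from approximate nearest neighbor to approximate near neighbor (which only costs a $\log$-factor blowup and is handled by running $\log(s d \Phi)$ instances at geometrically increasing thresholds) produces the full nearest neighbor guarantee.

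The main ``obstacle'' is not conceptual — everything follows from stringing together earlier results — but lies entirely in verifying that the parameters compose cleanly. In particular one must (i) confirm that the approximation factor $\log(sd)$ from Theorem \ref{thm:data-dep-hashing} genuinely collapses to $\log s$ once $d$ is replaced by $t = \poly(s, c_0)$, which needs a fixed-point argument in $c_0$ but works because the dependence on $c$ inside $t$ is only logarithmic after taking logs; (ii) verify the threshold condition $r > s$ needed by Theorem \ref{thm:data-dep-hashing} holds for the $r = t/(1.99 c_0) = \omega(s)$ produced by Lemma \ref{lem:reduction-to-hypercube}; and (iii) ensure $1-p_1'$ and $p_2'$ remain within the regimes assumed by Theorem \ref{thm:hashing-to-nn} after subtracting and adding $\delta$. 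With these routine checks in place, assembling the statement of Theorem \ref{thm:ann-main} is immediate.
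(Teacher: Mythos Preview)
Your proposal is correct and follows essentially the same approach as the paper: combine Theorem \ref{thm:data-dep-hashing} with Lemma \ref{lem:reduction-to-hypercube} to obtain the data-dependent LSH over $\EMD_s(\R^d,\ell_p)$, set $p_1 = 1-\eps$ and $p_2$ to a small constant, and then apply Theorem \ref{thm:hashing-to-nn}. Your write-up is in fact more explicit than the paper's one-line derivation, spelling out the parameter checks (the fixed-point for $c_0$, the $r>s$ condition, and the behavior of $\rho$) that the paper leaves implicit.
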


\subsection{The Hash Family $\calD$ and Proof of Theorem~\ref{thm:data-dep-hashing}}  \label{sec:glue}


Now that we have stated all of the preliminary ingredients, we show how to construct the data-dependent hash family $\calD$ stated in Theorem \ref{thm:data-dep-hashing}. Let $\lambda = \tilde{O}\left(\log(\frac{sd}{\delta \alpha})\right)$ be the parameter defined in Lemma \ref{lem:ingred-3}.
We now instantiate the following parameters
\[ \tau = \frac{4 \cdot (L+1) \cdot r}{1 - p_1}, \qquad \alpha = \frac{(1 - p_1) \cdot p_2 }{6}, \qquad \gamma = \frac{\left(\lambda \log\left( \frac{20(L+1)}{1-p_1 } \right)  +\log (\frac{1}{1 - p_1})\right) \cdot \tau}{ L+1} \] \[ \delta = \frac{p_2}{3}, \qquad c = \log\left(\frac{3}{p_2}\right) \cdot \frac{\gamma}{r}\]
To sample sample a hash function $\bh \sim \calD$, we first sample a hash functions $\bh_0,\dots, \bh_L$, where $\bh_{\ell} \sim \calH(\tau, \ell)$ for each $\ell \in \{0, \dots, L \}$, and we next sample a hash function $\bh_* \sim \calH(\mu,\tau, \gamma,\delta)$. In order to evaluate $\bh$ on a point $z \in \EMD_s(\{0,1\}^d)$, we first check whether there exists an index $\ell \in \{0,\dots, L\}$ for which 
\begin{align} 
\Prx_{\bu \sim \mu}\left[\bh_{\ell}(z) = \bh_{\ell}(\bu) \right] \leq \frac{p_2}{3}. \label{eq:ell-def}
\end{align}
If so, then we define $\bell(z)$ to be the smallest index $\ell \in \{0,\dots, L \}$ where (\ref{eq:ell-def}) holds. If no such index exists, we set $\bell(z) = *$. The final hash function $\bh \sim \calD$ then evaluates:
\[ \bh(z) = \left(\bell(z), \bh_{\bell(z)}(z) \right).\]

\paragraph{Running Time for Initializing and Querying $\bh \sim \calD$.} For the initialization, we must initialize and sample $\bh_{\ell} \sim \calH(\tau, \ell)$ for each $\ell \in \{0,\dots, L\}$, as well as $\bh_{*} \sim \calH(\mu, \tau, \gamma, \delta)$. The initialization as well as query time for the draws to $\bh_{\ell} \sim \calH(\tau, \ell)$ take time $O(sd)$ (from Lemma~\ref{lem:ingred-1}). From Lemma~\ref{lem:ingred-3}, the initialization time of $\bh_{*}$ takes time $n \cdot \poly(sd/\alpha)$, which is $n \cdot \poly(sd/\eps)$ as claimed. The time to query a single $\bh_{*}$ is $\poly(sd)$.

It remains to show how to compute $\bell(z)$, as this determines which hash function to evaluate in Equation \ref{eq:ell-def}. We proceed as follows during the initialization. First, we draw the hash functions $\bh_0,\dots, \bh_L$ and apply them to the dataset of all points $x$ in the support $\text{supp}(\mu)$ of $\mu$ (taking time $n \cdot \poly(sd)$). Then, we compute, for each bucket, the probability mass under $\mu$ which lies in that bucket (which is simply the number of dataset points hashed to that bucket, divided by $n$). Maintaining this additional information allows one to quickly determine the value of $\bell(z)$, for any $z$. Thus, the total query time is $(L+1) \cdot \poly(sd) + \poly(sd)$.

\paragraph{Analysis.} We first upper bound the probability that points which are close have different hash values. Suppose $x, y \in \EMD_s(\{0,1\}^d)$ with $\EMD(x, y) \leq r$, and notice that in order for $\bh(x) \neq \bh(y)$, there must exists a hash function (either $\bh_{\ell}$ for $\ell \in \{0,\dots, L\}$ or $\bh_{*}$) where they disagree (Otherwise, if all the hash functions agree, it follows that all $\bell(.)$ values agree, thus the composite hash functions $\bh(.)$ agree). Suppose first that $x$ is $(\alpha,\tau)$-locally-dense. Then, we may apply Lemma~\ref{lem:ingred-1} and the first item of Lemma~\ref{lem:ingred-3} to say
\begin{align*}
\Prx_{\bh \sim \calD}\left[ \bh(x) \neq \bh(y) \right] &\leq (L+1) \cdot  \sup_{\ell} \Prx_{\bh_{\ell} \sim \calH(\tau, \ell)}\left[ \bh_{\ell}(x) \neq \bh_{\ell}(y) \right] + \Prx_{\bh_{*} \sim \calH(\mu,\tau,\gamma,\delta)}\left[ \bh_{*}(x) \neq \bh_{*}(y)\right]
\end{align*}
where we can bound the first term by
\begin{align*}
(L+1) \cdot  \sup_{\ell} \Prx_{\bh_{\ell} \sim \calH(\tau, \ell)}\left[ \bh_{\ell}(x) \neq \bh_{\ell}(y) \right] &\leq (L+1) \cdot \frac{r \cdot (1-p_1)}{4 \cdot (L + 1) \cdot r} \leq \frac{1 - p_1}{4}.
\end{align*}


For the remaining term, via Lemma~\ref{lem:ingred-3}  we have

\[\Prx_{\bh_{*} \sim \calH(\mu,\tau,\gamma,\delta)}\left[ \bh_{*}(x) \neq \bh_{*}(y)\right] \leq\frac{ \EMD(x,y)  }{\gamma} \cdot   \lambda   \cdot\left(1+ \log \left( \frac{\tau + s}{\EMD(x,y)} + 1 \right)\right) \]
Using that $\tau > 4r \geq 4s$ (by assumption of the Theorem statement), that $r > \EMD(x,y)$, and we can upper bound $(1+\log(z + 1))$ by $\log(4z)$ whenever $z \geq 1$, we have

\begin{align*}
\Prx_{\bh_{*} \sim \calH(\mu,\tau,\gamma,\delta)}\left[ \bh_{*}(x) \neq \bh_{*}(y)\right] &\leq \frac{(L+1) \cdot  \EMD(x,y)  }{\tau \cdot \log\left( \frac{20(L+1)}{1-p_1 } \right) } \cdot  \log\left( \frac{5 \tau }{\EMD(x,y)} \right) \\
 &\leq \frac{  \EMD(x,y) \cdot (1-p_1) }{ 4 r \cdot \log\left( \frac{20(L+1)}{1-p_1 } \right) } \cdot  \log\left( \frac{20 r (L+1) }{\EMD(x,y) \cdot (1-p_1)} \right) \\
  &\leq \frac{  \EMD(x,y) \cdot (1-p_1) }{ 4 r } \cdot \log\left( \frac{r }{\EMD(x,y) } \right) \\
 & \phantom{hi } \qquad +  \frac{  1-p_1 }{ 4  \cdot \log\left( \frac{20(L+1)}{1-p_1 } \right) } \cdot \log\left( \frac{20(L+1)}{1-p_1 } \right) \\
   &\leq \frac{   1-p_1 }{ 2} \\
\end{align*}
which concludes that for $x,y$ with $\EMD(x,y) \leq r$, and such that $x$ is $(\alpha,\tau)$ locally dense, we have:
\[\Prx_{\bh \sim \calD}\left[ \bh(x) \neq \bh(y) \right]\leq \frac{3(1-p_1)}{4}.\]

On the other hand, suppose that $\EMD(x,y) \leq r$ and $x$ is not $(\alpha,\tau)$-locally-dense, and let $\ell_0$ denote the smallest index which certifies that $x$ is not $(\alpha,\tau)$-locally-dense (recall Definition~\ref{def:ingred-2}). 
Then, whenever $\bh(x) \neq \bh(y)$, one of the two cases must occur:
\begin{enumerate}
\item At least one of $\bell(x)$ or $\bell(y)$ lies in $\{0,\dots, L\}$. First observe that whenever $\bh_\ell(x) = \bh_\ell(y)$ for all $\ell \in \{0,\dots,L\}$, it must be the case that $\ell(x) = \ell(y)$, since whether (\ref{eq:ell-def}) holds for a point $x$ at level $\ell$ is a deterministic function of the hash bucket that $x$ lands in under $\bh_\ell$. If $\ell(x) = \ell(y)$, then clearly $\bh_{\ell(x)}(x) \neq \bh_{\ell(y)}(y)$. Thus, for this case to occur, it must be that $\bh_\ell(x) \neq \bh_\ell(y)$ for at least one $\ell \in \{0,\dots,L\}$. 
As before, we can bound this probability by a union bound over the $L$ levels:
\begin{align*}
\Prx\left[ \begin{array}{c} \bh(x) \neq \bh(y) \\ \{ \bell(x) , \bell(y) \} \neq * \end{array} \right] \leq (L+1) \cdot \sup_{\ell} \Prx_{\bh_{\ell} \sim \calH(\tau, \ell)}\left[ \bh_{\ell}(x) \neq \bh_{\ell}(y) \right] \leq \frac{1 -p_1}{4}.
\end{align*}
\item Both $\bell(x) = \bell(y) = *$. In this case, we can upper bound the probability that $\bell(x) \neq \ell_0$ by Markov's inequality. Namely, Definition~\ref{def:ingred-2} implies that the expectation, over the draw of $\bh_{\ell_0} \sim \calH(\tau, \ell_0)$, of the probability over $\bu \sim \mu$ that $\bh_{\ell_0}(\bu) = \bh_{\ell_0}(x)$ is at most $\alpha$; but when we sampled $\bh_{\ell_0} \sim \calH(\tau, \ell_0)$, the fact that $\bell(x) = *$ implies that this probability was larger than $p_2/3$. Thus:
\begin{align*}
\Prx\left[ \begin{array}{c} \bh(x) \neq \bh(y) \\ \bell(x) = \bell(y) = * \end{array} \right] \leq \Prx_{\bh_{\ell_0} \sim \calH(\tau, \ell_0)}\left[ \Prx_{\bu \sim \mu}\left[ \bh_{\ell_0}(\bu) = \bh_{\ell_0}(x)\right] \geq \frac{p_2}{3} \right] \leq \frac{3 \alpha}{p_2} \leq \frac{1 - p_1}{2}.
\end{align*}
\end{enumerate}
By a union bound, the probability $\bh(x) \neq \bh(y)$ is at most $1-p_1$. That concludes the condition that points which are closer than $r$ in $\EMD$ tend to collide. We now upper bound the probability that points which are far collide. Suppose that $x \in \EMD_s(\{0,1\}^d)$ and we think of sampling $\by \sim \mu$ and $\bh \sim \calD$, and evaluating the probability that they are separated. Then, we can upper bound the probability that $\bh(x) = \bh(\by)$ and $\EMD(x,\by) \geq c \cdot r$ by first sampling $\bh_1,\dots, \bh_{L}$ and $\bh_{*}$, and then sampling $\by$. Note that 
\begin{align*}
\Prx_{\substack{\bh \sim \calD \\ \by \sim \mu}}\left[\begin{array}{c} \EMD(x,\by) \geq c\cdot r \\ \bh(x) = \bh(y) \end{array}\right] &\leq \Prx\left[ \begin{array}{c} \bell(x) \in \{0,\dots, L\} \\ \bh_{\bell(x)}(x) = \bh_{\bell(x)}(\by) \end{array} \right] + \Prx\left[ \bh_{*}(x) = \bh_{*}(\by) \mid \EMD(x, \by) \geq c \cdot r\right]\\
			&\leq \frac{p_2}{3} + \exp\left( - \frac{cr}{\gamma}\right) + \frac{p_2}{3} \leq p_2.
\end{align*}
where the first inequality uses, by definition of $\bell(x) \in \{0,\dots, L\}$, that whenever this occurs, the probability over $\by \sim \mu$ that $\bh_{\bell(x)}(x) = \bh_{\bell(x)}(\by)$ is at most $p_2/3$, and otherwise, if $\bell(x) = *$, we apply the second item of Lemma~\ref{lem:ingred-3} with our setting of $c$. 

\section{Ingredients 1 and 2: the Hash Family $\calH(\tau, \ell)$ and Locally-Dense Points}\label{sec:local-density-and-hashing}

 In this section, we give the first ingredient and prove Lemma~\ref{lem:ingred-1}. We will first define the hash family $\calH(\tau, \ell)$, and derive the main consequence of locally-dense points.

\subsection{Hash Family $\calH(\tau, \ell)$ and Proof of Lemma~\ref{lem:ingred-1}}\label{sec:proof-ingred-1}

As in Subsection~\ref{sec:crucial-ingred}, the term ``points'' is used to denote size-$s$ tuples of vectors in $\{0,1\}^d$. Each of the $s$ vectors in $\{0,1\}^d$ is referred to as an ``element'' of the point. We let $L = O(\log d)$, and we will refer to $\ell \in \{0,\dots, L\}$ as the ``levels.''

\begin{definition}[The Hash Family $\calH(\tau, \ell)$]\label{def:data-ind-hash}
    For $\tau > 0$ and $\ell \in \{0,\dots, L\}$, the hash family $\calH(\tau, \ell)$ is specified by the following sampling procedure. A draw of a hash function $\bh \sim \calH(\tau, \ell)$ proceeds by:
    \begin{enumerate}
        \item First, we sample $\bphi \sim \calH_{2^{\ell}}$ (as in Section~\ref{sec:Dynamic Embedding}) by sampling $2^{\ell}$ coordinates $\bi_1,\dots, \bi_{2^{\ell}} \sim [d]$ and letting $\bphi \colon \{0,1\}^d \to \{0,1\}^{2^\ell}$ be
        \[ \bphi(a) = (a_{\bi_1}, a_{\bi_2}, \dots, a_{\bi_{2^{\ell}}}) \in \{0,1\}^{2^{\ell}}. \]
        \item Then, for each $u \in \{0,1\}^{2^{\ell}}$ and each $k \in [s]$, we let $\bC_{u, k} \sim \Ber(d / ( \tau 2^{\ell+1}))$. 
        \item For a point $x \in \EMD_s(\{0,1\}^d)$, and $u \in \{0,1\}^{2^{\ell}}$ and $k \in [s]$, we let $\bchi(x, u, k) \in \{0,1\}$ be
        \[\bchi(x, u, k) = \ind\{ \text{at least $k$ elements $a \in x$ satisfy $\bphi(a) = u$} \}. \]
        With those definitions, we let
        \[ \bh(x) = \left( \bC_{u, k} \cdot \bchi(x, u, k) : u \in \{0,1\}^{2^{\ell}}, k \in [s] \right) \in \{0,1\}^{\{0,1\}^{2^{\ell}} \times [s]}. \]
    \end{enumerate}
\end{definition}

\paragraph{Data Structure Guarantees for $\bh \sim \calH(\tau, \ell)$.} It is important to note that, for each $x \in \EMD_s(\{0,1\}^d)$, we may compute $\bh(x)$ in time $O(sd)$. This is because, even though the vector $\bh(x)$ lies in hypercube of dimensionality as high as $s \times 2^{2^L}$, the vectors $\bh(x)$ have at most $s$ non-zero coordinates. We may identify the at-most-$s$ non-zero entries of $\bchi(x, u, k)$ in $O(sd)$ time, and we can generate and store the corresponding Bernoulli random variables $\bC_{u,k}$ with a constant-time overhead per access. If we always store the values of $\bC_{u,k}$ generated after each query $\bh(x)$ (since there are at most $s$ such Bernoulli random variables being generated), we may implement evaluations to $\bh \sim \calH(\tau, \ell)$ as a data structure, whose initialization and query time is $O(sd)$. 



\firstingred*

\begin{proof}
In order for $\bh(x) \neq \bh(y)$, there must exists at least one $u \in \{0,1\}^{2^{\ell}}$ and $k \in [s]$ where $\bC_{u, k} = 1$, and $\bchi(x, u, k) \neq \bchi(y, u, k)$. Thus, we can upper bound the probability that $\bh(x) \neq \bh(y)$ by 
\begin{align*}
\Prx_{\bh \sim \calH(\tau, \ell)}\left[ \bh(x) \neq \bh(y) \right] &= \Prx\left[ \exists u, k \text{ s.t } \bchi(x_0, u, k) \neq \bchi(y, u, k) \text{ and } \bC_{u, k} = 1\right]\\
&\leq \frac{d}{\tau \cdot 2^{\ell+1}} \Ex_{\bphi \sim \calH_{2^{\ell}}}\left[\sum_{u \in \{0,1\}^{2^{\ell}}} \sum_{k=1}^s \ind\left\{ \bchi(x, u, k) \neq \bchi(y, u, k)\right\} \right].
\end{align*}
Suppose we let $a_1,\dots, a_s \in \{0,1\}^d$ denote the elements of $x$, and $b_1,\dots, b_s \in \{0,1\}^d$ denote the elements of $y$; where we re-index the elements so that $a_i$ is matched to $b_i$ in the matching which realizes $\EMD(x, y)$. Then, we deterministically satisfy (for all choices of $\bphi$),
\[ \sum_{u \in \{0,1\}^{2^{\ell}}} \sum_{k=1}^s \ind\{ \bchi(x, u, k) \neq \bchi(y, u, k) \} \leq \sum_{i=1}^s 2 \cdot \ind\{ \bphi(a_i) \neq \bphi(b_i) \}.  \]
We may thus upper bound
\begin{align*}
\Prx_{\bh \sim \calH(\tau, \ell)}\left[ \bh(x) \neq \bh(y)\right] \leq \frac{2d}{\tau \cdot 2^{\ell+1}} \sum_{i=1}^s \Prx_{\bphi \sim \calH_{2^{\ell}}}\left[ \bphi(a_i) \neq \bphi(b_i) \right] \leq \frac{d}{\tau \cdot 2^{\ell}} \sum_{i=1}^s \frac{2^{\ell} \|a_i - b_i\|_1}{d} = \frac{\EMD(x, y)}{\tau}.
\end{align*}
\end{proof}

\subsection{Locally-Dense Points}

In this section, we derive the main consequence of locally-dense points, which will become a crucial ingredient in Lemma~\ref{lem:ingred-3}. We will let $\mu$ denote a distribution over points in $\EMD_s(\{0,1\}^d)$ and refer to the hash families $\calH(\tau, \ell)$ defined in Definition~\ref{def:data-ind-hash}. Recall the definition of locally-dense points (which we reproduce below).

\secondingred*

For any two subsets $x$ and $z$ of vectors in $\{0,1\}^d$, we let the Chamfer distance from $x$ to $z$ be given by
\[ \textsf{Chamfer}(x, z) = \sum_{a \in x} \min_{b \in z} \|a - b\|_1. \]
Notice that $\textsf{Chamfer}(\cdot,\cdot)$ is an asymmetric measure ($\textsf{Chamfer}(x,z)$ is not equal to $\textsf{Chamfer}(z, x)$). The main consequence of the above definition is that a point $x \in \EMD_s(\{0,1\}^d)$ which is locally-dense will have a small Chamfer distance to the union of elements in a (relatively) small sample from $\mu$. 
\begin{lemma}\label{lem:dense-conse}
Let $\mu$ denote a distribution supported on $\EMD_{s}(\{0,1\}^d)$. If, for parameters $\alpha, \tau > 0$, a point $x \in \EMD_{s}(\{0,1\}^d)$ is $(\alpha, \tau)$-locally dense with respect to $\mu$, then as long as $m = \omega(\log(sd)/\alpha)$,
\begin{align*}
\Ex_{\by_1,\dots, \by_m \sim \mu}\left[ \textsf{\emph{Chamfer}}\left(x, \bigcup_{i=1}^m \by_i\right) \right] \leq (\tau + s) \cdot \polylog(s d / \alpha).
\end{align*}
\end{lemma}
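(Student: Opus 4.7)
The strategy is to turn local density into level-by-level structural control over how $x$ hashes relative to samples from $\mu$, and chain these controls across levels to exhibit a low-cost matching from $x$ into $\bigcup_j \by_j$. First, I would unpack the collision probability of $\calH(\tau,\ell)$ from Definition~\ref{def:data-ind-hash}. A draw $\bh\sim\calH(\tau,\ell)$ is determined by a random projection $\bphi\sim\calH_{2^\ell}$ together with independent Bernoullis $\bC_{u,k}\sim\mathrm{Ber}(\beta_\ell)$ for $\beta_\ell:=d/(\tau\cdot 2^{\ell+1})$; the event $\bh(x)=\bh(\bu)$ holds exactly when $\bC_{u,k}=0$ at every $(u,k)$ where $\bchi(x,u,k)\neq\bchi(\bu,u,k)$. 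Setting $W_\ell(x,\bu,\bphi):=\sum_u |S(x,u)-S(\bu,u)|$ with $S(x,u)=|\{a\in x:\bphi(a)=u\}|$, the collision probability equals $\Ex_{\bphi,\bu}[(1-\beta_\ell)^{W_\ell}]$, and local density forces this to be at least $\alpha$ for every level $\ell$. A Markov-style argument then gives $\Prx_{\bphi,\bu}[W_\ell\le t_\ell]\geq \alpha/2$ with $t_\ell:=2\log(2/\alpha)/\beta_\ell = O(\log(1/\alpha)\cdot \tau\cdot 2^\ell/d)$, and whenever $W_\ell\le t_\ell$, the pair $(x,\bu)$ admits a matching sending paired elements into the same $\bphi$-bucket, with at most $t_\ell/2$ leftover elements on each side.

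Next I would draw $m=\Theta(\log(sd)/\alpha)$ i.i.d.\ samples $\by_1,\dots,\by_m\sim\mu$ together with $L+1$ independent ``fictitious'' hashes $\bphi_0,\dots,\bphi_L$ used only in the analysis. A Chernoff-plus-union-bound argument ensures that with probability $1-1/\poly(sd)$, for every level $\ell$ there is an index $j(\ell)$ with $W_\ell(x,\by_{j(\ell)},\bphi_\ell)\le t_\ell$. Conditioning on this event, I would sweep levels from the finest $\ell=L$ down to the coarsest $\ell=0$, at each step greedily assigning every as-yet-unassigned $a\in x$ to its nearest same-$\bphi_\ell$-bucket partner $\sigma(a)\in\by_{j(\ell)}$ whenever one exists. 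Writing $\ell(a)$ for the level at which $a$ is assigned, and setting $t_{L+1}:=2s$ by convention, the level-$(\ell+1)$ leftover bound gives $|\{a:\ell(a)\leq\ell\}|\le t_{\ell+1}/2$, and hence $|\{a:\ell(a)=\ell\}|\le t_{\ell+1}/2$ as well.

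For the cost, the key identity is
\[
\Ex_{\bphi_\ell}\!\bigl[\|a-b\|_1\,\indi\{\bphi_\ell(a)=\bphi_\ell(b)\}\bigr] \;=\; \|a-b\|_1\cdot(1-\|a-b\|_1/d)^{2^\ell} \;\leq\; C\cdot\frac{d}{2^\ell},
\]
since $k\mapsto k(1-k/d)^{2^\ell}$ is maximized at $k\approx d/2^\ell$. Combining the count bound with this per-pair control gives
\[
\Ex[\Chamfer(x,\cup_j\by_j)] \;\lesssim\; \sum_{\ell=0}^L \frac{t_{\ell+1}}{2}\cdot \frac{d}{2^\ell} \;=\; O(L\cdot\log(1/\alpha)\cdot\tau)\;+\;O(s),
\]
which is $(\tau+s)\cdot\polylog(sd/\alpha)$ after absorbing the low-probability failure of the conditioned event (contributing at most $1/\poly(sd) \cdot s d = o(1)$) and the trivial contribution of the at-most-$t_0/2$ elements unmatched even at $\ell=0$, each of which contributes at most $d$ to Chamfer (totaling $O(\log(1/\alpha)\tau)$).

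\textbf{Main obstacle.} The subtle step is converting the unconditional per-pair bound on $\Ex[\|a-b\|_1\,\indi\{\bphi_\ell(a)=\bphi_\ell(b)\}]$ into a bound on the greedy matching's total expected cost, since the matching $\sigma$ itself depends on $\bphi_\ell$. A clean resolution is to define $\sigma(a)$ as the nearest Hamming-neighbor among $a$'s same-$\bphi_\ell$-bucket partners (rather than an arbitrary such partner), so that $\|a-\sigma(a)\|_1\leq\|a-b\|_1$ for every same-bucket $b$, and then aggregate the per-pair unconditional bounds via an averaging-within-bucket step that avoids delicate conditioning. A secondary concern is ensuring the mutual independence of the $L+1$ fictitious hashes across levels, but this is harmless because those hashes are used only for the analysis and are not part of the data structure itself.
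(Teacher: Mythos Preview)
Your plan has two genuine gaps, and the second one is fatal to this route.

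\textbf{Gap 1 (the Chernoff step).} Your Markov argument gives $\Prx_{\bphi_\ell,\bu}[W_\ell\le t_\ell]\ge\alpha/2$, a bound on the \emph{joint} probability. When you draw a single $\bphi_\ell$ per level and $m$ i.i.d.\ samples $\by_j$, the failure probability at level $\ell$ is $\Ex_{\bphi_\ell}\bigl[(1-p(\bphi_\ell))^m\bigr]$ where $p(\phi)=\Prx_{\bu}[W_\ell\le t_\ell\mid\phi]$. Knowing only $\Ex_\phi[p(\phi)]\ge\alpha/2$ does not make this small: if $p(\phi)\in\{0,1\}$ with $\Prx[p=1]=\alpha/2$, the failure probability is $1-\alpha/2$ regardless of $m$. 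This is patchable (e.g.\ draw $m$ independent \emph{pairs} $(\bphi_\ell^{(j)},\by_j)$ per level), but as written the step is invalid.

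\textbf{Gap 2 (the cost bound).} The more serious problem is that the identity $\Ex_{\bphi_\ell}\bigl[\|a-b\|_1\,\indi\{\bphi_\ell(a)=\bphi_\ell(b)\}\bigr]\le C\,d/2^\ell$ controls the \emph{product} of distance and collision probability; it says nothing about distance \emph{conditioned} on collision. Your greedy assignment conditions on the event that $a$ has a same-bucket partner in $\by_{j(\ell)}$, and on that event uses $\|a-\sigma(a)\|_1$. But ``small $W_\ell$'' is a statement purely about bucket-occupancy profiles, not about who collides with whom: it is perfectly consistent with $W_\ell=0$ that every element of $x$ lands in the same bucket as an element of $\by_{j(\ell)}$ at Hamming distance $\Theta(d)$. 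Concretely, if every $b\in\by_{j(\ell)}$ is at distance $k$ from $a$ with $(1-k/d)^{2^\ell}\approx 1/s$, then $a$ gets a same-bucket partner with probability $\approx 1-1/e$, and $\|a-\sigma(a)\|_1=k$, which can be $\Theta(d)\gg d/2^\ell$. Your ``nearest-partner plus averaging'' fix does not help: taking the nearest colliding $b$ is useless when \emph{all} colliding $b$'s are far, and averaging within the bucket still leaves you with a quantity that can be $\Theta(d)$ under this conditioning. Since $j(\ell)$ also depends on $\bphi_\ell$, the dependence is even worse than this isolated calculation suggests.

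\textbf{What the paper does instead.} The paper never tries to infer distance from bucket collision. It works directly with distances: for each radius $\rho$, it chooses the level $\ell$ with $2^\ell\approx d\log(s/\alpha)/\rho$, so that any pair at distance $>\rho$ is separated by $\bphi_\ell$ with probability $\ge 1-\alpha/s^2$. Then, for any single $y$ with $\Prx_{\bh\sim\calH(\tau,\ell)}[\bh(x)=\bh(y)]\ge\alpha/2$, the set $\calE(y,\rho)$ of elements of $x$ with no $\rho$-close partner in $y$ must be small (at most $O(\tau\log^2(s/\alpha)/\rho)$), because otherwise these elements would almost surely occupy buckets disjoint from $y$ and force the collision probability below $\alpha/2$. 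Writing $\Ex[\Chamfer(x,\cup_j\by_j)]=s\int_0^\infty\Prx[\forall j:\ba\in\calE(\by_j,\rho)]\,d\rho$ and using that with probability $\ge 1-(1-\alpha/2)^m$ some $\by_j$ is of the above type (a statement about $\by_j$ alone, so Chernoff applies cleanly), the integral evaluates to $(\tau+s)\cdot\polylog(sd/\alpha)$. The point is that the level is chosen as a function of the target distance $\rho$, rather than trying to recover a distance from a collision after the fact.
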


\subsubsection{Proof of Lemma~\ref{lem:dense-conse}}

We consider a fixed point $x$ which is $(\alpha, \tau)$-locally dense with respect to $\mu$, and we let $V_s \subset \EMD_{s}(\{0,1\}^d)$ denote the subset of points $y \in \EMD_{s}(\{0,1\}^d)$ which satisfy
\begin{align*}
    \Prx_{\bh \sim \calH(\tau, \ell)}\left[ \bh(x) = \bh(y)\right] \geq \alpha / 2.
\end{align*}
Notice that, from an averaging argument, $\Prx_{\by \sim \mu}\left[ \by \in V_s \right] \geq \alpha / 2$. So fix $y \in V_s$, and for any $\rho > 0$ let $\calE(y, \rho)$ denote the subset of elements of $x$ which do not contain any element of $y$ within distance $\rho$, i.e.,
\[ \calE(y, \rho) = \left\{ a \in x : \forall b \in y, \|a - b\|_1 > \rho \right\}. \]

\begin{claim}\label{cl:few-empty}
For any $y \in V_s$ and $\rho \geq 1$, the set $\mathcal{E}(y, \rho)$ has size 
\[ |\calE(y, \rho)| \leq \max\left\{ \frac{12 \tau \cdot \log^2(4s/\alpha)}{\rho}, s \right\} \]
\end{claim}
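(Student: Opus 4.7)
The plan is to prove the claim by contrapositive: if $|\calE(y, \rho)|$ is large, then at some level $\ell^{*} \in \{0,\dots,L\}$ the collision probability $\Prx_{\bh \sim \calH(\tau, \ell^{*})}[\bh(x) = \bh(y)]$ is strictly less than $\alpha/2$, contradicting $y \in V_s$. I will pick $\ell^{*}$ to be the smallest index such that $2^{\ell^{*}} \geq d \log(8s/\alpha) / \rho$ (so $2^{\ell^{*}} \leq 2 d \log(8s/\alpha)/\rho$ as well, and this choice fits inside $\{0,\dots,L\}$ for $L = \Theta(\log d)$ whenever $\rho \geq 1$). The upshot of this level is that for any $a \in \calE(y,\rho)$ and any $b \in y$, the coordinate-sampling construction of $\bphi \sim \calH_{2^{\ell^{*}}}$ gives $\Prx[\bphi(a) = \bphi(b)] = (1 - \|a-b\|_1/d)^{2^{\ell^{*}}} \leq \exp(-\rho \cdot 2^{\ell^{*}}/d) \leq \alpha/(8s)$. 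A union bound over the $s$ elements of $y$ shows that the event $E_a := \{\bphi(b) \neq \bphi(a) \text{ for all } b \in y\}$ holds with probability at least $1 - \alpha/8$.

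Next I will aggregate. Writing $E := |\calE(y,\rho)|$ and $N' := \sum_{a \in \calE(y,\rho)} \ind[E_a]$, linearity gives $\Ex_{\bphi}[N'] \geq (1-\alpha/8) E$, and Markov's inequality applied to the non-negative random variable $E - N'$ yields $\Prx_{\bphi}[N' < E/2] \leq \alpha/4$. The key step is to connect $N'$ to the number of buckets where $\bchi(x,\cdot,\cdot)$ and $\bchi(y,\cdot,\cdot)$ disagree. Let $k_x(u), k_y(u)$ be the number of elements of $x,y$ mapped by $\bphi$ to bucket $u$, and let $N := \sum_{u, k} \ind\{\bchi(x,u,k) \neq \bchi(y,u,k)\} = \sum_u |k_x(u) - k_y(u)|$. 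When $E_a$ holds for $a \in \calE(y,\rho)$, the bucket $\bphi(a)$ satisfies $k_y(\bphi(a)) = 0$ and $k_x(\bphi(a)) \geq 1$, so each such $a$ contributes $1$ to $\sum_{u : k_y(u) = 0} k_x(u)$, giving $N \geq N'$.

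Conditional on $\bphi$, collision requires $\bC_{u,k} = 0$ for all $N$ mismatched pairs, so $\Prx_{\bC}[\bh(x) = \bh(y) \mid \bphi] = (1-p)^N \leq \exp(-pN)$ where $p = d/(\tau \cdot 2^{\ell^{*}+1}) \geq \rho/(4\tau \log(8s/\alpha))$. Taking expectations and splitting on $\{N < E/2\}$,
\[
\Prx_{\bh}[\bh(x) = \bh(y)] \leq \Prx_{\bphi}[N < E/2] + \exp(-pE/2) \leq \frac{\alpha}{4} + \exp\!\left(-\frac{pE}{2}\right).
\]
If $E > 8\tau \log(4/\alpha) \log(8s/\alpha) / \rho$, then the second term is strictly below $\alpha/4$, giving $\Prx_\bh[\bh(x)=\bh(y)] < \alpha/2$, contradicting $y \in V_s$. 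Hence $|\calE(y,\rho)| \leq 8\tau \log(4/\alpha) \log(8s/\alpha)/\rho \leq 12 \tau \log^2(4s/\alpha)/\rho$ (after absorbing constants for $s \geq 2$). Combined with the trivial bound $|\calE(y,\rho)| \leq s$, this implies the claimed bound.

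The main technical obstacle is the aggregation step: counts $\bchi(x,u,k)$ are indexed by $(u,k)$ pairs rather than by individual elements, so $N$ does not decompose as an independent sum. The resolution is the pointwise inequality $N \geq \sum_{u : k_y(u) = 0} k_x(u)$, which lets a single high-probability event per element $a \in \calE(y,\rho)$ translate into a mismatched pair and thereby drive the Bernoulli $\bC_{u,k}$'s to cause separation. A minor additional care is needed to ensure $\ell^{*} \leq L$; this holds because $\rho \geq 1$ forces $2^{\ell^{*}} \leq 2d \log(8s/\alpha)$, which is at most $2^L$ when $L = \Theta(\log d)$ is chosen with a sufficient constant.
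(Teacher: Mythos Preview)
Your proof is correct and follows essentially the same route as the paper: pick the level $\ell$ so that $2^{\ell} \approx d\log(s/\alpha)/\rho$, observe that collision requires $\bC_{u,k}=0$ on every mismatched pair, and lower bound the mismatch count $N$ by the number of elements of $\calE(y,\rho)$ that land in a $y$-free bucket under $\bphi$. The only difference is in how you control that last quantity: the paper uses a union bound over all $s^2$ pairs in $\calE(y,\rho)\times y$ to conclude that with probability $1-o(\alpha)$ \emph{every} element of $\calE(y,\rho)$ avoids $y$ (hence $N\geq |\calE(y,\rho)|$), whereas you use Markov on the count $N'=\sum_a\ind[E_a]$ to get $N\geq |\calE(y,\rho)|/2$ with probability $\geq 1-\alpha/4$. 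Both variants plug into the same $\exp(-pN)$ bound and give the same conclusion with matching constants.
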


\begin{proof}
First, note that the above statement is trivial once $\rho = d$ since all elements are in $\{0,1\}^d$, so as long as $y$ is non-empty, there cannot be any elements in $x$ whose distance to all of $y$ is larger than $d$. In addition, the set $\calE(y, \rho)$ always contains at most $s$ elements, since it is a subset of $x \in \EMD_{s}(\{0,1\}^d)$. Thus, we consider $\rho$ between $1$ and $d$; here, we may consider the smallest setting of $\ell$ in $\{0, \dots, L \}$\footnote{We note that $L$ is a large enough factor of $O(\log d)$ so that $2^{\ell}$ may be as high as $3d \log(s/\alpha) / \rho$, since $s$ and $d$ are polynomially related, and $\alpha$ will be set to a small enough constant.} which satisfies
\begin{align} 
\frac{3d \cdot \log(s/\alpha)}{\rho} \leq 2^{\ell} \leq \frac{6d \cdot \log(s/\alpha)}{\rho},  \label{eq:ell-val}
\end{align}
and recall that the hash function $\bh \sim \calH(\tau,\ell)$, after sampling $\bphi \sim \calH_{2^{\ell}}$, will sub-sample, for each $u \in \{0,1\}^{2^{\ell}}$ and $k \in [s]$ an indicator from $\Ber(d/ (\tau 2^{\ell+1}))$ and consider the values of $\bchi(y, u, k)$ and $\bchi(x, u, k)$ for $u, k$ with $\bC_{u,k} = 1$. Thus, suppose that we define the random set $\bZ$ which depends on the draw $\bphi \sim \calH_{2^{\ell}}$, given by
\begin{align*}
\bZ = \Big\{ (u, k) \in \{0,1\}^{2^{\ell}} \times [s] : \bchi(x, u, k) \neq \bchi(y, u, k)\Big\},
\end{align*}
and note that in order for $\bh(x) = \bh(y)$, we must have avoided setting $\bC_{u, k} = 1$ for $(u,k) \in \bZ$---otherwise, the coordinate corresponding to $(u,k)$ in $\bh(x)$ differs from that of $\bh(y)$. Since $y \in V_s$, 
\begin{align*}
\frac{\alpha}{2} \leq \Prx_{\bh \sim \calH(\tau, \ell)}\left[ \bh(x) = \bh(y) \right] &= \Ex_{\bh \sim \calH(\tau,\ell)}\left[ \left(1 - \frac{d}{\tau \cdot 2^{\ell+1}} \right)^{|\bZ|}\right] \leq \Ex_{\bh \sim \calH(\tau, \ell)}\left[ \exp\left( - \frac{d}{ \tau \cdot 2^{\ell+1}} \cdot |\bZ|\right)\right] \\
	&\leq \Prx_{\bphi \sim \calH_{2^{\ell}}}\left[ |\bZ| \leq \frac{\tau \cdot 2^{\ell+1} \cdot \log(4/\alpha)}{d}\right] + \frac{\alpha}{4},
\end{align*}
which implies that $\bZ$ must have size smaller than $\tau \cdot 2^{\ell+1} \log(4/\alpha) / d$ with probability at least $\alpha / 4$. Suppose, for the sake of contradiction, that the set $\calE(y, \rho)$ has at size
\[ |\calE(y, \rho)| > \dfrac{12\tau \cdot \log^2(4s/\alpha)}{\rho} \geq \frac{\tau \cdot 2^{\ell+1} \cdot \log(4/\alpha)}{d}, \]
where the second inequality is by the upper bound in (\ref{eq:ell-def}). Then, we may lower bound $|\bZ|$ by considering the elements from $\calE(y, \rho)$ which never collide with any element from $y$ under $\bphi$. In particular, if all elements of $\calE(y, \rho)$ have no elements from $y$ colliding, these contribute to entries $(u, k)$ of $\bZ$. Thus,
\begin{align*}
\Prx_{\bphi \sim \calH_{2^{\ell}}}\left[ |\bZ| > \frac{\tau \cdot 2^{\ell+1}\cdot \log(4/p_2)}{d}\right] &\geq \Prx_{\bphi \sim \calH_{2^{\ell}}} \left[ \forall a \in \calE(y, r_{\ell}), \forall b \in y : \bphi(a) \neq \bphi(b) \right] \\
			&= 1 - \Prx_{\bphi \sim \calH_{2^{\ell}}}\left[ \exists a \in \calE(y, \rho), \exists b \in y : \bphi(a) = \bphi(b) \right]\\
			&\geq 1 - |\calE(y, \rho)| \cdot s \cdot \left(1 -\frac{\rho}{d}\right)^{2^{\ell}} \geq 1 - s^2 \cdot \exp\left(- \frac{2^{\ell} \cdot \rho}{d} \right)\\
            &\geq 1 - o(\alpha),
\end{align*}
by setting of $\ell$ (the lower bound in (\ref{eq:ell-def})). This is a contradiction, so we obtain a bound on $|\calE(y, \rho)|$.
\end{proof}

We now conclude the proof of Lemma~\ref{lem:dense-conse}. Below, we write $\ba \sim x$ to mean sampling an element $a$ from $x \in \EMD_s(\{0,1\}^d)$ uniformly at random. Then, we may write
\begin{align*}
\Ex_{\by_1,\dots, \by_m \sim \mu}\left[ \textsf{Chamfer}\left(x, \bigcup_{i=1}^m \by_i\right)\right] &= s \int_{\rho:0}^{\infty} \Prx_{\substack{\by_1,\dots, \by_m \sim \mu \\ \ba \sim x}}\left[ \forall i \in [m] : \ba \in \calE(\by_i, \rho) \right]  \\
            &\leq \int_{\rho:1}^{d} \frac{12 \tau \cdot \log^2(4s/\alpha)}{\rho} \cdot d\rho + sd \left( 1 - \frac{\alpha}{2}\right)^{m} + s \\
            &\leq 12 \tau \cdot \log^2(4s/\alpha) \cdot \log d + sd \left(1 - \frac{\alpha}{2} \right)^{m} + s
\end{align*}
\ignore{
\begin{proof}[Proof of Lemma~\ref{lem:few-lonely}]
Finally, we can conclude the proof of Lemma~\ref{lem:few-lonely} in the following way. We consider the following experiment where
\begin{itemize}
\item We sample $\ba \sim x_0$ uniformly at random, and $\bx \sim S(x_0)$.
\item We check whether or not $\ba \in \calE(\bx, r_{\ell})$. 
\end{itemize}
On the one hand, we can upper bound the probability that $\ba \in \calE(\bx, r_{\ell})$ using the above claims. First, for any $\bx \in S(x_0)$, there are at most $s$ elements in $x_0$ and only a few in $\calE(\bx, r_{\ell})$ by Claim~\ref{cl:few-empty}. Thus, we have that
\begin{align*}
\Prx_{\substack{\ba \sim x_0 \\ \bx \sim S(x_0)}}\left[ \ba \in \calE(\bx, r) \right] \leq \frac{1}{2^{\ell}} \cdot \frac{\log(4/p_2)}{\eps}. 
\end{align*}
On the other hand, consider the case that we sample $\ba \in \Lone(x_0, r_{\ell}, p_2 n/4)$. Then, $N(\ba, r)$ contains at most $p_2 n/4$ elements and thus at most $p_2 n/4$ points $x \in X$ where $\ba \notin \calE(x, r_{\ell})$. By Claim~\ref{cl:s-x-0-large}, $|S(x_0)| \geq p_2 n / 2$ and thus at least half of $x \in S(x_0)$ have $\ba \in \calE(x, r_{\ell})$. In particular, when we sample $\bx \sim S(x_0)$, the probability that $\ba \in \calE(\bx, r_{\ell})$ is at least $1/2$. In particular, we have lower bounded
\begin{align*}
\Prx_{\substack{\ba \sim x_0 \\ \bx \sim S(x_0)}}\left[ \ba \in \calE(\bx, r)\right] \geq \dfrac{|\Lone(x_0, r, p_2 n/4)|}{2s}
\end{align*}
Putting both together, we have
\begin{align*}
|\Lone(x_0, r, p_2 n / 4)| \leq \frac{s}{2^{\ell}} \cdot \frac{2\log(4/p_2)}{\eps}.
\end{align*}
\end{proof}}

\section{Ingredient 3: $\SampleTree$ and Proof of Lemma~\ref{lem:ingred-3}}
\label{sec:7}
In this section, we show the proof of Lemma~\ref{lem:ingred-3}, which gives the final ingredient of the data-dependent hashing scheme for $\EMD_s(\{0,1\}^d)$, and concludes the proof of Theorem~\ref{thm:data-dep-hashing}. We reproduce the lemma below and proceed by first describing the $\SampleTree$ embedding, and then giving two lemmas which state the expansion and contraction properties of $\SampleTree$ that give rise to Lemma~\ref{lem:ingred-3}. The remainder of the section is then devoted to showing the expansion and contraction lemmas.


\thirdingred*

\subsection{The $\SampleTree$ Embedding and Hash Family Construction}\label{sec:sample-tree-def}

In this section, we specify the construction of the hash family $\calH(\mu, \tau, \gamma,\delta)$. We will do so by first specifying the $\SampleTree$ embedding, and then concatenating it with a locality-sensitive hash function in $\ell_1$. In particular, we first describe an algorithm, $\SampleTree$, which takes as input a distribution $\mu$ supported on $\EMD_{s}(\{0,1\}^d)$ and a parameter $m$ (which, as per Lemma~\ref{lem:dense-conse}, will be set to $\omega(\log(sd)/\alpha)$), and outputs a weighted tree $\bT$ from an execution to $\quadtree$ in Figure~\ref{fig:DDquadtree-prelims}.


\begin{figure}[h!]
\begin{framed}
    \noindent Subroutine $\SampleTree(\mu, m)$
    
    \begin{flushleft}
        \noindent {\bf Input:} A distribution $\mu$ supported on $\EMD_{s}(\{0,1\}^d)$, and positive integer $m$.
        
        \noindent {\bf Output:} A weighted tree $\bT$ obtained from an execution of $\quadtree$.
        
        \begin{enumerate}
                \item Take $m$ random i.i.d. samples $\by_1,\dots, \by_m \sim \mu$, and let $\bOmega = \bigcup_{i=1}^m \by_i \subset \{0,1\}^d$. 
            \item Let $\hat{\bOmega} \subset \{0,1\}^d$ denote the set of (at most $ms(d+1)$ elements)
            \[ \hat{\bOmega} = \nbr(\bOmega) \eqdef \left\{ b' \in \{0,1\}^d : \exists b \in \bOmega, \|b - b'\|_1 \leq 1 \right\}.\]
            \item Run and return $\quadtree(\hat{\bOmega}, \xi)$ (Figure \ref{fig:DDquadtree-prelims}) where we set $\xi = \Theta(\log (msd/\delta))$.
        \end{enumerate}
    \end{flushleft}
\end{framed}
\caption{The $\SampleTree$ Algorithm.}\label{fig:quadtree-embed}
\end{figure}

To describe $\SampleTree$ algorithm, we introduce the notations of neighborhood. For any element $e \in \{0,1\}^d$, let the neighborhood of $e$ be
$$\nbr(e) \coloneqq \left\{p \in \{0,1\}^d \mid \|e - p\|_1 \leq 1\right\}$$
We extend the above notation so that we can apply it to a set of elements as we do in Figure~\ref{fig:quadtree-embed}. For any set $\Omega \subseteq \{0,1\}^d$, let the neighborhood of $\Omega$ be 
$$\nbr(\Omega) \coloneqq \left\{ p \in \{0,1\}^d \mid \exists e \in \Omega, \|e - p\|_1 \leq 1 \right\}$$ 
The $\SampleTree$ sub-routine (in Figure~\ref{fig:quadtree-embed}) specifies a tree metric $\bT$, and a natural association of any element $a\in \{0,1\}^d$ to a leaf in $\bT$ (each element $a \in \{0,1\}^d$ maps to a unique leaf in $\bT$, since the final hash function $\bphi_{L+1}\colon \{0,1\}^d \to \{0,1\}^d$ is set to the identity). Thus, we let $\EMD_s(\bT)$ denote the metric space on size-$s$ tuples of leaves in $\bT$. We let $x,y \in \EMD_s(\bT)$, with $x = (x_1,\dots, x_s)$ and $y = (y_1,\dots, y_s)$ where $x_1,\dots, x_n, y_1,\dots, y_n$ are leaves in $\bT$, and
\[ \EMD_{\bT}(x,y) = \min_{\substack{\pi \colon [s]\to[s] \\ \text{bijection}}} \sum_{i=1}^s d_{\bT}(x_i, y_{\pi(i)}),\]
where $d_{\bT}(\cdot,\cdot)$ denotes the length of the shortest path between two leaves. We thus have the following (straight-forward) association of points $x \in \EMD_s(\{0,1\}^d)$ to points in $\EMD_s(\bT)$: if the point $x \in \EMD_s(\{0,1\}^d)$ is specified by the $s$ elements $x_1,\dots, x_s \in \{0,1\}^d$, we consider the point $x' \in \EMD_s(\bT)$ given by the $s$-tuple of mapped elements $x_1,\dots, x_s$ which are leaves in $\bT$. We abuse notation and refer to $x \in \EMD_s(\{0,1\}^d)$ and $x \in \EMD_{s}(\bT)$ for clarity---these are in bijective correspondence and should be clear from context whether we will use the sampled tree $\bT$, or the original representation in $\{0,1\}^d$.

\paragraph{Data Structure Guarantees for $\SampleTree$.} It is important to note (and similarly to Definition~\ref{def:data-ind-hash}) that the running time of naively executing $\SampleTree$ will incur exponential-in-$d$ factors, since Line~3 of $\quadtree$ iterates through $u \in \{0,1\}^{2^{\ell}}$ (where $\ell$ may be as high as $\poly(d)$). Therefore, the total number of edges in $\bT$ will incur exponential-in-$d$ factors. However, the number of edges of $\bT$ whose weight depends on the sample $\by_1,\dots, \by_m \sim \mu$ is only $ms\cdot L$, as there are at most $s$ elements in each of the $m$ points $\by_1,\dots, \by_m$ and these go down $L$ edges; the rest of the edges have weights are $\xi \cdot d / 2^{\ell}$, which only depend on the depth $\ell$ and thus be (implicitly) maintained.\footnote{Even though the parameter $\xi$ did not play a role in Section~\ref{sec:Dynamic Embedding}, it will be important for Lemma~\ref{lem:sample-tree-contr}.} Even though $2^{\ell}$ may be larger than $d$ (this was useful in the proof of Claim~\ref{cl:few-empty}), it suffices to maintain the subset of sampled coordinates from $[d]$ (which takes $O(d)$ space). We thus have the following two facts, which we will use to implicitly compute the embedding of points in $\EMD_{s}(\{0,1\}^d)$ into $\ell_1$. 

\begin{fact}[(Folklore) Isometric Embedding of a Tree Metric into $\ell_1$]\label{fact:ell-1-embed}
    Let $\bT$ be any (rooted) weighted tree with $k$ edges and depth $L+1$:
    \begin{itemize}
        \item  There exists a map $\psi_{\bT} \colon \EMD_{s}(\bT) \to \R^{k}$ which is an isometric embedding into $\ell_1$, i.e., for any $x, y \in \EMD_s(\bT)$, $\EMD_{\bT}(x,y) = \| \psi_{\bT}(x) - \psi_{\bT}(y)\|_1$~(implicit in Section~4 of \cite{C02}).
        \item For $x \in \EMD_s(\bT)$, the vector $\psi_{\bT}(x) \in \R^{k}$ has $(L+1) \cdot s$ non-zero entries.
    \end{itemize}
\end{fact}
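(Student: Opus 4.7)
The plan is to construct the classical edge-indicator embedding and verify both properties directly. For each edge $e = (u, v) \in E(\bT)$, where $v$ is the child endpoint, let $T_v$ denote the subtree rooted at $v$, and for any point $x \in \EMD_s(\bT)$ (viewed as a multi-set of $s$ leaves $x_1, \dots, x_s$), define
\[
\psi_{\bT}(x)_e \;=\; w(e)\cdot \big|\{\, i \in [s] : x_i \in \text{leaves}(T_v)\,\}\big|.
\]
This defines the map $\psi_{\bT} \colon \EMD_s(\bT) \to \R^{|E(\bT)|}$. The sparsity claim is immediate from this definition: each individual leaf $x_i$ lies in $\text{leaves}(T_v)$ for exactly those edges $e$ on the (unique) root-to-$x_i$ path in $\bT$, of which there are at most $L+1$. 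Hence each $x_i$ contributes to at most $L+1$ coordinates of $\psi_{\bT}(x)$, and summing over the $s$ elements yields at most $s(L+1)$ non-zero entries.

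For isometry, denote $x_v := |\{i : x_i \in \text{leaves}(T_v)\}|$ and $y_v$ analogously. By construction,
\[
\|\psi_{\bT}(x) - \psi_{\bT}(y)\|_1 \;=\; \sum_{e=(u,v) \in E(\bT)} w(e) \cdot |x_v - y_v|.
\]
The plan is to match this quantity with $\EMD_{\bT}(x,y)$ via two inequalities. For the lower bound $\EMD_{\bT}(x,y) \geq \|\psi_{\bT}(x) - \psi_{\bT}(y)\|_1$, fix any bijection $\pi \colon [s] \to [s]$ and any edge $e = (u,v)$; removing $e$ splits $\bT$ into two components. The path from $x_i$ to $y_{\pi(i)}$ must cross $e$ whenever exactly one of the two endpoints lies in $T_v$, contributing $w(e)$ to $d_{\bT}(x_i, y_{\pi(i)})$. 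Since $\pi$ is a bijection between multi-sets of equal size, the number of indices whose path crosses $e$ is at least $|x_v - y_v|$ (the net imbalance of leaves in $T_v$ must be transported across $e$). Summing over edges gives the bound.

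For the matching upper bound, I will exhibit a bijection $\pi$ achieving the sum by a greedy bottom-up construction: process internal nodes from deepest to shallowest, and at each node pair up any ``excess'' $x$-elements and $y$-elements whose lowest common ancestor is that node, so the only leaves carried further up through an edge $e = (u,v)$ are exactly $|x_v - y_v|$ unmatched items, each paying $w(e)$ once as they cross. The resulting matching achieves cost exactly $\sum_e w(e) \cdot |x_v - y_v|$, completing the isometry. No step here is really a technical obstacle, since this is a folklore calculation; the only care needed is bookkeeping for the greedy construction to confirm it is a valid bijection of multi-sets of size $s$ and that each edge $e$ is traversed precisely $|x_v - y_v|$ times.
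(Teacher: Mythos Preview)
Your proposal is correct and is exactly the standard folklore construction that the paper alludes to; the paper itself does not give a proof but merely cites this as implicit in~\cite{C02}, so there is nothing to compare against. The edge-indicator embedding you describe, together with the greedy bottom-up matching for the upper bound and the cut argument for the lower bound, is precisely the well-known argument.
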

Note that, the data structure may then provide access to the root-to-leaf path specified by an element $a \in \{0,1\}^d$ to the leaf of $\bT$ where it mapped to. In order to maintain a draw $\bT$ from $\SampleTree(\mu, m)$, the data structure may first read $\mu$ (supported on $n$ points) and take $m$ samples in $O(mn)$ time and then store the data-dependent weights in $O(msdL)$ time. Given a point $x \in \EMD_s(\{0,1\}^d)$, one may then evaluate the sparse representation of $\psi_{\bT}(x)$ by obtaining its root-to-leaf path in $\poly(sd)$ time as well.
 

\paragraph{Expansion and Contraction of $\SampleTree$.} Given the above description of $\SampleTree$ and the corresponding embedding that it produces into $\ell_1$, we state two lemmas below which bound the expansion and contraction of the $\SampleTree$ embedding. The proof of these two lemmas will constitute the bulk of the remainder of the section, and assuming the two lemmas, the proof of Lemma~\ref{lem:ingred-3} follows by concatenation with an $\ell_1$ locality-sensitive hash function.

\begin{lemma}[Expansion of $\SampleTree$]\label{lem:sample-tree-expan}
Consider any pair of points $x, y \in \EMD_{s}(\{0,1\}^d)$, and suppose that $x$ is $(\alpha, \tau)$-locally dense with respect to $\mu$. Then, as long as $m = \omega(\log(sd)/\alpha)$,   
\begin{align*}
\Ex_{\bT}\left[ \EMD_{\bT}(x, y)\right] \leq \EMD(x, y) \cdot \tilde{O}(\log(msd/\delta)) \left(1 + \log\left(\frac{\tau + s}{\EMD(x,y)} + 1\right)\right). 
\end{align*}
over a draw of $\bT$ from $\SampleTree(\mu, m)$,
\end{lemma}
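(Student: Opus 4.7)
The plan is to follow the outline in ``Step 2(b)'' of the introduction. The bound is pointwise in the sample $\bOmega$ (in the sense that we first bound $\Ex_{\bT}[\EMD_{\bT}(x,y)]$ for a fixed $\bOmega$ and then take expectation over $\bOmega \sim \mu^m$). So fix $\bOmega$ and $\hat{\bOmega} = \nbr(\bOmega)$. Let $\pi\colon x \to y$ be the bijection realising $\EMD(x,y)$, and for each $a \in x$ let $\sigma(a) \in \bOmega$ be a vector with $\|a - \sigma(a)\|_1 = \min_{c \in \bOmega}\|a-c\|_1$, so that $\sum_{a \in x}\|a - \sigma(a)\|_1 = \Chamfer(x, \bOmega)$. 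The first step is to reduce bounding $\EMD_{\bT}(x, y)$ to bounding $d_{\bT}(a,\pi(a))$ for each $a \in x$, using the matching $\pi$ as a feasible (suboptimal) bijection for $\EMD_{\bT}$.

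The core technical step is proving, for every fixed pair $a \in x$, $b = \pi(a)$, and $c = \sigma(a)$, the estimate
\begin{equation}\label{eq:proposed-pair-bound}
\Ex_{\bT}\!\left[d_{\bT}(a,b)\right] \leq \tilde{O}(\log(msd/\delta)) \cdot \|a-b\|_1 \cdot \left(1 + \log\!\left(\tfrac{\|a-c\|_1}{\|a-b\|_1}+1\right)\right).
\end{equation}
Following the overview, I decompose the $a$-to-$b$ path in $\bT$ into four segments at the LCA: on the $a$-side, a lower ``data-independent'' segment (edges $(v,v_u)$ with $\Elms(v_u)\cap\hat{\bOmega}=\emptyset$, weight $\xi d/2^{\ell}$) and an upper ``data-dependent'' segment (edges where $\Elms(v_u)\cap\hat{\bOmega}\neq\emptyset$); symmetrically for $b$. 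The data-dependent segments are bounded by appealing to Lemma~\ref{lem:cjlw} on $\hat{\bOmega}\cup\{a,b\}$: the edge weights on $\hat{\bOmega}$ differ from those on $\hat{\bOmega}\cup\{a,b\}$ by at most a constant factor precisely because each relevant node has a nonempty intersection with $\hat{\bOmega}$, giving an $\tilde{O}(\log(msd))\cdot\|a-b\|_1$ contribution. For the data-independent segments, use that an edge at depth $\ell$ is traversed only if \emph{two distinct} sampled coordinates $\bi_{j}, \bi_{j'}$ among the first $2^{\ell}$ coordinates witness a disagreement -- one separating $a$ from $b$, and another separating $a$ from every $c' \in \nbr(c) \subset \hat{\bOmega}$ (needed to force the node to be empty of $\hat{\bOmega}$ vectors; this is where the $+1$ neighborhood is crucial). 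Letting $\ell_b, \ell_c$ be the depths with $d/2^{\ell_b}\asymp\|a-b\|_1$ and $d/2^{\ell_c}\asymp\|a-c\|_1$, splitting the sum over depths into the three regimes (below $\ell_b$; between $\ell_b$ and $\ell_c$; above both) yields geometric sums giving $O(\xi)\cdot\|a-b\|_1\cdot(1+(\ell_b-\ell_c)^+)$, which is the right-hand side of \eqref{eq:proposed-pair-bound} with $\xi=\Theta(\log(msd/\delta))$.

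Given \eqref{eq:proposed-pair-bound}, I combine the pair bounds with two applications of Jensen's inequality. Writing $\calD$ for the distribution on $a\in x$ with mass proportional to $\|a-\pi(a)\|_1/\EMD(x,y)$, concavity of $t \mapsto \log(t+1)$ yields
\begin{align*}
\Ex_{\bT}\!\left[\tfrac{\EMD_{\bT}(x,y)}{\EMD(x,y)}\right]
&\leq \Ex_{\ba\sim\calD}\!\left[\tilde{O}(\log(msd/\delta))\cdot\!\left(1+\log\!\left(\tfrac{\|\ba-\sigma(\ba)\|_1}{\|\ba-\pi(\ba)\|_1}+1\right)\right)\right] \\
&\leq \tilde{O}(\log(msd/\delta))\cdot\!\left(1+\log\!\left(\tfrac{\Chamfer(x,\bOmega)}{\EMD(x,y)}+1\right)\right).
\end{align*}
Finally, I take the expectation over $\bOmega$. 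Since $x$ is $(\alpha,\tau)$-locally dense and $m=\omega(\log(sd)/\alpha)$, Lemma~\ref{lem:dense-conse} gives $\Ex_{\bOmega}[\Chamfer(x,\bOmega)]\leq (\tau+s)\cdot\polylog(sd/\alpha)$. Pulling the expectation inside the outer $\log(\cdot+1)$ (which only helps, by concavity), and absorbing the $\polylog$ into the $\tilde{O}$, gives the claimed
$\EMD(x,y)\cdot\tilde{O}(\log(msd/\delta))\cdot(1+\log((\tau+s)/\EMD(x,y)+1))$.

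The main obstacle I expect is the careful level-by-level bookkeeping for the data-independent segment: correctly identifying \emph{two} distinct coordinate samples needed for both ``split $a$ from $b$'' and ``split $a$ from all of $\nbr(\bOmega)$,'' ensuring the right negative correlation so the product of probabilities yields a telescoping geometric sum above $\ell_c$, and verifying that the contributions between $\ell_b$ and $\ell_c$ produce exactly the $\log(\|a-c\|_1/\|a-b\|_1)$ factor rather than a worse polynomial dependence. The rest -- applying Lemma~\ref{lem:cjlw} on the data-dependent part and the two Jensen applications -- should be essentially routine once \eqref{eq:proposed-pair-bound} is in hand.
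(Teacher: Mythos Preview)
Your proposal is correct and follows essentially the same approach as the paper: the decomposition into data-dependent and data-independent segments, the ``two distinct sampled coordinates'' argument (one witnessing the $a/b$ split, another witnessing the $a/\nbr(c)$ split) that yields the $(r_\ell/d)^2\|a-b\|_1\|a-c\|_1$ probability bound, the three-regime level analysis producing the $\log(\|a-c\|_1/\|a-b\|_1+1)$ factor, the constant-factor comparison of data-dependent weights against $\quadtree(\hat{\bOmega}\cup\{a,b\})$ via Lemma~\ref{lem:cjlw}, and the two Jensen applications followed by Lemma~\ref{lem:dense-conse} are all exactly what the paper does (Lemmas~\ref{lem:sample-tree-expan-ind}, \ref{lem:sample-tree-expan-dep}, \ref{lem:cjlw-mod}, and Claims~\ref{cl:data-ind-claim-2}, \ref{claim:decouple-events}). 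The only cosmetic difference is that the paper optimizes over a free parameter $\gamma$ rather than explicitly naming the levels $\ell_b,\ell_c$, but this lands at the same bound.
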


\begin{lemma}[Non-Contraction of $\SampleTree$]\label{lem:sample-tree-contr}
For any $\delta \in (0, 1)$, consider executing $\quadtree$ (in Figure~\ref{fig:DDquadtree-prelims}) with the parameter
\[ \xi = \Omega(\log(msd/\delta)). \]
Then, for any pair of points $x, y \in \EMD_{s}(\{0,1\}^d)$, over a draw of $\bT$ from $\SampleTree(\mu, m)$,
\[ \Prx_{\bT}\left[ \EMD_{\bT}(x, y) < \EMD(x, y)\right] \leq \delta. \]
\end{lemma}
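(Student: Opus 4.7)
The plan is to reduce $\EMD$ non-contraction to pairwise non-contraction between elements. I will show that for each pair $a \in x$ and $b \in y$, $\Pr[d_{\bT}(a,b) < \|a-b\|_1] \le \delta/s^2$. This suffices because for the bijection $\pi$ that realizes $\EMD_{\bT}(x,y)$, we have $\EMD_{\bT}(x,y) = \sum_i d_{\bT}(x_i, y_{\pi(i)}) \ge \sum_i \|x_i - y_{\pi(i)}\|_1 \ge \EMD(x,y)$ whenever every pairwise $\ell_1$ distance is dominated by its tree distance; a union bound over $s^2$ pairs then gives the desired $\delta$ failure probability.

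Fix a pair $a,b$, and let $\ell^\star$ be the LCA depth of $a,b$ in $\bT$. Since $\bphi_\ell \sim \calH_{2^\ell}$ projects onto $2^\ell$ random coordinates, $\Pr[\bphi_\ell(a) = \bphi_\ell(b)] = (1 - \|a-b\|_1/d)^{2^\ell} \le \exp(-2^\ell \|a-b\|_1/d)$, and by independence across levels, $\Pr[\ell^\star \ge \ell^\dagger] \le \exp(-2^{\ell^\dagger - 1}\|a-b\|_1/d) \le \delta/(2s^2)$ for $\ell^\dagger$ the smallest integer with $2^{\ell^\dagger - 1} \ge (d/\|a-b\|_1)\log(2s^2/\delta)$. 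On the good event $\ell^\star < \ell^\dagger$, the data-independent weight at source depth $\ell^\star$ is $\xi \cdot d / 2^{\ell^\star} \ge \xi \|a-b\|_1 / (2\log(2s^2/\delta)) \ge \|a-b\|_1$ when $\xi = \Omega(\log(msd/\delta))$ with a sufficiently large constant. Hence, if both split-level edges (one on each side of the LCA) happen to be data-independent, their combined weight already gives $d_{\bT}(a,b) \ge 2\|a-b\|_1$, concluding the easy case.

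The main technical step is the remaining case, where one or both split-level edges are data-dependent (meaning $\bElms(v_{\ell^\star+1}(a)) \cap \hat{\bOmega} \ne \emptyset$ or similarly on the $b$-side). To handle this, I will compare $\bT$ to the auxiliary tree $\bT^+$ obtained by running $\quadtree(\hat{\bOmega} \cup \{a,b\}, \xi)$ on the same realization of hash functions $\bphi_0, \dots, \bphi_L$. Because $a,b$ are in the input set of $\bT^+$, Lemma~\ref{lem:cjlw} applied to $\bT^+$ yields $d_{\bT^+}(a,b) \ge \|a-b\|_1$ deterministically, via the triangle inequality on the chain of representatives from $a$'s leaf up to the LCA and back down to $b$'s leaf (which at the leaves are $a$ and $b$ themselves). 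I then establish $d_{\bT}(a,b) \ge d_{\bT^+}(a,b)$ edge-by-edge using the modified weights $\bw'$ from Definition~\ref{def:mod-tree-weights} (noting $\Ex[d_{\bT'}(a,b)] = d_{\bT}(a,b)$, so a deterministic edgewise comparison suffices). Edges data-dependent in both trees contribute the same in expectation up to a correction for potentially sampling $a$ or $b$; edges data-independent in $\bT$ but data-dependent in $\bT^+$ (which arise only along the root-to-leaf paths of $a,b$ where adding $\{a,b\}$ to $\hat{\bOmega}$ makes previously empty subtrees non-empty) have $\bT$-weight $\xi d / 2^\ell$ that can be shown to dominate their $\bT^+$-weight.

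The hard part will be the last weight comparison at deep levels ($2^\ell$ large), where $\xi d / 2^\ell$ might seem smaller than a generic data-dependent weight in $\bT^+$. The resolution relies on a structural observation: each new data-dependent edge in $\bT^+$ has one representative pinned to $a$ or $b$ (since they are the only elements of $(\hat{\bOmega} \cup \{a,b\}) \setminus \hat{\bOmega}$ in that subtree), and the other representative is very close to it in the subtree sense, so the $\bT^+$-weight is the $\ell_1$ distance between $a$ (or $b$) and a nearby element, which is bounded by the contribution of the higher-level data-independent edge already accounted for. Combining the split-depth concentration ($\delta/(2s^2)$) with the comparison-failure event ($\delta/(2s^2)$) and union bounding over $s^2$ pairs delivers the total $\delta$ bound.
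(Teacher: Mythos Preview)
The reduction to pairwise non-contraction and the union bound over $s^2$ pairs is correct and matches the paper (Lemma~\ref{lem:sample-tree-contr-2}). Your treatment of the ``easy case'' where both split-level edges are data-independent (equivalently, all edges on the $a$--$b$ path are data-independent) is also essentially the paper's Lemma~\ref{lem:unlike-to-contract}.

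The gap is in the mixed case. Your plan hinges on showing $d_{\bT}(a,b) \ge d_{\bT^+}(a,b)$ edge-by-edge, but this inequality is false. Consider an edge $(v,v_u)$ on $a$'s path that is data-dependent in both trees. Its $\bT$-weight averages over pairs from $\bElms(v)\cap\hat{\bOmega}$ and $\bElms(v_u)\cap\hat{\bOmega}$; its $\bT^+$-weight additionally folds $a$ into both averages. If, say, $\bElms(v)\cap\hat{\bOmega}=\bElms(v_u)\cap\hat{\bOmega}=\{p\}$, then the $\bT$-weight is $0$ while the $\bT^+$-weight is $\Theta(\|p-a\|_1)$, so the edgewise comparison can fail by an arbitrary amount; your ``up to a correction'' does not repair this. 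Your ``structural observation'' for the transition edge has the same defect: at that edge the $\bT^+$-weight is an average $\ell_1$ distance from $a$ to elements of $\hat{\bOmega}$ that merely agree with $a$ on the coordinates sampled up to that depth---nothing forces them to be $\ell_1$-close to $a$ without a further probabilistic argument you do not supply.

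The paper never compares to $\bT^+$ for non-contraction. In the mixed case it takes $p_1\in\bElms(v_{i_1})\cap\hat{\bOmega}$ at the data-dependent/independent boundary on $a$'s side (and $p_2$ on $b$'s side), writes $\|a-b\|_1\le\|a-p_1\|_1+\|p_1-p_2\|_1+\|p_2-b\|_1$, bounds the middle term by the telescoping triangle inequality along representatives in $\hat{\bOmega}$, and bounds each outer term via Lemma~\ref{lem:unlike-to-contract} applied to $(a,p_1)$ and $(b,p_2)$. Since $p_1,p_2$ depend on the hash draws, this step requires a union bound over all pairs in $\hat{\bOmega}\times\hat{\bOmega}$, of size $(ms(d{+}1))^2$---which is precisely why $\xi$ must be $\Omega(\log(msd/\delta))$ rather than the $\Omega(\log(s/\delta))$ your easy-case analysis uses. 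Your argument never invokes this union bound over $\hat{\bOmega}$, and without it (or an equivalent device) there is no way to control the distances to the elements of $\hat{\bOmega}$ that appear at the boundary.
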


\subsubsection{Proof of Lemma~\ref{lem:ingred-3} assuming Lemma~\ref{lem:sample-tree-expan} and Lemma~\ref{lem:sample-tree-contr}}

In order to prove Lemma~\ref{lem:ingred-3}, we make use of Lemmas~\ref{lem:sample-tree-expan} and~\ref{lem:sample-tree-contr} in order to embed into $\ell_1$, and utilize a locality-sensitive hash function in $\ell_1$. In particular, classic works on locality-sensitive hashing~\cite{IM98, HIM12} give, for any parameter $\gamma > 0$, a distribution over hash functions $\bphi \colon \R^{k} \to U$ which satisfies, for any $x, y \in \R^{k}$
\begin{align}
    \Prx_{\bphi}\left[ \bphi(x) \neq \bphi(y)\right] &\leq \frac{\|x - y\|_1}{\gamma} \label{eq:ell-1-lsh-close}\\
    \Prx_{\bphi}\left[ \bphi(x) = \bphi(y)\right] &\leq \exp\left(-\frac{\|x - y\|_1}{\gamma}\right). \label{eq:ell-1-lsh-far}
\end{align}
Furthermore, it is simple to construct a data structure which maintains a description of a hash function $\bphi$ which is generated ``on-demand,'' such that, if the vector $x \in \R^k$ is sparse and written as its sparse representation, the data structure can output $\bphi(x)$ in time which is linear in the description of $x$. Given these guarantees, check both required properties of Lemma~\ref{lem:ingred-3} whenever we let $\bh \sim \calH(\mu, \tau, \gamma,\delta)$ denote the concatenation of 
\[ \bh \quad:\quad x \in \EMD_s(\{0,1\}^d) \mathop{\longmapsto}^{\text{Id}} x \in \EMD_{s}(\bT) \mathop{\longmapsto}^{\psi_{\bT}} \psi_{\bT}(x) \in \R^k \mathop{\longmapsto}^{\bphi} \bphi(\psi_{\bT}(x)) \in U, \]
where the first (identity) map $x \in \EMD_{s}(\{0,1\}^d)$ to $x \in \EMD_s(\bT)$ is the natural association of the elements of $x$ as vectors in $\{0,1\}^d$ to elements of $x$ as leaves of $\bT$, the second map $\psi_{\bT}$ is the map from Fact~\ref{fact:ell-1-embed}, and the third is the LSH for $\ell_1$ specified in (\ref{eq:ell-1-lsh-close}) and (\ref{eq:ell-1-lsh-far}). We set $m = \poly(\log(sd)/\alpha)$, thus $\xi = \Theta(\log(sd / (\delta \alpha))$ when invoking Lemma~\ref{lem:sample-tree-expan} and Lemma~\ref{lem:sample-tree-contr}. 
\begin{itemize}
\item \textbf{Close Points Collide:} Given any pair of points $x, y \in \EMD_{s}(\{0,1\}^d)$, if $x$ is $(\alpha, \tau)$-locally dense with respect to $\mu$, we use Lemma~\ref{lem:sample-tree-expan} to evaluate:
\begin{align*}
    \Prx_{\bh \sim \calH(\mu, \tau, \gamma,\delta)}\left[ \bh(x) \neq \bh(y)\right] &= \Ex_{\bT}\left[ \Prx_{\bphi}\left[ \bphi(\psi_{\bT}(x)) \neq \bphi(\psi_{\bT}(y)) \right]\right] \stackrel{(\ref{eq:ell-1-lsh-close})}{\leq} \Ex_{\bT}\left[ \dfrac{\| \psi_{\bT}(x) - \psi_{\bT}(y)\|_1}{\gamma}\right] \\
        \overset{(\ref{fact:ell-1-embed})}&{\leq} \Ex_{\bT}\left[ \frac{\EMD_{\bT}(x, y)}{\gamma} \right] \\
        \overset{(\ref{lem:sample-tree-expan})}&{\leq} \dfrac{\EMD(x,y)}{\gamma} \cdot \tilde{O}\left( \log\left(\frac{sd}{\delta \alpha}\right) \right) \cdot \left( 1 + \log\left(\frac{\tau + s}{\EMD(x,y)} + 1 \right) \right),
\end{align*}
where the last inequality simplified $m = \poly(\log(sd)/\alpha)$.
\item \textbf{Far Points Separate:} For any pair of points $x, y \in \EMD_{s}(\{0,1\}^d)$, we use a union bound and Lemma~\ref{lem:sample-tree-contr} to upper bound the probability that $\bh(x) = \bh(y)$. Namely, we have
\begin{align*}
    \Prx_{\bh \sim \calH(\mu, \tau, \gamma,\delta)}\left[ \bh(x) = \bh(y) \right] &\leq \Ex_{\bT}\left[ \Prx_{\bphi}\left[ \bphi(\psi_{\bT}(x)) = \bphi(\psi_{\bT}(y)) \right] \mid \EMD_{\bT}(x,y) \geq \EMD(x,y) \right] \\
    &\qquad\qquad + \Prx_{\bT}\left[ \EMD_{\bT}(x,y) < \EMD(x,y)\right] \\
    &\leq \Ex_{\bT}\left[ \exp\left(-\dfrac{\EMD_{\bT}(x, y)}{\gamma} \right) \mid \EMD_{\bT}(x,y) \geq \EMD(x,y) \right] + \delta \\
    &\leq \exp\left(-\frac{\EMD(x,y)}{\gamma}\right) + \delta,
\end{align*}
where above, we similarly use Fact~\ref{fact:ell-1-embed} to embed $\EMD_s(\bT)$ into $\ell_1$ isometrically, the expression (\ref{eq:ell-1-lsh-far}) for $\bphi$, and finally Lemma~\ref{lem:sample-tree-contr}.
\end{itemize}

\subsection{Proof of Lemma~\ref{lem:sample-tree-expan}}\label{sec:proofofexpand}
We first introduce some notations and consequential observations, which will help  prove Lemma~\ref{lem:sample-tree-expan} by decomposing $\EMD_\bT(x,y)$ into ``data-independent" part and ``data-dependent" part. Some of the notations will also be used later in the proof of Lemma~\ref{lem:sample-tree-contr}. 

\paragraph{Basic notations for $\SampleTree$.} For any distribution $\mu$, integer $m \geq 0$, and any draw of $\bT$ from an execution of $\SampleTree(\mu,m)$ (see Figure~\ref{fig:quadtree-embed} and Figure~\ref{fig:DDquadtree-prelims}), we have the following:
\begin{itemize}
    \item For every element $a \in \{0,1\}^d$, there is a unique root-to-leaf path in $\bT$, given by the sequence of nodes $v_0(a), \dots, v_{L}(a)$, inductively defined by $v_0(a) = v_0$ and 
    \[v_\ell(a) \text{ is the child } v_u\text{ of } v_{\ell-1}(a) \text{ with } a \in \Elms(v_u).\]
    \item For a pair of elements $a,b \in \{0,1\}^d$, let $\Split_\ell(a,b)$ be the indicator variable of the event $v_\ell(a) \neq v_\ell(b)$, i.e., 
    \[\Split_\ell(a,b) = \ind\{v_\ell(a) \neq v_\ell(b)\}.\] 
    \item For a pair of elements $a,b \in \{0,1\}^d$, we can write $d_\bT(a,b)$ as
    \begin{align}
    \label{eq:rewrite-tree-metric-using-split}
        d_\bT(a,b) = \sum_{\ell = 0}^L \Split_{\ell+1}(a,b) \cdot (\bw(v_\ell(a),v_{\ell+1}(a)) + \bw(v_\ell(b),v_{\ell+1}(b)) ).
    \end{align}
\end{itemize}

\paragraph{Decompose $\EMD_\bT(x,y)$ into data-independent and data-dependent parts.} Recall that $\bhomg$ is the neighborhood of elements of points in $\bOmega$ sampled from $\mu$ (Figure~\ref{fig:quadtree-embed}). For $\SampleTree(\mu,m)$, we define the following notations for all $a \in \{0,1\}^d, \ell \in \{0,\dots, L\}$:
\begin{align*}
    \DataInd(a, \ell) &= \ind \left\{\bElms(v_{\ell}(a)) \cap \bhomg \neq \emptyset\right\} \\
    \bw_{\Dep}(v_\ell(a), v_{\ell+1}(a)) &=\begin{cases} \vspace{0.25cm}
                        \Ex\limits_{\substack{\be \sim \bElms(v_\ell(a)) \cap \bhomg \\ \be' \sim \bElms(v_{\ell+1}(a)) \cap \bhomg}}\left[\|\be - \be'\|_1 \right] & \; \;  \bElms(v_{\ell+1}(a)) \cap \bhomg \neq \emptyset  \\
                         0& \; \; \text{otherwise.}
    \end{cases}
\end{align*}

Note that $\bw(v_\ell(a), v_{\ell+1}(a)) $ by executing $\SampleTree(\mu,m)$ is set to
\[ \bw(v_\ell(a), v_{\ell+1}(a))  = \begin{cases} \vspace{0.25cm}
                        \Ex\limits_{\substack{\be \sim \bElms(v_\ell(a)) \cap \bhomg \\ \be' \sim \bElms(v_{\ell+1}(a)) \cap \bhomg}}\left[\|\be - \be'\|_1 \right] & \; \;  \bElms(v_{\ell+1}(a)) \cap \bhomg \neq \emptyset  \\
                         d/ 2^{\ell} \cdot \xi& \; \; \text{otherwise}
                    \end{cases}\]
where $\xi$ is a parameter that can be set up. If $\bw(.,.)$ of an edge evaluates to the first case, we call the edge ``data-dependent"; Otherwise, we call the edge ``data-independent". In the same spirit, for a pair of elements $a,b \in \{0,1\}^d$, we write $d_\bT(a,b)$ into data-dependent part and data-independent part:
\begin{align}
    d_\bT(a,b) \overset{(\ref{eq:rewrite-tree-metric-using-split})}&{=}  \sum_{\ell = 0}^L \Split_{\ell+1}(a,b) \cdot (\bw(v_\ell(a),v_{\ell+1}(a)) + \bw(v_\ell(b),v_{\ell+1}(b)) ) \nonumber \\
    & = \sum_{\ell=0}^{L}  \Split_{\ell+1}(a,b) \cdot (\DataInd(a,\ell+1)+\DataInd(b,\ell+1)) \cdot \frac{d}{2^\ell}\cdot \xi \nonumber \\
    & \quad +\sum_{\ell=0}^{L}  \Split_{\ell+1}(a,b) \cdot \big( \bw_{\Dep}(v_\ell(a), v_{\ell+1}(a)) + \bw_{\Dep}(v_\ell(b), v_{\ell+1}(b)) \big) \label{eq:decompose-dt}
\end{align}

Let $x = \{a_1,a_2,\dots,a_s\},$ $y = \{b_1,b_2,\dots,b_s\} \in \EMD_s(\{0,1\}^d)$ be a pair of points, and we write $\pi\colon [s] \to [s]$ to denote the minimum matching for $x,y$ in $\EMD_s(\{0,1\}^d)$. In order to decompose $\EMD_\bT(x,y)$ into data-dependent part and data-independent part, we further define the following:
\begin{align*}
    \Qind_\bT(x,y) &= \sum_{i=1}^s \sum_{\ell=0}^{L}  \Split_{\ell+1}(a_i,b_{\pi(i)}) \cdot (\DataInd(a_i,\ell+1) + \DataInd(b_{\pi(i)},\ell+1))\cdot \frac{d}{2^\ell}\cdot \xi \\
    \Qdep_\bT(x,y) &= \sum_{i=1}^s \sum_{\ell=0}^{L}  \Split_{\ell+1}(a_i,b_{\pi(i)}) \cdot \big( \bw_{\Dep}(v_\ell(a_i), v_{\ell+1}(a_i)) + \bw_{\Dep}(v_\ell(b_{\pi(i)}), v_{\ell+1}(b_{\pi(i)})) \big)
\end{align*}
Note that $\Qind_\bT(x,y) + \Qdep_\bT(x,y) = \sum_{i = 1}^s d_\bT(a_i,b_{\pi(i)}) $ by \ref{eq:decompose-dt}, which is the value of the matching $\pi$ in $\bT$. With the above notations, we can finally upper bound $\EMD_\bT(x,y)$ by considering the matching $\pi$ in $\bT$, and dividing the contribution of the cost of $\pi$ into two parts:
\begin{align}
\label{eq:decompose-into-ind-and-dep}
\EMD_\bT(x,y) \leq \Qind_\bT(x,y) + \Qdep_\bT(x,y).
\end{align} 
The above inequality follows from the facts that the left hand side above is the value of minimum matching in $\bT$, which is at most the value of the matching $\pi$ in $\bT$.  
The goal is reduced to upper bounding data-independent part $\Qind_\bT(x,y)$ and data-dependent part $\Qdep_\bT(x,y)$ respectively. In particular, it is easy to see that it suffices to prove the following two lemmas, in order to prove Lemma~\ref{lem:sample-tree-expan}.

\begin{lemma}
\label{lem:sample-tree-expan-ind}
    Let $\bT$ be drawn from $\SampleTree(\mu,m)$, and $x, y \in \EMD_s(\{0,1\}^d)$. 
    If $x$ is $(\alpha, \tau)$-locally dense, then as long as $m = \omega(\log(sd)/\alpha)$,
    \[ \Ex_\bT \left[ \Qind_\bT(x,y) \right] \leq \EMD(x,y) \cdot O(\xi) \cdot \left( \log\left(\frac{\tau + s}{\EMD(x,y)} + 1 \right) + \log\log \frac{sd}{\alpha} \right) \]
\end{lemma}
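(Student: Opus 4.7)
The plan is to decompose $\Qind_\bT(x,y)$ pair-by-pair along the optimal matching $\pi$, bound the expected contribution of each matched pair in terms of $\|a-b\|_1$ and the distance from $a$ to its nearest element $c(a) \in \bOmega$, and then apply Jensen's inequality to collapse the sum of per-pair logarithms into $\log(\Chamfer(x,\bOmega)/\EMD(x,y)+1)$. The outer expectation over $\bOmega$ is then controlled by Lemma~\ref{lem:dense-conse}, since $x$ is $(\alpha,\tau)$-locally dense and $m=\omega(\log(sd)/\alpha)$. The first key observation is that a level-$\ell$ edge $(v_\ell(a), v_{\ell+1}(a))$ contributes to $\Qind$ only if $\Split_{\ell+1}(a,b)=1$ and $\bElms(v_{\ell+1}(a))\cap\bhomg=\emptyset$; since $\bhomg \supseteq \{c\}\cup\{c\oplus e_i : i\in[d]\}$ for $c = c(a) \in \bOmega$, the latter event forces at least \emph{two} of the sampled coordinates through depth $\ell+1$ to lie in $S_c = \{j : a_j \neq c_j\}$. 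Indeed, if only one sampled coordinate $i^*$ lay in $S_c$, then $c\oplus e_{i^*}\in\bhomg$ would agree with $a$ on every sampled coordinate, placing it in $\bElms(v_{\ell+1}(a))\cap\bhomg$.

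Fix critical levels $\ell_b,\ell_c$ with $d/2^{\ell_b} \in [\|a-b\|_1, 2\|a-b\|_1]$ and $d/2^{\ell_c}\in[\|a-c\|_1, 2\|a-c\|_1]$, and assume the nontrivial case $\|a-c\|_1 \geq \|a-b\|_1$, i.e., $\ell_b \geq \ell_c$. Using that the sampled coordinates through depth $\ell$ number $\Theta(2^\ell)$ (drawn with replacement from $[d]$), standard Binomial tail estimates give $\Pr[\Split_{\ell+1}(a,b)=1] = O(\min\{1, 2^{\ell-\ell_b}\})$, while the probability that at least two sampled coordinates land in $S_c$ through depth $\ell$ is $O(\min\{1, 2^{2(\ell-\ell_c)}\})$. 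Multiplying each per-level probability by the data-independent weight $\xi\cdot d/2^\ell$ and splitting $\sum_\ell$ into three regimes yields: \textbf{(i)} for $\ell > \ell_b$ (both events essentially certain), the weight is $\xi\|a-b\|_1 \cdot 2^{\ell_b-\ell}$, and a geometric sum contributes $O(\xi)\|a-b\|_1$; \textbf{(ii)} for $\ell_c \leq \ell \leq \ell_b$, the per-level contribution is $(\xi\|a-b\|_1 \cdot 2^{\ell_b-\ell})\cdot O(2^{\ell-\ell_b}) = O(\xi)\|a-b\|_1$, summed over $(\ell_b-\ell_c)^+$ levels; and \textbf{(iii)} for $\ell < \ell_c$, the joint probability is at most $O(2^{\ell-\ell_b}\cdot 2^{2(\ell-\ell_c)})$ by treating the two events as at most positively correlated on the shared coordinate pool, which times the weight $\xi\cdot\|a-b\|_1\cdot 2^{\ell_b-\ell}$ yields a geometric tail summing to $O(\xi)\|a-b\|_1$. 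Combining the three regimes and carrying out the symmetric computation on the $b$-side (where $\Chamfer(y,\bOmega)\leq\Chamfer(x,\bOmega)+\EMD(x,y)$ by the triangle inequality) gives
\[ \Ex_\bT\bigl[\text{pair }(a,b)\text{'s contribution to }\Qind_\bT(x,y)\bigr] \leq O(\xi)\cdot\|a-b\|_1\cdot\left(1+\log\!\left(\tfrac{\|a-c(a)\|_1}{\|a-b\|_1}+1\right)\right). \]

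Summing over the $s$ matched pairs, writing $p_i = \|a_i-b_{\pi(i)}\|_1/\EMD(x,y)$, and applying Jensen's inequality to the concave function $z\mapsto\log(z+1)$ gives
\[ \Ex_\bT[\Qind_\bT(x,y) \mid \bOmega] \leq O(\xi)\cdot\EMD(x,y)\cdot\left(1+\log\!\left(\tfrac{\Chamfer(x,\bOmega)}{\EMD(x,y)}+1\right)\right). \]
Taking the outer expectation over $\bOmega$ and applying Jensen once more, together with Lemma~\ref{lem:dense-conse} (which bounds $\Ex_\bOmega[\Chamfer(x,\bOmega)] \leq (\tau+s)\cdot\polylog(sd/\alpha)$ for locally-dense $x$), yields the stated bound, with the extra $\log\log(sd/\alpha)$ term absorbing $\log\polylog(sd/\alpha)$. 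The main technical obstacle I anticipate is regime \textbf{(iii)}: the split event and the two-in-$S_c$ event depend on the same coordinate samples, so carefully arguing their joint probability is bounded by the product (up to constants) requires either a direct second-moment computation on the relevant Binomials or an inclusion–exclusion over where sampled coordinates land in $S_b\setminus S_c$, $S_c\setminus S_b$, and $S_b\cap S_c$; I expect the inclusion–exclusion route to be cleanest and to only introduce constant-factor overhead.
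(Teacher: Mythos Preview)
Your proposal is correct and follows essentially the same skeleton as the paper: per-pair decomposition along the optimal matching, the key ``need two coordinates in $S_c$'' observation (the paper's Claim~\ref{cl:data-ind-claim-2}), a three-regime sum over levels, and then two applications of Jensen combined with Lemma~\ref{lem:dense-conse}.

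The one place the paper is cleaner is exactly the spot you flag as an obstacle. Rather than treating $\{\Split_{\ell+1}(a,b)=1\}$ and $\{\text{two samples in }S_c\}$ as separate events whose correlation must be controlled, the paper merges them: if both hold, then among the $r_\ell$ sampled coordinates there exist \emph{distinct} indices $k_1\neq k_2$ with $\bj_{k_1}\in S_b$ and $\bj_{k_2}\in S_c$ (since the witness for $\Split$ can coincide with at most one of the two $S_c$-witnesses). A union bound over ordered pairs $(k_1,k_2)$ and independence of $\bj_{k_1},\bj_{k_2}$ then gives the product bound $(r_\ell/d)^2\,\|a-b\|_1\,\|a-c\|_1$ directly (Claim~\ref{claim:decouple-events}), with no correlation or inclusion--exclusion argument needed. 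This replaces your anticipated $O(2^{\ell-\ell_b}\cdot 2^{2(\ell-\ell_c)})$ by the simpler $O(2^{2\ell-\ell_b-\ell_c})$; both sum geometrically to $O(\xi)\|a-b\|_1$ over $\ell<\ell_c$, but the paper's route avoids the technical detour you were bracing for.
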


\begin{lemma}
\label{lem:sample-tree-expan-dep}
    Let $\bT$ be drawn from $\SampleTree(\mu,m)$, and $x, y \in \EMD_s(\{0,1\}^d)$ 
    Then, we have
    \[ \Ex_\bT \left[\Qdep_\bT(x,y) \right] \leq \tilde{O}(\log (msd)) \cdot \EMD(x,y) \] 
\end{lemma}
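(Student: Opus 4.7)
The plan is to bound the contribution of each matched pair $(a_i, b_{\pi(i)})$ to $\Qdep_\bT(x,y)$ separately, by comparing the data-dependent weights $\bw_{\Dep}$ in $\bT$ (built from $\bhomg$) to the weights of a comparison tree $\tilde{\bT}_i$ obtained by running $\quadtree(\bhomg \cup \{a_i, b_{\pi(i)}\}, \xi)$ with the \emph{same} draw of hash functions $\bphi_0, \dots, \bphi_L$ as $\bT$. In $\tilde{\bT}_i$, every edge along the path from $\varphi(a_i)$ to $\varphi(b_{\pi(i)})$ falls into the first (``data-dependent'') case of Step~\ref{en:quadtree-weight-setting} of Figure~\ref{fig:DDquadtree-prelims}, since $a_i, b_{\pi(i)} \in \bhomg \cup \{a_i, b_{\pi(i)}\}$ are present in the relevant $\bElms$ sets along the path. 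Consequently, Lemma~\ref{lem:cjlw} applied to $\bhomg \cup \{a_i, b_{\pi(i)}\}$ (of size at most $msd + 2$) yields, conditionally on any realization of $\bOmega$,
\[ \Ex_{\bphi}\!\left[d_{\tilde{\bT}_i}(a_i, b_{\pi(i)})\right] \leq \tilde{O}(\log(msd)) \cdot \|a_i - b_{\pi(i)}\|_1. \]

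The heart of the argument is a pointwise (in the randomness) comparison of edge weights. Fix an edge $(v, v_u)$ with $\bw_{\Dep}(v, v_u) > 0$, i.e., with $\bElms(v_u) \cap \bhomg \neq \emptyset$. Let $S = \bElms(v) \cap \bhomg$, $T = \bElms(v_u) \cap \bhomg$, both non-empty since $\bElms(v) \supseteq \bElms(v_u)$. Define $S', T'$ analogously with $\bhomg \cup \{a_i, b_{\pi(i)}\}$ in place of $\bhomg$, so $S \subseteq S'$, $T \subseteq T'$, $|S'| \leq |S|+2$, and $|T'| \leq |T|+2$. Since distances are non-negative,
\[ \sum_{c \in S'} \sum_{c' \in T'} \|c - c'\|_1 \;\geq\; \sum_{c \in S} \sum_{c' \in T} \|c - c'\|_1, \]
from which, letting $\bw_{\tilde{\bT}_i}$ denote the (data-dependent) edge weight in $\tilde{\bT}_i$,
\[ \bw_{\Dep}(v, v_u) \;=\; \frac{1}{|S||T|}\sum_{c \in S,\, c' \in T} \|c - c'\|_1 \;\leq\; \frac{|S'||T'|}{|S||T|} \cdot \bw_{\tilde{\bT}_i}(v, v_u) \;\leq\; 9 \cdot \bw_{\tilde{\bT}_i}(v, v_u), \]
using $|S|, |T| \geq 1$. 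When $\bw_{\Dep}(v, v_u) = 0$ the inequality is trivial, so it holds pointwise on every edge.

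Summing the pointwise inequality over the root-to-leaf paths of both $a_i$ and $b_{\pi(i)}$ (weighted by the $\Split$ indicators), and applying identity (\ref{eq:rewrite-tree-metric-using-split}) to $\tilde{\bT}_i$, the contribution of the pair $(a_i, b_{\pi(i)})$ to $\Qdep_\bT(x,y)$ is bounded pointwise by $9 \cdot d_{\tilde{\bT}_i}(a_i, b_{\pi(i)})$. Taking expectation over all of the shared randomness, summing over $i \in [s]$, and using that $\pi$ realizes $\EMD(x,y)$,
\[ \Ex_{\bT}\!\left[\Qdep_\bT(x,y)\right] \;\leq\; 9 \sum_{i=1}^s \Ex\!\left[d_{\tilde{\bT}_i}(a_i, b_{\pi(i)})\right] \;\leq\; \tilde{O}(\log(msd)) \cdot \EMD(x,y). \]

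The whole proof is a short manipulation once the right comparison tree is chosen; the only mild subtlety is verifying the constant-factor weight comparison. This relies on the crucial fact that along every edge where $\bw_{\Dep}$ is non-zero, both endpoint sets $\bElms(v) \cap \bhomg$ and $\bElms(v_u) \cap \bhomg$ are already non-empty, so inserting at most two extra elements $a_i, b_{\pi(i)}$ changes each cardinality by at most a factor of $3$---this is precisely the point mentioned in Step 2(b) of the technical overview.
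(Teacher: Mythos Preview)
Your proof is correct and follows essentially the same approach as the paper. The paper factors out the single-pair comparison argument into a separate Lemma~\ref{lem:cjlw-mod} (showing $\Ex_{\bT'}[d_{\bT'}(a,b)] \leq \tilde{O}(\log m + \log d)\cdot\|a-b\|_1$ for arbitrary $a,b$ via the same coupled comparison tree $\tilde{\bT}\sim\quadtree(\hat{\Omega}\cup\{a,b\})$ and the identical factor-$9$ edge-weight inequality), and then sums over the matched pairs exactly as you do.
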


\begin{proof}[Proof of Lemma~\ref{lem:sample-tree-expan} assuming Lemma~\ref{lem:sample-tree-expan-ind} and Lemma~\ref{lem:sample-tree-expan-dep}]
    Putting together the upper bounds in Lemma~\ref{lem:sample-tree-expan-ind} and Lemma~\ref{lem:sample-tree-expan-dep} gives
    \begin{align*}
        \EMD_\bT(x,y) \overset{(\ref{eq:decompose-into-ind-and-dep})}&{\leq} \Qind_\bT(x,y) + \Qdep_\bT(x,y) \\
        \overset{(\ref{lem:sample-tree-expan-ind},\ref{lem:sample-tree-expan-dep})}&{\leq} \quad \EMD(x,y) \cdot \left( O(\xi) \cdot \log\left(\frac{\tau + s}{\EMD(x,y)} + 1 \right) + O(\xi) \cdot \log\log \frac{sd}{\alpha}  + \tilde{O}(\log (msd))\right) \\
        & \leq  \EMD(x,y) \cdot \tilde{O}\left( \log(msd/\delta) \right)\left(1 + \log\left(\frac{\tau + s}{\EMD(x,y)} + 1 \right)\right),
    \end{align*}
    given that $T$ is drawn from $\SampleTree(\mu,m)$, $x,y\in \EMD_s(\{0,1\}^d)$, $x$ is $(\alpha,\tau)$-locally dense, and $m = \omega(\log(sd)/\alpha)$.
\end{proof}

The next two sections are devoted to proving the above two lemmas.

\subsubsection{Bounding the Data-Independents Part (Proof of Lemma~\ref{lem:sample-tree-expan-ind})}\label{sec:data-ind-part-bound-1}

In this section, we upper bound the expectation of the data-independent part, and show that 
\[ \Ex_{\bT} \left[\Qind_\bT(x,y)\right] \leq \EMD(x,y) \cdot O(\xi) \cdot \left( \log\left(\frac{\tau + s}{\EMD(x,y)} + 1 \right) + \log\log \left(sd/\alpha\right) \right),\]
when $x$ is $(\alpha,\tau)$-locally dense. 
An execution of $\SampleTree(\mu,m)$ introduces two independent sources of randomness: (1) the $m$ samples $\by_1,\dots, \by_m \sim \mu$ drawn to generate $\bOmega$, and (2) the randomness used in a call to $\quadtree(\hat{\bOmega})$, which draws random coordinates $\bj_1,\dots, \bj_{2^{L+1}-1} \sim [d]$ for the random hash functions $\bphi_\ell \sim \calH_{2^\ell}$ (see Figure \ref{fig:DDquadtree-prelims}). Let $R = 2^{L+1}-1$ denote the total number of (random) coordinates sampled which define the tree $\bT$, and let $r_{\ell} = 2^{\ell+1} - 1$ denote the total number of (random) coordinates sampled up to (and including) depth $\ell$. We can write $\Ex_{\bT}\left[\Qind_\bT(x,y)\right]$ by expanding out both sources of randomness---using the expressions in Subsection~\ref{sec:proofofexpand}:
\begin{align}
  & \Ex_{\bT} \left[\Qind_\bT(x,y)\right] \nonumber \\
  &= \Ex_{\by_1,\dots, \by_m} \left[ \Ex_{\bj_1,\dots, \bj_R}\left[\sum_{i=1}^s \sum_{\ell=0}^{L} \left(\left\{ \begin{array}{c} \Split_{\ell+1}(a_i,b_{\pi(i)}) \times \\ \DataInd(a_i,\ell+1) \end{array} \right\} + \left\{ \begin{array}{c} \Split_{\ell+1}(a_{i}, b_{\pi(i)}) \times \\ \DataInd(b_{\pi(i)},\ell+1))\end{array} \right\} \right) \cdot \frac{d}{2^\ell}\cdot \xi \right] \right] \nonumber \\
  &= \Ex_{\by_1,\dots, \by_m} \Bigg[\sum_{i=1}^s \sum_{\ell=0}^{L} \Prx_{\bj_1,\dots, \bj_R} \left[\begin{array}{c}
    \Split_{\ell + 1}(a_i,b_{\pi(i)}) = 1 \land \\
     \DataInd(a_i,\ell+1) = 1
    \end{array} \right] \cdot \frac{d}{2^{\ell}} \cdot \xi \Bigg] \label{eq:a_i-part-exp}\\
    &\qquad\qquad +  \Ex_{\by_1,\dots, \by_m}\Bigg[ \sum_{i=1}^s \sum_{\ell=0}^{L} \Prx_{\bj_1,\dots, \bj_R} \left[\begin{array}{c}
     \Split_{\ell + 1}(a_i,b_{\pi(i)}) = 1 \land \\
     \DataInd(b_{\pi(i)},\ell+1) = 1
    \end{array} \right] \cdot \frac{d}{2^\ell}\cdot \xi \Bigg]. \label{eq:b_i-part-exp}
\end{align}
Consider, first, the inner-most terms in the expression above, by fixing the draws $\by_1,\dots, \by_m$. For $i \in [s]$ and $\ell \in \{0, \dots, L\}$, we will now upper bound the inner-most probability over the draws of $\bj_1,\dots, \bj_{R}$,
\begin{align} \Prx_{\bj_1,\dots, \bj_{R}}\left[ \begin{array}{c} \Split_{\ell+1}(a_i, b_{\pi(i)}) = 1 \wedge \\
\DataInd(a_i, \ell+1)=1 \end{array} \right] \qquad\text{and}\qquad \Prx_{\bj_1,\dots, \bj_{R}}\left[ \begin{array}{c} \Split_{\ell+1}(a_i, b_{\pi(i)}) = 1 \wedge \\
\DataInd(b_{\pi(i)}, \ell+1)=1 \end{array} \right] . \label{eq:todo-now-probs}
\end{align}


The subsequent two claims will help us upper bound the above expression. The first claim is immediate from the definitions, and the second claim has a simple proof. 

\begin{claim}\label{cl:split-claim-1}
    For any two points $a, b \in \{0,1\}^d$, and any $\ell$, once we draw $\bj_1,\dots, \bj_{R}$, $v_{\ell}(a) = v_{\ell}(b)$ whenever $a_{\bj_k} = b_{\bj_k}$ for all $k \in [r_{\ell}]$. Thus, $a_i$ and $b_{\pi(i)}$ are split if the above does not occur, i.e.,
    \[ \Split_{\ell+1}(a_i, b_{\pi(i)}) = \ind\Big\{ \exists k \in [r_{\ell}] : (a_i)_{\bj_k} \neq (b_{\pi(i)})_{\bj_k}\Big\}.\]
\end{claim}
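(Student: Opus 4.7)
The plan is to verify this claim by simply unwinding the recursive definition of $v_{\ell}(\cdot)$ from Definition~\ref{def:quadtree-map} and the $\quadtree$ construction in Figure~\ref{fig:DDquadtree-prelims}. The claim is essentially a statement about how the root-to-leaf path of an element $a \in \{0,1\}^d$ in $\bT$ is a deterministic function of $a$ once the random coordinates $\bj_1,\dots,\bj_R$ are fixed, and the goal is just to identify exactly which coordinates determine which nodes on that path.

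First, I would recall from Definition~\ref{def:quadtree-map} that $v_{\ell+1}(a)$ is the unique child $v_u$ of $v_{\ell}(a)$ for which $\bphi_{\ell}(a)=u$. Unrolling this inductively, $v_{\ell+1}(a)$ is completely determined by the tuple $(\bphi_0(a),\bphi_1(a),\dots,\bphi_{\ell}(a))$, and in particular $v_{\ell+1}(a)=v_{\ell+1}(b)$ if and only if $\bphi_k(a)=\bphi_k(b)$ for every $k\in\{0,1,\dots,\ell\}$. By the definition of $\calH_{2^k}$ in \eqref{eqn:proj-hash-def}, each $\bphi_k$ is the projection onto the $2^k$ coordinates sampled at level $k$, and the total number of coordinates sampled through level $\ell$ is $\sum_{k=0}^{\ell}2^k=2^{\ell+1}-1=r_{\ell}$. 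Labeling these sampled indices $\bj_1,\dots,\bj_{r_{\ell}}$, the condition $\bphi_k(a)=\bphi_k(b)$ for all $k\leq\ell$ is equivalent to $a_{\bj_t}=b_{\bj_t}$ for all $t\in[r_{\ell}]$.

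Given this equivalence, the two assertions of the claim follow immediately. For the first, if $a_{\bj_t}=b_{\bj_t}$ for all $t\in[r_{\ell}]$ then in particular they agree on the first $r_{\ell-1}$ sampled coordinates, hence $v_{\ell}(a)=v_{\ell}(b)$ (the statement actually gives the stronger conclusion $v_{\ell+1}(a)=v_{\ell+1}(b)$ as well). For the second, $\Split_{\ell+1}(a_i,b_{\pi(i)})=\ind\{v_{\ell+1}(a_i)\neq v_{\ell+1}(b_{\pi(i)})\}$ by definition, and by the contrapositive of what we just established, this event is precisely $\ind\{\exists k\in[r_{\ell}]:(a_i)_{\bj_k}\neq(b_{\pi(i)})_{\bj_k}\}$.

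There is really no obstacle here beyond bookkeeping on the indices $r_{\ell}$ versus $r_{\ell-1}$ (making sure that the path up to depth $\ell+1$ depends on precisely $r_{\ell}$ sampled coordinates, not $r_{\ell-1}$ or $r_{\ell+1}$); the claim is essentially a restatement of the construction. The utility of the claim, which will drive the subsequent probability estimates in \eqref{eq:todo-now-probs}, is that the event $\Split_{\ell+1}(a,b)=1$ has been rewritten purely in terms of whether the first $r_{\ell}$ uniform samples from $[d]$ hit the Hamming-disagreement set $\{t\in[d]:a_t\neq b_t\}$, which is a clean coupon-collector/occupancy event amenable to the geometric decomposition over levels that underlies Lemma~\ref{lem:sample-tree-expan-ind}.
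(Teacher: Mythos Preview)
Your proposal is correct and matches the paper's treatment: the paper simply asserts that the claim ``is immediate from the definitions'' and gives no further argument, and your unwinding of Definition~\ref{def:quadtree-map} together with the index count $\sum_{k=0}^{\ell}2^k=r_\ell$ is exactly that unwinding. Your parenthetical observation that agreement on the first $r_\ell$ coordinates in fact yields the stronger $v_{\ell+1}(a)=v_{\ell+1}(b)$ (and hence the full biconditional needed for the displayed identity on $\Split_{\ell+1}$) is the right way to close the small gap between the claim's first sentence as literally stated and its second sentence.
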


\begin{claim}\label{cl:data-ind-claim-2}
    Consider any $i\in[s]$ and any $\ell \in \{0,\dots, L\}$. Having fixed the draw $\Omega = \{ y_1,\dots, y_m \}$, let $p \in \Omega$ denote the nearest neighbor of $a_i$. 
    \[ \DataInd(a_i, \ell+1) \leq \ind\Big\{ \exists \text{ \emph{distinct} } k_1, k_2 \in [r_{\ell}] : (a_i)_{\bj_{k_1}} \neq p_{\bj_{k_1}} \wedge (a_i)_{\bj_{k_2}} \neq p_{\bj_{k_2}}\Big\}.\]
\end{claim}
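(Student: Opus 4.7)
The plan is to establish the claim by contrapositive: assuming that $a_i$ agrees with its nearest neighbor $p \in \bOmega$ on all but at most one of the $r_\ell$ sampled coordinates $\bj_1, \dots, \bj_{r_\ell}$, I will exhibit an element of $\bElms(v_{\ell+1}(a_i)) \cap \bhomg$ and thereby force $\DataInd(a_i,\ell+1) = 0$ (with the indicator oriented so that the data-independent summand in \eqref{eq:decompose-dt} picks up exactly the $d/2^\ell \cdot \xi$ weights). The two structural facts to lean on are: (i) by Definition~\ref{def:quadtree-map} applied to $\quadtree(\hat{\bOmega},\xi)$, $\bElms(v_{\ell+1}(a_i))$ equals the set of $x \in \{0,1\}^d$ with $x_{\bj_k} = (a_i)_{\bj_k}$ for every $k \in [r_\ell]$; and (ii) by Step~2 of $\SampleTree(\mu,m)$, $\bhomg = \nbr(\bOmega)$ contains every vector in $\{0,1\}^d$ at Hamming distance at most $1$ from some $c \in \bOmega$.

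Given these facts, the case analysis is short. If $a_i$ and $p$ agree on every sampled coordinate $\bj_k$, then $p$ itself satisfies the membership condition for $\bElms(v_{\ell+1}(a_i))$, and since $p \in \bOmega \subseteq \bhomg$, the intersection is non-empty. Otherwise, $a_i$ and $p$ differ in exactly one sampled coordinate $\bj_{k_1}$, and I will construct a witness $c' \in \{0,1\}^d$ by flipping $p$ on that coordinate: set $c'_{\bj_{k_1}} = 1 - p_{\bj_{k_1}} = (a_i)_{\bj_{k_1}}$ and $c'_j = p_j$ for every $j \neq \bj_{k_1}$. Then $\|c' - p\|_1 = 1$ places $c'$ in $\bhomg$, while $c'_{\bj_k} = (a_i)_{\bj_k}$ for every $k \in [r_\ell]$ places $c'$ in $\bElms(v_{\ell+1}(a_i))$. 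Either way a witness exists, and taking the contrapositive yields the desired inequality.

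This is a short combinatorial claim and no step presents a serious obstacle; the subtlety is purely conceptual. One must correctly match the orientation of $\DataInd$ to the decomposition \eqref{eq:decompose-dt} (the data-independent case is precisely $\bElms(v_{\ell+1}(a_i)) \cap \bhomg = \emptyset$), and one must notice that the ``flip a single bit'' trick in the second case is exactly what was enabled by inflating $\bOmega$ to its $1$-neighborhood $\bhomg$ in $\SampleTree$. Without that inflation, a single coordinate discrepancy between $a_i$ and $p$ could not be absorbed, and the conclusion would weaken from ``at most one'' to ``no'' disagreement on the sampled coordinates, which would not be strong enough to feed into the level-by-level geometric summation overviewed in Step~2(b) of the introduction.
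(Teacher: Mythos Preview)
Your proof is correct and is essentially the contrapositive of the paper's direct argument. The paper assumes $\DataInd(a_i,\ell+1)=1$, picks a witness $k_1$ for the split of $a_i$ from $p$, flips $p$ at coordinate $\bj_{k_1}$ to obtain $p'\in\nbr(p)\subseteq\hat{\bOmega}$, and then uses the split of $a_i$ from $p'$ to extract a second distinct witness $k_2$; you instead assume at most one disagreeing index and exhibit either $p$ or its single-bit flip as an element of $\bElms(v_{\ell+1}(a_i))\cap\hat{\bOmega}$. Both arguments rest on the same ``flip one bit'' construction enabled by the inflation $\hat{\bOmega}=\nbr(\bOmega)$, and your closing remark about why that inflation is essential is exactly the point.
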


\begin{proof}
In order to split $a_i$ from all points in $\hat{\Omega}$ during the execution of $\quadtree(\hat{\Omega})$, $a$ must be split from all elements in $\nbr(p)$ since $\nbr(p) \subseteq \hat{\Omega}$. In particular, there is an index $k_1 \in [r_{\ell}]$ which witnesses the split between $a_i$ and $p$ and satisfies $(a_i)_{bj_{k_1}} \neq p_{\bj_{k_1}}$---otherwise, $a_i$ and $p$ are not split. Furthermore, let $p' \in \nbr(p)$ be the point which agrees with $p$ in all but the $\bj_{k_1}$-th coordinate, and since $p' \in \hat{\bOmega}$, there must be an index $k_2 \in [r_{\ell}]$ which witnesses the split between $a_i$ and $p'$. Finally, $k_1 \neq k_2$ since $(a_i)_{\bj_{k_1}} \neq p_{\bj_{k_1}}$ and $p_{\bj_{k_1}} \neq p_{\bj_{k_1}}'$, since these are in the hypercube, $(a_i)_{\bj_{k_1}} = p_{\bj_{k_1}}'$. 
\end{proof}

Using Claims~\ref{cl:split-claim-1} and~\ref{cl:data-ind-claim-2}, for any $i \in [s]$ and $\ell \in \{0,\dots, L\}$, we can now upper bound (after fixing the randomness in $\Omega$ and hence $a_i$'s nearest neighbor $p$) the probability
\begin{align}
\label{eq:two-diff-coords}
    \Prx_{\bj_1,\dots, \bj_R} \left[\begin{array}{c}
    \Split_{\ell+1}(a_i,b_{\pi(i)}) = 1 \land \\
    \DataInd(a_i,\ell+1) = 1
    \end{array} \right] 
    \leq \Prx_{\bj_1,\dots, \bj_{r_\ell}}\left[\begin{array}{c}
         \exists \text{ distinct } k_1, k_2 \in [r_\ell] \text{ s.t } \\
        (a_i)_{k_1} \neq (b_{\pi(i)})_{k_1} \wedge (a_i)_{k_2} \neq p_{k_2}
    \end{array} \right].
\end{align} 
Based on the above, we have the following two claims which allows us to upper bound the left- and right-most inequalities in (\ref{eq:todo-now-probs}).
\begin{claim}
\label{claim:decouple-events}
For any $i \in [s]$ and $\ell \in \{0,\dots, L\}$, letting $p \in \Omega$ be the nearest neighbor of $a_i$, 
\[ \Prx_{\bj_1,\dots, \bj_R} \left[\begin{array}{c}
    \Split_{\ell+1}(a_i,b_{\pi(i)}) = 1 \land \\
     \DataInd(a_i,\ell+1) = 1
\end{array} \right] \leq \left(\frac{r_\ell}{d}\right)^2 \| a_i - b_{\pi(i)}\|_1 \cdot\|a_i - p\|_1 .\]
\end{claim}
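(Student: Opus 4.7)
The plan is to use Equation~(\ref{eq:two-diff-coords}), which already reduces the claim to upper bounding the probability that there exist \emph{distinct} indices $k_1, k_2 \in [r_\ell]$ such that $(a_i)_{\bj_{k_1}} \neq (b_{\pi(i)})_{\bj_{k_1}}$ and $(a_i)_{\bj_{k_2}} \neq p_{\bj_{k_2}}$. I would bound this probability by a straightforward union bound over the (at most) $r_\ell(r_\ell - 1) \leq r_\ell^2$ ordered pairs $(k_1,k_2)$ of distinct indices in $[r_\ell]$.

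The key point to exploit is that the coordinate samples $\bj_1,\dots,\bj_{r_\ell}$ are drawn independently and uniformly from $[d]$, so for any fixed distinct pair $(k_1,k_2)$, the two events $\{(a_i)_{\bj_{k_1}} \neq (b_{\pi(i)})_{\bj_{k_1}}\}$ and $\{(a_i)_{\bj_{k_2}} \neq p_{\bj_{k_2}}\}$ are independent (they depend on disjoint samples). Moreover, since all vectors live in $\{0,1\}^d$, the marginal probability that $(a_i)_{\bj_{k_1}} \neq (b_{\pi(i)})_{\bj_{k_1}}$ is exactly the fraction of coordinates on which they differ, namely $\|a_i - b_{\pi(i)}\|_1 / d$, and similarly $\Pr[(a_i)_{\bj_{k_2}} \neq p_{\bj_{k_2}}] = \|a_i - p\|_1 / d$.

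Multiplying these two probabilities and summing over the at most $r_\ell^2$ ordered distinct pairs yields
\[
\Prx_{\bj_1,\dots,\bj_{r_\ell}}\!\left[\exists\, k_1 \neq k_2 \in [r_\ell]:  (a_i)_{\bj_{k_1}} \neq (b_{\pi(i)})_{\bj_{k_1}} \wedge (a_i)_{\bj_{k_2}} \neq p_{\bj_{k_2}} \right] \leq r_\ell^2 \cdot \frac{\|a_i - b_{\pi(i)}\|_1}{d} \cdot \frac{\|a_i - p\|_1}{d},
\]
which, combined with~(\ref{eq:two-diff-coords}), gives exactly the stated bound $(r_\ell/d)^2 \cdot \|a_i - b_{\pi(i)}\|_1 \cdot \|a_i - p\|_1$.

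There is essentially no technical obstacle here: the entire argument is a union bound plus independence of the coordinate draws, and the hypercube setting makes each marginal collision/mismatch probability easy to compute. The only thing to be careful about is that we must index over \emph{distinct} pairs $(k_1,k_2)$ (as required by Claim~\ref{cl:data-ind-claim-2}), but relaxing to ordered distinct pairs only loses a factor already absorbed in the $r_\ell^2$ bound.
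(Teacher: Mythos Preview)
Your proposal is correct and matches the paper's proof essentially line-for-line: start from~(\ref{eq:two-diff-coords}), union bound over ordered distinct pairs $(k_1,k_2)\in[r_\ell]^2$, use independence of $\bj_{k_1}$ and $\bj_{k_2}$ for $k_1\neq k_2$, and plug in the marginal probabilities $\|a_i-b_{\pi(i)}\|_1/d$ and $\|a_i-p\|_1/d$.
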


\begin{proof}
    \begin{align*}
    \Prx_{\bj_1,\dots, \bj_R} \left[\begin{array}{c}
     \Split_{\ell+1}(a_i,b_{\pi(i)}) = 1 \land \\
     \DataInd(a_i,\ell+1) = 1
    \end{array} \right] 
    \overset{(\ref{eq:two-diff-coords})}&{\leq}\Prx_{\bj_1,\dots, \bj_{r_\ell}}\left[\begin{array}{c}
         \exists \text{ distinct } k_1, k_2 \in [r_\ell] \text{ s.t} \\
         (a_i)_{k_1} \neq (b_{\pi(i)})_{k_1} \wedge (a_i)_{k_2} \neq p_{k_2}
    \end{array} \right] \\
    &\leq \sum_{\substack{k_1,k_2 \in [r_\ell],\\k_1 \neq k_2}} \Prx_{\bj_{k_1}}\left[(a_i)_{\bj_{k_1}} \neq (b_{\pi(i)})_{\bj_{k_1}}\right] \cdot \Prx_{\bj_{k_2}}\left[(a_i)_{\bj_{k_2}} \neq p_{\bj_{k_2}}\right] \\
    &\leq (r_\ell)^2 \cdot \frac{\| a_i - b_{\pi(i)}\|_1}{d} \cdot \frac{\|a_i - p\|_1}{d} \\
    &\leq \left(\frac{r_\ell}{d}\right)^2 \cdot \| a_i - b_{\pi(i)}\|_1 \cdot \|a_i - p\|_1 
\end{align*}
The second inequality comes from the fact that the two events $a_{\bj_{k_1}} \neq b_{\bj_{k_1}}$ and $a_{\bj_{k_2}} \neq p_{\bj_{k_2}}$ are independent since $k_1,k_2$ are distinct. The second to last inequality follows from applying union bound over all possible distinct $k_1,k_2$. 
\end{proof}

\begin{claim}
\label{claim:decouple-events-2}
For any $i \in [s]$ and $\ell \in \{0,\dots, L\}$, letting $p \in \Omega$ be the nearest neighbor of $a_i$, 
\[ \Prx_{\bj_1,\dots, \bj_R} \left[\begin{array}{c}
    \Split_{\ell+1}(a_i,b_{\pi(i)}) = 1 \land \\
     \DataInd(b_{\pi(i)},\ell+1) = 1
\end{array} \right] \leq \left(\frac{r_\ell}{d}\right)^2 \| a_i - b_{\pi(i)}\|_1 \cdot \left( \|a_i - b_{\pi(i)}\|_1 + \| a_i - p\|_1 \right) .\]
\end{claim}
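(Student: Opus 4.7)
\textbf{Proof proposal for Claim~\ref{claim:decouple-events-2}.} The plan is to mirror the argument of Claim~\ref{claim:decouple-events}, with two adjustments: applying the two-distinct-coordinate witness from Claim~\ref{cl:data-ind-claim-2} to $b_{\pi(i)}$ instead of $a_i$, and then using the triangle inequality to re-center distances around $a_i$ so that the only relevant quantities are $\|a_i-b_{\pi(i)}\|_1$ and $\|a_i-p\|_1$.

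Concretely, let $p' \in \Omega$ be the nearest neighbor of $b_{\pi(i)}$ (instead of of $a_i$). By the same reasoning as in the proof of Claim~\ref{cl:data-ind-claim-2} (applied to $b_{\pi(i)}$ in place of $a_i$), the event $\DataInd(b_{\pi(i)}, \ell+1) = 1$ implies that there exist distinct indices $k_1 \neq k_2 \in [r_\ell]$ such that $(b_{\pi(i)})_{\bj_{k_1}} \neq p'_{\bj_{k_1}}$ and $(b_{\pi(i)})_{\bj_{k_2}} \neq p'_{\bj_{k_2}}$. Simultaneously, by Claim~\ref{cl:split-claim-1}, the event $\Split_{\ell+1}(a_i,b_{\pi(i)})=1$ implies that there exists some $k_0 \in [r_\ell]$ with $(a_i)_{\bj_{k_0}} \neq (b_{\pi(i)})_{\bj_{k_0}}$. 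Since $k_1 \neq k_2$, at least one of them differs from $k_0$, so the joint event implies the existence of two distinct indices $k, k' \in [r_\ell]$ with $(a_i)_{\bj_k} \neq (b_{\pi(i)})_{\bj_k}$ and $(b_{\pi(i)})_{\bj_{k'}} \neq p'_{\bj_{k'}}$.

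Next, I would union bound over the at most $r_\ell^2$ ordered pairs $(k,k')$ of distinct indices and exploit the independence of $\bj_k$ and $\bj_{k'}$, exactly as in the last steps of Claim~\ref{claim:decouple-events}. Each pair contributes at most $\frac{\|a_i - b_{\pi(i)}\|_1}{d} \cdot \frac{\|b_{\pi(i)} - p'\|_1}{d}$. Finally, since $p \in \Omega$ and $p'$ is the nearest neighbor of $b_{\pi(i)}$ in $\Omega$, the triangle inequality gives
\[ \|b_{\pi(i)} - p'\|_1 \leq \|b_{\pi(i)} - p\|_1 \leq \|a_i - b_{\pi(i)}\|_1 + \|a_i - p\|_1, \]
and plugging this in yields the stated bound.

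The only subtle point is the triangle-inequality re-centering: we need the bound to involve $\|a_i-p\|_1$ (where $p$ is the nearest neighbor of $a_i$, which is what later aggregates into $\textsf{Chamfer}(x,\bOmega)$ via the locally-dense analysis), rather than $\|b_{\pi(i)} - p'\|_1$, which would aggregate into $\textsf{Chamfer}(y,\bOmega)$. This is why the extra $\|a_i-b_{\pi(i)}\|_1$ term appears in the bound, reflecting the non-symmetric role of $x$ (the locally-dense point) and $y$ as foreshadowed in the footnote of Step~2(b) in the technical overview. I do not expect any other real obstacle; all remaining steps are mechanical adaptations of the proof of Claim~\ref{claim:decouple-events}.
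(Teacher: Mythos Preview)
Your proposal is correct and matches the paper's proof essentially verbatim: the paper lets $\tilde{p}$ be the nearest neighbor of $b_{\pi(i)}$, invokes Claim~\ref{claim:decouple-events} with the roles of $a_i$ and $b_{\pi(i)}$ swapped to get the bound $(r_\ell/d)^2\,\|a_i-b_{\pi(i)}\|_1\cdot\|b_{\pi(i)}-\tilde{p}\|_1$, and then uses exactly your triangle-inequality chain $\|b_{\pi(i)}-\tilde{p}\|_1 \le \|b_{\pi(i)}-p\|_1 \le \|a_i-b_{\pi(i)}\|_1 + \|a_i-p\|_1$. You have simply unrolled the black-box call to Claim~\ref{claim:decouple-events} into the underlying two-distinct-coordinate argument, which is fine.
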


\begin{proof}
Notice that the above claim exchanges the notions of $a_i$ and $b_{\pi(i)}$ in Claim~\ref{claim:decouple-events}, as the $\DataInd(b_{\pi(i)}, \ell+1)$ event replaces $\DataInd(a_i, \ell+1)$. However, since $a_i$ and $b_{\pi(i)}$ are not entirely symmetric (as $p$ is denoted as the nearest neighbor of $a_i$), we will incur an extra additive $\|a_i - b_{\pi(i)}\|_1$ term. In particular, if we let $\tilde{p} \in \Omega$ denote the nearest neighbor of $b_{\pi(i)}$, Claim~\ref{claim:decouple-events} implies
\begin{align*}
    \Prx_{\bj_1,\dots, \bj_{R}}\left[\begin{array}{c} \Split_{\ell+1}(a_i, b_{\pi(i)})=1 \wedge \\ \DataInd(b_{\pi(i)}, \ell+1) \end{array} \right] \leq \left(\frac{r_{\ell}}{d}\right)^2 \| a_i - b_{\pi(i)}\|_1 \cdot \|b_{\pi(i)} - \tilde{p}\|_1.
\end{align*}
The claim follows since the fact $\tilde{p}$ is the nearest neighbor implies $\|b_{\pi(i)} - \tilde{p}\|_1 \leq \| b_{\pi(i)} - p\|_1$, and we apply the triangle inequality to say $\|b_{\pi(i)} - p\|_1 \leq \|a_i - b_{\pi(i)}\|_1 + \| a_i - p\|_1$
\end{proof}

With Claims~\ref{claim:decouple-events} and~\ref{claim:decouple-events-2}, we may now proceed towards upper bounding (\ref{eq:a_i-part-exp}) and (\ref{eq:b_i-part-exp}). 
\begin{lemma}\label{eq:probs-sum-ub}
    Consider a fixed set $\Omega = \{ y_1,\dots, y_m \} \subset \{0,1\}^d$, let $i \in [s]$ be any index, and $p \in \Omega$ be $a_i$'s nearest neighbor. Then, both
    \begin{align}
        &\sum_{\ell=0}^{L}\Prx_{\bj_1,\dots, \bj_R}\left[\begin{array}{c} \Split_{\ell+1}(a_i, b_{\pi(i)})=1 \wedge  \\ \DataInd(a_i, \ell+1)=1 \end{array} \right] \cdot \frac{d \cdot \xi}{2^{\ell}}, \qquad \text{and} \label{eq:a_i-term}\\
        &\sum_{\ell=0}^{L} \Prx_{\bj_1,\dots, \bj_R}\left[\begin{array}{c} \Split_{\ell+1}(a_i, b_{\pi(i)})=1 \wedge \\ \DataInd(b_{\pi(i)}, \ell+1) = 1\end{array} \right] \cdot \frac{d\cdot \xi}{2^{\ell}} \label{eq:b_i-term}
    \end{align}
are at most
\[ 8 \cdot \xi \cdot \| a_i - b_{\pi(i)} \|_1 \cdot \left( \log_2 \left( \dfrac{\|a_i - p\|_1}{\|a_i - b_{\pi(i)}\|_1} + 1\right) + 1\right). \]
\end{lemma}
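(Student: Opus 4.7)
The plan is to bound each of the two sums in the lemma by splitting the range of $\ell \in \{0,\dots,L\}$ into at most three regions and using a different upper bound on $\Pr_\ell := \Pr_{\bj_1,\dots,\bj_R}[\Split_{\ell+1}(a_i,b_{\pi(i)})=1 \wedge \DataInd(\cdot,\ell+1)=1]$ in each region. Write $\beta = \|a_i - b_{\pi(i)}\|_1$ and $\gamma = \|a_i - p\|_1$. Claim~\ref{claim:decouple-events} gives $\Pr_\ell \leq (r_\ell/d)^2 \beta\gamma$ for the first sum, and Claim~\ref{claim:decouple-events-2} gives the same bound with $\gamma$ replaced by $\beta+\gamma$ for the second sum. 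Since $\log_2((\beta+\gamma)/\beta+1) \leq 1 + \log_2(\gamma/\beta+1)$, the two sums admit the same analysis up to a small constant adjustment, so I will focus on the first.

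At each level $\ell$ I will use whichever of the following three bounds is tightest: (a) the trivial $\Pr_\ell \leq 1$, (b) the Split-only bound $\Pr_\ell \leq r_\ell \beta / d$ obtained by dropping the $\DataInd$ event and applying a union bound as in the derivation of inequality~\eqref{eq:two-diff-coords}, and (c) Claim~\ref{claim:decouple-events} itself. Using $r_\ell = 2^{\ell+1}-1 \leq 2\cdot 2^\ell$, the level-$\ell$ summand $\Pr_\ell \cdot d\xi/2^\ell$ is at most $d\xi/2^\ell$, $2\xi\beta$, and $4\xi \cdot 2^\ell \beta\gamma/d$ under (a), (b), (c) respectively. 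I define $\ell_b$ as the smallest integer with $2^{\ell_b} \geq d/(2\beta)$ and $\ell_c$ as the smallest integer with $2^{\ell_c} \geq d/(2\gamma)$; these are exactly the crossover levels between (a)/(b) and between (b)/(c).

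Assuming $\gamma \geq \beta$ (so $\ell_c \leq \ell_b$), I split the sum into: (i) \emph{deep} levels $\ell > \ell_b$, using (a), which forms a geometric series whose largest term is $d\xi/2^{\ell_b+1} = O(\xi\beta)$ and hence totals $O(\xi\beta)$; (ii) \emph{intermediate} levels $\ell_c \leq \ell \leq \ell_b$, using (b), contributing $2\xi\beta \cdot (\ell_b - \ell_c + 1) = O(\xi\beta (\log_2(\gamma/\beta)+1))$ by the identity $\ell_b - \ell_c = \lceil \log_2(\gamma/\beta) \rceil$; and (iii) \emph{shallow} levels $\ell < \ell_c$, using (c), another geometric series whose largest term $4\xi \cdot 2^{\ell_c-1}\beta\gamma/d$ is $O(\xi\beta)$ by the defining inequality of $\ell_c$. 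Adding the three pieces yields the desired bound of the form $O(\xi\beta)(1 + \log_2(\gamma/\beta+1))$. In the complementary regime $\gamma < \beta$ one has $\ell_b \leq \ell_c$, bound (b) becomes redundant, and simply combining (c) for $\ell \leq \ell_b$ with (a) for $\ell > \ell_b$ gives $O(\xi\beta)$, which is consistent with $\log_2(\gamma/\beta+1) \leq 1$ in this range.

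The substantive work has already been done in Claims~\ref{claim:decouple-events} and \ref{claim:decouple-events-2}; what remains is mostly bookkeeping. The main obstacle I anticipate is hitting the exact constant ``$8$'' in the lemma statement rather than a looser universal constant: this forces one to track carefully the factor-of-$2$ loss from $r_\ell \leq 2 \cdot 2^\ell$, the rounding in the definitions of $\ell_b$ and $\ell_c$, and the constant-factor slack between $\lceil \log_2(\gamma/\beta)\rceil$ and $\log_2(\gamma/\beta+1)+1$, in order to absorb everything cleanly into the claimed prefactor.
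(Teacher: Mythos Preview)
Your proposal is correct and is essentially the same argument as the paper's proof. The only cosmetic difference is that the paper introduces a single free parameter $\gamma>1$, defines $\ell^*=\min\{\ell: d/2^\ell\le \gamma\cdot\|a_i-b_{\pi(i)}\|_1\}$, applies the three bounds on the ranges $\ell<\ell^*$, $\ell^*\le\ell\le\ell^*+\log_2\gamma$, and $\ell>\ell^*+\log_2\gamma$, and then optimizes by setting $\gamma=\|a_i-p\|_1/\|a_i-b_{\pi(i)}\|_1+1$; your explicit thresholds $\ell_b,\ell_c$ are exactly what this optimization produces.
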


\begin{proof}
    We begin with upper bounding the second term, as both will be symmetric arguments (using the fact that Claim~\ref{claim:decouple-events-2} is a weakening of the analogous inequality in Claim~\ref{claim:decouple-events}. We thus introduce a variable $\gamma > 1$ (which we will optimize later), and consider the setting
    \[ \ell^{*} = \min\left\{ \ell \in \{0, \dots, L \} : \frac{d}{2^{\ell}} \leq \gamma \cdot \|a_i - b_{\pi(i)}\|_1 \right\}.\]
    We break up the summation over $\ell \in \{0, \dots, L\}$ into three parts: \textbf{(i)} the settings of $\ell > \ell^* + \log_2(\gamma)$, \textbf{(ii)} the settings $\ell$ which are above $\ell^*$ but below $\ell^* + \log_2(\gamma)$, and \textbf{(iii)} the settings of $\ell \leq \ell^*$. 

    \textbf{Case (i)}. Cases (i) is the simplest, as it will suffice to upper bound the probabilistic event by one. Namely, case (i) considers settings where $\ell > \ell^* + \log_2(\gamma)$, and in this case,
    \begin{align*}
        \sum_{\ell = \ell^* + \lceil \log_2(\gamma)\rceil}^{L} \Prx_{\bj_1,\dots, \bj_{R}}\left[\begin{array}{c} \Split_{\ell+1}(a_i, b_{\pi(i)}) = 1 \wedge \\
        \DataInd(b_{\pi(i)}, \ell+1) \end{array} \right] \cdot \frac{d \cdot \xi}{2^{\ell}} &\leq \frac{2\cdot \xi \cdot d}{\gamma \cdot 2^{\ell^*}} \leq 2\cdot\xi \cdot \|a_i - b_{\pi(i)}\|_1,
    \end{align*}
    by the definition of $\ell^*$.

    \textbf{Case (ii).} The second case is only slightly more involved, as we will solely use Claim~\ref{cl:split-claim-1} to upper bound
    \begin{align*}
        \Prx_{\bj_1,\dots, \bj_{R}}\left[ \begin{array}{c} \Split_{\ell+1}(a_i, b_{\pi(i)}) = 1 \wedge \\
        \DataInd(b_{\pi(i)}, \ell+1) \end{array} \right] \leq \Prx_{\bj_1,\dots, \bj_R}\left[ \Split_{\ell+1}(a_i, b_{\pi(i)}) = 1\right] \leq \frac{r_{\ell}}{d} \cdot \| a_i - b_{\pi(i)}\|_1.
    \end{align*}
    Therefore, we upper bound:
    \begin{align*}
        \sum_{\ell=\ell^*}^{\ell^* + \lfloor \log_2(\gamma)\rfloor} \Prx_{\bj_1,\dots, \bj_{R}}\left[\begin{array}{c} \Split_{\ell+1}(a_i, b_{\pi(i)}) = 1 \wedge \\ \DataInd(b_{\pi(i)}, \ell+1)=1 \end{array} \right] \cdot \frac{d \cdot \xi}{2^{\ell}} \leq 2\xi \cdot \lceil \log_2(\gamma)\rceil \cdot \|a_i - b_{\pi(i)}\|_1
    \end{align*}

    \textbf{Case (iii).} This case is the most involved, where we use Claim~\ref{claim:decouple-events-2}. In particular, we may write
    \begin{align*}
        &\sum_{\ell=0}^{\ell^* - 1} \Prx_{\bj_1,\dots, \bj_R}\left[\begin{array}{c} \Split_{\ell+1}(a_i, b_{\pi(i)}) = 1 \wedge \\ \DataInd(b_{\pi(i)}, \ell+1) = 1 \end{array}
        \right] \cdot \frac{d \cdot \xi}{2^{\ell}} \\
        &\qquad \leq \sum_{\ell=0}^{\ell^*-1} \left( \frac{r_{\ell}}{d}\right)^2 \|a_i - b_{\pi(i)}\|_1 \left(\|a_i - b_{\pi(i)}\|_1 + \|a_i - p\|_1 \right) \cdot \frac{d \cdot \xi}{2^{\ell}} \\
        &\qquad \leq 4 \xi \cdot \frac{2^{\ell^*}}{d} \cdot \|a_i - b_{\pi(i)}\|_1 \left(\|a_i - b_{\pi(i)}\|_1 + \|a_i - p\|_1 \right) \leq \frac{4 \cdot \xi}{\gamma} \left(\| a_i - b_{\pi(i)}\|_1 + \|a_i - p\|_1 \right).
    \end{align*}
    Putting all cases together, we've upper bounded our desired quantity (\ref{eq:b_i-term}) by
    \begin{align*}
        2 \xi \left( 1 + \lceil \log_2(\gamma)\rceil \right) \cdot \|a_i - b_{\pi(i)}\|_1 + \frac{4 \xi}{\gamma} \left( \|a_i - b_{\pi(i)} \|_1 + \|a_i - p\|_1\right),
    \end{align*}
    where $\gamma > 1$ is unrestricted. Thus, we may set $\gamma = \|a_i - p\|_1 / \|a_i - b_{\pi(i)}\|_1 + 1$ in order to obtain our desired bound. We note that the upper bound for (\ref{eq:a_i-term}) is analogous, and may be upper bounded by the same term.
\end{proof}

\ignore{
Now we start to bound 
\[
    \sum_{i=1}^s \sum_{\ell=0}^{L}  \Big( \Prx_{\bj_1,\dots, \bj_R} \left[\begin{array}{cc}
     & \Split_{\ell + 1}(a_i,b_{\pi(i)}) = 1 \land \\
     & \DataInd(a_i,\ell+1) = 1
    \end{array} \right] +  \Prx_{\bj_1,\dots, \bj_R} \left[\begin{array}{cc}
     & \Split_{\ell + 1}(a_i,b_{\pi(i)}) = 1 \land \\
     & \DataInd(b_{\pi(i)},\ell+1) = 1
    \end{array} \right] 
    \Big) \cdot \frac{d}{2^\ell}\cdot \xi 
\]
Note that we are still dealing with the quantity inside the expectation $\Ex_{\by_1,\dots, \by_m}[.]$, thus $p_i$ remains deterministic. We upper bound it in two possible ways, according to how $\ell$ relates to the distance between $a_i$ and $b_{\pi(i)}$ and between $a_i$ and $p_i$. Suppose we consider a parameter $\alpha_i > 1$ (which will be a parameter which we later optimize), and we consider setting
\[ \bell_i^* = \min\left\{ \ell\in \{0,1,\dots,L\} : \frac{d}{2^\ell} \leq \alpha_i \cdot \|a_i - b_{\pi(i)}\|_1 \right\}, \]
For $\ell \geq \bell^*$, we have the following: 
\begin{align*}
&  \sum_{\ell \geq \bell^*_i}^{L}  \Big( \Prx_{\bj_1,\dots, \bj_R} \left[\begin{array}{cc}
     & \Split_{\ell + 1}(a_i,b_{\pi(i)}) = 1 \land \\
     & \DataInd(a_i,\ell+1) = 1
    \end{array} \right] +  \Prx_{\bj_1,\dots, \bj_R} \left[\begin{array}{cc}
     & \Split_{\ell + 1}(a_i,b_{\pi(i)}) = 1 \land \\
     & \DataInd(b_{\pi(i)},\ell+1) = 1
    \end{array} \right] 
    \Big) \cdot \frac{d}{2^\ell}\cdot \xi \\
\leq & O(\xi) \cdot \log (\alpha_i) \cdot  \sup_{\ell \in [\bell_i^*,\bell_i^* + \log(\alpha_i)]} \left\{ \Prx_{\bj_1,\dots, \bj_R} \left[\Split_{\ell+1}(a_i,b_{\pi(i)}) = 1 \right] \cdot \frac{ d}{2^\ell} \right\} + O(\xi) \cdot \sum_{\ell > \bell_i^* + \log(\alpha_i)} \frac{\ d}{2^\ell} \\
\leq & O(\xi) \cdot \log(\alpha_i) \cdot  \|a_i - b_{\pi(i)}\|_1 + O(\xi) \cdot \dfrac{ d}{\alpha_i \cdot 2^{\bell_i^*}} \\
\leq &  O(\xi) \cdot \log(\alpha_i) \cdot \|a_i - b_{\pi(i)}\|_1 
\end{align*}
where we use $\Prx_{\bj_1,\dots, \bj_R}\left[ \Split_{\ell + 1}(a_i,b_{\pi(i)}) = 1 \land E \right] \leq \Pr[\Split_{\ell + 1}(a_i,b_{\pi(i)}) = 1]$ for any event $E$ to get the first term in the second line and use $\Pr[\Split_{\ell + 1}(a_i,b_{\pi(i)}) = 1]\leq (2^{\ell+1}-1)\cdot \|a_i - b_\pi(i)\|_1 / d$ by applying union bound over all $2^{\ell+1}-1$ random coordinates to upper bound it; we simply use $\Pr[.]\leq 1$ to get the second term in the second line. Now, we consider the bound for $\ell < \bell_i^*$. We will apply \ref{eq:decouple}.
\begin{align*}
& \quad \sum_{\ell < \bell_i^*}\Big( \Prx_{\bj_1,\dots, \bj_R} \left[\begin{array}{cc}
     & \Split_{\ell + 1}(a_i,b_{\pi(i)}) = 1 \land \\
     & \DataInd(a_i,\ell+1) = 1
    \end{array} \right] +  \Prx_{\bj_1,\dots, \bj_R} \left[\begin{array}{cc}
     & \Split_{\ell + 1}(a_i,b_{\pi(i)}) = 1 \land \\
     & \DataInd(b_{\pi(i)},\ell+1) = 1
    \end{array} \right] 
    \Big) \cdot \frac{d}{2^\ell}\cdot \xi\\
\overset{(\ref{eq:decouple})}&{\leq} 2\xi \cdot\|a_i - b_{\pi(i)}\|_1 \cdot (\|a_i - p_i\|_1 + \|a_i - b_{\pi(i)}\|_1) \cdot \sum_{\ell < \bell_i^*} \frac{\xi \cdot 2^\ell}{d} \\
&\leq O(\xi) \cdot \|a_i - b_{\pi(i)}\|_1 \cdot (\|a_i - p_i\|_1 + \|a_i - b_{\pi(i)}\|_1) \cdot \frac{ 2^{\bell_i^*}}{d} \\
&\leq O(\xi) \cdot \|a_i - b_{\pi(i)}\|_1 \cdot (\|a_i - p_i\|_1 + \|a_i - b_{\pi(i)}\|_1) \cdot \dfrac{1}{\alpha_i \cdot \|a_i - b_{\pi(i)}\|_1} \\
&= (O(\xi)/\alpha_i)\cdot (\|a_i - p_i\|_1 + \|a_i - b_{\pi(i)}\|_1). 
\end{align*}
In particular, we can put both bounds together to derive
\begin{align*}
& \quad \sum_{\ell < \bell_i^*}\Big( \Prx_{\bj_1,\dots, \bj_R} \left[\begin{array}{cc}
     & \Split_{\ell + 1}(a_i,b_{\pi(i)}) = 1 \land \\
     & \DataInd(a_i,\ell+1) = 1
    \end{array} \right] +  \Prx_{\bj_1,\dots, \bj_R} \left[\begin{array}{cc}
     & \Split_{\ell + 1}(a_i,b_{\pi(i)}) = 1 \land \\
     & \DataInd(b_{\pi(i)},\ell+1) = 1
    \end{array} \right] 
    \Big) \cdot \frac{d}{2^\ell}\cdot \xi\\
\leq& O(\xi) \cdot \left( \log(\alpha_i) \cdot \|a_i - b_{\pi(i)} \|_1 + \frac{ (\|a_i - p_i\|_1 + \|a_i - b_{\pi(i)}\|_1)}{\alpha_i}\right) \\
=&O(\xi) \cdot \|a_i - b_{\pi(i)} \|_1 \log \left( \frac{\|a_i -p_i\|_1}{ \|a_i - b_{\pi(i)}\|_1} + 1 \right).
\end{align*} 

by letting $\alpha_i =\left( \|a_i - p_i\|_1 / \|a_i - b_{\pi(i)}\|_1 + 1\right)$. }

\begin{proof}[Proof of Lemma~\ref{lem:sample-tree-expan-ind}]
Directly substituting (\ref{eq:a_i-part-exp}) and (\ref{eq:b_i-part-exp}), as well as Lemma~\ref{eq:probs-sum-ub}, we have
\begin{align}
\Ex_{\bT}\left[ \Qind_{\bT}(x,y)\right] &\leq 16 \xi \Ex_{\by_1,\dots, \by_m}\left[\sum_{i=1}^s \| a_i - b_{\pi(i)}\|_1 \left(\log_2\left( \frac{\|a_i - \bp_i\|_1}{\|a_i - b_{\pi(i)}\|_1} + 1\right) + 1\right) \right] , \label{eq:desired-11}
\end{align}
where we have taken $\bp_i$ to be the (random) nearest neighbor of $a_i$ among the elements of $\bOmega$, defined by points $\by_1,\dots, \by_m$. The final argument will be an application of Jensen's inequality twice. Consider the distribution over $\bi \in [s]$ which samples an index $i$ with probability $\|a_i - b_{\pi(i)}\|_1 / \EMD(x,y)$; because of concavity of the logarithm function, we may re-write the expression
\begin{align*}
    \sum_{i=1}^n \|a_i - b_{\pi(i)}\|_1 \left(\log_2\left(\frac{\|a_i - \bp_i\|_1}{\|a_i - b_{\pi(i)}\|_1} +1 \right) + 1 \right),
\end{align*}
within the expectation over $\by_1,\dots, \by_m$ as 
\begin{align*}
\EMD(x,y) \left(\Ex_{\bi}\left[\log_2\left(\frac{\|a_i - \bp_{i}\|_1}{\|a_i - b_{\pi(i)}\|_1} +1 \right) \right] +1\right) &\leq \EMD(x,y) \left(\log_2\left( \Ex_{\bi}\left[\dfrac{\|a_i - \bp_i\|_1}{\|a_i - b_{\pi(i)}\|_1} \right] + 1\right) + 1 \right) \\
        &= \EMD(x,y) \left(\log_2\left( \dfrac{\textsf{Chamfer}(x, \bOmega)}{\EMD(x,y)} + 1\right) + 1 \right).
\end{align*}
Taking the expectation with respect to $\by_1,\dots, \by_m$, and using Jensen's inequality once more, our expression (\ref{eq:desired-11}) is upper-bounded by
\[ 16 \xi \cdot \EMD(x,y) \left(\log_2\left(\dfrac{\Ex[\textsf{Chamfer}(x, \bOmega)]}{\EMD(x,y)} + 1 \right) + 1 \right),\]
where the inner-most expectation is over $\by_1,\dots, \by_m$, which define $\bOmega$. Finally, we use the fact that $x$ is $(\alpha,\tau)$-locally dense and apply Lemma~\ref{lem:dense-conse} to conclude that the above expression is at most
\[ O(\xi) \cdot \EMD(x,y) \left( \log_2\left( \dfrac{\tau + s}{\EMD(x,y)} + 1\right)+ \log \log \left( sd/\alpha \right) \right),\]
as claimed.
\ignore{
\begin{align*}
    \frac{1}{\EMD(x,y)} \sum_{i=1}^s \|a_i - b_{\pi(i)}\|_1 \left(\log_2\left(\frac{\|a_i - \bp_i\|_1}{\|a_i - b_{\pi(i)}\|_1} + 1\right)+1 \right) &= \Ex_{\bi}\left[\log_2\left( \dfrac{\|a_i - \bp_i\|_1}{\|a_i - b_{\pi(i)}\|_1} + 1 \right) \right] + 1 \\
    &\leq \log_2\left( \Ex_{\bi}\left[\frac{\|a_i - \bp_i\|_1}{\|a_i - b_{\pi(i)}\|_1}\right] + 1 \right) + 1,
\end{align*}
and we may substitute:
\[ \Ex_{\bi}\left[ \frac{\|a_i - \bp_i\|_1}{\|a_i - b_{\pi(i)}\|_1} \right] = \dfrac{\textsf{Chamfer}(x, \bigcup_{j=1}^m \by_j)}{\EMD(x,y)}.\]

\begin{align*}
    &\Ex_{\by_1,\dots, \by_m} \left[\sum_{i = 1}^s \|a_i - b_{\pi(i)} \|_1 \left(\log_2 \left( \frac{\|a_i -\bp_i\|_1}{ \|a_i - b_{\pi(i)}\|_1} + 1 \right) + 1 \right)\right]\\
    &\qquad= \EMD(x,y) \cdot \left( \Ex_{\by_1,\dots, \by_m} \left[ \sum_{i = 1}^s \frac{\|a_i - b_{\pi(i)} \|}{\EMD(x,y)} \cdot \log \left( \frac{\|a_i -\bp_i\|_1}{ \|a_i - b_{\pi(i)}\|_1} + 1 \right) \right] + 1 \right) \\
    &\qquad\leq \EMD(x,y) \cdot \Ex_{\by_1,\dots, \by_m} \left[ \log\left(\sum_{i = 1}^s \frac{\|a_i - b_{\pi(i)} \|}{\EMD(x,y)} \cdot \left( \frac{\|a_i -\bp_i\|_1}{ \|a_i - b_{\pi(i)}\|_1} + 1 \right)\right) \right] \tag{Jensen's Inequality}\\
    &\qquad \leq \EMD(x,y) \cdot  \log\left(\Ex_{\by_1,\dots, \by_m} \left[ \frac{\sum_{i = 1}^s\|a_i - \bp_i\|_1}{\EMD(x,y)} + 1\right]\right)
    \tag{Jensen's Inequality}
\end{align*}}
\end{proof}

\ignore{
{\color{red} Erik: Got to here!}

It thus suffices to show
\begin{align*}
\Ex_{\by_1,\dots, \by_m} \left[\sum_{i = 1}^s \|a_i - b_{\pi(i)} \|_1 \log\left( \frac{\|a_i -p_i\|_1}{ \|a_i - b_{\pi(i)}\|_1} + 1 \right) \right] = O\left(\log\left(\frac{\tau + s}{\EMD(x,y)} + 1 \right)+ \log\log \frac{sd}{\alpha}\right)
\cdot \EMD(x,y)
\end{align*}

We treat $\bp_i$ as a random variable from now, since we are taking expectation over draws of $\by_1,\dots,\by_m$, which $\bp_i$ depends on. 
We first apply Jensen's Inequality twice and get 
\begin{align*}
    &\Ex_{\by_1,\dots, \by_m} \left[\sum_{i = 1}^s \|a_i - b_{\pi(i)} \|_1 \log \left( \frac{\|a_i -p_i\|_1}{ \|a_i - b_{\pi(i)}\|_1} + 1 \right)\right]\\
    =& \EMD(x,y) \cdot \Ex_{\by_1,\dots, \by_m} \left[ \sum_{i = 1}^s \frac{\|a_i - b_{\pi(i)} \|}{\EMD(x,y)} \cdot \log \left( \frac{\|a_i -p_i\|_1}{ \|a_i - b_{\pi(i)}\|_1} + 1 \right) \right] \\
    \leq& \EMD(x,y) \cdot \Ex_{\by_1,\dots, \by_m} \left[ \log\left(\sum_{i = 1}^s \frac{\|a_i - b_{\pi(i)} \|}{\EMD(x,y)} \cdot \left( \frac{\|a_i -p_i\|_1}{ \|a_i - b_{\pi(i)}\|_1} + 1 \right)\right) \right] \tag{Jensen's Inequality}\\
    \leq& \EMD(x,y) \cdot  \log\left(\Ex_{\by_1,\dots, \by_m} \left[ \frac{\sum_{i = 1}^s\|a_i - \bp_i\|_1}{\EMD(x,y)} + 1\right]\right)
    \tag{Jensen's Inequality}
\end{align*}

Finally we are able to apply Lemma~\ref{lem:dense-conse} (since $m = \omega(\log(sd)/\alpha)$) and complete the proof:
\begin{align*}
    \log\left(\Ex_{\by_1,\dots, \by_m} \left[ \frac{\sum_{i = 1}^s\|a_i - \bp_i\|_1}{\EMD(x,y)} + 1\right]\right)
    & = \log\left(\Ex_{\by_1,\dots, \by_m} \left[ \frac{\textsf{Chamfer}\left(x, \bigcup_{i=1}^m \by_i\right)}{\EMD(x,y)}\right] + 1\right)\\
    \overset{(\ref{lem:dense-conse})}&{\leq} \log\left(\frac{(\tau + s)\cdot \polylog(sd/\alpha)}{\EMD(x,y)} + 1 \right) \\
    &\leq O\left( \log\left(\frac{\tau + s}{\EMD(x,y)} + 1 \right) + \log\log \frac{sd}{\alpha} \right)\\
\end{align*}}

\subsubsection{Bounding the Data-Dependent Part (Proof of Lemma~\ref{lem:sample-tree-expan-dep})}\label{sec:proof-sample-tree-exp-dep}

The goal of this section is to upper bound the expectation of the data-dependent part. Similarly to Section~\ref{sec:data-ind-part-bound-1}, we decompose the randomness of $\bT \sim \SampleTree(\mu, m)$ into two independent sources: the draw of $\by_1,\dots,\by_m \sim \mu$ and $\bT' \sim \quadtree(\bhomg)$, and we seek to show 
\[ \Ex_\bT\left[\Qdep_\bT(x,y)\right] = \Ex_{\by_1,\dots,\by_m}\left[\Ex_{\bT'}\left[\Qdep_{\bT'}(x,y)\right]\right] \leq \tilde{O}(\log (msd)) \cdot \EMD(x,y). \]
It will suffice to upper bound the inner-most expectation over $\bT'$, and treat the sampled points $\by_1,\dots,\by_m$ as deterministic variables $y_1,\dots, y_m$ (thereby removing the boldness). 


\begin{lemma}
\label{lem:cjlw-mod}
    Let $\Omega \subseteq \{0,1\}^d$ be any set of $m$ elements, and $\T'$ be drawn from $\quadtree(\hat{\Omega})$. \emph{For any two elements $a, b \in \{0,1\}^d $}, we have
    $$\E_{\bT'}\left[d_{\bT'}(a,b)\right] \leq \Tilde{O}(\log(m) + \log(d) ) \cdot \|a - b\|_1$$
\end{lemma}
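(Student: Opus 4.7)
} The plan is to couple $\bT'$ with a companion tree $\tilde{\bT}$ drawn from $\quadtree(\hat{\Omega} \cup \{a,b\})$ that uses exactly the same random hash functions $\bphi_0,\dots,\bphi_L$ as $\bT'$. Because the tree structure, the labeling of children, and the leaf-map $\varphi$ are all determined by these hash functions, $\bT'$ and $\tilde{\bT}$ share the same vertex set and in particular the same $a$-to-$b$ root-to-leaf-to-root path; only the edge weights can differ. Since $|\hat{\Omega} \cup \{a,b\}| \leq m(d+1)+2$ and now $a,b$ do lie in the input set of $\quadtree$, Lemma~\ref{lem:cjlw} applied to $\tilde{\bT}$ immediately yields
\[
\E_{\tilde{\bT}}\bigl[d_{\tilde{\bT}}(a,b)\bigr] \;\leq\; \tilde{O}(\log(md)+\log d)\cdot \|a-b\|_1 \;=\; \tilde{O}(\log m + \log d)\cdot \|a-b\|_1.
\]

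I will then decompose the common $a$-to-$b$ path edge-by-edge into a \emph{data-independent} part (edges $(v,v_u)$ with $\bElms(v_u)\cap\hat{\Omega}=\emptyset$, each contributing weight $d/2^\ell$ in $\bT'$) and a \emph{data-dependent} part (the rest). For the data-independent part the weight $d/2^\ell$ is precisely the AIK quadtree weight, so the standard argument applies: Claim~\ref{cl:split-claim-1} gives $\Pr[\Split_{\ell+1}(a,b)=1] \leq \min\{1, r_\ell \|a-b\|_1/d\}$ with $r_\ell \leq 2^{\ell+1}$, and summing $2\cdot \Pr[\Split_{\ell+1}(a,b)=1]\cdot d/2^\ell$ over $\ell=0,\dots,L$ (handling the levels where the probability saturates to $1$ by a geometric tail) yields an expected contribution of at most $O(\log d)\cdot \|a-b\|_1$, since $L=O(\log d)$.

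For the data-dependent part, the main step is the pointwise weight comparison $\bw_{\bT'}(v,v_u) \leq 9\cdot \bw_{\tilde{\bT}}(v,v_u)$ whenever $\bElms(v_u)\cap\hat{\Omega}\neq\emptyset$. Writing $R=\bElms(v)\cap\hat{\Omega}$, $S=\bElms(v_u)\cap\hat{\Omega}$ and $R',S'$ for the analogous sets inside $\hat{\Omega}\cup\{a,b\}$, the nonemptiness hypothesis together with $\bElms(v_u)\subseteq \bElms(v)$ forces $|R|\geq 1$ and $|S|\geq 1$, while trivially $|R'|\leq |R|+2$ and $|S'|\leq |S|+2$. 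Since distances are non-negative and $R\times S \subseteq R'\times S'$,
\[
\bw_{\bT'}(v,v_u) \;=\; \frac{1}{|R||S|}\sum_{c\in R,\,c'\in S}\|c-c'\|_1 \;\leq\; \frac{|R'||S'|}{|R||S|}\cdot \bw_{\tilde{\bT}}(v,v_u) \;\leq\; 9\cdot \bw_{\tilde{\bT}}(v,v_u).
\]
Crucially, the same nonemptiness condition guarantees that the edge is data-dependent in $\tilde{\bT}$ as well, so the right-hand side above is a genuine average over hypercube distances and not a data-independent $d/2^\ell$. Summing over the data-dependent edges of $\bT'$, which form a subset of the common $a$-to-$b$ path, gives data-dependent contribution at most $9\cdot d_{\tilde{\bT}}(a,b)$. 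Combining both parts,
\[
\E_{\bT'}\bigl[d_{\bT'}(a,b)\bigr] \;\leq\; O(\log d)\cdot \|a-b\|_1 + 9\,\E_{\tilde{\bT}}\bigl[d_{\tilde{\bT}}(a,b)\bigr] \;\leq\; \tilde{O}(\log m + \log d)\cdot \|a-b\|_1.
\]

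The main obstacle is the weight-comparison step, which requires verifying simultaneously that $\bElms(v_u)\cap\hat{\Omega}\neq\emptyset$ propagates up to $\bElms(v)\cap\hat{\Omega}\neq\emptyset$ (so that the ratio $|R'||S'|/(|R||S|)$ is well-defined and bounded by $9$), and that this same condition certifies the corresponding edge in $\tilde{\bT}$ is also data-dependent (so the comparison is not polluted by the large data-independent weight $d/2^\ell$). Both facts follow from $\bElms(v_u)\subseteq \bElms(v)$ and the fact that $\hat{\Omega}\subseteq \hat{\Omega}\cup\{a,b\}$; once these are in place, the coupling-and-decomposition argument closes the proof.
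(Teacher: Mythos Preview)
Your approach is essentially identical to the paper's: the same coupling of $\bT'$ with $\tilde{\bT}\sim\quadtree(\hat{\Omega}\cup\{a,b\})$ via shared hash functions, the same invocation of Lemma~\ref{lem:cjlw} on $\tilde{\bT}$, and the same factor-of-$9$ pointwise comparison of data-dependent weights using $|R'|\leq |R|+2\leq 3|R|$ and $|S'|\leq 3|S|$. The one difference is that you explicitly decompose the $a$-to-$b$ path into data-dependent and data-independent edges and bound the latter separately via the AIK argument, whereas the paper's proof only establishes the comparison for the data-dependent weights $\bw'_{\Dep}$ (declaring $\bw'_{\Dep}=0$ when $D=\emptyset$) and then asserts $d_{\bT'}(a,b)\leq 9\,d_{\tilde{\bT}}(a,b)$ without separately treating the data-independent edges. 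In that sense your write-up is more careful; since this lemma is only ever used downstream to bound $\Qdep_{\bT}(x,y)$, which involves solely the data-dependent weights, the paper's shortcut suffices for its purposes, but your explicit handling of the data-independent part makes the lemma hold as stated.
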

\begin{proof}
Our analysis will proceed by considering two (correlated) trees $(\bT', \tilde{\bT})$ defined by draws to
\[ \bT' \sim \quadtree(\hat{\Omega}) \qquad \text{and}\qquad \tilde{\bT} \sim \quadtree(\hat{\Omega} \cup \{ a, b\}) \]
with the same hash functions (recall from Figure~\ref{fig:DDquadtree-prelims} that the draw of hash functions was independent of $\hat{\Omega}$ or $\hat{\Omega} \cup \{a, b\}$). Since we included $\{a, b\}$ into the generation of the tree $\tilde{\bT}$, we can safely apply Lemma~\ref{lem:cjlw}, where the number of elements which generate the tree is $m(d+1) + 2$,
\[ \Ex_{\tilde{\bT}}\left[ d_{\tilde{\bT}}(a, b) \right] \leq \tilde{O}(\log(m) + \log(d)) \cdot \|a - b\|_1.\]
It suffices, therefore, to show that for every $\ell$, letting $\bw'_{\Dep}(v_{\ell}(a), v_{\ell+1}(a))$ and $\tilde{\bw}_{\Dep}(v_{\ell}(a), v_{\ell+1}(a))$ denote the weights on $\bT'$ and $\tilde{\bT}$ on the $(\ell+1)$-th edge of the root-to-$a$ path, that
\begin{align} \bw'_{\Dep}(v_{\ell}(a), v_{\ell+1}(a)) \leq 9 \cdot \tilde{\bw}_{\Dep}(v_{\ell}(a), v_{\ell+1}(a)), \label{eq:desired-weight-comp}
\end{align}
and that the analogous expression holds for $b$. This would conclude the argument, as it would imply that $d_{\bT'}(a, b) \leq 9 \cdot d_{\tilde{\bT}}(a,b)$, because $\Split_{\ell+1}(a, b)$ depend solely on the hash functions, which are identical in $\bT'$ and $\tilde{\bT}$. 

So, consider a fixed setting of $\ell$, and let:
\begin{align*}
    A &= \bElms(v_{\ell}(a)) \cap (\hat{\Omega} \cup \{ a, b\}) \\
    B &= \bElms(v_{\ell+1}(a)) \cap (\hat{\Omega} \cap \{ a, b\})\\
    C &= A \setminus \{a,b\} \\
    D &= B \setminus \{a,b\}. 
\end{align*}
Recall that, with the above notation, we have that $\bw'_{\Dep}(v_{\ell}(a), v_{\ell+1}(a)) = 0$ in the case $D = \emptyset$, in which case, (\ref{eq:desired-weight-comp}) is trivially satisfied. Otherwise, $D \neq \emptyset$ which implies $B \neq \emptyset$, and both weights are determined by:
    \begin{align*}
        \bw'_{\Dep}(v_{\ell}(a), v_{\ell+1}(a)) &= \Ex_{\substack{\bc \sim C\\ \bc' \sim D}}\left[\|\bc - \bc'\|_1 \right]\qquad\text{and}\qquad
        \tilde{\bw}_{\Dep}(v_{\ell}(a), v_{\ell+1}(a)) = \Ex_{\substack{\tilde{\bc} \sim A \\ \tilde{\bc}' \sim B}}\left[\|\tilde{\bc} - \tilde{\bc}'\|_1 \right]. 
    \end{align*}
Note that, any $c \in C$ and $c' \in D$ which appears in the expectation on the left-hand side also appears in the right-hand side, where the term appearing is $\| c - c'\|_1 / (|C| \cdot |D|)$ on the left-hand side, and $\|c - c'\|_1 / (|A| \cdot |B|)$ on the right-hand side. However, we also have $1 \leq |C| \leq |A| + 2$ and $1 \leq |D| \leq |B|+2$, which means that 
\[ \frac{1}{|C| \cdot |D|} \leq \frac{9}{|A| \cdot |B|}, \]
and therefore, we obtain  (\ref{eq:desired-weight-comp}).
\ignore{First, fix the randomness of all the hash functions $\bphi_v$ used to generate the Quadtree in Figure \ref{fig:DDquadtree-prelims} and call this tree $T$. Note that the dependent edge weights $\bw_\Dep(.,.)$ in $T$ are only determined by the dataset $\Omega \subseteq \{0,1\}^d$. Recall that 
\[\bw_{\Dep}(v_\ell(a), v_{\ell+1}(a)) =\begin{cases} \vspace{0.25cm}
    \Ex\limits_{\substack{\be \sim \bElms(v_\ell(a)) \cap \Omega \\ \be' \sim \bElms(v_{\ell+1}(a)) \cap \Omega}}\left[\|\be - \be'\|_1 \right] & \; \;  \bElms(v_{\ell+1}(a)) \cap \Omega \neq \emptyset  \\
     0& \; \; \text{otherwise.}
\end{cases}\]
Let $w'_\Dep(.,.)$ be similarly defined by replacing $\Omega$ with $\Omega' = \Omega \cup \{a,b\}$. By Lemma~\ref{lem:cjlw}, it suffices to show that 
\[    \bw_\Dep(u,v) \leq  O(w'_\Dep(u,v)) \]

for all edges $(u,v) \in T$ where $v$ is the child of $u$ in $T$. We prove this in cases. First, if $\Elms(v) \cap \Omega =  \emptyset$, then we will have $\bw_\Dep(u,v) = 0 \leq = w'_\Dep(u,v)$. So it suffices to assume that $\Elms(v) \cap \Omega \neq  \emptyset$, which also gives $\Elms(u) \cap \Omega \neq  \emptyset$ since $u$ is the parent of $v$. Then we have $\frac{|\Elms(v) \cap \Omega |}{|\Elms(v) \cap \Omega'|} \geq 1/3$ and $\frac{|\Elms(u) \cap \Omega |}{|\Elms(u) \cap \Omega'|} \geq 1/3$, thus
\[ w'_\Dep(u,v) = \Ex_{\substack{c \sim \Elms(v) \cap \Omega' \\ c' \sim \Elms(u) \cap \Omega'}}\left[\|c - c'\|_1 \right] \geq \left(\frac{1}{3}\right)^2\Ex_{\substack{c \sim \Elms(v) \cap \Omega \\ c' \sim \Elms(u) \cap \Omega}}\left[\|c - c'\|_1 \right] = \frac{1}{9}  \cdot \bw_\Dep(u,v) \]
from which the lemma follows}
\end{proof}

With Lemma~\ref{lem:cjlw-mod}, we conclude the proof of Lemma~\ref{lem:sample-tree-expan-dep}.

\begin{proof}[Proof of Lemma~\ref{lem:sample-tree-expan-dep}]
First, notice from (\ref{eq:decompose-dt}) and the definition of $\Qdep_{\bT}(x,y)$, that it suffices to upper bound $\sum_{i=1}^s d_{\bT'}(a_i, b_{\pi(i)})$. Thus, we apply Lemma~\ref{lem:cjlw-mod} with $|\Omega| = |\bhomg| \leq m \cdot s \cdot (d+1)$, and finish the proof:
\begin{align*}
\Ex_{\by_1,\dots,\by_m}\left[\Ex_{\bT'}\left[\sum_{i=1}^s d_{\bT'}(a_i,b_{\pi(i)})\right]\right]
  \overset{(\ref{lem:cjlw-mod})}
  &{\leq} \Ex_{\by_1,\dots,\by_m}\left[\sum_{i=1}^s\Tilde{O}(\log(m\cdot s \cdot (d+1)) + \log d)\cdot \|a_i - b_{\pi(i)}\|_1\right] \\
  &=\Tilde{O}(\log(msd))\cdot \EMD(x,y).
\end{align*}
\end{proof}


\subsection{Proof of Lemma~\ref{lem:sample-tree-contr}}
 

In order to prove Lemma~\ref{lem:sample-tree-contr}, we claim that it suffices to prove the following lemma, which shows that for any two elements $a, b \in \{0,1\}^d$, the probability over $\bT$ that $d_{\bT}(a, b) \leq \|a - b\|_1$ is vanishingly small. Then, the desired lemma follows from a union bound and the proper setting of $\xi$.
\begin{lemma}
\label{lem:sample-tree-contr-2}
    Fix any $a ,b \in \{0,1\}^d$ and any $\delta \in (0, 1)$, and let $\bT$ be generated from $\SampleTree(\mu, m)$ with parameter $\xi = O(\log(msd/\delta))$ (for a large enough constant factor). Then, 
    \begin{align*}
        \Prx_{\bT}\left[ d_{\bT}(a, b) \leq \|a - b \|_1 \right] \leq \frac{\delta}{s^2}
    \end{align*}
\end{lemma}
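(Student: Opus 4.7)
The plan is to bound $\Prx[d_{\bT}(a,b) \leq \|a-b\|_1]$ by first obtaining a tail bound on the split level and then analyzing the path weights at that level. Fix $a \neq b \in \{0,1\}^d$, write $r = \|a-b\|_1$, let $\ell_0 = \lfloor \log_2(d/r)\rfloor$ so that $d/2^{\ell_0+1} < r \leq d/2^{\ell_0}$, and let $\ell^*$ denote the smallest depth with $v_\ell(a) \neq v_\ell(b)$.

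By Claim~\ref{cl:split-claim-1}, $\ell^* > \ell$ happens exactly when every one of the $r_\ell = 2^{\ell+1}-1$ sampled hash coordinates fails to distinguish $a$ from $b$, so
\[
\Prx[\ell^* > \ell_0 + k] \;\leq\; \left(1-\tfrac{r}{d}\right)^{2^{\ell_0+k+1}-1} \;\leq\; \exp(-2^k),
\]
using $r/d \geq 1/2^{\ell_0+1}$. Setting $k^* = \lceil \log_2(2\log(2s^2/\delta))\rceil$ bounds this tail by $\delta/(2s^2)$; I condition on $\ell^* \leq \ell_0 + k^*$ hereafter. For the two path edges $(v_{\ell^*-1},v_{\ell^*}(a))$ and $(v_{\ell^*-1},v_{\ell^*}(b))$ at the split level, I distinguish two sub-cases. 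If either is data-independent, its weight alone is $\xi \cdot d/2^{\ell^*-1} \geq \xi r/2^{k^*} = \xi r/(2\log(2s^2/\delta))$, which strictly exceeds $r$ once the absolute constant in $\xi = \Theta(\log(msd/\delta))$ is taken large enough, so $d_{\bT}(a,b) > r$ holds deterministically on this sub-event.

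The nontrivial sub-case is when both split-level edges are data-dependent, meaning there exist witnesses $c_a \in \bElms(v_{\ell^*}(a)) \cap \bhomg$ and $c_b \in \bElms(v_{\ell^*}(b)) \cap \bhomg$. Since $c_a,c_b \in \bhomg$, the deterministic non-contraction part of Lemma~\ref{lem:cjlw} applied to $\bhomg$ gives $d_{\bT}(c_a,c_b) \geq \|c_a-c_b\|_1$. A second Chernoff bound, exploiting that $a$ and $c_a$ share the bucket at depth $\ell^*$ and so agree on $r_{\ell^*-1} = 2^{\ell^*}-1$ uniformly random coordinates, forces $\|a-c_a\|_1 \leq O(r/\xi) \cdot (\log(s/\delta))^{O(1)}$ (and symmetrically for $b,c_b$) except on an event of probability $\delta/(4s^2)$. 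Writing the split-level contribution as $w_a + w_b$ with $w_a = \Ex[\|\bc-\bc'\|_1]$ over $\bc$ from the LCA bucket and $\bc'$ from $v_{\ell^*}(a)$'s bucket (both intersected with $\bhomg$), and combining with the deeper-level weights via the representative-based triangle inequality underlying Definition~\ref{def:mod-tree-weights}, one shows that the total path length is at least $\|c_a-c_b\|_1 \geq r - \|a-c_a\|_1 - \|b-c_b\|_1 \geq r(1-o(1))$, with the residual $o(1)$ slack absorbed by the data-independent contributions $\xi \cdot d/2^{\ell-1}$ from levels at which $\bElms(v_\ell(x)) \cap \bhomg$ becomes empty.

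The main obstacle is precisely this last step: carefully tracking the path weights so that the $O(r/\xi) \cdot \mathrm{polylog}(s/\delta)$ slack from the spatial triangle inequality is fully absorbed without losing the strict inequality $d_{\bT}(a,b) > r$. A naive use of Lemma~\ref{lem:cjlw-mod} together with Markov's inequality on $d_{\bT}(a,c_a)$ would cost a factor of $s^2/\delta$ and fall short; the correct route is to re-run the level-by-level computation of Section~\ref{sec:data-ind-part-bound-1} directly on the pair $(a,c_a)$, exploiting the deep LCA (at depth $\geq \ell^*$) and the transition level at which edges on $a$'s descent turn data-independent, combining the representative-triangle argument on the data-dependent prefix with the deterministic geometric sum on the data-independent tail. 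A union bound over the $O(1)$ exceptional events---$\ell^*$ too deep, $c_a$ or $c_b$ too far spatially, and the triangle-inequality slack too large---then yields the claimed failure probability $\delta/s^2$.
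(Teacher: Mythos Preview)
Your data-independent sub-case is fine, but the data-dependent sub-case has a real gap: the bound $\|a-c_a\|_1\le O(r/\xi)\cdot(\log(s/\delta))^{O(1)}$ cannot be obtained from the fact that $a$ and $c_a$ share a bucket at depth $\ell^*$. After a union bound over candidates $c_a\in\bhomg$, the bucket-agreement argument gives only $\|a-c_a\|_1\le (d/2^{\ell^*})\cdot O(\log(|\bhomg|s^2/\delta))$. You control $\ell^*$ only from above ($\ell^*\le\ell_0+k^*$), which yields a \emph{lower} bound on $d/2^{\ell^*}$; to upper-bound $d/2^{\ell^*}$ you would need a lower bound on $\ell^*$, and the best one ($\Pr[\ell^*<\ell_0-k']\le 2^{-k'}$) forces $k'=\Theta(\log(s^2/\delta))$, giving $\|a-c_a\|_1\le r\cdot\mathrm{poly}(s/\delta)$. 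That swamps $r$, so the chain $d_{\bT}(a,b)\ge\|c_a-c_b\|_1\ge r-\|a-c_a\|_1-\|b-c_b\|_1$ is vacuous. Separately, the first inequality $d_{\bT}(a,b)\ge\|c_a-c_b\|_1$ is not justified: Lemma~\ref{lem:cjlw} gives $d_{\bT}(c_a,c_b)\ge\|c_a-c_b\|_1$, but the $a$-to-$b$ and $c_a$-to-$c_b$ paths differ below level $\ell^*$, and the representative argument of Definition~\ref{def:mod-tree-weights} needs the endpoints themselves to lie in $\bhomg$.

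The paper's proof avoids both issues by never trying to upper-bound $\|a-p\|_1$. It decomposes the $a$-to-$b$ path into a data-independent tail (from the leaf $a$ up to the deepest node $v_{i_1}$ whose bucket meets $\bhomg$), a data-dependent middle, and a symmetric tail on the $b$ side. A union bound of Lemma~\ref{lem:unlike-to-contract} over all $p\in\bhomg$ shows the $a$-side tail weight equals $\sum_{\ell}\Split_{\ell+1}(a,p_1)\cdot(d/2^{\ell})\cdot\xi\ge\|a-p_1\|_1$ for \emph{every} $p_1\in\bElms(v_{i_1})\cap\bhomg$; the middle is at least $\Ex_{p_1,p_2}[\|p_1-p_2\|_1]$ by the averaging triangle inequality; summing gives $d_{\bT}(a,b)\ge\|a-b\|_1$ directly, with no slack to absorb. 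Your final paragraph gestures at exactly this decomposition---pursuing it from the start is both shorter and correct.
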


\begin{proof} [proof of Lemma \ref{lem:sample-tree-contr} assuming Lemma \ref{lem:sample-tree-contr-2}]
    Let $x = \{a_1,a_2,\dots,a_s\}, y = \{b_1,b_2,\dots,b_s\} \in \EMD_s(\{0,1\}^d)$ be a pair of points. Let $\bsigma: [s] \to [s]$ be the matching such that $\EMD_\bT(x,y) = \sum_{i=1}^s d_\bT(a_i,b_{\bsigma(i)})$. 

    Applying lemma \ref{lem:sample-tree-contr-2} and union bound over all possible $s^2$ pairs $a.b$ such that $a\in x, b \in y$ gives 
    \[\Prx_\bT[\forall a \in x, b \in y, d_\bT(a,b) \geq \|a-b\|_1] \geq 1 - \sigma. \] 
    Then it suffices to show $\EMD_\bT(x,y) \geq \EMD(x,y)$ given that $\forall a \in x, b \in y, d_\bT(a,b) \geq \|a-b\|_1$. It is clear that  
    \begin{align*}
        \EMD_\bT(x,y) =\sum_{i = 1}^s d_\bT(a_i, b_{\bsigma(i)}) 
        \geq  \sum_{i = 1}^s \|a_i - b_{\bsigma(i)}\|_1 
        \geq \EMD(x,y). 
    \end{align*}
\end{proof}

The remainder of this section is devoted to the proof of Lemma \ref{lem:sample-tree-contr-2}. We first introduce the following helpful lemma, which indicates the probability over $\bT$ that the tree metric has contraction vanishes quickly as the constant factor in $\xi$ increases if all edge weights are data-independent. Recall that $\Split_\ell(x,y) \in \{0,1\}$ is the indicator variable for the event that $v_\ell(x) \neq v_\ell(y)$.



\begin{lemma} 
\label{lem:unlike-to-contract}
    For any $a,b \in \{0,1\}^d$ and any $\rho > 0$, we have 
    $$\Prx_{\bT}\left[\sum_{\ell=0}^{L} \Split_{\ell + 1}(a,b) \cdot \frac{d}{2^{\ell}} \cdot \log\frac{1}{\rho} \leq \|a - b\|_1\right] \leq \rho$$
\end{lemma}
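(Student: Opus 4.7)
\textbf{Proof proposal for Lemma~\ref{lem:unlike-to-contract}.}

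The plan rests on two elementary observations. First, because the partition produced by $\quadtree$ is refined as $\ell$ grows, the indicator $\Split_{\ell+1}(a,b)$ is monotone non-decreasing in $\ell$: once the path of $a$ separates from that of $b$, it stays separated. Let $\boldsymbol{\ell}_c$ be the smallest $\ell \in \{0,\ldots,L\}$ with $\Split_{\ell+1}(a,b)=1$ (which exists provided $a\ne b$, since $\bphi_L$ is the identity; the case $a=b$ is trivial). Then
\[
\sum_{\ell=0}^{L}\Split_{\ell+1}(a,b)\cdot \frac{d}{2^{\ell}} \;=\; \sum_{\ell=\boldsymbol{\ell}_c}^{L}\frac{d}{2^{\ell}} \;\geq\; \frac{d}{2^{\boldsymbol{\ell}_c}},
\]
so the problem reduces to showing $\Prx[\,d/2^{\boldsymbol{\ell}_c}\cdot \log(1/\rho) < \|a-b\|_1\,]\leq \rho$.

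Second, I would recast the draw of the hash functions $\bphi_0,\ldots,\bphi_L$ as a single sequence of i.i.d.\ uniform samples $\bi_1,\bi_2,\ldots \sim [d]$, where the coordinates used up to and including depth $\ell$ are the first $r_\ell = 2^{\ell+1}-1$ of them (this matches Definition~\ref{def:data-ind-hash}/Section~\ref{sec:Dynamic Embedding}). Each coordinate $\bi_t$ independently distinguishes $a$ from $b$ with probability $p := \|a-b\|_1/d$. Let $\boldsymbol{\tau}$ be the index of the \emph{first} distinguishing coordinate; then $\boldsymbol{\tau}$ is geometric with parameter $p$, and by definition of $\boldsymbol{\ell}_c$ we have $r_{\boldsymbol{\ell}_c}\geq \boldsymbol{\tau} > r_{\boldsymbol{\ell}_c-1}$, i.e.\ $2^{\boldsymbol{\ell}_c}-1 < \boldsymbol{\tau}$, which gives the deterministic bound $2^{\boldsymbol{\ell}_c}\leq \boldsymbol{\tau}$ and hence $d/2^{\boldsymbol{\ell}_c}\geq d/\boldsymbol{\tau}$.

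With these two reductions in hand, the conclusion is immediate from a geometric tail bound. Setting $T = d\log(1/\rho)/\|a-b\|_1 = \log(1/\rho)/p$,
\[
\Prx\!\left[\boldsymbol{\tau} > T\right] \;\leq\; (1-p)^{T} \;\leq\; e^{-pT} \;=\; e^{-\log(1/\rho)} \;=\; \rho,
\]
and on the complementary event $\{\boldsymbol{\tau}\leq T\}$ we get
\[
\sum_{\ell=0}^{L}\Split_{\ell+1}(a,b)\cdot \frac{d}{2^{\ell}}\cdot \log(1/\rho) \;\geq\; \frac{d}{2^{\boldsymbol{\ell}_c}}\cdot \log(1/\rho) \;\geq\; \frac{d}{\boldsymbol{\tau}}\cdot \log(1/\rho) \;\geq\; \|a-b\|_1,
\]
as desired. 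The only place where care is needed is integrality when invoking $\Prx[\boldsymbol{\tau}>T]=(1-p)^{\lfloor T\rfloor}$; this costs at most a factor of $e^{p}\leq e$ in the tail bound, which can be absorbed into $\rho$ (or removed by using $\log(e/\rho)$ in place of $\log(1/\rho)$, which the downstream application in Lemma~\ref{lem:sample-tree-contr-2} can certainly afford since $\xi$ is taken to be $\Theta(\log(msd/\delta))$). There is no serious obstacle; the main idea is simply to collapse the many-level process into a single geometric hitting time.
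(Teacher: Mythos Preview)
Your proof is correct and follows essentially the same route as the paper. Both arguments use the monotonicity of $\Split_{\ell}$ to reduce to a single ``critical'' level and then apply the geometric tail bound $(1-p)^k \le e^{-pk}$ with $p=\|a-b\|_1/d$. The paper packages this slightly differently: rather than introducing the hitting time $\boldsymbol{\tau}$, it directly picks $\ell_0 = \lceil \log(d/\|a-b\|_1)+\log\log(1/\rho)\rceil$ and observes that the bad event forces $\Split_{\ell_0+1}=0$, whose probability is $(1-p)^{2^{\ell_0+1}-1}\le (1-p)^{2^{\ell_0}}\le e^{-p\cdot 2^{\ell_0}}\le \rho$. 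Because the ceiling guarantees $2^{\ell_0}\ge T$, the paper avoids the integrality slack you noted and obtains the bound exactly as $\rho$; your detour through $\boldsymbol{\tau}$ costs the $e^p\le e$ factor you flagged, which, as you say, is immaterial for Lemma~\ref{lem:sample-tree-contr-2}.
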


\begin{proof}
    Since the indicator $\Split_\ell(a,b)$ is non-decreasing with respect to $\ell$, the event that $\sum_{\ell=0}^{L} \Split_{\ell+1}(a,b) \cdot \frac{d}{2^{\ell}} \cdot \log\frac{1}{\rho} \leq \|a - b\|_1$ happens only if $\Split_{\ell_0 + 1} = 0$ where $\ell_0 = \lceil \log \frac{d}{\|a-b\|_1}+ \log\log\frac{1}{\rho} \rceil$ (so that $\sum_{\ell=0}^{L} \Split_{\ell+1}(a,b) \cdot \frac{d}{2^{\ell}}\log\frac{1}{\rho} 
    = \sum_{\ell=\ell_0 + 1}^{L} \frac{d}{2^{\ell}}\log\frac{1}{\rho} \leq \|a-b\|_1$). Therefore, we have
    \begin{align*}
        \Prx_{\bT}\left[\sum_{\ell=0}^{L} \Split_{\ell + 1}(a,b) \cdot \frac{d}{2^{\ell}} \cdot \log\frac{1}{\rho} \leq \|a - b\|_1\right] \leq & \Prx_{\bT}\left[\Split_{\ell_0 + 1} = 0\right] \\
    \end{align*}
    It suffices to upper-bound $\Prx_{\bT}\left[\Split_{\ell_0 + 1} = 0\right]$ by $\rho$. Notice that the number of coordinates that have been sampled by (including) depth $\ell \in \{0,1,\dots,L\}$ is $2^{\ell+1} - 1$. $\Split_{\ell_0 + 1} = 0$ is equivalent to that $a,b$ agree on all $2^{\ell_0 + 1} - 1$ coordinated sampled by (including) depth $\ell_0$. It holds that 
    \begin{align*}
        \Prx_{\bT}\left[\Split_{\ell_0 + 1} = 0\right] &\leq \left(1 - \frac{\|a-b\|}{d}\right)^{2^{\ell_0 + 1} - 1}\\
        &\leq \left(1 - \frac{\|a-b\|}{d}\right)^{2^{\ell_0}} \\
        &\leq \left(1 - \frac{\|a-b\|}{d}\right)^{2^{\log \frac{d}{\|a-b\|_1}+ \log\log\frac{1}{\rho}}}\\
        &\leq \rho 
    \end{align*}
\end{proof}

\begin{proof}[proof of Lemma \ref{lem:sample-tree-contr-2}]
    Recall that $\bhomg$ is the set of elements in the points sampled by $\bT$ and $|\bhomg| \leq ms(d+1)$ (see Figure \ref{fig:quadtree-embed}). We set $\xi = c \cdot \log \frac{msd}{\delta}$ where$c$ is a parameter to be set later. Fix any $a,b \in \{0,1\}^d$, let the shortest path between $a,b$ in $\bT$ be $\calP: a, \bv_1,\bv_2,\cdots, \bv_k, b$. Recall that we say an edge $(u,v)$ in $\bT$ is data-independent if its weight $w(u,v)$ of evaluates to $\frac{d}{2^\ell}\cdot \xi$ where $\ell $ is the depth of $u$, otherwise it is data-dependent. We prove the lemma by the following three cases:

    \begin{itemize}
        \item  If for all $i \in [k-1]$, edge $(v_i , v_{i+1})$ is data-dependent, then by triangular inequality we have 
        \begin{align*}
            \|a - b\|_1 \leq & \Ex_{e_i \sim \Elms(v_i) \cap \bhomg, i \in [k]}\left[\|a - e_1\|_1 + \|e_1 - e_2\|_1 + \cdots + \|e_k - b\|_1\right] \\
            = & d_\bT(a,b)
        \end{align*}
        where the equality follows from the fact that the identity mapping is used at depth $L + 1$, thus all points at a leaf must be identical.
        \item  If for all $i \in [k-1]$, edge $(v_i , v_{i+1})$ is data-independent, we know 
        $$d_{\bT}(a,b) = 2\cdot \sum_{\ell=0}^{L} \Split_{\ell+1}(a,b) \cdot \left( \frac{d}{2^{\ell}} \cdot c \cdot \log \frac{msd}{\delta} \right) $$
        we are able to apply Lemma \ref{lem:unlike-to-contract} with $\rho = \left(\frac{\delta}{msd}\right)^{2c}$ and get
        \begin{align*} 
        \Prx_{\bT}\left[d_{\bT}(a,b) \leq \|a - b\|_1\right] \leq \left(\frac{\delta}{msd}\right)^{2c}
        \end{align*}
        As long as $c \geq 1$, the Lemma to prove holds for this case since $m,d \geq 1$ and $\delta \leq 1$. 
        \item If neither of the above is the case, there must exist $v_{i_1}$ and $v_{i_2}$ in path $\calP:v_1,\dots, v_{i_1}, \dots, v_{i_2}, \dots, v_k$ such that edges among $v_1,v_2,\dots, v_{i_1}$ and among $v_{i_2}, \dots,v_{k-1}, v_k$ are data-independent, and edges among $v_{i_1},\dots, v_{i_2}$ are data-dependent. Let $\ell_1,\ell_2$ be the depth of $v_{i_1},v_{i_2}$ respectively. 
        
        By applying lemma \ref{lem:unlike-to-contract} for $a, p$ as well as $b,p'$ with $\rho = \left(\frac{\delta}{msd}\right)^c$ and union bound over all possible pairs $(p,p') \in \bhomg \times \bhomg$, we have that for any $(p,p') \in \bhomg \times \bhomg$, it holds that
        \begin{align*}
            \Prx_{\bT}\left[\begin{array}{c}
                   \|a - p\|_1 \geq \sum_{\ell=0}^{L} \Split_{\ell+1}(a,p) \cdot \left( \frac{d}{2^{\ell}} \cdot c \cdot \log \frac{msd}{\delta} \right) \lor\\
                  \|b-p'\|_1 \geq \sum_{\ell=0}^{L} \Split_{\ell+1}(b,p') \cdot \left( \frac{d}{2^{\ell}} \cdot c \cdot \log \frac{msd}{\delta}\right)
            \end{array}  \right] \leq (m\cdot s\cdot (d+1))^2 \cdot \left(\frac{2\delta}{msd}\right)^c
        \end{align*}
        
        Therefore, for any $p_1 \in \Elms(v_{i_1}, \bhomg), p_2 \in \Elms(v_{i_2}, \bhomg)$, with probability at least $1 - (m\cdot s\cdot  (d+1))^2 \cdot \left(\frac{\delta}{msd}\right)^c$ over a draw of $\bT$, we have 
        \begin{align*}
            &\|a - p_1\|_1 \leq \sum_{\ell=0}^{L} \Split_{\ell+1}(a,p_1) \cdot \left( \frac{d}{2^{\ell}} \cdot c \cdot \log \frac{msd}{\delta} \right) \land \\
                  &\|b-p_2\|_1 \leq \sum_{\ell=0}^{L} \Split_{\ell+1}(b,p_2) \cdot \left( \frac{d}{2^{\ell}} \cdot c \cdot \log \frac{msd}{\delta}\right).
        \end{align*}
        Thus, also with probability at least $1 - (m\cdot s\cdot  (d+1))^2 \cdot \left(\frac{\delta}{msd}\right)^c$, it holds that            
        \begin{align*}
            \|a-b\|_1 \leq & \|a - p_1\|_1 + \|p_1- p_2\|_1 + \|p_2 - b\|_1 \\
            \leq &\sum_{\ell=0}^{L} \Split_{\ell+1}(a,p) \cdot \left( \frac{d}{2^{\ell}} \cdot c \cdot \log \frac{msd}{\delta} \right)  \\ 
            & + \Ex_{e_i \sim \Elms(v_{i_1 + i}) }\left[\|p_1 - e_1\|_1 + \|e_1 - e_2\|_1 + \dots + \|e_{i_2 - i_1 - 1} - p_2\|_1 \right] \\ & + \sum_{\ell=0}^{L} \Split_{\ell+1}(b,p') \cdot \left( \frac{d}{2^{\ell}} \cdot c \cdot \log \frac{msd}{\delta}\right) \\
            = & d_\bT(a,b)  
        \end{align*}
        The existence of $c$ such that $c \geq 1$ and $(m\cdot s \cdot (d+1))^2 \cdot \left(\frac{\delta}{msd}\right)^c \leq \frac{\delta}{s^2}$ completes the proof.
\end{itemize}

\end{proof}


\section{Data-Dependent Hashing and Sketching Lower Bounds}\label{sec:dd-lb}

We will now show that the data-dependent LSH (Definition~\ref{def:data-dep}) construction from Theorem~\ref{thm:data-dep-hashing} has an approximation factor of $\tilde{O}(\log s)$ which is best possible (up to the $\poly(\log \log s)$ factors in the $\tilde{O}$) when $p_1$ and $p_2$ are constant. We do this by reducing data-dependent LSH to sketching lower bounds, and apply the lower bound on~\cite{AIK08}. Specifically, recall the set-up of communication complexity for sketching lower bounds.
\begin{definition}[EMD Sketching and Distributional EMD Sketching]\label{def:sketching}
For every $s, d \in \N$ and every $r > 0$ and $c > 1$, we consider the communication complexity of the following partial function, whose inputs are sets $x, y \in \EMD_s(\{0,1\}^d)$ which satisfies:
\begin{align*}
    F(x, y) = \left\{ \begin{array}{cc} 1 & \EMD(x,y) \leq r \\
                 0 & \EMD(x, y) > cr \end{array} \right. .
\end{align*}
In the EMD sketching communication problem, we assume that a player Alice receives as input $x \in \EMD_s(\{0,1\}^d)$ and Bob receives an input $y \in \EMD_s(\{0,1\}^d)$, and they must design a public-coin communication protocol $\Pi$ whose outputs align with $F$ (whenever $x, y$ satisfy the two promises) with probability at least $2/3$, and which minimizes the communication.

Furthermore, we define the distributional version of the EMD sketching problem to be the same as above, but when there is ``far'' distribution $\mu$, known to both Alice and Bob, such that the inputs $(x,y)$ satisfy that  either \textbf{(1)} $x,y$ are arbitrary such that $\EMD(x,y) \leq r$ and the protocol should output $1$, or \textbf{(2)} the inputs $x,y\sim \mu$ are drawn independently from $\mu$ and whenever $\EMD(x,y) \geq cr$ the algorithm should output $0$ . Whenever $\EMD(x,y) \leq r$ or $\EMD(x,y) > cr$, then the communication protocol must be correct with probability $2/3$ over it's own randomness, and over the randomness of $x,y \sim \mu$ (if this inputs come from case \textbf{(2)}), and the output is allowed to be arbitrary if $r < \EMD(x,y) \leq cr$.
\end{definition}

In~Theorem~4.1 of~\cite{AIK08}, the authors show a communication complexity lower bound for the above problem, showing that, for every dimension $d \geq 1$ and any approximation ratio $1 \leq c \leq d$, if $\Pi$ is a randomized communication protocol for $F$ on $\EMD_s(\{0,1\}^d)$ for $s = 2^{\Theta(d)}$, then the communication complexity at least $\Omega(d/c)$, which also implies the lower bound of $\Omega(\log s / c)$. In particular, any $O(1)$-bit communication protocol $\Pi$ which computes $F$ must do so with approximation $c = \Omega(\log s)$. Inspecting the proof of~\cite{AIK08} (and in particular, the distribution over inputs used to derive the lower bound), one sees that they prove the following (stronger formulation) of Theorem~4.1, which applies when the points $x,y$ are drawn \textit{independently} from a known distribution $\mu$ in the far case.

\begin{theorem}[Theorem~4.1 and Lemma~4.8 of~\cite{AIK08}]\label{thm:aik-lb}
    For any $d \in \N$ and $1 \leq c \leq d$, there exists a distribution $\mu$ supported on $\EMD_s(\{0,1\}^d)$ with $s = 2^{\Theta(d)}$ with the following properties:
    \begin{itemize}
        \item If $\bx, \by \sim \mu$ are drawn independently, then $\EMD(\bx, \by) \geq sd/100$ with probability at least $1 - 2^{-\Omega(d)}$.
        \item There is another distribution $\rho$ supported on pairs $\EMD_s(\{0,1\}^d) \times \EMD_s(\{0,1\}^d)$ for which $(\bx,\by) \sim \rho$ satisfies $\EMD(\bx, \by) \leq sd/(100 c)$ with probability at least $1 - 2^{-\Omega(d/c)}$. 
    \end{itemize}
    For any function $f \colon \EMD_s(\{0,1\}^d) \to \{0,1\}$, 
    \begin{align*}
        \Prx_{\bx,\by \sim \mu}\left[ f(\bx) = f(\by) \right] + \Prx_{(\bx,\by) \sim \rho}\left[ f(\bx) \neq f(\by)\right] \geq 1 - 2^{-\Omega(d/c)}.
    \end{align*}
\end{theorem}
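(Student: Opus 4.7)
The plan is to follow the AIK08 blueprint, extracting from their Theorem~4.1 and Lemma~4.8 the geometric construction of the two distributions together with the key combinatorial/information-theoretic ``anti-separation'' bound. I would organize the argument into the three stages suggested by the statement: constructing $\mu$ and verifying that independent draws are far; coupling to define $\rho$ and verifying that its pairs are close; and then proving the distinguishing lower bound for any Boolean function $f$.

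For the construction of $\mu$, I would take $s = 2^{\Theta(d)}$ and draw a point by independently sampling each of its $s$ elements from some product-structured distribution on $\{0,1\}^d$ (e.g., uniform, or uniform over blocks to facilitate the later reduction). The key geometric fact to verify is that for two independent draws $\bx, \by \sim \mu$, the optimal matching has cost $\Omega(sd)$ with probability $1 - 2^{-\Omega(d)}$. Since a uniform random pair of points in the hypercube has Hamming distance concentrated around $d/2$, and any bijection $\pi : [s] \to [s]$ gives a sum of $s$ approximately-independent such distances, a Chernoff bound on the sum for a fixed $\pi$ combined with a union bound over all $s!= 2^{O(s \log s)}$ bijections (which is controlled because $s = 2^{\Theta(d)}$ is subexponential relative to the Chernoff decay $\exp(-\Omega(sd))$) gives the desired tail bound $1 - 2^{-\Omega(d)}$.

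For $\rho$, I would define the coupling $(\bx, \by)$ by first drawing $\bx \sim \mu$ and then producing $\by$ element-wise: each of the $s$ elements of $\bx$ is perturbed independently within a Hamming ball of radius $O(d/c)$ around itself. Matching $\bx$ to $\by$ via this perturbation directly yields $\EMD(\bx, \by) \leq s \cdot O(d/c) \leq sd/(100c)$; standard concentration over the $s$ independent perturbations (each of expected radius $\Theta(d/c)$) upgrades this to $1 - 2^{-\Omega(d/c)}$ probability.

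The main obstacle is the third item: showing that every $f$ satisfies
\[
\Prx_{\bx,\by \sim \mu}[f(\bx)=f(\by)] + \Prx_{(\bx,\by)\sim\rho}[f(\bx)\neq f(\by)] \geq 1 - 2^{-\Omega(d/c)}.
\]
My plan here is to re-derive AIK08's argument, which proceeds by a hybrid/noise-operator style comparison. Concretely, let $T_{\rho}$ denote the averaging (Markov) operator on $L^2(\mu)$ that sends a point $x$ to the distribution of its $\rho$-partner $\by$ conditioned on $\bx=x$; the two probabilities on the left-hand side can be rewritten as $\|f\|_2^2 - \langle f, f\rangle_\mu$ and $1 - \langle f, T_\rho f\rangle_\mu$ type expressions (up to centering around $\Ex[f]$). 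The ``perturb each element independently by a small Hamming ball'' structure of $\rho$ makes $T_{\rho}$ a tensor product of $s$ single-element noise operators on the hypercube, whose eigenvalue gap against uniform is at most $1 - 2^{-\Omega(d/c)}$ by a standard Fourier computation on $\{0,1\}^d$. Plugging this into the Cauchy--Schwarz / spectral bound $\langle f, T_\rho f\rangle \leq \Ex[f]^2 + (1 - 2^{-\Omega(d/c)}) \cdot \Var(f)$ gives, after rearrangement, precisely the desired tradeoff: either $f$ has variance $o(1)$ (so $f$ is essentially constant and collides on almost all $\mu$-pairs), or the perturbation $\rho$ separates $f$ with probability at most $1 - 2^{-\Omega(d/c)}$ less than the full variance. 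The hard step is carefully aligning the hypercube Fourier eigenvalue estimate with the product structure across the $s$ elements, which is exactly where the AIK08 proof invests its effort and where I would invoke their Lemma~4.8 essentially as a black box.
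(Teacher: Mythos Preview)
The paper does not prove this theorem: it is stated purely as a citation of Theorem~4.1 and Lemma~4.8 of~\cite{AIK08} and then used as a black box in the proof of Theorem~\ref{thm:dd-lb}. There is no argument in the present paper to compare your proposal against.

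As a side remark on your reconstruction of the AIK08 argument, two of your steps are more delicate than you indicate. For the first bullet, with $s = 2^{\Theta(d)}$ one has $\log(s!) = \Theta(s\log s) = \Theta(sd)$, which is the \emph{same} order as the Chernoff exponent $\Theta(sd)$ for a single fixed bijection; whether the naive union bound over all $s!$ matchings closes therefore depends on constants, and the standard way to prove such EMD lower bounds is not a direct union over bijections. For the third bullet, a per-bit noise rate $\Theta(1/c)$ on $(\{0,1\}^d)^s$ gives degree-one Fourier eigenvalue $1-\Theta(1/c)$, not $2^{-\Omega(d/c)}$, so the Cauchy--Schwarz/spectral bound you sketch only yields $\langle f, T_\rho f\rangle \leq p^2 + (1-\Theta(1/c))\cdot\Var(f)$, which is far weaker than the $p^2 + 2^{-\Omega(d/c)}$ you need. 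You are right that the actual content lives in AIK08's Lemma~4.8 and relies on additional structure in the construction of $\mu$; but your informal calculations do not on their own recover the stated bounds.
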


From the above theorem, we show that any data-dependent LSH for $\EMD$ which is $(r, cr, p_1,p_2)$-sensitive with a constant setting of $0 < p_2 < p_1 < 1$ must incur the factor of $\log s$ in the approximation. This is because such a LSH can easily been seen to solve the  distributional variant of sketching EMD, by constructing the LSH dependending on the known ``far'' distribution $\mu$. Specifically, using this fact yields the following.

\begin{theorem}\label{thm:dd-lb}
    Consider any fixed constants $0 < p_2 < p_1 < 1$, and suppose there exists some $c > 1$ such that, for all $s,d \in \N$ and $r > 0$, there is a data-dependent LSH which is $(r, cr, p_1, p_2)$-sensitive for $\EMD_s(\{0,1\}^d)$. Then, $c = \Omega(\log s)$.
\end{theorem}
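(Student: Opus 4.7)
The plan is to reduce distributional EMD sketching (Definition~\ref{def:sketching}) to data-dependent LSH, and then invoke the hard distribution of Theorem~\ref{thm:aik-lb}. Suppose, for contradiction, that for every $s, d, r$ there exists a $(r, cr, p_1, p_2)$-sensitive data-dependent LSH family over $\EMD_s(\{0,1\}^d)$ with $p_1 > p_2$ both constant. Given $d$, let $s = 2^{\Theta(d)}$ and let $\mu, \rho$ be the distributions from Theorem~\ref{thm:aik-lb} (with approximation parameter $c$). I will set $r = sd/(100c)$, so that (i) if $(\bx, \by) \sim \rho$ then $\EMD(\bx,\by) \leq r$ with probability at least $1 - 2^{-\Omega(d/c)}$, and (ii) if $\bx, \by \sim \mu$ are independent then $\EMD(\bx,\by) > cr = sd/100$ with probability at least $1 - 2^{-\Omega(d)}$. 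Let $\calD$ denote the distribution over maps $h : \EMD_s(\{0,1\}^d) \to U$ obtained by instantiating the assumed data-dependent LSH on the (known) distribution $\mu$.

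The next step is to convert such an LSH into a two-valued function, which is the object on which Theorem~\ref{thm:aik-lb} gives a lower bound. I would sample $\bh \sim \calD$ together with a uniformly random coloring $\bg : U \to \{0,1\}$ of the hash buckets, and define the random Boolean function $\bff(x) = \bg(\bh(x))$. Whenever $\bh(x) = \bh(y)$, the outputs agree deterministically; otherwise they agree with probability exactly $1/2$ over $\bg$. Thus
\[
\Prx_{\bh,\bg}[\bff(x) = \bff(y)] = \tfrac{1}{2} + \tfrac{1}{2} \Prx_{\bh}[\bh(x) = \bh(y)].
\]
Applying this formula to the two distributions in Theorem~\ref{thm:aik-lb} gives the following straightforward calculation. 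For the ``far'' case $\bx, \by \sim \mu$, the data-dependent ``$p_2$-property'' of Definition~\ref{def:data-dep}, integrated over $\bx \sim \mu$, together with the fact that far pairs fail the $\EMD > cr$ promise with probability $2^{-\Omega(d)}$, gives $\Prx_{\bh,\bx,\by}[\bh(\bx) = \bh(\by)] \leq p_2 + 2^{-\Omega(d)}$. For the ``close'' case $(\bx,\by) \sim \rho$, the ``$p_1$-property'' combined with the close-pair promise holding with probability $1 - 2^{-\Omega(d/c)}$ gives $\Prx_{\bh,\bx,\by}[\bh(\bx) = \bh(\by)] \geq p_1 - 2^{-\Omega(d/c)}$.

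Adding both contributions yields
\[
\Ex_{\bh, \bg}\!\left[ \Prx_{\bx,\by \sim \mu}[\bff(\bx) = \bff(\by)] + \Prx_{(\bx,\by) \sim \rho}[\bff(\bx) \neq \bff(\by)] \right] \leq 1 - \frac{p_1 - p_2}{2} + 2^{-\Omega(d/c)}.
\]
By averaging, there is a deterministic $f$ achieving the same bound. But Theorem~\ref{thm:aik-lb} asserts that every Boolean function $f$ must satisfy this same expression is at least $1 - 2^{-\Omega(d/c)}$. Combining the two inequalities forces $(p_1 - p_2)/2 \leq O(2^{-\Omega(d/c)})$, and since $p_1 - p_2 = \Omega(1)$ this requires $d/c = O(1)$, i.e., $c = \Omega(d) = \Omega(\log s)$, as $s = 2^{\Theta(d)}$.

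The argument is essentially a reduction, so I expect no serious obstacle beyond setting parameters correctly. The one subtlety worth highlighting is that the ``$p_2$-property'' in Definition~\ref{def:data-dep} is an averaged guarantee only over $\by \sim \mu$ for a \emph{fixed} $x$; this is exactly why Theorem~\ref{thm:aik-lb}'s distributional formulation (where the far inputs are drawn independently from a known $\mu$) is the right hammer. A data-independent LSH would satisfy a pointwise ``$p_2$-property'' for all far pairs, but we actually only need the weaker averaged statement, and this precisely matches what the lower bound rules out.
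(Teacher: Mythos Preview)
Your proposal is correct and follows essentially the same approach as the paper: instantiate the data-dependent LSH on the hard distribution $\mu$ from Theorem~\ref{thm:aik-lb}, compose with a uniformly random $\{0,1\}$-coloring of the buckets to obtain a random Boolean function, and compare the resulting bound on $\Prx_{\mu}[f(\bx)=f(\by)] + \Prx_{\rho}[f(\bx)\neq f(\by)]$ against the lower bound of Theorem~\ref{thm:aik-lb}. The paper writes the coloring as ``each bucket assigned $1$ with probability $\alpha=1/2$'' and carries $\alpha$ through symbolically, but this is exactly your random $\bg$, and the arithmetic is the same.
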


\begin{proof}
Consider a data-dependent hash family $\calH$ for $\EMD$ which is $(r, cr, p_1, p_2)$-sensitive for  for $\mu$, where $r = sd/(100c)$. Then, consider the distribution over Boolean functions $\boldf \colon \EMD_s(\{0,1\}^d) \to \{0,1\}$ given by \textbf{(i)} first hashing $\EMD_s(\{0,1\}^d)$ according $\bh \sim \calH$, and then \textbf{(ii)} choosing, for each bucket independently, whether to have $\boldf$ assign every point in that bucket to $1$ with probability $\alpha = 1/2$ (and otherwise $0$). Then, by Definition~\ref{def:data-dep}, we have
\begin{align*}
&\Ex_{\boldf}\left[ \Prx_{\bx, \by \sim \mu}\left[ \boldf(x) = \boldf(y)\right] + \Prx_{(\bx,\by) \sim \rho}\left[ \boldf(x) \neq \boldf(y)\right] \right] \\
&\qquad \leq \alpha^2 + (1 - \alpha)^2 + 2\alpha(1-\alpha)\left( 2^{-\Omega(d)} + \Ex_{\bx \sim \mu}\left[\Prx_{\substack{\bh \sim \calH \\ \by \sim \mu}}\left[\begin{array}{c} \EMD(\bx,\by) \geq sd/100 \\ \bh(\bx) = \bh(\by) \end{array} \right] \right] \right) \\
&\qquad\quad + 2\alpha(1-\alpha) \left( \Prx_{\substack{\bh \sim \calH \\ (\bx,\by) \sim \rho}}\left[ \bh(x) \neq \bh(y) \mid \EMD(x,y) \leq sd/(100c) \right] + 2^{-\Omega(d/c)} \right) \\
&\qquad\leq 1 + 2\alpha(1-\alpha) (p_2 - p_1) + 2^{-\Omega(d)} + 2\alpha(1-\alpha) \cdot 2^{-\Omega(d/c)}.
\end{align*}
So, there exists a Boolean function $f$ which is below the above expectation. By Theorem~\ref{thm:aik-lb}, this quantity must be at least $1 - 2^{-\Omega(d/c)}$, and hence
\[ \Omega(1) \leq p_1 - p_2 \leq 2^{-\Omega(d/c)} \leq 2^{-\Omega(\log s/c)} \]
and therefore, $c = \Omega(\log s)$.
\end{proof}




\section{Data-Dependent LSH to ANN: Proof of Theorem~\ref{thm:hashing-to-nn}}\label{sec:dd-hash-to-ann}

The proof of Theorem~\ref{thm:hashing-to-nn} proceeds by executing multiple ``core'' data structures which output a dataset point and succeed at finding an approximate near neighbor with a small (but non-trivial) probability, just like in~\cite{IM98, HIM12}. We first describe the ``core'' data structure, $\CorePreprocess$ and $\CoreQuery$ in Figure~\ref{fig:core-preprocess} and Figure~\ref{fig:core-query}, which we show succeed with probability at least $p_1 n^{-\rho}$. By repeating $O(n^{\rho} / p_1)$ times, we amplify the success probability to $9/10$.

\begin{figure}[H]
\label{alg:preprocess}
	\begin{framed}
		\noindent Subroutine $\CorePreprocess(P, k)$
		
		\begin{flushleft}
			\noindent {\bf Input:} A dataset $P \subset X$, and a positive integer $k \in \N$.
			
			\noindent {\bf Output:} The pointer to a data-structure node $v$. 
			
			\begin{itemize}
				\item Initialize a data structure node $v$. Sample $\bp \sim P$ and store it in $v.\textit{point}$.
				\item If $k = 0$, store the dataset $P$ in $v.\textit{data}$ and return $v$. 
				\item If $k > 0$, perform the following:
				\begin{itemize}
				\item Execute the initialization algorithm to maintain a $(r, cr, p_1,p_2)$-sensitive hash family for the uniform distribution over $P$. 
                Store a pointer to this data structure in $v.h$, which holds a draw to $\bh$.
				\item For every $p \in P$, query the data structure in $v.h$ with $p$ to compute $\bh(p)$. For every $u \in U$ for which there exists $p \in P$ where $\bh(p) = u$, let $P_u$ denote the set of points $p \in P$ where $\bh(p) = u$. 
				\item For each non-empty $P_u$, execute $\CorePreprocess(P_u, k-1)$ and store the data structure node as a child $v.u$ of $v$. 
				\item Return $v$. 
				\end{itemize}
			\end{itemize}
		\end{flushleft}
	\end{framed}
	\caption{The $\CorePreprocess$ Algorithm.}\label{fig:core-preprocess}
\end{figure}

\begin{figure}[H]
\label{alg:query}
	\begin{framed}
		\noindent Subroutine $\CoreQuery(q, v)$
		
		\begin{flushleft}
			\noindent {\bf Input:} A point $q \in X$ and a data structure node $v$ from $\CorePreprocess(P,k)$, for some $k$. \\
			\noindent {\bf Output:} A point $p \in P$, or ``fail.''
			
			\begin{itemize}
				\item Let $p$ be the point stored in $v.\textit{point}$. Compute $d_X(p, q)$ and return $p$ if the distance is at most $cr$. 
				\item If $v.\textit{data}$ contains a set of points $P$ (i.e., it is a leaf node), scan for the first $\hat{p} \in P$ where $d_X(\hat{p}, q) \leq cr$ and return $\hat{p}$. If no such points are found, output ``fail.''
				\item Otherwise, $v.\textit{data}$ is empty and $v.h$ contains a data structure computing a hash function $\bh$. Query the data structure to compute $\bh(q)$ and let $u$ denote its output. If $v.u$ is empty, output ``fail,'' and otherwise, output $\CoreQuery(q, v.u)$.
			\end{itemize}
		\end{flushleft}
	\end{framed}
	\caption{The $\CoreQuery$ Algorithm.}\label{fig:core-query}
\end{figure}

The following claim, which upper bounds the preprocessing time of $\CorePreprocess$, is straight-forward. We simply bound, for each point $p \in P$, the number of times it evaluates a hash function maintained by a data structure, and the number of times that an initialization procedure of a hash function is called with a dataset containing $p \in P$. Both quantities are easily seen to be at most $k$ on each dataset, and this gives the desired bound.

\begin{claim}[Preprocessing Time of $\CorePreprocess$]\label{cl:preprocess-time}
For any dataset $P \subset X$ of $n$ points the algorithm $\CorePreprocess(P, k)$ runs in time $O\left(nk \cdot (I_{\sfh}(n) + Q_{\sfh}(n))\right)$.
\end{claim}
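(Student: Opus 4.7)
The plan is to analyze the recursion tree of $\CorePreprocess(P, k)$, bound the total number of recursive calls, and then charge the per-call cost using monotonicity of $I_{\sfh}$ and $Q_{\sfh}$.

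First, I would observe that the recursion tree has depth at most $k+1$: each recursive call decreases the parameter from $k'$ to $k'-1$, and terminates when $k' = 0$. At each internal node $v$ of the recursion tree, the associated dataset $P_v$ is partitioned into the non-empty hash buckets $\{ P_u\}$ that will form the children; in particular, the datasets associated with the nodes at any fixed depth $i \in \{0, 1, \dots, k\}$ of the recursion tree are pairwise disjoint subsets of $P$. Since each such node is created only when its bucket is non-empty, the total number of nodes at depth $i$ is at most $|P| = n$. Summing over the (at most $k+1$) depths gives that the recursion tree contains at most $(k+1) \cdot n = O(nk)$ nodes in total, and that $\sum_{v} |P_v| \leq (k+1) \cdot n = O(nk)$ where the sum ranges over all nodes in the recursion tree.

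Next, I would bound the work done at a single internal node $v$ with associated dataset $P_v$. The node performs \textbf{(i)} an initialization of the data-dependent hashing data structure on the uniform distribution over $P_v$, taking time at most $I_{\sfh}(|P_v|)$, and \textbf{(ii)} a query of $\bh$ on each $p \in P_v$, taking total time at most $|P_v| \cdot Q_{\sfh}(|P_v|)$; bookkeeping (collecting points into buckets, sampling the stored representative) adds at most $O(|P_v|)$ additional work. Using the natural monotonicity of $I_{\sfh}$ and $Q_{\sfh}$ in the dataset size, which allows us to replace $I_{\sfh}(|P_v|)$ by $I_{\sfh}(n)$ and $Q_{\sfh}(|P_v|)$ by $Q_{\sfh}(n)$, the cost of node $v$ is bounded by $I_{\sfh}(n) + |P_v| \cdot Q_{\sfh}(n) + O(|P_v|)$. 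At a leaf ($k = 0$), only the cost $O(|P_v|)$ of storing $P_v$ is incurred, which is dominated by the above expression.

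Finally, summing the per-node cost over all nodes in the recursion tree and applying the two bounds above gives
\[
\sum_{v} \Big( I_{\sfh}(n) + |P_v| \cdot Q_{\sfh}(n) + O(|P_v|)\Big) \leq O(nk) \cdot I_{\sfh}(n) + O(nk) \cdot Q_{\sfh}(n),
\]
which is the claimed $O\big(nk \cdot (I_{\sfh}(n) + Q_{\sfh}(n))\big)$ bound. The only mildly subtle point is recognizing that the bucketing at every level produces disjoint subsets of $P$, so that both the number of nodes and the sum $\sum_v |P_v|$ telescope nicely over depth; everything else is a routine application of monotonicity.
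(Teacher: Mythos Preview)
Your proof is correct and essentially follows the paper's approach. The paper's one-sentence justification charges, for each point $p \in P$, the at most $k$ hash evaluations and the at most $k$ initializations whose dataset contains $p$; your recursion-tree formulation makes the same accounting explicit by bounding the number of nodes per level and $\sum_v |P_v|$ level by level, and is a cleaner exposition of the same idea.
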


\begin{claim}[Success Probability $\CorePreprocess$ and $\CoreQuery$]\label{cl:success-prob}
For any dataset $P \subset X$ of $n$ points and any query $q \in X$. If there exists $p \in P$ with $d_X(p, q) \leq r$, then for any $k \in \N$,
\begin{align*}
\Prx\left[ \CoreQuery(q, \bv) \text{ doesn't fail when } \bv \leftarrow \CorePreprocess(P, k) \right] \geq p_1^k.
\end{align*}
\end{claim}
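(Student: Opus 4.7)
The plan is to prove Claim~\ref{cl:success-prob} by induction on $k$, exploiting the independence of the hash draws across different levels of the recursion tree built by $\CorePreprocess$. The base case $k = 0$ is immediate: when $\bv$ is a leaf, its $\bv.\mathit{data}$ field contains the full dataset $P$ passed in, and $\CoreQuery$ performs a linear scan over $P$; since by assumption there exists $p \in P$ with $d_X(p, q) \leq r \leq cr$, the scan must return some (not necessarily equal to $p$) point $\hat p$ with $d_X(\hat p, q) \leq cr$, so the algorithm does not fail. Hence the success probability is $1 = p_1^0$.

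For the inductive step at depth $k$, I would fix the near neighbor $p \in P$ with $d_X(p,q) \leq r$ and argue that with probability at least $p_1$ the algorithm progresses ``correctly'' by one level, after which I can invoke the inductive hypothesis. Concretely, at the root node $\bv$ produced by $\CorePreprocess(P,k)$, the data structure samples a hash function $\bh$ from a $(r, cr, p_1, p_2)$-sensitive family for the uniform distribution on $P$ (the randomness here is independent of any randomness used in later recursive calls). By the ``close points collide'' property of Definition~\ref{def:data-dep}, and because $d_X(p,q) \leq r$, we have
\[
\Prx_{\bh}[\bh(p) = \bh(q)] \geq p_1.
\]
Whenever this event holds, letting $u = \bh(q) = \bh(p)$, the bucket $P_u$ is non-empty (it contains $p$), so a child node $\bv.u$ was created by the recursive call $\CorePreprocess(P_u, k-1)$, and moreover $p \in P_u$ is still a valid near neighbor of $q$ within $P_u$. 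Then $\CoreQuery(q, \bv)$ does not immediately fail at $\bv$ (because $\bv.u$ is non-empty), and it recurses into $\CoreQuery(q, \bv.u)$.

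To complete the induction I would apply the inductive hypothesis to $(q, \bv.u)$: conditioned on $\bh(p) = \bh(q)$, the random hash functions and samples internal to the subtree rooted at $\bv.u$ are drawn independently of $\bh$, and the subtree was produced by $\CorePreprocess(P_u, k-1)$ with $p \in P_u$ a near neighbor of $q$, so by induction that recursive query succeeds with probability at least $p_1^{k-1}$. Multiplying the two independent success events yields an overall success probability of at least $p_1 \cdot p_1^{k-1} = p_1^k$, as claimed.

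The only subtlety, and the one I would state explicitly, is the independence between the hash draw at level $k$ and the randomness used below it: each invocation of $\CorePreprocess$ calls the data structure of Definition~\ref{def:ds-data-dep} afresh to obtain its own independent draw, so conditioning on $\bh(p) = \bh(q)$ does not affect the conditional success probability of the subtree. No other property of the hash family (in particular, the ``far points separate on average'' guarantee) is needed for this claim; that property is only used later, in the analysis of the query time and of the probability that $\CoreQuery$ either terminates quickly or returns a valid near neighbor rather than ``failing up'' through an empty child.
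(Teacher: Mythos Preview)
Your proposal is correct and matches the paper's proof essentially line for line: both argue by induction on $k$, handle the base case via the linear scan at a leaf, and in the inductive step condition on the event $\bh(p)=\bh(q)$ (probability at least $p_1$) before invoking the hypothesis on the child subtree. Your explicit remark about the independence of hash draws across levels is a point the paper leaves implicit, but otherwise the arguments are the same.
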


\begin{proof}
The proof is a straight-forward induction on $k$ using the definition of data-dependent hashing with $(r, cr, p_1, p_2)$-sensitive hash functions. Suppose for an inductive hypothesis that for some integer $k_0 \geq 0$, whenever there exists $p \in P$ with $d_X(p, q) \leq r$, the probability that an execution of $\bv_0 \leftarrow \CorePreprocess(P, k_0)$ and $\CoreQuery(q, \bv_0)$ outputs an approximate near neighbor is at least $p_1^{k_0}$. Note that the base case of $k_0 = 0$ is trivial, since $\bv_0 \leftarrow \CorePreprocess(P, 0)$ stores all of $P$ in $\bv_0.\textit{data}$ and this is scanned by $\CoreQuery(q, \bv_0)$. If we execute $\bv \leftarrow \CorePreprocess(p, k_0 + 1)$ then we can lower bound the probability that $\CoreQuery(q, \bv)$ outputs an approximate near neighbor by considering the following event. 

Suppose that, when we execute $\bv \leftarrow \CorePreprocess(P, k_0 + 1)$, the following occurs. 
\begin{enumerate}
\item First, we generate a hash function $\bh \colon X \to U$ which is stored in $\bv.h$, and we happen to satisfy $\bh(p) = \bh(q)$.  So, letting $u = \bh(p)$, the call to $\CorePreprocess(P, k_0 + 1)$ recursively executes $\bv_0 \leftarrow \CorePreprocess(P_u, k_0)$, where $p \in P_u$ and $\bv_0$ is stored in $\bv.u$. 
\item We note furthermore that $\CoreQuery(q, \bv)$ will evaluate the hash function $\bh(q)$ and will have $\bh(q) = u$, so it will return $\CoreQuery(q, \bv_0)$ where $\bv_0 = \bv.u$. If, the call to $\CoreQuery(q, \bv_0)$, where $\bv_0$ is generated from $\CorePreprocess(P_u, k_0)$ succeeds, then $\CoreQuery(q, \bv)$ succeeds.
\end{enumerate}
Since the hash function $\bh$ is sampled from a $(r, cr, p_1, p_2)$-sensitive hash family, $\bh(p) = \bh(q)$ with probability at least $p_1$. By the inductive hypothesis, the call $\CoreQuery(q, \bv_0)$ succeeds with probability $p_1^{k_0}$, and hence we succeed with probability at least $p_1^{k_0 + 1}$, completing the inductive claim.
\end{proof}

\begin{claim}[Query Time of $\CoreQuery$]\label{cl:query-time}
For any dataset $P \subset X$ and any query $q \in X$ let $P_f(q) \subset P$ be
\[ P_f(q) = \left\{ p \in P : d_X(p, q) > cr \right\}. \]
The expected running time of $\CoreQuery(q, \bv)$ where $\bv\leftarrow \CorePreprocess(P, k)$ is at most
\[ O\left(k \cdot (Q_{\sfh}(n) + 1) + |P_f(q)| \cdot p_2^k\right).\]
\end{claim}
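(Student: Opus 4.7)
The plan is to separately bound the cost contributions of (i) the $k$ internal (non-leaf) recursion levels and (ii) the scan performed at the final leaf, identifying these with the two terms of the claimed bound.

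For the internal-level contribution: each recursive call $\CoreQuery(q,v)$ on an internal node $v$ (produced by $\CorePreprocess(P',k')$ with $k'>0$) performs a constant number of operations to compare $v.\textit{point}$ with $q$ and then queries the data structure stored in $v.h$ to evaluate $\bh(q)$, for a total cost of $O(Q_{\sfh}(n)+1)$. Since each call descends exactly one level of the tree (following the unique child indexed by $\bh(q)$), summing over the $k$ levels of recursion yields $O(k(Q_{\sfh}(n)+1))$, which is the first term of the stated bound.

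For the leaf-scan contribution: once the recursion reaches a leaf $v$, the algorithm iterates through $v.\textit{data}=P_{\text{leaf}}$ and returns the first $\hat p$ satisfying $d_X(\hat p, q)\le cr$, or outputs ``fail'' if no such point exists. Because the scan terminates upon encountering the first non-far point, it examines at most $1+|P_f(q)\cap P_{\text{leaf}}|$ points; the $+1$ absorbs the possible non-far point that triggers early termination. Thus it suffices to bound $\mathbb{E}[|P_f(q)\cap P_{\text{leaf}}|]$.

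To bound this expectation, the plan is to induct on $k$. Let $P_j$ denote the random sub-dataset surviving after $j$ levels of hashing (so $P_0=P$ and $P_k=P_{\text{leaf}}$), and let $\bh_j$ denote the hash function used at level $j$, drawn from a $(r,cr,p_1,p_2)$-sensitive family for $\mu=$ uniform on $P_{j-1}$. Conditioning on $P_{j-1}$ and applying the ``far points separate on average'' property from Definition~\ref{def:data-dep},
\[
\Ex_{\bh_j}\!\left[|P_f(q)\cap P_j|\,\big|\,P_{j-1}\right]
= \sum_{p\in P_f(q)\cap P_{j-1}}\Prx_{\bh_j}\!\left[\bh_j(p)=\bh_j(q)\right]
\le p_2\cdot|P_{j-1}|.
\]
Unrolling this one-step contraction $k$ times and using $|P_0\cap P_f(q)|=|P_f(q)|$ will give the desired bound $\Ex[|P_f(q)\cap P_k|]\le |P_f(q)|\cdot p_2^k$. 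Adding the two contributions yields the claimed $O(k(Q_{\sfh}(n)+1) + |P_f(q)|\cdot p_2^k)$.

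The main technical obstacle is the inductive unrolling: a naive iteration of the bound above produces a factor of $\Ex[|P_{j-1}|]$, which mixes far and ``close-ish'' points (those with $d_X(\cdot,q)\le cr$) and only yields the weaker $p_2 |P_{j-1}|\le p_2 n$. To obtain the tight $p_2^k$ contraction on the far-point count, one must carefully account for the fact that close-ish survivors at intermediate levels do not themselves inflate the \emph{far}-collision count at the next level (the $p_2$-bound above already sums only over $p\in P_f(q)\cap P_{j-1}$), and that at the leaf such points only improve the running time by triggering the early-termination $+1$. Managing this bookkeeping through the induction is the delicate step; once it is carried out, combining the two bounds completes the proof.
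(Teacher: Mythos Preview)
You have correctly identified the two cost contributions and the central obstacle, but you have not resolved it, and in fact it cannot be resolved along the lines you sketch. Your one-step bound
\[
\Ex_{\bh_j}\!\left[|P_f(q)\cap P_j|\ \big|\ P_{j-1}\right]\le p_2\,|P_{j-1}|
\]
does not telescope to $|P_f(q)|\,p_2^k$: the left-hand side tracks $|P_f(q)\cap P_j|$ while the right-hand side has the full $|P_{j-1}|$, and close-ish points in $P_{j-1}$ genuinely inflate that right-hand side. The data-dependent definition gives no per-point bound on $\Prx[\bh(p)=\bh(q)]$ for far $p$, only the averaged bound over $\mu$; so if $P_{j-1}$ contains many close-ish points, the guarantee is consistent with \emph{every} far point colliding with $q$. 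Your final paragraph asserts the bookkeeping can be managed but offers no mechanism.

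The missing ingredient is the random sample $v.\textit{point}$ stored at each internal node. You treat this check as a constant-cost step, but its role in the analysis is that it causes \emph{early termination}: with probability $|P'\setminus P_f(q)|/|P'|$ the sampled point is within $cr$ of $q$ and the recursion stops. The paper's induction folds this in: conditioned on not terminating (probability $|P'\cap P_f(q)|/|P'|$), the recursive cost has the term $\Ex_{\bh}[|P_{\bh(q)}\cap P_f(q)|]\cdot p_2^{k_0}\le |P'|\,p_2^{k_0+1}$; multiplying by the non-termination probability converts $|P'|$ to $|P'\cap P_f(q)|$, which is exactly the contraction needed. Without this multiplicative factor from the $v.\textit{point}$ check, the $|P_f(q)|\,p_2^k$ bound simply does not follow from the data-dependent $p_2$-property.
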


\begin{proof}
Similarly to Claim~\ref{cl:success-prob}, we claim this by induction on $k$. 
The base case of $k = 0$ is trivial, as all points in $P$ are stored in $\bv.\textit{data}$ when $\bv \leftarrow \CorePreprocess(P,0)$. Therefore, the time to scan $\bv.\textit{data}$ before finding an approximate near neighbor is at most $|P_f(q)|$. So, suppose for an inductive hypothesis that the expected time complexity of $\CoreQuery(q, \bv_0)$ where $\bv_0 \leftarrow \CorePreprocess(P', k_0)$ is 
\[ O(k_0 \cdot (Q_{\sfh}(n) +1) + |P' \cap P_f(q)|  \cdot p_2^{k_0}). \]
We now upper bound the expected time of $\CoreQuery(q, \bv)$ where $\bv \leftarrow \CorePreprocess(P, k_0 + 1)$.
\begin{itemize}
\item First, we note that the call to $\CorePreprocess(P, \bv)$ had sampled $\bp \sim P$ and stored it in $v.\textit{point}$. If the sample satisfied $\bp \in P \setminus P_f(q)$, then $d_X(\bp, q) \leq cr$ and we can return $\bp$.
\item Otherwise, we let $\bh \colon X \to U$ denote the hash function stored in $\bv.h$, which is drawn from a $(r, cr, p_1,p_2)$-sensitive family $\calD$ for the uniform distribution over $P$. The time contains an additive term of at most $O(Q_{\sfh}(n))$ for computing $\bh(q)$.
\item Then, we execute $\CoreQuery(q, \bv_0)$ where $\bv_0 \leftarrow \CorePreprocess(P_{\bh(q)}, k_0)$. By the inductive hypothesis, the expected running time of $\CoreQuery(q, \bv_0)$ is at most
\[ O\left(k_0 \cdot (Q_{\sfh}(n)+1) +|P_{\bh(q)}\cap P_f(q)|\cdot p_2^{k_0} \right).\]
\end{itemize}
Therefore, the total expected time complexity becomes at most
\begin{align*}
&\Prx_{\bp \sim P}\left[ \bp \in P_{f}(q) \right] \cdot O\left( (k_0 + 1) \cdot Q_{\sfh}(n) + \Ex_{\bh \sim \calD}\left[ |P_{\bh(q)} \cap P_f(q)| \right] p_2^{k_0} \right) + O(1)  \\
&\qquad\qquad \leq \Prx_{\bp \sim P}\left[ \bp \in P_f(q) \right] \cdot O\left( (k_0 + 1) \cdot Q_{\sfh}(n) + |P| \cdot p_2^{k_0+1} \right) + O(1),
\end{align*}
where we used
\begin{align*} 
\Ex_{\bh \sim \calD}\left[ |P_{\bh(q)} \cap P_f(q)| \right] &= |P| \cdot \Prx_{\substack{\bh \sim \calD \\ \bp \sim P}}\left[ \begin{array}{c} d_X(q, \bp) > cr \\ \bh(q) = \bh(\bp) \end{array} \right]  \leq |P| \cdot p_2.
\end{align*}
This concludes the inductive hypothesis, since the probability that $\bp \in P_f(q)$ is exactly $|P_f(q)| / |P|$.
\end{proof}

\begin{proof}[Proof of Theorem~\ref{thm:hashing-to-nn}]
We let $k = \lceil \log_{1/p_2} n \rceil$ and instantiate $\ell = O(n^{\rho}/p_1)$ independent executions of $\CorePreprocess(P, k)$. By Claim~\ref{cl:success-prob}, the probability that any single data structure succeeds is at least $p_1^k$, so that the probability that all the data structures fail is at most
\[ \left( 1 - p_1^k\right)^{\ell} \leq \exp\left( - O(n^{\rho} / p_1) \cdot p_1^k \right) = \exp\left( - O\left( \frac{1}{p_1^{1 + \log_{1/p_2} n}}\right) \cdot p_1^{\lceil \log_{1/p_2} n \rceil}\right) \leq 0.1. \]
The preprocessing time follows from the setting of $k, \ell$ and Claim~\ref{cl:preprocess-time}. For the query time, Claim~\ref{cl:query-time} implies that the expected running time is at most
\[ \ell \cdot O\left(k \cdot (Q_{\sfh}(n)+1) + |P_f(q)| \cdot p_2^k \right) \leq \ell \cdot  O\left( \log_{1/p_2} n \cdot Q_{\sfh}(n) + 1 \right), \]
which concludes the theorem.
\end{proof}

\section{Extension of Dynamic Data-Dependent Trees from the Hamming Cube to $\ell_1$} \label{sec:dynamic-l1}

First, for any $p \in (1, 2]$, there exists an embedding of $\ell_p^d$ to $\ell_1^{d'}$ which is implemented by a linear map and perturbs distances by $(1+\eps)$, where $d' = O(d\log(1/\eps)/\eps^2)$~\cite{johnson1982embedding}. Using this embedding increases the running time by an additive factor of $O(dd') = \poly(d)$, but all points are in $\ell_1^d$ and the aspect ratio $\Phi$ changes by at most a $(1+\eps)$-factor. By 
 re-scaling and discretizing by the aspect ratio $\Phi$, we may further consider inputs which lie in $([\Delta]^d, \ell_1)$ (where $\Delta$ is $O(\Phi)$). Observe that there is a simple isometric embedding $u:([\Delta]^d,\ell_1) \to \{0,1\}^{\Delta d}$ given by the unary encoding of each coordinate:
\[ u(x)_{d \cdot (i-1)+ j } = \mathbf{1}\left(x_i \geq j \right) \]
for any $i \in [d],j \in [\Delta]$. Thus, Theorem \ref{thm:dynamic-main} would follows, except, the unary embedding requires $O(\Delta d)$ running time, which is potentially exponential in the bit-representation of $x \in [\Delta]^d$. In this section, we show that, despite explicitly computing the unary embedding is too costly, the composition of the unary embedding $u$ from $[\Delta]^d \to \{0,1\}^{d \Delta}$ and the tree embedding $\{0,1\}^{d \Delta} \to \bT$ of Theorem \ref{thm:dynamic-main} can be maintained without fully-forming the intermediate unary embedding $u(x)$. In particular, we show that the result of applying the dynamic tree embedding to the unary encoding can be realized in only $\tilde{O}(d)$ time. We will first need the following. 

\begin{lemma}[In Theorem 5 of \cite{bringmann2014internal} for the case of $p=1/2$, and Theorem 2 of \cite{farach2015exact} for reduction to general $q$]\label{lem:fastbinom}
    Fix any $q \in [0,1]$, $n \geq 1$, and constant $c>0$. There is an algorithm that samples $\bX \sim \emph{\texttt{Binomial}}(n,q)$ in expected $O(1)$ time in the WordRAM model with $O(\log n)$-bit words, and in time $\polylog(n)$ with probability $1-n^{-c}$. 
\end{lemma}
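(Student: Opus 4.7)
The statement is established in the cited works \cite{bringmann2014internal, farach2015exact}, so the plan is to sketch the two main ingredients and how one would assemble them; the full details live in those references. I would first prove the base case $q = 1/2$ (the content of \cite{bringmann2014internal}) and then give the reduction from general $q$ to the $q=1/2$ case (the content of \cite{farach2015exact}).

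For the base case $q = 1/2$, the plan is to exploit the sharp Gaussian-type concentration of $\text{Binomial}(n,1/2)$ around $n/2$: all but a $n^{-\omega(1)}$ fraction of the mass lies in a window of width $O(\sqrt{n \log n})$. One precomputes (in $\polylog(n)$ time, charged to an amortization argument or done once as a setup) a compact description of the CDF restricted to this typical window, stored with $O(\log n)$ bits of precision per cell. To draw a sample, one generates a uniform $O(\log n)$-bit word $\bU$ and binary-searches the stored CDF table; in the word-RAM model this search is effectively constant time because the table can be packed so that a small constant number of word-sized operations suffice. Tail samples (whose total probability is $O(n^{-c})$ for the desired constant $c$) are handled by a separate rejection-sampling subroutine that terminates in $\polylog(n)$ time deterministically, giving the stated high-probability bound. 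The expected $O(1)$ bound follows since the typical-region branch is taken with overwhelming probability.

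For the reduction to general $q \in [0,1]$, the plan is the following recursive identity. Without loss of generality assume $q \leq 1/2$ (otherwise sample $n - \bX$ with parameter $1 - q$). Write $q = (2q)/2$ and observe that $\text{Binomial}(n,q)$ has the same law as $\text{Binomial}(\bY, 2q)$ where $\bY \sim \text{Binomial}(n, 1/2)$: this is just the statement that a $q$-coin can be simulated by first flipping a $1/2$-coin and, conditionally on heads, flipping a $2q$-coin. This halves the ``complexity'' of $q$ in the sense that the binary expansion of $2q$ has one fewer leading zero than that of $q$. Batching $\Theta(\log n)$ bits of the binary expansion of $q$ at a time (so that each word-RAM step consumes a word's worth of the expansion), one obtains a procedure that terminates after $O(1)$ recursive calls in expectation, each invoking the base-case $q=1/2$ sampler once. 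This gives overall expected $O(1)$ time and $\polylog(n)$ time with probability $1 - n^{-c}$.

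The main obstacle I would expect is making the base-case sampler truly $O(1)$ expected time on the word-RAM model rather than $O(\log n)$: a naive CDF lookup requires $\Theta(\log n)$ bits of input randomness and a binary search of length $\Theta(\log n)$, so the constant-time claim genuinely depends on the ability to pack multiple CDF entries per machine word and perform a word-parallel comparison in $O(1)$ time, combined with a careful analysis of the expected number of comparisons needed. Once that is in place, both the reduction step and the high-probability bound are routine, and the rest of the lemma follows by composing the two ingredients.
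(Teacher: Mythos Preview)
The paper does not prove this lemma; it is imported as a black box from the two cited references, so there is no in-paper argument to compare against. Assessing your sketch on its own terms, the two-part decomposition (a $q=1/2$ base case plus a reduction from general $q$) matches the citation structure, but both halves have genuine gaps as you have written them.

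For the base case, a CDF table over the typical window of width $\Theta(\sqrt{n\log n})$ requires $\Theta(\sqrt{n}\cdot\polylog n)$ time and space to build, not $\polylog(n)$, so it cannot be ``done once as a setup'' within the stated bounds; and if you instead compute CDF values on the fly during the binary search, each sample costs $\polylog(n)$ rather than $O(1)$. The actual technique in the first reference avoids any $n$-dependent table: it is rejection sampling against an explicit envelope whose ratio to the binomial pmf is bounded by an absolute constant in the bulk and from which one can sample in $O(1)$ word-RAM time using $O(1)$ random words. Your acknowledged ``main obstacle'' is exactly where the table approach breaks, and you have not supplied the fix.

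For the reduction, the identity $\mathrm{Bin}(n,q)\stackrel{d}{=}\mathrm{Bin}\bigl(Y,2q\bigr)$ with $Y\sim\mathrm{Bin}(n,1/2)$ (for $q\le 1/2$) is correct, but iterating it has recursion depth $\Theta(\log n)$ almost surely, giving $\Theta(\log n)$ base-case calls and hence $\Theta(\log n)$ expected time, not $O(1)$. Your ``batching $\Theta(\log n)$ bits of $q$'' does not repair this: after consuming a $w$-bit prefix $Q$ of $q$ you must count the trials whose $w$-bit uniform prefix lies strictly below $Q$, and that count is distributed as $\mathrm{Bin}\bigl(n,\,Q/2^{w}\bigr)$, which is again a general-parameter binomial. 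Resolving it bit by bit costs $w=\Theta(\log n)$ further calls to the $q=1/2$ sampler, so the sentence ``terminates after $O(1)$ recursive calls in expectation, each invoking the base-case $q=1/2$ sampler once'' is unsupported. Getting expected $O(1)$ for general $q$ needs an idea beyond plain bit-peeling; that additional idea is precisely the content of the second cited reference, and your sketch does not supply it.
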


We are now ready to state our reduction. 
\begin{lemma}\label{lem:dynamic-l1-to-ham}
There is a data-structure in the WordRAM model with $O(\log \Delta)$-bit words, that initializes in expected time $O(d \log (d\Delta))$, and supports the following: 
    \begin{itemize}
        \item \emph{\textbf{Maintenance}}: For the hash family $\calH_t$ in Equation \ref{eqn:proj-hash-def} and any $d,\Delta \geq 1$, the data structure maintains  draws of $\bphi_\ell \sim \mathcal{H}_{2^\ell, d \Delta}$ for $\ell=1,2,\dots,L$, where $L = O(\log_2 (d\Delta))$. 
        \item \emph{\textbf{Query}}$(x)$: given a point $x \in [\Delta]^d$, the data structure computes the value of $\bphi_\ell(u(x))$ for all $\ell$ in expected time $O(d \log (d\Delta))$. 
    \end{itemize}
\end{lemma}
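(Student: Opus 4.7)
The plan is to realize the composed map $\bphi_\ell \circ u$ without ever materializing the unary encoding $u(x) \in \{0,1\}^{d\Delta}$, by factoring the $d\Delta$ hypercube coordinates into $d$ blocks of $\Delta$ coordinates each (block $i$ being the unary encoding of $x_i$). Sampling a coordinate uniformly from $[d\Delta]$ is equivalent to first sampling a block $i \in [d]$ uniformly and then a ``threshold'' $j \in [\Delta]$ uniformly, so a draw of $\bphi_\ell \sim \calH_{2^\ell, d\Delta}$ is characterized by a multinomial vector of block counts $(N_{i,\ell})_{i \in [d]} \sim \mathrm{Multinomial}(2^\ell;\, 1/d, \ldots, 1/d)$ together with $N_{i,\ell}$ i.i.d.\ uniform thresholds in $[\Delta]$ per block; the restriction of $\bphi_\ell(u(x))$ to block $i$ is the bit pattern $(\mathbf{1}\{j \leq x_i\})_j$ over those thresholds.

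For initialization, I will sample each count $N_{i,\ell}$ by $d$ sequential binomial draws per level (the $i$-th being $\mathrm{Binomial}(2^\ell - \sum_{i' < i} N_{i',\ell},\, 1/(d-i+1))$), each taking expected $O(1)$ time by Lemma~\ref{lem:fastbinom}, for a total of $O(dL) = O(d \log(d\Delta))$ in expectation. The thresholds themselves are \emph{not} materialized; instead, for each $(i,\ell)$ with $N_{i,\ell} > 0$ I maintain a balanced BST whose nodes are pairs $(v, r)$ representing the invariant ``exactly $r$ of the $N_{i,\ell}$ thresholds in block $i$ at level $\ell$ are $\le v$,'' seeded only with the boundary entries $(0, 0)$ and $(\Delta, N_{i,\ell})$.

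The key structural observation is that $\bphi_\ell(u(x)) = \bphi_\ell(u(y))$ iff for every block $i$ the rank of $x_i$ equals that of $y_i$ among the sorted thresholds of $(i, \ell)$: both conditions pin down the same subset of threshold indices $k$ with $t_{i,k} \le x_i$. So the per-level tuple $\big(r_{i,\ell}(x_i)\big)_{i \in [d]}$ of block-wise ranks is a canonical representative of $\bphi_\ell(u(x))$ and is exactly what is needed to route $x$ to the correct child in the $\quadtree$. On a query, for each $(i,\ell)$ with $N_{i,\ell}>0$ I find the stored predecessor $(v_1, r_1)$ and successor $(v_2, r_2)$ of $x_i$ in the BST, reuse the cached rank if $x_i$ is already present, and otherwise draw $r - r_1 \sim \mathrm{Binomial}\!\big(r_2 - r_1,\, (x_i - v_1)/(v_2 - v_1)\big)$ and insert $(x_i, r)$. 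This binomial is exactly the correct conditional distribution because, conditioned on the number of thresholds in $(v_1, v_2]$ being $r_2 - r_1$, those thresholds are i.i.d.\ uniform on that interval, so the sampling produces a globally consistent view of a single draw of $\bphi_\ell$ across arbitrary query sequences.

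Finally, for the query time, each BST operation costs $O(\log(d\Delta))$ and each binomial draw costs expected $O(1)$ by Lemma~\ref{lem:fastbinom}. A query does nontrivial work only on $(i,\ell)$ pairs with $N_{i,\ell}>0$, and a standard calculation bounds the expected number of such pairs by $\sum_{\ell \leq \log_2 d} 2^\ell + \sum_{\ell > \log_2 d} d = O(d \log \Delta)$, which gives the claimed $O(d \log(d\Delta))$ per-query expected time. The main obstacle will be reconciling three simultaneous requirements: (a) never touching the $\Omega(d\Delta)$ coordinates of $u(x)$, (b) ensuring the emitted ranks are distributed exactly as those of a true projection-hash applied to $u(x)$, and (c) maintaining consistency across queries that share some coordinate value with a prior query. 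The lazy conditional-binomial BST is the single mechanism that reconciles all three.
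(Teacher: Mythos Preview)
Your approach is essentially identical to the paper's: both decompose the $d\Delta$ hypercube coordinates into $d$ blocks, sample per-block counts $N_{i,\ell}$ via sequential binomials at initialization, and on each query split the relevant interval with a conditional binomial draw (the paper phrases this as maintaining counts in the intervals between consecutive queried values $\{d(\tau-1)+x_\tau\}_{x\in\Omega}$, which is exactly your rank-BST). One small accounting slip: $O(d\log\Delta)$ nonempty pairs times an $O(\log(d\Delta))$ BST operation yields $O(d\log\Delta\cdot\log(d\Delta))$, not the claimed $O(d\log(d\Delta))$; the paper sidesteps this by simply iterating over all $d\cdot L$ (block,\,level) pairs and implicitly treating the interval lookup as constant-time, so you may want to either drop the BST log factor from the final tally or match the paper's coarser iteration.
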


\begin{proof}
    The data structure employs the principle of deferred decisions to avoid generating all random bits required to specify the hash functions $\bphi_\ell$. Instead, we condition on portions of this randomness as they become required to compute the values $\bphi_\ell(u(x))$ in a manner consistent with 
    prior queries to points $y \in [\Delta]^d$.

    Fix any $\ell \in [L]$ and any $i \geq 0$.    Note that if we order the coordinate samples $i_1\dots,i_{2^\ell} \sim [d\Delta]$ used in the construction of the hash function $\phi_\ell$ so that $i_1 < i_2 < \dots < i_{2^\ell}$, then for a point $x \in [\Delta]^d$, to implicitly compute and represent the value of $\phi_\ell(u(x))$ it suffices to determine, for each $\tau \in [d]$, the number of indices $j \in [2^\ell]$ such that $ i_j \in [d(\tau-1) , d(\tau-1)  + x_\tau ]$ and the number of such indices such that $ i_j \in (d(\tau-1)  + x_\tau , d\tau )$. Let $B_\tau = [d(\tau -1), d\tau)$ denote the block of coordinates  of the hypercube corresponding to the $\tau$-coordinate in $[\Delta]^d$.    
    
    More generally, for a set of points $\Omega \subset [\Delta]^d$ and each $\tau \in [d]$, let $\omega_{1,\tau},\dots,\omega_{R,\tau} \in B_\tau$, where $\omega_{1,\tau}< \omega_{2,\tau} < \dots < \omega_{r,\tau}$, be the set of indicies appearing in the set $\{d(\tau-1) + x_\tau\}_{x \in \Omega}$. So long as we know the number of samples from $i_1\dots,i_{2^\ell}$ that appear in each interval $[\omega_{i,\tau},  \omega_{i+1,\tau})$, this is sufficient to compute the values of  $\phi_\ell(u(x))$ for all $x \in \Omega$. Thus, the goal of the data structure will be to maintain the number of samples $i_1\dots,i_{2^\ell}$  which appear between any two consecutive values $\{d(\tau-1) + x_\tau\}_{x \in \Omega}$, for each $\tau \in [d]$.

    In pre-processing, we can draw from the distribution on $\Z^d$ which specifies how many samples $i_j$  land in each block $B_\tau$. This can be done in $\tilde{O}(d)$ time by sampling from the Binomial distribution $s_1 \sim \texttt{Binomial}(2^\ell, \frac{|B_1|}{d\Delta})$ which specifies the number of samples $s_1$ in $B_1$, conditioning on it, and then sampling  $s_2 \sim \texttt{Binomial}(2^\ell - s_1, \frac{|B_2|}{d\Delta})$  to specifies the number of samples $s_2$ in $B_2$, and so on. By Lemma \ref{lem:fastbinom}, this can be done in expected constant time. 
    
    We now show how to compute a new value of $\phi_\ell(u(x))$ given that we have already compute the values of $\phi_\ell(u(x_i))$ for $x_1,\dots,x_i \in [\Delta]^d$. By adding $x$ to $\Omega$, this adds at most $d$ new values to the set $\{d(\tau-1) + x_\tau\}_{\substack{x \in \Omega \\ \tau \in [d]}}$. For each such value, this adds a new index within the interval $[\omega_{i,\tau},  \omega_{i+1,\tau})$ between two previously consecutive values in $\{d(\tau-1) + x_\tau\}_{\substack{x \in \Omega \\ \tau \in [d]}}$. This splits the interval the interval $[\omega_{i,\tau},  \omega_{i+1,\tau})$  into two parts, call them $I_1,I_2$.  Since by induction we will have already computed the number of indices $i_j$ that land in this interval, we simply sample from the correct Binomial distribution that determines how many of those indices will land in $I_1$ and how many land in $I_2$, which can be done in constant time by  Lemma \ref{lem:fastbinom}. Repeating this for all $d$ coordinates and $O(\log d\Delta)$ values of $\ell$ completes the proof. 
\end{proof}

\ignore{
\begin{lemma}
   $S \subset \EMD_s([\Delta]^d,\ell_1)$ of $|S| = n$ points. There is a data-dependent embedding  $\varphi: S \to (\R^m, \ell_1)$, for $m = \poly(n,s,\log d)$, such that $\varphi(x)$ is $\tilde{O}(s d^2)$-sparse for all $x \in S$, and such that for all $i \in [n]$ and all $x,y \in S$, with probability $1-1/\poly(n)$ we have
    \begin{equation}\label{eqn:embeddingbound}
    \EMD(x,y)   \leq  \|\varphi(x) - \varphi(y)\|_1  \leq  \tilde{O}(\log \Delta n) \cdot \EMD(x,y)    
    \end{equation}
    Moreover, there is a dynamic data structure that can compute $\varphi(x)$ for all $x \in S$ in time  $\tilde{O}(n s d)$, and can add a new point $x$ to the embedding in time $\tilde{O}(sd)$.
\end{lemma}
\begin{proof}
    \Raj{TODO: this proof}. In preprocessing, look at each coordinate $i \in [d]$, and for each hash function look at all the coordinates $i_1,\dots,i_t$ that occur at the $i$-th coordinate of all points $x \in S$. Sample from the distribution that determines how many samples were made in each of the intervals $[0,i_j]$ for $j \in [t]$. This allows us to compute the hash value for each point. With the new point, say its $i$-th coordinate is $b$ with $i_k \leq b \leq i_{k+1}$. Then we know how many sampled coordinates should be in this interval, so we can sample from the distribution to determine how many are in $[i_k,b]$ and how many are in $(b,i_{k+1})$, and thus compute the value of the hash of the new point. 
\end{proof}
}

\bibliographystyle{alpha}
\bibliography{main}

\end{document}